\newcommand\fnsurl[1]{{\footnotesize\url{#1}}}
\numberwithin{definition}{section}
\numberwithin{theorem}{section}
\numberwithin{corollary}{section}
\numberwithin{proposition}{section}
\numberwithin{lemma}{section}
\numberwithin{claim}{section}
\numberwithin{fact}{section}
\numberwithin{remark}{section}
\numberwithin{example}{section}
\numberwithin{equation}{section}
\newif\ifextended
\title{A higher-order Otto calculus approach to the Gaussian completely monotone conjecture}
\date{\today}
\author{Guillaume Wang\footnote{EPFL~ \texttt{guillaume.wang@epfl.ch}}}
\begin{document}
\maketitle

\begin{abstract}
    The Gaussian completely monotone (GCM) conjecture states that the $m$-th time-derivative of the entropy along the heat flow on $\mathbb{R}^d$ is positive for $m$ even and negative for $m$ odd.
    We prove the GCM conjecture for orders up to $m=5$, assuming that the initial measure is log-concave, in any dimension.
    Our proof differs significantly from previous approaches to the GCM conjecture: it is based on Otto calculus and on the interpretation of the heat flow as the Wasserstein gradient flow of the entropy.
    Crucial to our methodology is the observation that the convective derivative behaves as a flat connection over probability measures on $\mathbb{R}^d$.
    In particular we prove a form of the univariate Faa di Bruno's formula on the Wasserstein space (despite it being curved), and we compute the higher-order Wasserstein differentials of internal energy functionals (including the entropy), both of which are of independent interest.
\end{abstract}


\section{Introduction} \label{sec:intro}

Let, for some probability measure $\mu_0$ over $\RR^d$, $\mu_t = \mathrm{Law}(X_0 + \sqrt{2 t} G)$ with $G$ a standard Gaussian vector and $X_0 \sim \mu_0$.
The Gaussian completely monotone (GCM) conjecture, proposed in \cite{cheng_higher_2015}, states that the following holds for any $m \geq 1$:
\begin{equation} \label{eq:GCM_d_m} \tag{$\GCM_{d, m}$}
    \forall t > 0,~ (-1)^m \frac{d^m}{dt^m} H(\mu_t) \geq 0,
\end{equation}
where $H(\mu) = \int_{\RR^d} d\mu \log \frac{d\mu}{dx}$ denotes the (negative) differential entropy.

\paragraph{State of the art.}
It is known that \eqref{eq:GCM_d_m} holds true
\begin{itemize}[noitemsep, topsep=4pt]
	\item for any $d$ and $m \leq 2$ \cite{costa_new_1985,villani_short_2000};
	\item for any $d$ and $m \leq 3$ provided that $\mu_0$ is log-concave \cite{toscani_concavity_2015};
    \item for $d=1$ and $m \leq 4$ \cite{cheng_higher_2015};
    \item for $d=1$ and $m \leq 5$ provided that $\mu_0$ is log-concave \cite{zhang_gaussian_2018};
    \item for $d \leq 4$ and $m \leq 3$ \cite{guo_lower_2022};
    \item for $d \leq 2$ and $m \leq 4$ provided that $\mu_0$ is log-concave \cite{guo_lower_2022}.
\end{itemize}
(See \autoref{fig:intro:sota_gcmc} for a graphical representation.)
In fact finer lower bounds are known in some cases, see the brief survey \cite{ledoux_differentials_2020}.

The proof of \cite{toscani_concavity_2015} is based on information-theoretic considerations; the proof of \cite{villani_short_2000} is based on Bakry and Emery's $\Gamma_2$ calculus \cite{bakry_diffusions_1985}.
In all other cited works, the proofs are based on a sum-of-squares methodology which seems difficult to generalize to arbitrary dimensions.
In this work, using a significantly different approach, we prove the following.
\begin{theorem} \label{thm:intro:GCMC_45}
    \eqref{eq:GCM_d_m} holds true for any $d$ and $m \leq 5$ provided that $\mu_0$ is log-concave.
\end{theorem}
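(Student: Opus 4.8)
The plan is to exploit the fact that the heat flow $(\mu_t)$ is precisely the Wasserstein gradient flow of the entropy $H$ itself. Consequently every time-derivative $\tfrac{d^m}{dt^m}H(\mu_t)$ can be written purely in terms of the higher-order Wasserstein differentials of $H$, evaluated along the flow; it then remains to show that the resulting integral expressions have the sign predicted by \eqref{eq:GCM_d_m} when $\mu_0$ --- hence, by Pr\'ekopa--Leindler, every $\mu_t$ --- is log-concave.

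I would first set up Otto calculus: identify the tangent space at $\mu$ of the Wasserstein space with gradient fields under $\langle\nabla\phi,\nabla\psi\rangle_\mu=\int\nabla\phi\cdot\nabla\psi\,d\mu$, and note that the heat equation $\partial_t\mu_t=\Delta\mu_t$ is the continuity equation $\partial_t\mu_t+\nabla\cdot(\mu_t v_t)=0$ with velocity $v_t=-\nabla\log\mu_t=-\nabla_{W_2}H(\mu_t)$. The structural heart of the approach is that the convective (material) derivative $D_t=\partial_t+\nabla_{v_t}$ defines a \emph{flat} connection on vector fields transported along the flow --- morally because flows compose, so transport-of-transport is transport-by-composition and no curvature accumulates --- even though the metric (Levi-Civita) connection on the Wasserstein space is curved. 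I would make this precise and deduce a univariate Faa di Bruno formula: for a smooth functional $F$,
\[
\frac{d^m}{dt^m}F(\mu_t)=\sum_{\pi}D^{|\pi|}F(\mu_t)\bigl[\,(D_t^{\,|B|-1}v_t)_{B\in\pi}\,\bigr],
\]
the sum ranging over set partitions $\pi$ of $\{1,\dots,m\}$, with $D^kF$ the $k$-th (flat) Wasserstein differential --- i.e.\ exactly the flat-space combinatorics, with no curvature corrections.

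Next, since $v_t=-\nabla_{W_2}H(\mu_t)$, a chain rule along the flow expresses $D_t^{\,k}v_t$ through $D^jH(\mu_t)$ for $j\le k+1$; plugging this into the formula above turns $(-1)^m\tfrac{d^m}{dt^m}H(\mu_t)$ into a fixed universal polynomial in the differentials $D^jH$, $j\le m$, contracted against copies of $\nabla_{W_2}H=\nabla\log\mu_t$. I would therefore compute the higher Wasserstein differentials of a general internal-energy functional $\mathcal U(\mu)=\int U(\rho)\,dx$, $\rho=d\mu/dx$ --- these are integrals over $\RR^d$ of expressions in $U$ and its derivatives, $\nabla\log\rho$, $\nabla^2\log\rho$, and the test fields, and are of independent interest --- and specialize to $U(r)=r\log r$. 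For $m\le 5$ this yields closed-form integral expressions for $-I(\mu_t)$ and for $-\tfrac{d^m}{dt^m}H(\mu_t)$, $m=2,\dots,5$; the low-order cases should reproduce known ones (for instance $\tfrac{d^2}{dt^2}H(\mu_t)=2\,\mathrm{Hess}\,H(\mu_t)[\nabla_{W_2}H,\nabla_{W_2}H]\ge0$ is just displacement convexity of $H$ and requires no log-concavity, matching the state of the art).

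The main obstacle is the final step: proving that these functionals have the right sign under log-concavity, i.e.\ assuming $-\nabla^2\log\mu_t\succeq0$ for all $t$. The strategy is to reorganize each expression by integration by parts into a sum of (i) manifest squares $\int|\,\cdot\,|^2\,d\mu_t$ of gradient/Hessian-type tensors and (ii) terms that are pointwise nonnegative because they pair the PSD tensor $-\nabla^2\log\mu_t$ against another PSD tensor, if needed invoking a Bochner/$\Gamma_2$ identity or a Brascamp--Lieb-type inequality for the log-concave $\mu_t$. Finding the integrations by parts that leave no indefinite remainder is exactly what made the earlier sum-of-squares arguments dimension-restricted; the value added by the Otto-calculus bookkeeping is that it exhibits $\tfrac{d^m}{dt^m}H$ in a form admitting such a decomposition in \emph{every} dimension. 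A secondary but necessary task is to justify all the differentiations rigorously --- smoothness and decay of $\mu_t$ for $t>0$, well-posedness of the higher differentials, validity of the Faa di Bruno expansion --- which I would handle by approximating $\mu_0$ with smooth, rapidly decaying, uniformly log-concave measures and passing to the limit.
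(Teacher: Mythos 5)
Your proposal follows essentially the same route as the paper: heat flow as the Wasserstein gradient flow of $H$, the convective derivative as a flat connection yielding a univariate Faa di Bruno formula (your set-partition sum is the Bell-polynomial form of \autoref{thm:fdb_wasserstein:fdb_wasserstein}), higher-order Wasserstein differentials of internal energies specialised to $U(r)=r\log r$, and a final sign analysis by integration by parts under log-concavity. The resulting universal polynomial in the differentials of $H$ is exactly the tree/forest expansion of \autoref{prop:gf:wass:expr}.

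The one substantive caveat is that your last step is announced rather than executed, and it is precisely where the theorem lives: you assert that each order-$4$ and order-$5$ expression can be reorganised by integration by parts into manifest squares plus pairings of $-\nabla^2\log\mu_t$ against PSD tensors, but you do not exhibit the decomposition. The paper does this tree by tree (\autoref{prop:main_pf:n=4}, \autoref{prop:main_pf:n=5}); most trees split cleanly as you predict, but the analysis is not mechanical --- e.g.\ the tree $T_6$ at order $5$ requires a case distinction on the sign of an auxiliary integral, and no Bochner or Brascamp--Lieb input is needed, only repeated integration by parts and pointwise concavity of $\log\mu_t$. Until that computation is carried out, the claim that the Otto-calculus form ``admits such a decomposition in every dimension'' is the conjecture itself, not a proof of it.
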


We review the history of the GCM conjecture in \autoref{subsec:intro:GCM}. Then in \autoref{subsec:intro:approach} we outline our proof strategy and highlight its key novel features, some of which are of independent interest for the study of probability flows.

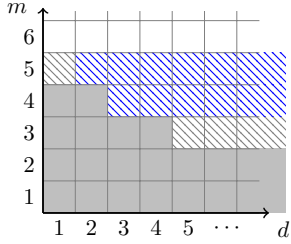
\begin{figure}
  \vspace{-1.2em}
  \begin{minipage}[c]{0.4\textwidth}
	\centering
	\adjustbox{scale=0.85}{%
		\begin{tikzpicture}[x=0.5cm,y=0.5cm]
            \fill[lightgray] (0,0) rectangle (4, 3);
            \fill[lightgray] (0,0) rectangle (1, 4);
            \fill[pattern=north west lines, pattern color=gray] (1, 3) rectangle (2, 4);
            \fill[lightgray] (0,0) rectangle (7.5, 2);
            \fill[pattern=north west lines, pattern color=gray] (4, 2) rectangle (7.5, 3);
            \fill[pattern=north west lines, pattern color=gray] (0, 4) rectangle (1, 5);
            \fill[pattern=north west lines, pattern color=blue] (1, 4) rectangle (7.5, 5);
            \fill[pattern=north west lines, pattern color=blue] (2, 3) rectangle (7.5, 4);
            \draw[step=1,gray,very thin] (0,0) grid (6.7, 6.3);
            \draw[thick,->] (0,0) -- (7,0) node[anchor=north west] {$d$};
            \draw[thick,->] (0,0) -- (0,6.4) node[anchor=east] {$m$~~};
            \foreach \x in {1,2,3,4,5}
            \draw (\x, 0) ++ (-0.5, 0) node[anchor=north] {$\x$};
            \draw (6, 0) ++ (-0.25, -.05) node[anchor=north] {$\cdots$};
            \foreach \y in {1,2,3,4,5,6}
            \draw (0, \y) ++ (0, -0.5) node[anchor=east] {$\y$};
		\end{tikzpicture}
	}
  \end{minipage}
  ~
  \begin{minipage}[c]{0.55\textwidth}
    \caption{State of the GCM conjecture. Filled: values of $(d,m)$ for which \eqref{eq:GCM_d_m} is known to hold true. Crosshatched: \eqref{eq:GCM_d_m} holds true provided that $\mu_0$ is log-concave. Blue crosshatched: our contribution.}
    \label{fig:intro:sota_gcmc}
  \end{minipage}
  \hfill
\end{figure}

\subsection{Gaussian completely monotone conjecture} \label{subsec:intro:GCM}
It was first noticed in \cite{cheng_higher_2015} that \eqref{eq:GCM_d_m} holds for $d=1$ and any $m \leq 4$, inspired by a previous result by Costa \cite{costa_new_1985}. This led the authors to formulate the GCM conjecture in the same paper.
We refer the reader to \cite{cheng_higher_2015,toscani_concavity_2015,ledoux2021log} for an in-depth discussion of the information-theoretic significance of the conjecture, notably in connection with several convexity-type results in information theory.
Section~IX of \cite{cheng_higher_2015} also notes the following two connections to other fields of mathematics.
From an analysis perspective, the GCM conjecture is the assertion that $t \mapsto -\frac{d}{dt} H(\mu_t)$ is a completely monotone function \cite{widder1946laplace}---hence the name, which we borrowed from \cite{cheng2022reformulation}.
From the perspective of mathematical physics, the GCM conjecture is a variant of a conjecture by McKean \cite[Section~13, problem~a)]{mckean1966speed} concerning the signs of the time-derivatives of the entropy along the Boltzmann equation, which was ultimately disproved \cite{lieb1982comment,olaussen1982extension}.
The reasons for its failure do not easily translate to the GCM conjecture, as the underlying dynamics (Boltzmann equation vs.\ heat flow) are quite different.

\paragraph{Related conjectures.}
The proposal of the GCM conjecture prompted interest in some related questions, which we now briefly review following \cite{ledoux_differentials_2020}. Still denoting $\mu_t = \mathrm{Law}(X_0 + \sqrt{2t} G)$,
\begin{enumerate}[label=$(\roman*)$]
	\item The entropy power conjecture asserts that for any $m \geq 1$,
	$(-1)^{m-1} \frac{d^m}{dt^m} e^{-\frac{2}{d} H(\mu_t)} \geq 0$.
	The original result by Costa in \cite{costa_new_1985} is that this inequality holds for any $d$ and $m \leq 2$.
	\item The Gaussian optimality conjecture (or McKean conjecture \cite[Section~12]{mckean1966speed}) asserts that if the covariance of $\mu_0$ is bounded by $\sigma^2 I_d$, then for any $m \geq 1$,
	$(-1)^m \frac{d^m}{dt^m} H(\mu_t) \geq \frac{d 2^{m-1} (m-1)!}{(\sigma^2 + 2t)^m}$.
	One can check that equality holds when $\mu_0$ is the centered Gaussian distribution with covariance $\sigma^2 I_d$.
	In other words, the conjecture is that, under a covariance constraint, the quantity $(-1)^m \frac{d^m}{dt^m} H(\mu_t)$ is minimized at all times for $\mu_0$ being a standard Gaussian.
	\item The GCM conjecture is that for any $ m \geq 1, (-1)^m \frac{d^m}{dt^m} H(\mu_t) \geq 0$.
	\item The log-convexity of Fisher information conjecture asserts that $t \mapsto \log \big( -\frac{d}{dt} H(\mu_t) \big)$ is convex.
	It is proved in \cite{ledoux2021log} for $d=1$, and follows from \cite{toscani_concavity_2015} when $\mu_0$ is log-concave for any~$d$.
\end{enumerate}
Clearly $(ii) \implies (iii)$.
It also holds $(iii) \implies (iv)$, as shown in \cite[Section~IX]{cheng_higher_2015} using a deep result on completely monotone functions.
The implication $(i) \implies (ii)$ is proved in \cite{wang2024implies}.

Other works investigated the time-convexity of other functions of $\mu_t$: R{\'e}nyi entropies in \cite{savare2014concavity}, mutual information in \cite{wibisono2018convexity}, the square-root of Fisher information in \cite{liu2023square}.

\paragraph{Previous approaches.}
Since $\mu_t = \mathrm{Law}(X_0 + \sqrt{2t} G)$ is also the law of $Y_t$ where $dY_t = \sqrt{2} \,dB_t$ and $Y_0 \sim \mu_0$, then by Ito's formula, $\partial_t \mu_t = \Delta \mu_t$.
This PDE, called the heat equation, allows to explicitly express the time-derivatives of $H(\mu_t)$ in terms of spatial~derivatives of~$\mu_t$.
For example, by integration by parts, $\frac{d}{dt} H(\mu_t) = \int_{\RR^d} (\Delta \mu_t) \log \frac{d\mu_t}{dx} = -\int_{\RR^d} \big( \nabla \log \frac{d\mu_t}{dx} \big)^\top \nabla \mu_t = -\int_{\RR^d} \norm{\nabla \log \frac{d\mu_t}{dx}}^2 d\mu_t$ is the negative of the Fisher information,
thus recovering De Bruijn's identity.

Using explicit computations and integrations by parts, \cite{costa_new_1985} and later \cite{cheng_higher_2015} obtain expressions for $\frac{d^m}{dt^m} H(\mu_t)$ for the first few values of $m$, that involve sums of squares of the spatial~derivatives of $\mu_t$ or of $\log \frac{d\mu_t}{dx}$.
\cite{ledoux2021log} use a similar method for the log-convexity of Fisher information conjecture.
This method is systematized in \cite{zhang_gaussian_2018,guo_lower_2022,liu2023square} using semi-definite programming techniques to identify the sum-of-squares decompositions.

The study of the heat equation lies at the intersection of a number of mathematical fields, so it comes with a variety of interpretations. In the context of the GCM conjecture, one point of view that at first appears promising is that of $\Gamma_2$ calculus, an abstract framework for analyzing diffusion Markov processes \cite{bakry_diffusions_1985,bakry2014analysis}.
Indeed, the contribution of \cite{villani_short_2000} is that this framework allows for a much shorter proof of Costa's result.
However, it is unclear whether this approach can be extended for higher orders $m>2$, as discussed by \cite[end of Section~1]{ledoux_differentials_2020}.

Our approach is based on a different interpretation of the heat equation, namely, as the Wasserstein gradient flow of the entropy.
This point of view has not been considered before for the GCM conjecture, although ideas from optimal transport and Otto calculus have been successfully applied to related questions, in \cite{rioul2017yet} for Shannon's entropy power inequality and in \cite[Section~2.1]{tamanini2022generalization} for a generalization of Costa's result
(both correspond to order $m=2$).

\subsection{Higher-order Otto calculus approach} \label{subsec:intro:approach}
Let us explain our approach.
Similar to prior works, we start from the fact that the evolution of $\mu_t$ is described by the heat flow $\partial_t \mu_t = \Delta \mu_t$.
The seminal work of Jordan, Kinderlehrer and Otto \cite{jordan_variational_1998} showed that this PDE can be interpreted as the gradient flow of $H$ in the Wasserstein geometry.
This interpretation was the starting point of a series of works by Otto which developed the idea that the structure of the Wasserstein space is formally%
\footnote{
    Throughout this paper, ``formal'' means ``described by systematic calculations but without a rigorous justification''.
}
analogous to that of a Riemannian manifold, albeit infinite-dimensional \cite{otto_generalization_2000,otto_geometry_2001,otto_eulerian_2005}.
This non-rigorous analogy is by now well-known in the field of PDEs, in optimal transport and in some sub-fields of machine learning, and is called the Otto calculus.
We provide extensive background on it in \autoref{sec:apx_bg_otto}, assuming some basic familiarity with Riemannian geometry.

From the point of view of Otto calculus, the GCM conjecture can be understood as a statement on the time-derivatives of $H$ along its own Wasserstein gradient flow. A possible strategy---which is almost what we will do---is thus to proceed by analogy: 1) work out the formulas for the time-derivatives of a function $f: \MMM \to \RR$ over a Riemannian manifold $\MMM$ along its own Riemannian gradient flow; 2) show that the time-derivatives of $H$ along its own Wasserstein gradient flow obey the same formulas; 3) try to show that the obtained expressions have a sign.
However, the first of these steps turns out to be very tedious, due to the appearance of Riemann curvature terms at orders~$4$ and higher~(Eq.~\eqref{eq:fdb_wasserstein:findim:curved:d4f_ct}).
Instead, we show that it suffices to work out the formulas for the case where the manifold $\MMM$ is flat, thanks to a univariate Faa di Bruno's formula on the Wasserstein space over $\RR^d$.

The expressions of $\frac{d^m}{dt^m} H(\mu_t)$ thus obtained consist of some terms which clearly have a sign, and of others which require further analysis.
For the latter ones, we express them as integrals over $\RR^d$, and we perform integrations by parts, until an integrand is obtained whose sign is constant.
These derivations can be intricate, although no more so than the sum-of-squares approach used in \cite{cheng_higher_2015}.
Perhaps interestingly, we find that up to order $m=5$, each individual term is non-negative when $\mu_0$ is log-concave, i.e., there is no need to consider compensations across terms (\autoref{rk:main_pf:allnonneg}).

\paragraph{Convective calculus.}
The Otto calculus as such appears to be difficult to generalize to orders higher than~$2$, due to the curvature intrinsically induced by the Wasserstein distance \cite[Section~7.3]{ambrosio_gradient_2008} \cite[Section~6]{gigli_second_2012}.
Prior to our work, we are not aware of any attempts to do so.
The structure of $\PPP(\RR^d)$ leveraged by our computations, leading to the univariate Faa di Bruno formula, 
is actually that of Eulerian calculus, in the sense of \cite[Chapter~14]{villani_optimal_2009}, rather than the full Otto calculus.
The distinction is explained in detail in \autoref{subsec:fdb_wasserstein:discussion} and \autoref{sec:apx_connection}.
Briefly, in technical terms, we make use of the convective derivative---in the sense of fluid mechanics---as an affine connection on $\PPP(\RR^d)$, instead of (the Otto calculus analog of) the Levi-Civita connection \cite[Definition~5.1]{gigli_second_2012}.
The former connection turns out to be flat in a certain sense, contrary to the latter one, and this allows for simpler computations.

More generally, we show that for $(\XXX, \nabla)$ a smooth manifold equipped with a connection,
the convective derivative formally endows $\PPP(\XXX)$ with a (non-Riemannian) differential manifold structure which inherits some properties of $\XXX$ itself, such as being torsion-free or having zero curvature (\autoref{sec:apx_connection}).
In this sense, convective calculus lifts the differential structure of $\XXX$ onto one for $\PPP(\XXX)$, in the same way that for a metric space $\MMM$, the Wasserstein distance lifts metric and topological (and ``Riemannian'' according to Otto) properties of $\MMM$ onto $\PPP(\MMM)$ \cite[Chapters~7-9]{ambrosio_lectures_2021}.

Let us point out that the role played by the convective derivative of vector fields in Otto calculus is discussed explicitly in \cite[Sections~3,4]{gigli_second_2012}
and in \cite[Section~3.3]{gentil_dynamical_2020}.%
\footnote{In~\cite{gigli_second_2012,gentil_dynamical_2020}, only velocity fields which are gradient field are considered (in the sense of \autoref{def:fdb_wasserstein:convgeod:transportcouple}), which can cause some conceptual unclarity. Separately, the fact that formal computations can be carried out for general velocity fields is remarked, e.g., in~\cite{schachter_new_2018}.}
Furthermore, the geodesic curves of $\PPP(\RR^d)$ equipped with the convective derivative as a connection almost coincide with the ``acceleration-free curves'' introduced by \cite{parker_convexity_2024} (\autoref{prop:fdb_wasserstein:convgeod:charact_convgeod}).

\paragraph{Relation to the Lie algebra of iterated gradients for the Laplacian \cite{ledoux_algebre_1995}.}
Our approach yields one particular way of expressing the higher-order time-derivatives of the entropy along the heat flow, for any order and in any dimension (\autoref{prop:gf:wass:expr}).
Another way is given by \cite[Section~3]{ledoux_algebre_1995},
which can be seen as an extension of Bakry and Emery's $\Gamma_2$ calculus to higher orders, and was used in some works indirectly related to the GCM conjecture \cite{ledoux_heat_2016,mansanarez_derivatives_2024}.
These two ways appear to be fundamentally different, as discussed in \autoref{subsec:ent:ptwise}.
At a high level, our computations are based on using convective geodesics (straight-line mass transports) as ``reference'' probability flows, whereas $\Gamma_2$ calculus and its higher-order extension by Ledoux are tailored to analyzing Markov diffusions.

\paragraph{Omitted regularity issues.}
For the results concerning the heat flow, including \autoref{thm:intro:GCMC_45}, 
one can show (with some effort) by adapting the proofs of \cite{costa_new_1985} and \cite[Proposition~2]{cheng_higher_2015} that the integrations by parts and the time-derivative/integration interchanges of \autoref{subsec:gf:wass} and \autoref{sec:main_pf} are justified for all $t>0$.
All other results are formal in that we did not investigate the regularity conditions under which the formulas holds rigorously.

\subsection{Contributions}
\begin{itemize}
    \item We show that the GCM conjecture holds for orders up to $m=5$, assuming that the initial measure is log-concave, in any dimension (\autoref{thm:intro:GCMC_45}).

    Our approach differs significantly from other works on the GCM conjecture. It would be interesting to see how far it could be pushed, be it to tackle higher orders, or the case of non-log-concave initial measures, or the related conjectures reviewed in~\autoref{subsec:intro:GCM}.
    \item We show that a form of the univariate Faa di Bruno's formula holds on the Wasserstein space over $\RR^d$ (\autoref{thm:fdb_wasserstein:fdb_wasserstein}), by reasoning in terms of convective derivatives.
    
    This result illustrates the usefulness of the convective derivative for formal computations.
    In contrast, higher-order calculus using the full Otto calculus as presented in reference texts \cite{villani_optimal_2009,gigli_second_2012} would be quite cumbersome, because the Wasserstein space has non-zero curvature, and the univariate Faa di Bruno's formula does not hold on curved Riemannian manifolds (\autoref{subsec:fdb_wasserstein:findim}).
    
    The formal analogy used in this work is thus subtly different from the usual Otto calculus, which it recovers at orders~$1$ and~$2$. 
    This ``convective calculus'' offers an alternative formal framework for the analysis of probability flows
	(Secs.~\ref{subsec:fdb_wasserstein:convgeod}~to~\ref{subsec:fdb_wasserstein:convderiv}, \autoref{sec:apx_connection}),
    so it may be of independent interest in the context of particle methods for optimization, or diffusion models.
    \item We compute the higher-order Wasserstein differentials of internal energy functionals, including the entropy (\autoref{thm:ent:difftls_EEE}).

    For the second-order differentials, we recover the classical expression for the Wasserstein Hessian, with a simpler proof \cite[Formula~15.7]{villani_optimal_2009}.
    For orders~$3$ and higher, our computation does not seem to generalize easily to probability measures over Riemannian manifolds instead of $\RR^d$ (\autoref{rk:ent:hess:riem_hard}).
\end{itemize}

\subsection{Notation}
We denote by $\PPP_2(\RR^d)$ the set of probability measures over $\RR^d$ with finite second moments.
For a functional $\FFF: \PPP_2(\RR^d) \to \RR$, we write $\FFF'[\mu]: \RR^d \to \RR$ for its first variation at $\mu$, defined (up to an arbitrary additive constant) by $\int_{\RR^d} \FFF'[\mu](x) d(\nu - \mu)(x) = \lim_{\eps \to 0} \frac{1}{\eps} \left[ \FFF(\mu + \eps (\nu - \mu)) - \FFF(\mu) \right]$ for any $\nu \in \PPP(\RR^d)$.
We also write $\FFF''[\mu]: \RR^d \times \RR^d \to \RR$ for its second variation and $\FFF^{(k)}[\mu]$ for its $k$-th variation.
``$\nabla \cdot$'' denotes divergence of vector-valued measures, i.e., $\int_{\RR^d} \varphi(x) \nabla \cdot \bm\mu(dx) = -\int_{\RR^d} \nabla \varphi(x) \cdot \bm\mu(dx)$ for $\varphi$ or $\bm\mu$ vanishing at infinity.
For any measure $\mu$ and mapping $T$, $T_\sharp \mu$ denotes the pushforward measure, characterized by $\int \varphi d(T_\sharp \mu) = \int (\varphi \circ T) d\mu$ for any function $\varphi$.
$\norm{\cdot}$ denotes the Euclidean norm.
$\NN^*$ is the set of positive integers and $\NN = \NN^* \cup \{0\}$.
Integrals $\int$ are taken over $\RR^d$ unless otherwise specified.

We denote by $\mathfrak{X}(\RR^d)$ the set of smooth vector fields on $\RR^d$.
For $\mu \in \PPP_2(\RR^d)$, $L^2_\mu$ denotes the Hilbert space of vector fields $\Phi: \RR^d \to \RR^d$ such that $\int_{\RR^d} \norm{\Phi}^2 d\mu < \infty$.
(We do not introduce notation for the analogous set of scalar fields.)
If $g_t$ is a time-dependent tensor field, we will write indifferently $\frac{d}{dt} g_t$ or $\partial_t g_t$ for its pointwise time-derivative.

Einstein summation notation will be used freely, that is, for any two expressions $\mathsf{expr1}, \mathsf{expr2}$, one should interpret $(\mathsf{expr1})_i\, (\mathsf{expr2})^i$ as $\sum_{i=1}^d (\mathsf{expr1})_i\, (\mathsf{expr2})^i$, and likewise when several indices $i, j, k, ...$ are repeated, each once in superscript and once in subscript.
In \autoref{sec:fdb_wasserstein} and \autoref{sec:ent}, we respect the convention that primal objects such as vector fields will have index in superscript, e.g.\ $\Phi^i$,
and dual objects such as differential forms will have index in subscript, e.g.\ $\nabla_i f$.
In \autoref{sec:gf} and \autoref{sec:main_pf}, because we consider gradient flows and because differential forms and gradients on $\RR^d$ coincide, the position of the indices (sub- or superscript) holds no meaning, and we may write $(\mathsf{expr1})_i\, (\mathsf{expr2})_i$ for $\sum_{i=1}^d (\mathsf{expr1})_i\, (\mathsf{expr2})_i$ and likewise for repeated indices in superscript.
In particular, $\nabla^2_{ii} h = \nabla_i \nabla_i h = \Delta h$ refers to the Laplacian of $h: \RR^d \to \RR$.

If $g$ is a $(p, q)$-tensor field on $\RR^d$, we denote by $\nabla g$ the $(p, q+1)$-tensor field with $\nabla_i g^{\bm{j}}_{~\bm{k}} = \frac{\partial}{\partial x_i} g^{\bm{j}}_{~\bm{k}}$ for all $i \in \{1,...,d\}, \bm{j} \in \{1,...,d\}^p, \bm{k} \in \{1,...,d\}^q$.
Moreover if $\Phi$ is a vector field, we will write $\Phi \cdot \nabla g$ for the $(p, q)$-tensor field $(\Phi \cdot \nabla g)^{\bm{j}}_{~\bm{k}} = \Phi^i \nabla_i g^{\bm{j}}_{~\bm{k}}$.
In particular, if $g$ is a vector field then $\Phi \cdot \nabla g$ can interpreted at each point as the product of a row-vector with a matrix $\nabla g$ which is the \emph{transpose} of the Jacobian of $g$.
For tensors $A, B$ of the same type, we write $A : B \in \RR$ for the tensor contraction over all indices.

\paragraph{Organization.}
In \autoref{sec:fdb_wasserstein} we introduce a convective calculus framework for formal higher-order computations over the Wasserstein space $\PPP_2(\RR^d)$.
In \autoref{sec:ent} we derive a formula for the higher-order Wasserstein differentials of internal energy functionals, including the entropy $H$.
These developments are applied to the GCM conjecture in the last two sections: 
in \autoref{sec:gf} we express the time-derivatives of $H$ along the heat flow in terms of its higher-order differentials, thanks to an analogy with the case of finite-dimensional gradient flows,
and in \autoref{sec:main_pf} we use these expressions to show \autoref{thm:intro:GCMC_45}.

\section{A Faa di Bruno's formula on the Wasserstein space over~\texorpdfstring{$\RR^d$}{Rd}} \label{sec:fdb_wasserstein}

In this section, we show that the higher-order time derivatives of a functional $\FFF: \PPP_2(\RR^d) \to \RR$ along an absolutely continuous curve of measures $(\mu_t)_t$, in the Wasserstein sense, can be computed using the same formula as for a function $f: \RR^d \to \RR$ along a curve $(x_t)_t$ in $\RR^d$.

\subsection{The higher-order univariate chain rule in finite dimension} \label{subsec:fdb_wasserstein:findim}

Since our goal is to compute higher-order time-derivatives on the Wasserstein space, and we know by Otto calculus that it behaves similarly to a Riemannian manifold,
in this subsection we first discuss how to compute higher-order time-derivatives on a Riemannian manifold $(\MMM, g)$.
In a nutshell, for flat manifolds such as $\RR^d$ or the torus, the answer is classical and given by the univariate Faa di Bruno's formula, while for curved manifolds the same formula cannot hold.

\subsubsection{Flat case: the Bell polynomials and the univariate Faa di Bruno's formula} \label{subsubsec:fdb_wasserstein:findim:flat}

In this subsection, we consider the case where $(\MMM, g) = (\RR^d, \norm{\cdot})$.
Let $f: \RR^d \to \RR$ and $(x_t)_{t \in \RR}$ any curve in $\RR^d$.
It is classical that, by the Faa di Bruno's formula on $(f \circ x)(t)$,
\begin{equation} \label{eq:fdb_wasserstein:findim:flat:FdB}
    \frac{d^n}{dt^n} f(x_t)
    = \sum_{k=1}^n (\nabla^k f) : B_{n,k}\left( \frac{dx}{dt}, \frac{d^2 x}{dt^3}, ..., \frac{d^{n-k+1} x}{dt^{n-k+1}} \right)
\end{equation}
where 
$\nabla^k f$ refers to the tensor $\nabla_{i_1} ... \nabla_{i_k} f(x_t)$ and $B_{n,k}(X_1, ..., X_{n-k+1})$ is the partial exponential Bell polynomial of order $(n,k)$,
with the understanding that multiplication refers to tensor product of vectors in $\RR^d$.
For example, dropping the dependency on the indeterminates $(X_1, X_2, ...)$ for brevity,
\vspace{-0.75em}
\begin{align*}
    B_{1,1} &= X_1, \\
    B_{2,1} &= X_2,~~~~ B_{2,2} = X_1^2, \\
    B_{3,1} &= X_3,~~~~ B_{3,2} = 3 X_1 X_2,~~~~ \qquad~~~ B_{3,3} = X_1^3, \\
    B_{4,1} &= X_4,~~~~ B_{4,2} = 3 X_2^2 + 4 X_1 X_3,~~~~ B_{4,3} = 6 X_1^2 X_2,~~~~ B_{4,4} = X_1^4,
\end{align*}
and correspondingly, writing $\dot{x} = \frac{dx}{dt}$, $\ddot{x} = \frac{d^2x}{dt}$, etc.,
\vspace{-0.25em}
\begin{align*}
	\text{\scalebox{0.7}{$\frac{d}{dt}$}} f(x_t) &= (\nabla f) : \dot{x} \\
	\text{\scalebox{0.7}{$\frac{d^2}{dt^2}$}} f(x_t) &= (\nabla f) : \ddot{x} + (\nabla^2 f) : \dot{x}^{\otimes 2} \\
	\text{\scalebox{0.7}{$\frac{d^3}{dt^3}$}} f(x_t) &= (\nabla f) : \dddot{x} + 3 (\nabla^2 f) : \left[ \dot{x} \otimes \ddot{x} \right] + (\nabla^3 f) : \dot{x}^{\otimes 3} \\
	\text{\scalebox{0.7}{$\frac{d^4}{dt^4}$}} f(x_t) &=
	(\nabla f) : \ddddot{x}
	+ (\nabla^2 f) : \left[ 3 \ddot{x}^{\otimes 2} + 4 \dot{x} \otimes \dddot{x} \right] 
	+ 6 (\nabla^3 f) : \left[ \dot{x}^{\otimes 2} \otimes \ddot{x} \right]
	+ (\nabla^4 f) : \dot{x}^{\otimes 4}.
\end{align*}

In \autoref{sec:apx_bell}, we prove some facts about the Bell polynomials, including an abstract version of the univariate Faa di Bruno's formula, which will be useful for some proofs for the next subsections.

\subsubsection{Curved case: asymmetry of the third-order differential tensor} \label{subsubsec:fdb_wasserstein:findim:curved}

We recall the following facts from \cite[Chapter~4]{lee_introduction_2018}, where the tensor fields are over a Riemannian manifold $\MMM$, whose Levi-Civita connection is denoted by $\nabla$.
\begin{proposition}[{\cite[Propositions~4.15, 4.17]{lee_introduction_2018}}] \label{prop:fdb_wasserstein:findim:curved:def_totalcovariantderiv}
    For a $(p, q)$-tensor field $F$, for differential forms $\omega^{(1)}, ..., \omega^{(p)}$ and vector fields $Y_{(1)}, ..., Y_{(q)}, X$, we have
    \begin{multline*}
        (\nabla_X F)(\omega^{(1)}, ..., \omega^{(p)}, Y_{(1)}, ..., Y_{(q)}) 
        = \nabla_X \left[ F(\omega^{(1)}, ..., \omega^{(p)}, Y_{(1)}, ..., Y_{(q)}) \right] \\
        - \sum_{i=1}^p F(\omega^{(1)}, ..., \nabla_X \omega^{(i)}, ..., \omega^{(p)}, Y_{(1)}, ..., Y_{(q)})
        - \sum_{j=1}^q F(\omega^{(1)}, ..., \omega^{(p)}, Y_{(1)}, ..., \nabla_X Y_{(j)}, ... Y_{(q)}).
    \end{multline*}
    This expresses a Leibniz product rule: indeed, formally, if we let $G = \omega^{(1)} \otimes ... \otimes \omega^{(p)} \otimes Y_{(1)} \otimes ... \otimes Y_{(q)}$
    and if we define $\nabla_X G$ by the outer product rule $\nabla_X \left( h \otimes K \right) = (\nabla_X h) \otimes K + h \otimes (\nabla_X K)$ for all tensor fields $h, K$, 
    then the above equality rewrites $(\nabla_X F) \cdot G = \nabla_X \left( F \cdot G \right) - F \cdot (\nabla_X G)$.
    
    The total covariant derivative of $F$ is defined as the $(p, q+1)$-tensor field $\nabla F$ given by
    \begin{equation*}
        (\nabla F)(\omega^{(1)}, ..., \omega^{(p)}, Y_{(1)}, ..., Y_{(q)}, X) = (\nabla_X F)(\omega^{(1)}, ..., \omega^{(p)}, Y_{(1)}, ..., Y_{(q)}).
    \end{equation*}
    For $m \geq 1$, we define the $(p, q+m)$-tensor field $\nabla^m F$ as the total covariant derivative of $\nabla^{m-1} F$.
\end{proposition}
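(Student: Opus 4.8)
The plan is to derive the displayed identity from the defining axioms of the connection induced on tensor bundles, and then to check tensoriality of $\nabla F$ by hand. First I would recall that the Levi-Civita connection on $T\MMM$ extends \emph{uniquely} to a connection on each tensor bundle $T^{(p,q)}\MMM$ subject to three requirements: it acts as the directional derivative $\nabla_X h = Xh$ on functions $h \in C^\infty(\MMM)$; it obeys the Leibniz rule $\nabla_X(S \otimes T) = (\nabla_X S) \otimes T + S \otimes (\nabla_X T)$; and it commutes with every trace. These force the connection on the cotangent bundle to be $(\nabla_X \omega)(Y) = X(\omega(Y)) - \omega(\nabla_X Y)$, and one verifies that the right-hand side, although not $C^\infty(\MMM)$-linear summand by summand, is $C^\infty(\MMM)$-linear in $Y$ as a whole, hence defines a genuine $1$-form, and is additive and $C^\infty(\MMM)$-linear in $X$, hence a connection; tensoring then produces $\nabla$ on all of $T^{(p,q)}\MMM$.

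The second step is to extract the multilinear formula. For fixed $F$ and fixed $\omega^{(1)}, \dots, \omega^{(p)}$, $Y_{(1)}, \dots, Y_{(q)}$, the scalar $F(\omega^{(1)}, \dots, \omega^{(p)}, Y_{(1)}, \dots, Y_{(q)})$ is obtained from the tensor $G = \omega^{(1)} \otimes \cdots \otimes \omega^{(p)} \otimes Y_{(1)} \otimes \cdots \otimes Y_{(q)}$ by a sequence of $p+q$ contractions against $F$. Applying $\nabla_X$, using $\nabla_X h = Xh$ on the left, using commutation with contractions to pull $\nabla_X$ inside, and using the Leibniz rule to distribute $\nabla_X$ across the tensor product, one is left with $1 + p + q$ terms, one for each factor on which $\nabla_X$ lands. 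Isolating the term in which $\nabla_X$ lands on $F$ and moving the remaining $p+q$ terms to the other side gives exactly the stated identity; in the compact notation this is $(\nabla_X F) \cdot G = \nabla_X(F \cdot G) - F \cdot (\nabla_X G)$.

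The third step is to check that $\nabla F$, defined by $(\nabla F)(\dots, X) = (\nabla_X F)(\dots)$, is a $(p, q+1)$-tensor field, i.e.\ that it is $C^\infty(\MMM)$-multilinear in all $p+q+1$ arguments; $\RR$-multilinearity is immediate from the formula. Linearity in the last slot, $\nabla_{fX} F = f\, \nabla_X F$, follows from the corresponding property of the connection on $T\MMM$ together with the formula. For linearity in a slot $Y_{(j)}$, substitute $fY_{(j)}$: the term $\nabla_X[F(\dots, fY_{(j)}, \dots)]$ contributes $(Xf)\,F(\dots)$ by the Leibniz rule on functions, while the correction term contributes $-(Xf)\,F(\dots)$ because $\nabla_X(fY_{(j)}) = (Xf)Y_{(j)} + f\,\nabla_X Y_{(j)}$; these cancel, leaving $f\,(\nabla_X F)(\dots)$. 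The argument for a form slot $\omega^{(i)}$ is identical, using $\nabla_X(f\omega^{(i)}) = (Xf)\omega^{(i)} + f\,\nabla_X\omega^{(i)}$. Finally $\nabla^m F$ is a $(p, q+m)$-tensor field by induction on $m$.

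I expect the only step requiring genuine care to be the first one: establishing that the prescription $(\nabla_X\omega)(Y) = X(\omega(Y)) - \omega(\nabla_X Y)$, and its tensor-product extension, is well defined — that it really is function-linear in $Y$ though no individual summand is, and that it is independent of the choices involved in writing a given scalar as iterated traces of a tensor product. Once that extension is in place, everything else is bookkeeping with the Leibniz rule and the commutation of $\nabla$ with contractions.
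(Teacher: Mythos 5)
Your proposal is correct and is essentially the standard argument from the cited reference \cite[Props.~4.15, 4.17]{lee_introduction_2018}, which the paper invokes without reproducing: unique extension of the connection to tensor bundles via the function/Leibniz/trace axioms, extraction of the multilinear identity by distributing $\nabla_X$ across the contraction $F \cdot G$, and verification of $C^\infty$-multilinearity of $\nabla F$ via the cancellation of the $(Xf)$ terms. No gaps; the point you flag as requiring care (well-definedness of the induced connection on the cotangent bundle and its tensor-product extension) is indeed the only nontrivial step, and your check of function-linearity in $Y$ handles it.
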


The following differential geometry fact was noted, e.g., in \cite[Remark~3.2]{criscitiello2023accelerated}.
\begin{proposition} \label{prop:fdb_wasserstein:findim:curved:asym}
    Consider a function $f: \MMM \to \RR$, to be interpreted as a scalar field over $\MMM$.
    The tensor field $\nabla^3 f$ is in general asymmetric; more precisely, for any vector fields $U, V, W$,
    \begin{equation*}
        (\nabla^3 f)(U, V, W) = (\nabla^3 f)(V, U, W)
        = (\nabla^3 f)(U, W, V) + (\nabla f) \left( R(V, W) U \right)
    \end{equation*}
    where $R$ denotes the Riemann curvature tensor: $R(V, W) U = \nabla_V \nabla_W U - \nabla_W \nabla_V U - \nabla_{[V,W]} U$.
\end{proposition}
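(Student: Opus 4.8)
The plan is to derive both equalities from two standard properties of the Levi-Civita connection $\nabla$: that it is torsion-free, and that the non-commutativity of second covariant derivatives is exactly the curvature $R$ (the Ricci identity). Throughout, I use that, by \autoref{prop:fdb_wasserstein:findim:curved:def_totalcovariantderiv}, $\nabla^3 f = \nabla\big(\nabla(\nabla f)\big)$ with $\nabla f = df$ a $1$-form, and that each total covariant derivative places the newly introduced argument in the last slot.

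\emph{First equality.} Set $A := \nabla^2 f = \nabla(df)$. I would first record that $A$ is symmetric: the Leibniz rule of \autoref{prop:fdb_wasserstein:findim:curved:def_totalcovariantderiv} gives $A(U,V) = (\nabla_V df)(U) = V(Uf) - (\nabla_V U)f$, so $A(U,V) - A(V,U) = -[U,V]f + (\nabla_U V - \nabla_V U)f = 0$ since $\nabla$ is torsion-free. Applying the same Leibniz rule once more, to the $(0,2)$-tensor $A$, gives $(\nabla^3 f)(U,V,W) = (\nabla_W A)(U,V) = W\big(A(U,V)\big) - A(\nabla_W U, V) - A(U, \nabla_W V)$, and the right-hand side is invariant under $U \leftrightarrow V$ because $A$ is symmetric. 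This is the first equality.

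\emph{Second equality.} This is the Ricci identity applied to the $1$-form $\omega := df$. Expanding twice with \autoref{prop:fdb_wasserstein:findim:curved:def_totalcovariantderiv} identifies $(\nabla^3 f)(U,V,W) = (\nabla^2_{W,V}\omega)(U)$, where $\nabla^2_{X,Y} := \nabla_X\nabla_Y - \nabla_{\nabla_X Y}$. Subtracting the version with $V$ and $W$ swapped and using torsion-freeness ($\nabla_W V - \nabla_V W = [W,V]$),
\[
(\nabla^3 f)(U,V,W) - (\nabla^3 f)(U,W,V) = \big(\nabla_W\nabla_V\omega - \nabla_V\nabla_W\omega - \nabla_{[W,V]}\omega\big)(U) = \big(R(W,V)\omega\big)(U).
\]
Since $R(W,V)$ acts as a derivation annihilating scalars, $0 = R(W,V)\big(\omega(U)\big) = (R(W,V)\omega)(U) + \omega(R(W,V)U)$, so the right-hand side equals $-\omega(R(W,V)U) = \omega(R(V,W)U) = (\nabla f)(R(V,W)U)$ by antisymmetry of $R$ in its first two arguments. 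Rearranging yields the second equality. Alternatively, one can bypass $\nabla^2_{X,Y}$ and expand $(\nabla^3 f)(U,V,W)$ and $(\nabla^3 f)(U,W,V)$ directly by two nested Leibniz rules, use the symmetry of $\nabla^2 f$, and collect the surviving terms into $(\nabla f)\big(\nabla_V\nabla_W U - \nabla_W\nabla_V U + \nabla_{[W,V]}U\big) = (\nabla f)(R(V,W)U)$.

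\emph{Expected main difficulty.} There is no substantive obstacle here --- the statement is classical, as noted, e.g., in \cite[Remark~3.2]{criscitiello2023accelerated} --- so the only care required is bookkeeping: matching the ``new argument last'' convention of \autoref{prop:fdb_wasserstein:findim:curved:def_totalcovariantderiv} to the slot order in the statement, and correctly composing the two sign changes (the curvature derivation $R(X,Y)$ acting on a covector versus a vector, and the antisymmetry $R(V,W) = -R(W,V)$), which together produce the sign $+(\nabla f)(R(V,W)U)$. Since both sides are tensorial, one may also take $U, V, W$ to be coordinate vector fields so that all Lie brackets vanish, shortening the computation.
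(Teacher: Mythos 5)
Your proof is correct. Note that the paper does not actually prove this proposition---it is stated as a known fact with a pointer to \cite[Remark~3.2]{criscitiello2023accelerated}---so there is no in-paper argument to compare against; yours is the standard one. The two key steps both check out under the paper's ``new argument last'' convention from \autoref{prop:fdb_wasserstein:findim:curved:def_totalcovariantderiv}: the identification $(\nabla^3 f)(U,V,W) = (\nabla_W\nabla_V\, df - \nabla_{\nabla_W V}\, df)(U)$ is right, and the sign chase $(R(W,V)\,df)(U) = -\,df(R(W,V)U) = df(R(V,W)U)$ lands exactly on the $+(\nabla f)(R(V,W)U)$ in the statement.
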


Thus the third-order differential tensors are asymmetric as soon as $\MMM$ has non-zero curvature (unlike the second-order differentials, i.e., the Hessians, which are always symmetric).
Because of this, the univariate Faa di Bruno's formula does not hold, and it seems difficult to formulate a generalization of it.
To illustrate this, consider a curve $c: \RR \to \MMM$ and a function $f: \MMM \to \RR$, and let us derive explicitly the expression of $\frac{d^n}{dt^n} f(c(t))$ for $n \leq 4$.
Denote $c'$ for $c'(t)$ the velocity vector at time~$t$, $\nabla^k f$ for the tensor $\restr{\nabla^k f}{c(t)}$, and let $c'' = \nabla_{c'} c'$, $c''' = \nabla_{c'} c''$, etc., the iterated accelerations.
Then
\begin{align}
    \frac{d}{dt} f(c(t)) &= (\nabla f)(c') \\
    \frac{d^2}{dt^2} f(c(t)) &= (\nabla f)(c'') + (\nabla^2 f)(c', c') \\
    \frac{d^3}{dt^3} f(c(t)) &= (\nabla f)(c''') + 3 (\nabla^2 f)(c'', c') + (\nabla^3 f)(c', c', c') 
    ~~~~\text{since $\nabla^2 f$ is symmetric} \\
    \frac{d^4}{dt^4} f(c(t)) &= (\nabla f)(c'''') + (\nabla^2 f)(c''', c')
    + 3 \left[ (\nabla^2 f)(c''', c') + (\nabla^2 f)(c'', c'') + (\nabla^3 f)(c'', c', c') \right]    \nonumber \\
    &~~~~ + 2 (\nabla^3 f)(c'', c', c') + (\nabla^3 f)(c', c', c'') + (\nabla^4 f)(c', c', c', c') 
    ~~~~\text{since $\nabla^3_{ijk} f = \nabla^3_{jik} f$}    \nonumber \\
    &= (\nabla f)(c'''') + 3 (\nabla^2 f)(c'', c'')
    + 4 (\nabla^2 f)(c''', c')    \nonumber \\
    &~~~~ + 5 (\nabla^3 f)(c'', c', c') + (\nabla^3 f)(c', c', c'') 
    + (\nabla^4 f)(c', c', c', c').
\label{eq:fdb_wasserstein:findim:curved:d4f_ct}
\end{align}
These expressions resemble the ones for the Euclidean case, except for the terms $5 (\nabla^3 f)(c'', c', c') + (\nabla^3 f)(c', c', c'')$
on the last line, 
to be compared with $6 (\nabla^3 f) : \left[ \dot{x}^{\otimes 2} \otimes \ddot{x} \right]$ in the Euclidean case.
The asymmetry of the third-order differential tensor will also propagate to higher orders, so the expressions of $\frac{d^n}{dt^n} f(c(t))$ for $n \geq 5$ would deviate even more from the univariate Faa di Bruno's formula.

\subsection{Convective geodesics} \label{subsec:fdb_wasserstein:convgeod}

Classically, the Wasserstein Hessian of a functional $\FFF: \PPP_2(\RR^d) \to \RR$ at $\mu$ is defined%
\footnote{\cite[Section~3.3]{gentil_dynamical_2020} confirms that one would obtain the same object by taking the (Otto calculus analog of) Levi-Civita covariant derivative of the Wasserstein gradient $\nabla \FFF'[\mu_t]$.}
as the symmetric bilinear form arising from the second-order time-derivative of $\FFF(\mu_t)$ when $(\mu_t)_t$ is a Wasserstein geodesic \cite[Chapter~15]{villani_optimal_2009} (see \autoref{subsec:apx_bg_otto:wass_geodesic} for background on the Wasserstein geodesic equation):
\begin{equation} \label{eq:fdb_wasserstein:convgeod:wassgeod}
    \frac{d^2}{dt^2} \FFF(\mu_t)
    \eqqcolon \Hess_{\mu_t} \FFF(\phi_t, \phi_t)
    ~~~~\text{where}~~~~
    \begin{cases}
        \partial_t \mu_t = -\nabla \cdot (\mu_t \nabla \phi_t) \\
        \partial_t \phi_t = -\frac{1}{2} \norm{\nabla \phi_t}^2.
    \end{cases}
\end{equation}
For reasons that will become clear in \autoref{subsec:fdb_wasserstein:discussion}, it will be advantageous in this paper to work with a slight extension of this definition.

The following definition is a minor modification of \cite[Definition~2.5]{gigli_second_2012} where we additionally ask for smoothness of the velocity field.
\begin{definition} \label{def:fdb_wasserstein:convgeod:transportcouple}
    We call \emph{transport couple} a map $I \ni t \mapsto (\mu_t, \Phi_t)$ with $I$ an interval of $\RR$, 
    $\mu_t \in \PPP_2(\RR^d)$, 
    and $\Phi_t \in L^2_{\mu_t} \cap \mathfrak{X}(\RR^d)$, 
    such that $\int_K \int \norm{\Phi_t}^2 d\mu_t dt < \infty$ for any compact $K \subset I$ 
    and the PDE
    \begin{equation*}
        \partial_t \mu_t = -\nabla \cdot (\mu_t \Phi_t)
        ~~\text{for}~~ t \in I,
    \end{equation*}
    known as the continuity equation, holds in the sense of distributions.
    In particular $(\mu_t)_t$ is an absolutely continuous curve in the Wasserstein sense \cite[Theorem~2.29]{ambrosio_users_2013}.
    
    If $(\mu_t, \Phi_t)_t$ is a transport couple, we will say that the time-dependent vector field $\Phi_t$ is an admissible velocity field for $(\mu_t)_t$.
    Conversely, we will refer to any time-dependent smooth vector field $\Phi_t$ as a \emph{velocity field}, without reference to any probablity measures.
\end{definition}

The following definition is a special case of the notion of acceleration-free curves from \cite{parker_convexity_2024}.
\begin{definition}
    A \emph{convective geodesic} is a transport couple $(\mu_t, \Phi_t)_t$ such that $\partial_t \Phi_t + \Phi_t \cdot \nabla \Phi_t = 0$.
    We will also refer to the curve $(\mu_t)_t$ itself as a convective geodesic.
\end{definition}

The following proposition, which is not essential for the remainder of the paper, gives some intuition on the notion of convective geodesic. 
Namely, the density of a cloud of points is a convective geodesic if and only if each of the points follows a geodesic, with potentially arbitrary directions and speeds (except for a technical injectivity condition). This shows in particular the (almost) equivalence with the notion of acceleration-free curves from \cite{parker_convexity_2024}.
So this is a very permissive notion of geodesic in probability space, yet it is all that is needed for formal computations of arbitrary-order derivatives, as the sequel will show.

\begin{proposition} \label{prop:fdb_wasserstein:convgeod:charact_convgeod}
    Let any smooth vector field $v \in \mathfrak{X}(\RR^d)$ and let $T_t(x) = x+tv(x)$.
    Assume $T_t$ is injective for each $t$. Then the curve $\mu_t = (T_t)_\sharp \mu_0$ is a convective geodesic.%
    \footnote{The assumption that the points do not overlap at any given time (injectivity of $T_t$) is necessary, since the velocity field would be ill-defined at overlap times.
    The proper relaxation of the notion of convective geodesics, in this respect, is that of acceleration-free curves \cite{parker_convexity_2024}.
    Intuitively, they can be characterized by the fact that $\frac{d^2}{dt^2} \int f d\mu_t$ does not depend explicitly on $\nabla f$ but only on $\nabla^2 f$, for any smooth scalar field $f$.}
    
    Conversely, let $\partial_t \mu_t = -\nabla \cdot (\mu_t \Phi_t)$ be a convective geodesic.
    For any $x_0 \in \support(\mu_0)$, let $(x_t)_t$ be the solution of the ODE $\frac{d}{dt} x_t = \Phi_t(x_t)$.
    Then $(x_t)_t$ is a geodesic in $\RR^d$.
\end{proposition}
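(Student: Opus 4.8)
The plan is to prove the two directions essentially separately, both by reducing to the method of characteristics for the continuity equation.

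For the forward direction, I would start from $T_t(x) = x + tv(x)$ and note that, because $T_t$ is assumed injective (and smooth, with $T_0 = \mathrm{id}$), it is a diffeomorphism onto its image for $t$ in some interval, so $\mu_t = (T_t)_\sharp \mu_0$ is well-defined in $\PPP_2(\RR^d)$. The natural candidate for the velocity field is the one that transports the point currently at $y = T_t(x)$ with velocity $\partial_t T_t(x) = v(x)$; that is, $\Phi_t(y) = v(T_t^{-1}(y))$. First I would check that $(\mu_t, \Phi_t)_t$ is a transport couple, i.e. that $\partial_t \mu_t = -\nabla\cdot(\mu_t \Phi_t)$ holds in the sense of distributions: this is the standard computation, differentiating $\int f\, d\mu_t = \int f(T_t(x))\, d\mu_0(x)$ in $t$ to get $\int \nabla f(T_t(x))\cdot v(x)\, d\mu_0(x) = \int \nabla f(y)\cdot \Phi_t(y)\, d\mu_t(y)$. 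Then I would verify the convective-geodesic equation $\partial_t \Phi_t + \Phi_t\cdot\nabla\Phi_t = 0$. The cleanest way is to evaluate along characteristics: define $x_t := T_t(x_0)$, so that $\frac{d}{dt}x_t = v(x_0) = \Phi_t(x_t)$; then the total (material) derivative of $\Phi$ along this curve is $\frac{d}{dt}\big(\Phi_t(x_t)\big) = (\partial_t\Phi_t)(x_t) + \big(\Phi_t\cdot\nabla\Phi_t\big)(x_t)$, while on the other hand $\Phi_t(x_t) = v(x_0)$ is constant in $t$, so this total derivative is zero; since the characteristics through $T_0 = \mathrm{id}$ cover $\support(\mu_t)$, the equation $\partial_t\Phi_t + \Phi_t\cdot\nabla\Phi_t = 0$ holds $\mu_t$-a.e., which suffices for the transport couple. (Here I am using the fact that the $(x_t)$ are affine, $x_t = x_0 + t v(x_0)$, hence geodesics in $\RR^d$, which already previews the converse.)

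For the converse, I would take a convective geodesic $(\mu_t, \Phi_t)_t$ and fix $x_0\in\support(\mu_0)$; by smoothness of $\Phi_t$ the ODE $\dot x_t = \Phi_t(x_t)$ has a unique solution on a maximal interval. The point is that $t\mapsto \Phi_t(x_t)$ is constant: its derivative is $\frac{d}{dt}\big(\Phi_t(x_t)\big) = (\partial_t\Phi_t)(x_t) + \dot x_t\cdot\nabla\Phi_t(x_t) = \big(\partial_t\Phi_t + \Phi_t\cdot\nabla\Phi_t\big)(x_t) = 0$ by the convective-geodesic equation. Hence $\dot x_t \equiv \Phi_0(x_0)$ is a constant vector, so $x_t = x_0 + t\,\Phi_0(x_0)$ is an affine curve, which is precisely a geodesic of $(\RR^d, \norm{\cdot})$.

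I do not expect a serious obstacle here — the argument is a direct application of the method of characteristics and the chain rule — but two points require a little care and would be the main things to get right. First, the regularity/injectivity bookkeeping in the forward direction: one should be explicit that injectivity of $T_t$ (for all $t$ in the interval $I$) is what makes $\Phi_t$ well-defined and smooth, consistent with the footnote's warning about overlap times, and one should note that a priori the identity $\partial_t\Phi_t + \Phi_t\cdot\nabla\Phi_t = 0$ is only claimed $\mu_t$-a.e. (equivalently on $\support\mu_t$), not on all of $\RR^d$, which is all the definition of transport couple requires. Second, in the converse one should remark that the statement is about each individual characteristic curve, and that the convective-geodesic PDE is being used only at the points $(x_t)$ visited by that curve; no global statement about $\Phi_t$ off the support is needed. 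Both directions ultimately rest on the same one-line computation, $\frac{d}{dt}\Phi_t(x_t) = (\partial_t\Phi_t + \Phi_t\cdot\nabla\Phi_t)(x_t)$, which is the heart of why ``convective geodesic'' means ``characteristics are straight lines''.
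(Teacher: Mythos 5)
Your proposal is correct. The converse direction is verbatim the paper's argument: $\frac{d}{dt}\Phi_t(x_t) = (\partial_t\Phi_t + \Phi_t\cdot\nabla\Phi_t)(x_t) = 0$, hence $\ddot x_t = 0$. The forward direction, however, takes a genuinely different route. The paper works in Eulerian coordinates: after identifying $\Phi_t(x) = v(T_t^{-1}(x))$, it sets $y_t = T_t^{-1}(x)$ for fixed $x$, differentiates the implicit relation $y_t + t v(y_t) = x$ to get $\frac{dy_t}{dt} = -[I + t\nabla v(y_t)]^{-1} v(y_t)$, and then verifies by an explicit computation involving $[\nabla T_t]^{-1}$ that the two terms of $\partial_t\Phi_t + \Phi_t\cdot\nabla\Phi_t$ cancel. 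You instead argue in Lagrangian coordinates: $\Phi_t(T_t(x_0)) = v(x_0)$ is constant in $t$, and its total time-derivative along the characteristic is exactly the convective derivative, so the latter vanishes on the image of $T_t$. Your version avoids the matrix-inverse bookkeeping entirely and makes transparent that both directions rest on the same one-line identity; what the paper's computation buys is that it never has to invoke the flow map $x_t = T_t(x_0)$ as an intermediate object, staying purely at the level of the velocity field. Your caveats (that $\Phi_t$ is only defined, and the equation only verified, on the range of $T_t$, and that this is what the definition of transport couple requires) are consistent with what the paper's own proof implicitly assumes, so there is no gap.
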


\begin{proof}
    Let $f: \RR^d \to \RR$ a scalar field.
    Then
    \begin{equation*}
        \frac{d}{dt} \int f d\mu_t
        = \frac{d}{dt} \int f(T_t(x)) d\mu_0(x)
        = \int \nabla f(T_t(x)) \cdot \frac{d}{dt} T_t(x) d\mu_0(x)
        = \int \nabla f(x) \cdot v(T_t^{-1}(x)) d\mu_t(x).
    \end{equation*}
    So an admissible velocity field for $(\mu_t)_t$ is $\Phi_t(x) = v(T_t^{-1}(x))$. 
    Now fix $x$ and let $y_t = T_t^{-1}(x)$, i.e., $y_t + t v(y_t) = x$. 
    Then $\frac{dy_t}{dt} + v(y_t) + t \nabla v(y_t) \cdot \frac{dy_t}{dt} = 0$, i.e., $\frac{dy_t}{dt} = -\left[ I + t \nabla v(y_t) \right]^{-1} v(y_t)$, so
    \begin{align*}
        \frac{d}{dt} v(y_t) &= \nabla v(y_t) \cdot \frac{dy_t}{dt}
        = -\nabla v(y_t) \cdot \left[ I + t \nabla v(y_t) \right]^{-1} v(y_t) \\
        \frac{\partial}{\partial x} v(T_t^{-1}(x)) \cdot v(T_t^{-1}(x))
        &= \nabla v(y_t) \cdot \left[ \nabla T_t(x) \right]^{-1} \cdot v(y_t)
        = \nabla v(y_t) \cdot \left[ I + t \nabla v(T_t^{-1}(x)) \right]^{-1} \cdot v(y_t).
    \end{align*}
    So $\left[ \partial_t \Phi_t + \Phi_t \cdot \nabla \Phi_t \right](x) = \frac{d}{dt} v(T_t^{-1}(x)) + \frac{\partial}{\partial x} v(T_t^{-1}(x)) \cdot v(T_t^{-1}(x)) = 0$ as claimed.

    For the second part of the proposition, it suffices to show that 
    $\frac{d^2}{dt^2}x_t = 0$. Now by definition, $\frac{d^2}{dt^2} x_t = \frac{d}{dt} \left( \Phi_t(x_t) \right) = \partial_t \Phi_t(x_t) + \frac{dx_t}{dt} \cdot \nabla \Phi_t(x_t) = \left[ \partial_t \Phi_t + \Phi_t \cdot \nabla \Phi_t \right](x_t) = 0$.
\end{proof}

\subsection{Higher-order Wasserstein differentials} \label{subsec:fdb_wasserstein:wassdiffs}

Wasserstein differentials are defined in this paper as the symmetric forms arising from the time-derivatives of functionals along convective geodesics, as we record below.
\begin{definition} \label{def:fdb_wasserstein:wassdiffs:wassdiffs}
	The $n$-th order Wasserstein differential of a functional $\FFF: \PPP_2(\RR^d) \to \RR$ at $\mu$ is the symmetric $n$-linear form $D^n_\mu \FFF(\Phi_{(1)}, ..., \Phi_{(n)})$ over vector fields $\Phi_{(i)} \in L^2_\mu \cap \mathfrak{X}(\RR^d)$,
	such that
	\begin{equation*}
		\frac{d^n}{dt^n} \FFF(\mu_t) = D^n_\mu \FFF(\Phi_t, ..., \Phi_t)
		~~~~\text{for any $(\mu_t, \Phi_t)_t$ such that}~~~~
		\begin{cases}
			\partial_t \mu_t = -\nabla \cdot (\mu_t \Phi_t) \\
			\partial_t \Phi_t + \Phi_t \cdot \nabla \Phi_t = 0.
		\end{cases}
	\end{equation*}
	We will also write $D^n_\mu \FFF(\Phi)$ for $D^n_\mu \FFF(\Phi, ..., \Phi)$.
\end{definition}

\begin{remark}
	One can check by straightforward computations that, 
	for any time-dependent scalar field $(\phi_t)_t$ on $\RR^d$, denoting $\Phi_t = \nabla \phi_t$, we have~\cite[Remark~3.11]{gigli_second_2012}
	\begin{equation*}
		\partial_t \phi_t = -\frac{1}{2} \norm{\nabla \phi_t}^2
		~~\implies~~
		\partial_t \Phi_t = -\Phi_t \cdot \nabla \Phi_t.
	\end{equation*}
	In particular, if $(\mu_t, \phi_t)_t$ satisfies the Wasserstein geodesic equation, then $(\mu_t, \Phi_t)_t$ is a convective geodesic.
	As a consequence, the second-order Wasserstein differential of a functional in the sense of the definition above is a proper extension of the Wasserstein Hessian in the sense of \eqref{eq:fdb_wasserstein:convgeod:wassgeod}.
\end{remark}

The following proposition gives the general formula for the Wasserstein differentials of $\FFF$ in terms of its $k$-th variations.
The formula for the second-order differentials (or rather, the Wasserstein Hessians) also appeared in this form in \cite[Definition~3.1]{chow2019partial} and \cite[Proposition~19]{li_transport_2022}.

\begin{proposition} \label{prop:fdb_wasserstein:wassdiffs:expr_firstvars}
	Let a functional $\FFF: \PPP_2(\RR^d) \to \RR$, $\mu \in \PPP_2(\RR^d)$, and $\Phi \in L^2_\mu \cap \mathfrak{X}(\RR^d)$.
	Introduce the following shorthands:
	\begin{itemize}
		\item Denote by $\FFF^{(k)}[\mu]: (\RR^d)^k \to \RR$ the $k$-th variation in measure space.
		\item For any measures $\mu_1, \mu_2, ...$ over $\RR^d$, denote by $B_{n,k}(\mu_1, ..., \mu_{n-k+1})$ the measure over $(\RR^d)^k$ defined by interpreting the Bell polynomial $B_{n,k}$ as a polynomial over the ring of measures equipped with the tensor product. For example, $B_{4,2}(X_1, X_2, X_3) = 3 X_2^2 + 4 X_1 X_3$ and $B_{4,2}(\mu_1, \mu_2, \mu_3)(dx_1, dx_2) = 3 \mu_2(dx_1) \otimes \mu_2(dx_2) + 4 \mu_1(dx_1) \otimes \mu_3(dx_2)$. 
		(In fact the ordering of the tensor products will not matter because the $\FFF^{(k)}[\mu]: (\RR^d)^k \to \RR$ are permutation-invariant.)
		\item For any $k \geq 1$, for any measure $\bm{s}$ over $(\RR^d)^k$ and any $\bm{f}: (\RR^d)^k \to \RR$, denote
		$\bm{f} :_{\int} \bm{s} = \int_{\RR^d} ... \int_{\RR^d} \bm{f}(x_1, ..., x_k) \bm{s}(dx_1, ..., dx_k)$.
	\end{itemize}
	Then for any $n \geq 1$,
	\begin{equation*}
		D^n_\mu \FFF(\Phi)
		= \sum_{k=1}^n \FFF^{(k)}[\mu] :_{\int} B_{n,k}\left( -\nabla : [\mu \Phi], (-\nabla)^2 : [\mu \Phi^{\otimes 2}], (-\nabla)^3 : [\mu \Phi^{\otimes 3}], ... \right).
	\end{equation*}
	In particular,
	\begin{align*}
		D^1_\mu \FFF(\Phi) &= \int \Phi \cdot \nabla \FFF'[\mu] \,d\mu \\
		D^2_\mu \FFF(\Phi) &= \int \Phi^\top \nabla^2 \FFF'[\mu] \Phi \,d\mu
		+ \int d\mu(x) \Phi(x) \cdot \nabla_x \int d\mu(x') \Phi(x') \cdot \nabla_{x'} \FFF''[\mu](x,x') \\
		D^3_\mu \FFF(\Phi) &= \int d\mu\, \Phi_t^{\otimes 3} \cdot \nabla^3 \FFF'[\mu] \\
		&~~ + 3 \int d\mu(x) \Phi_t(x) \cdot \nabla_x \int d\mu(x') (\Phi_t(x')^{\otimes 2}) \cdot \nabla_{x'}^2 \FFF''[\mu](x,x') \\
		&~~ + \int d\mu(x_1) \Phi_t(x_1) \cdot \nabla_{x_1} \int d\mu(x_2) \Phi_t(x_2) \cdot \nabla_{x_2} \int d\mu(x_3) \Phi_t(x_3) \cdot \nabla_{x_3} \FFF^{(3)}[\mu](x_1,x_2,x_3).
	\end{align*}
\end{proposition}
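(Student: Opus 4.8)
The plan is to split the proof into two essentially independent parts: first, a closed form for the pointwise time-derivatives of $\mu_t$ along a convective geodesic; second, the classical univariate Faà di Bruno formula applied inside the \emph{flat} vector space of signed measures, of which $\PPP_2(\RR^d)$ is an affine subset. This decoupling is the whole point: the curvature of the Wasserstein metric never enters, because differentiating $\FFF(\mu_t)$ naively treats $t \mapsto \mu_t$ as a curve in a linear space.

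For the first part, I would establish that for any convective geodesic $(\mu_t,\Phi_t)_t$ and any $j \ge 0$,
\[
  \partial_t\bigl(\mu_t\,\Phi_t^{\otimes j}\bigr) = -\nabla\cdot\bigl(\mu_t\,\Phi_t^{\otimes(j+1)}\bigr),
\]
where on the right the divergence contracts the newly created $(j+1)$-th index. This is a one-line Leibniz computation: expand $\partial_t(\mu_t\Phi_t^{i_1}\cdots\Phi_t^{i_j})$, substitute the continuity equation $\partial_t\mu_t = -\nabla\cdot(\mu_t\Phi_t)$ for the $\mu_t$ factor and the convective geodesic equation $\partial_t\Phi_t = -\Phi_t\cdot\nabla\Phi_t$ for each $\Phi_t$ factor, and recognize the result as $-\nabla_k(\mu_t\Phi_t^k\Phi_t^{i_1}\cdots\Phi_t^{i_j})$ by the product rule for $\nabla$. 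Iterating (and using that $\partial_t$ commutes with $\nabla$) gives $\partial_t^j\mu_t = (-\nabla)^j:[\mu_t\Phi_t^{\otimes j}]$ for every $j \ge 1$; at each time $t$ these are exactly the measures fed into the Bell polynomials in the statement.

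For the second part, note that $t \mapsto \FFF(\mu_t)$ is the composition of the smooth function $\FFF$ on the vector space of signed measures — whose $k$-th derivative at $\mu$ is by definition the $k$-th variation $\FFF^{(k)}[\mu]$, a symmetric $k$-linear form represented by a function on $(\RR^d)^k$ and contracted against measures via $:_{\int}$ — with a smooth curve in that space. The additive-constant ambiguity of the variations is immaterial here because each $\partial_t^j\mu_t$ is a signed measure of zero total mass, so any summand of $\FFF^{(k)}[\mu]$ independent of at least one of its $k$ arguments integrates to zero. Applying the abstract (vector-space) univariate Faà di Bruno formula from \autoref{sec:apx_bell} — i.e.\ iterating the Leibniz rule, whose combinatorics is precisely the Bell polynomial expansion \eqref{eq:fdb_wasserstein:findim:flat:FdB} — gives
\[
  \frac{d^n}{dt^n}\FFF(\mu_t) = \sum_{k=1}^n \FFF^{(k)}[\mu_t] :_{\int} B_{n,k}\bigl(\partial_t\mu_t,\,\partial_t^2\mu_t,\,\ldots,\,\partial_t^{n-k+1}\mu_t\bigr),
\]
and substituting the closed forms from the first part yields the displayed identity for $D^n_\mu\FFF(\Phi)$. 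Since its right-hand side depends only on $(\mu,\Phi)$ and is homogeneous of degree $n$ in $\Phi$, this also confirms that $D^n_\mu\FFF$ is well-defined (the symmetric $n$-linear form being recovered from this diagonal by polarization), independently of the chosen convective geodesic — one such exists for small $t$ by \autoref{prop:fdb_wasserstein:convgeod:charact_convgeod} applied with $v = \Phi$.

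Finally, for the explicit formulas at $n \le 3$ I would expand $B_{n,k}$ using the list recalled in \autoref{subsubsec:fdb_wasserstein:findim:flat}, producing pairings $\FFF^{(k)}[\mu] :_{\int} (-\nabla)^{\ell_1}{:}[\mu\Phi^{\otimes\ell_1}] \otimes \cdots$, and then integrate by parts $\ell_a$ times in each variable to move every divergence onto $\FFF^{(k)}[\mu]$; e.g.\ $\FFF'[\mu]:_{\int}\bigl[(-\nabla)^2{:}[\mu\Phi^{\otimes2}]\bigr] = \int \Phi^\top \nabla^2\FFF'[\mu]\,\Phi\, d\mu$ and $\FFF''[\mu]:_{\int}\bigl[((-\nabla){:}[\mu\Phi])^{\otimes2}\bigr] = \int d\mu(x)\,\Phi(x)\cdot\nabla_x\!\int d\mu(x')\,\Phi(x')\cdot\nabla_{x'}\FFF''[\mu](x,x')$, and likewise at order three. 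I expect no genuine analytic difficulty — regularity is not at issue since, as the paper stresses, these identities are formal — so the main obstacle is purely bookkeeping: getting the product rule for $\nabla$ right on the tensor $\mu_t\Phi_t^{\otimes j}$ in the first part, and carefully tracking which $\nabla$ contracts with which index through the integrations by parts so that the coefficients ($3$, $6$, $\dots$) and the symmetry of $D^n_\mu\FFF$ come out correctly.
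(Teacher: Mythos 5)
Your proposal is correct and follows essentially the same route as the paper: first compute $\partial_t^j\mu_t = (-\nabla)^j:[\mu_t\Phi_t^{\otimes j}]$ along a convective geodesic by induction, then apply the univariate Faa di Bruno formula to $\FFF$ viewed on the flat space of signed measures (this second step is exactly the paper's \autoref{lm:fdb_wasserstein:wassdiffs:meas_space_FdB}), and substitute. Your added observations — that the additive-constant ambiguity in the variations is harmless because each $\partial_t^j\mu_t$ has zero total mass, and that well-definedness of $D^n_\mu\FFF$ follows since the result depends only on $(\mu,\Phi)$ — are correct refinements the paper leaves implicit.
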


\begin{proof}
	Let $\partial_t \mu_t = -\nabla \cdot (\mu_t \Phi_t)$ be a convective geodesic, with $\mu_0=\mu$ and $\Phi_0=\Phi$. Then one can check by induction that for any $n \geq 0$, 
	\begin{equation*}
		\forall f: \RR^d \to \RR,~
		\frac{d^n}{dt^n} \int f d\mu_t = \int (\nabla^n f)(x) : \Phi_t^{\otimes n}(x) d\mu_t(x),
		~~~~\text{i.e.,}~~~~
		\partial_t^n \mu_t = (-\nabla)^n : [\mu_t \Phi_t^{\otimes n}].
	\end{equation*}
	So by \autoref{lm:fdb_wasserstein:wassdiffs:meas_space_FdB} below,
	\begin{align*}
		D^n_{\mu_t} \FFF(\Phi_t) 
		&= \frac{d^n}{dt^n} \FFF(\mu_t)
		= \sum_{k=1}^n \FFF^{(k)}[\mu_t] :_{\int} B_{n,k}\left( \partial_t \mu_t, \partial_t^2 \mu_t, \partial_t^3 \mu_t, ... \right) \\
		&= \sum_{k=1}^n \FFF^{(k)}[\mu_t] :_{\int} B_{n,k}\left( -\nabla : [\mu_t \Phi_t], (-\nabla)^2 : [\mu_t \Phi_t^{\otimes 2}], (-\nabla)^3 : [\mu_t \Phi_t^{\otimes 3}], ... \right).
		\rqedhere
	\end{align*}
\end{proof}

\begin{lemma} \label{lm:fdb_wasserstein:wassdiffs:meas_space_FdB}
	Let $\FFF: \PPP(\RR^d) \to \RR$ and any curve $(\mu_t)_t$ in $\PPP(\RR^d)$ which is infinitely differentiable in~$t$ in the distributional sense (not necessarily absolutely continuous in the Wasserstein sense).
	Then using the same shorthands as in the proposition above,
	\begin{equation*}
		\frac{d^n}{dt^n} \FFF(\mu_t)
		= \sum_{k=1}^n \FFF^{(k)}[\mu_t] :_{\int} B_{n,k}\left( \partial_t \mu_t, \partial_t^2 \mu_t, \partial_t^3 \mu_t, ... \right).
	\end{equation*}
\end{lemma}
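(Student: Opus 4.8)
The plan is to prove Lemma~\ref{lm:fdb_wasserstein:wassdiffs:meas_space_FdB} by reducing the measure-space statement to the classical univariate Faa di Bruno formula \eqref{eq:fdb_wasserstein:findim:flat:FdB} applied in a finite-dimensional setting, and then passing to the distributional limit. Concretely, the key observation is that $\FFF$, viewed along the curve $(\mu_t)_t$, is a composition $t \mapsto \mu_t \mapsto \FFF(\mu_t)$, where the ``inner'' map lands in an affine space (the space of signed measures, with probability measures as an affine subspace) and the ``outer'' map $\FFF$ has well-defined higher variations $\FFF^{(k)}[\mu]$ playing the role of the iterated differentials $\nabla^k f$ in the Euclidean chain rule. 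Since the space of measures is flat (affine), there are no curvature obstructions of the kind discussed in \autoref{subsubsec:fdb_wasserstein:findim:curved}, so the flat-case formula should apply verbatim, with the ``velocity'' and iterated ``accelerations'' of the curve $t \mapsto \mu_t$ being simply $\partial_t \mu_t, \partial_t^2 \mu_t, \dots$ (the curve being straight-differentiated in the linear structure, all higher ``covariant'' derivatives coincide with ordinary ones). The roles match: $B_{n,k}(\partial_t\mu_t, \dots)$ is a measure on $(\RR^d)^k$ obtained by the tensor-product interpretation of the Bell polynomial, and contracting against $\FFF^{(k)}[\mu_t]$ via $:_\int$ is exactly the analog of the tensor contraction $(\nabla^k f) : B_{n,k}(\dots)$.

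First I would set up the finite-dimensional approximation. Pick any finite-dimensional affine slice: fix $\mu_t$ and a basis of signed-measure perturbations, or more simply consider a smooth curve in a finite-dimensional simplex of measures and reduce to the Euclidean Faa di Bruno formula there; then argue that both sides of the claimed identity depend on $(\mu_t)_t$ only through the finite collection of ``jets'' $(\partial_t^j \mu_t)_{j \le n}$ tested against the finitely many (symmetric, bounded) kernels $\FFF^{(k)}[\mu_t]$, $k \le n$. Alternatively — and this is probably cleaner for writing — I would prove the identity directly by induction on $n$, differentiating $\frac{d^n}{dt^n}\FFF(\mu_t)$ once more: using the definition of the first variation, $\frac{d}{dt}\big(\FFF^{(k)}[\mu_t] :_\int \bm{s}_t\big) = \big(\FFF^{(k+1)}[\mu_t] :_\int (\bm{s}_t \otimes \partial_t\mu_t)\big) + \big(\FFF^{(k)}[\mu_t] :_\int \partial_t \bm{s}_t\big)$, where the first term comes from differentiating the kernel (which raises the variation order by one and appends a $\partial_t\mu_t$ factor) and the second from differentiating the measure $\bm{s}_t = B_{n,k}(\partial_t\mu_t, \partial_t^2\mu_t, \dots)$ by the Leibniz/product rule for Bell polynomials. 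Then the recursion $B_{n+1,k} = \sum$ (the standard Bell polynomial identity $B_{n+1,k}(X_1,\dots) = \sum_{j} \binom{n}{j-1} X_j B_{n-j+1,k-1}(X_1,\dots)$, or equivalently the derivation-rule $\partial B_{n,k} = \dots$) matches the two contributions and closes the induction; this is exactly the abstract univariate Faa di Bruno statement proved in \autoref{sec:apx_bell}, which I am allowed to invoke.

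Thus the cleanest route is: invoke the abstract univariate Faa di Bruno's formula of \autoref{sec:apx_bell} in the algebraic setting where the ``base point'' moves in the affine space of measures, the ``function'' is $\FFF$ with its iterated variations $\FFF^{(k)}[\mu_t]$, and the ``composition with a curve'' is $t \mapsto \FFF(\mu_t)$; the only thing to check is that the hypotheses of that abstract lemma are met, namely that the first-variation operator $\FFF \mapsto \FFF'[\cdot]$ satisfies the requisite Leibniz rule when differentiating along $t$, which is immediate from the definition $\int \FFF'[\mu_t]\, d(\nu-\mu_t) = \partial_\eps|_0 \FFF(\mu_t + \eps(\nu-\mu_t))$ combined with the chain rule and the assumed distributional smoothness of $(\mu_t)_t$. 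I would then spell out the $n=1,2,3$ cases to match the display in Proposition~\ref{prop:fdb_wasserstein:wassdiffs:expr_firstvars} as a sanity check.

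The main obstacle is bookkeeping rather than conceptual: one must be careful that the ``tensor product of measures'' interpretation of $B_{n,k}$ interacts correctly with the symmetry of the kernels $\FFF^{(k)}[\mu_t]$ — different orderings of the tensor factors in $B_{n,k}(\partial_t\mu_t,\dots)$ give different measures on $(\RR^d)^k$ but yield the same value after $:_\int$-contraction against the permutation-invariant $\FFF^{(k)}[\mu_t]$, so one should either fix a canonical ordering or, as the paper does parenthetically, remark that the ordering is irrelevant. Secondarily, one must ensure the interchange of $\frac{d}{dt}$ with the integrals defining $:_\int$ and with the limit defining $\FFF^{(k)}[\mu_t]$ is legitimate; but since the lemma is explicitly stated in the ``formal'' regime (per the ``Omitted regularity issues'' paragraph), it suffices to carry out these manipulations assuming enough smoothness/integrability, without tracking precise hypotheses.
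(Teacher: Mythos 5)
Your proposal is correct and follows essentially the same route as the paper: the paper's proof is precisely the observation that $\PPP(\RR^d)$ sits inside the (flat) Banach space of signed measures, so the univariate Faa di Bruno formula applies verbatim with $\partial_t^j\mu_t$ as the iterated derivatives and $\FFF^{(k)}[\mu_t]$ as the differentials, with regularity issues left aside. Your additional detail (the induction via the Bell-polynomial derivation rule, matching \autoref{prop:apx_bell:abstract_vectorspaces}) is a faithful elaboration of that one-line argument rather than a different approach.
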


\begin{proof}
	This follows (up to regularity issues which are not considered in this work) from Faa di Bruno's formula applied to $\FFF: \PPP(\RR^d) \to \RR$, by viewing $\PPP(\RR^d)$ as a subspace of the Banach space of signed measures over $\RR^d$.
\end{proof}

\subsection{Convective derivatives} \label{subsec:fdb_wasserstein:convderiv}

The following notion, which is classical in fluid mechanics, will allow us to make the definition of Wasserstein differentials more ``operational''.
Note that with this definition, convective geodesics are precisely the transport couples $(\mu_t, \Phi_t)_t$ such that $(\cD \Phi)_t$, the convective derivative of the velocity field itself, is zero.
\begin{definition} \label{def:fdb_wasserstein:convderiv:convderiv}
    Given a velocity field $\Phi_t$,
    the \emph{convective} or \emph{material derivative} of a time-dependent tensor field $g_t$ is defined as $(\cD g)_t = \frac{d}{dt} g_t + \Phi_t \cdot \nabla g_t$, which is a time-dependent tensor field of the same type as $g_t$.
\end{definition}

The following lemmas explain the usefulness of the convective derivative in the context of probability flows.
They are all reformulations of classical facts about Eulerian calculus specialized to the Euclidean space~\cite[Chapter~14]{villani_optimal_2009} \cite[Chapter~3]{gigli_second_2012}.
\begin{lemma} \label{lm:fdb_wasserstein:convderiv:ddt_lambdat_mut}
    For any transport couple $\partial_t \mu_t = -\nabla \cdot (\mu_t \Phi_t)$ and any time-dependent tensor field $g_t$,
    \begin{equation*}
        \frac{d}{dt} \int g_t \,d\mu_t
        = \int (\cD g)_t d\mu_t.
    \end{equation*}
\end{lemma}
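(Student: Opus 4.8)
The plan is to compute $\frac{d}{dt}\int g_t\,d\mu_t$ by splitting it into the contribution from the time-dependence of $g_t$ and the contribution from the time-dependence of $\mu_t$, then recognizing that the second contribution, after an integration by parts using the continuity equation, exactly produces the transport term $\int \Phi_t\cdot\nabla g_t\,d\mu_t$. Concretely, first I would write, formally and componentwise (so that $g_t$ being a general $(p,q)$-tensor field causes no trouble — the identity holds entry by entry),
\begin{equation*}
    \frac{d}{dt}\int g_t\,d\mu_t = \int \partial_t g_t\,d\mu_t + \int g_t\,\partial_t\mu_t,
\end{equation*}
which is just the Leibniz rule for differentiating a product (pairing) under the integral sign; justifying the interchange of $\frac{d}{dt}$ and $\int$ is exactly the kind of regularity issue the paper has declared out of scope, so I would invoke that.

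Next I would substitute the continuity equation $\partial_t\mu_t = -\nabla\cdot(\mu_t\Phi_t)$ into the second term and integrate by parts. By the divergence convention fixed in the Notation section, $\int \varphi(x)\,\nabla\cdot\bm\mu(dx) = -\int \nabla\varphi(x)\cdot\bm\mu(dx)$, applied here with $\varphi = g_t$ (scalar, or each component of the tensor) and $\bm\mu = \mu_t\Phi_t$, this gives
\begin{equation*}
    \int g_t\,\partial_t\mu_t = -\int g_t\,\nabla\cdot(\mu_t\Phi_t) = \int \nabla g_t\cdot(\mu_t\Phi_t) = \int \Phi_t\cdot\nabla g_t\,d\mu_t,
\end{equation*}
where in the last equality I used the contraction notation $\Phi\cdot\nabla g$ from the Notation section, $(\Phi\cdot\nabla g)^{\bm j}_{\ \bm k} = \Phi^i\nabla_i g^{\bm j}_{\ \bm k}$. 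Combining the two displays with the definition $(\cD g)_t = \partial_t g_t + \Phi_t\cdot\nabla g_t$ of the convective derivative (\autoref{def:fdb_wasserstein:convderiv:convderiv}) yields the claim.

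The only real obstacle is analytic rather than algebraic: the integration by parts requires sufficient decay of $\mu_t\Phi_t g_t$ at infinity (the "vanishing at infinity" hypothesis attached to the divergence convention) and the $\frac{d}{dt}$–$\int$ interchange requires dominated-convergence-type control, neither of which we verify here — consistent with the paper's stated convention that such results are treated formally (and that for the specific heat-flow application the needed estimates can be supplied separately by adapting \cite{costa_new_1985} and \cite{cheng_higher_2015}). Thus the proof is a one-line Leibniz rule followed by one integration by parts, and I would present it exactly in that form.
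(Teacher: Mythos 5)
Your proposal is correct and follows essentially the same route as the paper's proof: a Leibniz splitting of the time-derivative, followed by substituting the continuity equation and using the (distributional) definition of the divergence to convert $\int g_t\,d(\partial_t\mu_t)$ into $\int \Phi_t\cdot\nabla g_t\,d\mu_t$. The paper's version is just the one-line computation you describe, with the regularity caveats left implicit as per its stated conventions.
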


\begin{proof}
    We have by definition
    \begin{equation*}
        \frac{d}{dt} \int g_t \,d\mu_t
        = \int \left( \frac{d}{dt} g_t \right) d\mu_t
        + \int g_t \,d(\partial_t \mu_t)
        = \int \left( \frac{d}{dt} g_t \right) d\mu_t
        + \int \Phi_t \cdot \nabla g_t \,d\mu_t
        = \int (\cD g)_t d\mu_t.
        \rqedhere
    \end{equation*}
\end{proof}

\begin{lemma} \label{lm:fdb_wasserstein:convderiv:product_rule}
    For any velocity field $(\Phi_t)_t$, the convective derivative satisfies the Leibniz product rule. That is,
    \begin{equation*}
        \forall g=(g_s)_s, \forall h=(h_s)_s,~
        \left( \cD (g_s \cdot h_s)_s \right)_t
        = (\cD g)_t \cdot h_t + g_t \cdot (\cD h)_t
    \end{equation*}
    where $g, h$ are both time-dependent vector fields or both time-dependent scalar fields.
    More generally, the same identity holds for $g, h$ being any time-dependent tensor fields of compatible dimensions and ``$\cdot$'' being a contraction over any compatible subset of indices.
\end{lemma}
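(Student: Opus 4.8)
The plan is to exploit the fact that $\cD = \partial_t + \Phi_t\cdot\nabla$ is a sum of two operations, each of which is a derivation with respect to tensor contraction; since a sum of derivations is again a derivation, the Leibniz rule for $\cD$ follows. Concretely, I would first verify the product rule for the pointwise time-derivative $\partial_t$: writing a contraction $g_t\cdot h_t$ in coordinates, each component is a finite sum of products of component functions of $g_t$ and $h_t$, so the ordinary one-variable product rule applied entrywise gives $\partial_t(g_t\cdot h_t) = (\partial_t g_t)\cdot h_t + g_t\cdot(\partial_t h_t)$, with the contraction over the same index set on both sides.

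Next I would verify the analogous identity for the spatial derivative on $\RR^d$, namely $\nabla(g_t\cdot h_t) = (\nabla g_t)\cdot h_t + g_t\cdot(\nabla h_t)$, where the new index introduced by $\nabla$ is left free and the contraction on the right is the same as on the left. Per the notation section, $\nabla_i$ acts entrywise as $\partial/\partial x_i$ on tensor components, so this is once more the componentwise product rule. Contracting the free index against $\Phi_t^i$ then yields
\[
    \Phi_t\cdot\nabla(g_t\cdot h_t) = (\Phi_t\cdot\nabla g_t)\cdot h_t + g_t\cdot(\Phi_t\cdot\nabla h_t).
\]

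Finally I would add the two identities. By \autoref{def:fdb_wasserstein:convderiv:convderiv},
\[
    \big(\cD(g_s\cdot h_s)_s\big)_t = \partial_t(g_t\cdot h_t) + \Phi_t\cdot\nabla(g_t\cdot h_t),
\]
and substituting the two product rules and regrouping the terms of $g_t$ with those of $h_t$ gives $(\partial_t g_t + \Phi_t\cdot\nabla g_t)\cdot h_t + g_t\cdot(\partial_t h_t + \Phi_t\cdot\nabla h_t) = (\cD g)_t\cdot h_t + g_t\cdot(\cD h)_t$, which is the claim. The scalar-field and vector-field statements are the special cases in which ``$\cdot$'' is, respectively, the trivial contraction or a contraction over a single index. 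I do not expect any genuine obstacle: the only point requiring care is the index bookkeeping when ``$\cdot$'' ranges over an arbitrary compatible subset of indices, but since both $\partial_t$ and each $\nabla_i$ act diagonally on tensor components, they commute with taking contractions, and the Leibniz rule reduces in every component to the one-variable product rule.
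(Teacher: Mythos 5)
Your proposal is correct and follows essentially the same route as the paper: the paper's proof also expands $\cD$ as $\frac{d}{dt} + \Phi_t\cdot\nabla$, applies the componentwise product rule to each of the two terms, and regroups. No gap to report.
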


\begin{proof}
    We have
    \begin{align*}
        \left( \cD (g_s \cdot h_s)_s \right)_t
        &= \frac{d}{dt} \left[ g_t \cdot h_t \right]
        + \Phi_t \cdot \nabla \left[ g_t \cdot h_t \right] \\
        &= g_t \cdot \frac{d}{dt} h_t + h_t \cdot \frac{d}{dt} g_t
        + \Phi_t^i \cdot \left[ (\nabla_i g_t) \cdot h_t + (\nabla_i h_t) \cdot g_t \right] \\
        &= g_t \cdot \frac{d}{dt} h_t + h_t \cdot \frac{d}{dt} g_t
        + h_t \cdot \left( \Phi_t^i \nabla_i g_t \right) + g_t \cdot \left( \Phi_t^i \nabla_i h_t \right) \\
        &= h_t \cdot \left[ \frac{d}{dt} g_t + \Phi_t \cdot \nabla g_t \right] + g_t \cdot \left[ \frac{d}{dt} h_t + \Phi_t \cdot \nabla h_t \right] = h_t \cdot (\cD g)_t + g_t \cdot (\cD h)_t.
        \rqedhere
    \end{align*}
\end{proof}

\begin{lemma} 
    For any velocity field $(\Phi_t)_t$, the convective derivative satisfies the chain rule.
    We will only need its scalar version: 
    for any time-dependent scalar field $(\lambda_t)_t$ and any function $P_t(\lambda) = P(t, \lambda): \RR \times \RR \to \RR$,
        \begin{equation*}
            \left( \cD \left( P_s(\lambda_s) \right)_s \right)_t
            = (\partial_t P_t)(\lambda_t) + P_t'(\lambda_t) (\cD \lambda)_t.
        \end{equation*}
    Here ``$(P_s(\lambda_s))_s$'' refers to the time-dependent scalar field $x \mapsto P_s(\lambda_s(x))$.
    Note that $P_t$ is only a function over scalars, and not over scalar fields,
    so for example this lemma does not say anything about $\cD (\nabla \lambda)$.
\end{lemma}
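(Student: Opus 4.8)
The plan is to unfold the definition of the convective derivative and apply the ordinary (multivariable) chain rule, tracking carefully the fact that $P_t$ depends on $t$ both directly and through $\lambda_t$, while $\nabla$ differentiates in the space variable only.

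Concretely, by \autoref{def:fdb_wasserstein:convderiv:convderiv} applied to the time-dependent scalar field $g_s = \big( x \mapsto P_s(\lambda_s(x)) \big)$, we have
\begin{equation*}
    \left( \cD \left( P_s(\lambda_s) \right)_s \right)_t
    = \frac{d}{dt}\left[ P_t(\lambda_t) \right] + \Phi_t \cdot \nabla\left[ P_t(\lambda_t) \right].
\end{equation*}
For the first term, I would fix $x$ and apply the chain rule to $t \mapsto P\big(t, \lambda_t(x)\big)$, which gives $\frac{d}{dt}\left[ P_t(\lambda_t) \right] = (\partial_t P_t)(\lambda_t) + P_t'(\lambda_t)\, \partial_t \lambda_t$, where the two summands come respectively from the derivative of $P$ in its first and second slots. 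For the second term, since $P_t$ is a function of a real variable and does not depend on the space variable except through $\lambda_t$, the ordinary chain rule for $\nabla$ yields $\nabla\left[ P_t(\lambda_t) \right] = P_t'(\lambda_t)\, \nabla \lambda_t$, hence $\Phi_t \cdot \nabla\left[ P_t(\lambda_t) \right] = P_t'(\lambda_t)\, \Phi_t \cdot \nabla \lambda_t$.

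Adding the two contributions and factoring out $P_t'(\lambda_t)$ gives
\begin{equation*}
    \left( \cD \left( P_s(\lambda_s) \right)_s \right)_t
    = (\partial_t P_t)(\lambda_t) + P_t'(\lambda_t)\Big[ \partial_t \lambda_t + \Phi_t \cdot \nabla \lambda_t \Big]
    = (\partial_t P_t)(\lambda_t) + P_t'(\lambda_t)\, (\cD \lambda)_t,
\end{equation*}
using once more \autoref{def:fdb_wasserstein:convderiv:convderiv} to recognize $(\cD \lambda)_t$ in the bracket, which is the claimed identity.

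There is no real obstacle here: the statement is a direct computation, and the only point requiring a little care is notational discipline — keeping the partial time-derivative $(\partial_t P_t)$ distinct from the total derivative $\frac{d}{dt}[P_t(\lambda_t)]$, and remembering that $\nabla$ acts only on the spatial dependence, so that it does not see the explicit $t$-dependence of $P_t$. (As the excerpt already warns, this also makes clear why the lemma says nothing about $\cD(\nabla\lambda)$, since there $P$ would have to be a function on scalar fields rather than on scalars.) As with \autoref{lm:fdb_wasserstein:convderiv:product_rule}, the argument extends verbatim, mutatis mutandis, if $P$ depends on several scalar fields, by replacing $P_t'(\lambda_t)(\cD\lambda)_t$ with the sum over slots of $\partial_i P_t(\lambda_t^{(1)},\dots)\,(\cD\lambda^{(i)})_t$.
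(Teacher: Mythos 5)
Your proof is correct and is essentially identical to the paper's: both unfold \autoref{def:fdb_wasserstein:convderiv:convderiv}, apply the ordinary chain rule separately to the time-derivative and to the spatial gradient of $P_t(\lambda_t)$, and recombine the $P_t'(\lambda_t)$ terms to recognize $(\cD\lambda)_t$. No further comment needed.
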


\begin{proof}
    We have
    \begin{align*}
        \left( \cD \left( P_s(\lambda_s) \right)_s \right)_t
        &= \frac{d}{dt} \left[ P(t, \lambda_t) \right]
        + \Phi_t \cdot \nabla \left[ P_t(\lambda_t) \right] \\
        &= \partial_t P(t, \lambda_t) + P_t'(\lambda_t) \frac{d}{dt} \lambda_t
        + P_t'(\lambda_t) \Phi_t \cdot \nabla \lambda_t \\
        &= (\partial_t P_t)(\lambda_t) + P_t'(\lambda_t) (\cD \lambda)_t.
        \rqedhere
    \end{align*}
\end{proof}

\begin{lemma} 
\label{lm:fdb_wasserstein:convderiv:convderiv_nabla}
	For any velocity field $(\Phi_t)_t$, for any time-dependent tensor field $(g_t)_t$,
	\begin{align*}
		(\cD \nabla_i g)_t &= \nabla_i (\cD g)_t - (\nabla_i \Phi_t) \cdot (\nabla g_t).
    \end{align*}
\end{lemma}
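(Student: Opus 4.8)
The plan is to expand both sides with the definition $\cD g_t = \partial_t g_t + \Phi_t \cdot \nabla g_t$ from \autoref{def:fdb_wasserstein:convderiv:convderiv}, treating $\nabla_i g_t$ as a time-dependent tensor field with one extra lower slot, and then to match the two expressions term by term. The only nontrivial inputs are that on $\RR^d$ the partial derivatives commute, both in the pair $(\partial_t, \nabla_i)$ (they act on independent variables) and in the pair $(\nabla_i, \nabla_\ell)$ (equality of mixed spatial partials).

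First I would compute the left-hand side:
\[
    (\cD \nabla_i g)_t = \partial_t(\nabla_i g_t) + \Phi_t^\ell\, \nabla_\ell(\nabla_i g_t) = \nabla_i(\partial_t g_t) + \Phi_t^\ell\, \nabla_i(\nabla_\ell g_t),
\]
where the first equality is the definition of $\cD$ applied to the tensor field $\nabla_i g_t$, and the second uses $\partial_t \nabla_i = \nabla_i \partial_t$ and $\nabla_\ell \nabla_i = \nabla_i \nabla_\ell$.

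Next I would expand the right-hand side. By the ordinary product rule for $\nabla$,
\[
    \nabla_i(\cD g)_t = \nabla_i(\partial_t g_t) + \nabla_i\big(\Phi_t^\ell\, \nabla_\ell g_t\big) = \nabla_i(\partial_t g_t) + (\nabla_i \Phi_t^\ell)(\nabla_\ell g_t) + \Phi_t^\ell\, \nabla_i(\nabla_\ell g_t).
\]
Since the contraction $(\nabla_i \Phi_t)\cdot(\nabla g_t)$ is by definition $(\nabla_i \Phi_t^\ell)(\nabla_\ell g_t)$ (contracting the upper index of $\nabla_i\Phi_t$ against the gradient index of $\nabla g_t$), subtracting it cancels the middle term and leaves $\nabla_i(\partial_t g_t) + \Phi_t^\ell\, \nabla_i(\nabla_\ell g_t)$, which is exactly the expression obtained for the left-hand side. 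This proves the identity.

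There is no real obstacle: the computation is routine, and the only points requiring care are the index bookkeeping and the precise meaning of the contraction ``$\cdot$'' in the correction term. The conceptual content worth flagging is that $\cD$ fails to commute with $\nabla$ by precisely $-(\nabla_i\Phi_t)\cdot(\nabla g_t)$, and that the step $\nabla_i\nabla_\ell = \nabla_\ell\nabla_i$ is where the flatness of $\RR^d$ enters — it is exactly this commutation that would fail on a curved manifold, producing Riemann curvature terms, consistent with the discussion in \autoref{subsubsec:fdb_wasserstein:findim:curved}.
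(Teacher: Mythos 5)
Your proof is correct and is essentially the same computation as the paper's: both expand via the definition of $\cD$, commute $\partial_t$ with $\nabla_i$ and the mixed spatial partials, and apply the product rule to $\Phi_t \cdot \nabla g_t$; the only difference is that you expand both sides and match terms while the paper transforms the left-hand side directly into the right-hand side. Your closing remark about where flatness enters is accurate and matches \autoref{rk:fdb_wasserstein:convderiv:convderiv_nabla_curved}.
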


\begin{proof}
	We have
	\begin{align*}
		(\cD \nabla_i g)_t &= \nabla_i \frac{d}{dt} g_t + \Phi_t \cdot \nabla (\nabla_i g_t)
		= \nabla_i \left[ (\cD g)_t - \Phi_t \cdot \nabla g_t \right]
		+ \Phi_t \cdot \nabla (\nabla_i g_t) \\
		&= \nabla_i (\cD g)_t
        - (\nabla_i \Phi) \cdot (\nabla g_t) - \Phi_t \cdot \nabla_i (\nabla g_t)
        + \Phi_t \cdot \nabla (\nabla_i g_t) \\
		&= \nabla_i (\cD g)_t
        - (\nabla_i  \Phi_t) \cdot (\nabla g_t).
        \rqedhere
    \end{align*}
    %
\end{proof}

\begin{remark} \label{rk:fdb_wasserstein:convderiv:convderiv_nabla_curved}
    All of the properties of the convective derivative reported above can be generalized to the case of flows over a Riemannian manifold $\MMM$ instead of $\RR^d$.
    However, \autoref{lm:fdb_wasserstein:convderiv:convderiv_nabla} must be adapted as follows.
    Firstly, the formula given in the lemma only holds for $(g_t)_t$ being a scalar field.
    Secondly, if $(g_t)_t$ is a vector field and $(\nabla_i g^j_t)_t$ denotes its Jacobian, then
    (suppressing the subscript~$t$ for brevity)
	\begin{equation*}
		(\cD \nabla_i g^j) = \nabla_i (\cD g^j) -  (\nabla_i \Phi^k) (\nabla_k g^j) - \Phi^k R^{~~j~}_{ik~l} g^l
    \end{equation*}
    where $R$ denotes the Riemann curvature tensor $R(V, W) U = \nabla_V \nabla_W U - \nabla_W \nabla_V U - \nabla_{[V,W]} U$.
    This is essentially a Eulerian formulation of (a refinement of) Bochner's formula.
    Here is a formal proof: for any vector field $\Psi$,
    \begin{align*}
        \Psi^i \cD \nabla_i g^j
        &= \Psi^i \left( \frac{d}{dt} \nabla_i g^j + \Phi^k \nabla_k (\nabla_i g^j) \right) 
        = \Psi^i \nabla_i \left( \cD g^j - \Phi^k \nabla_k g^j \right)
        + \Psi^i \left( \Phi^k \nabla_k \left( \nabla_i g^j \right) \right) \\
        &= \Psi^i \nabla_i \cD g^j
        - \Psi^i \nabla_i \left( \Phi^k \nabla_k g^j \right) 
        + \left[ \Phi^k \nabla_k \left( \Psi^i \nabla_i g^j \right) - \Phi^k (\nabla_k \Psi^i) (\nabla_i g^j) \right] \\
        &= \Psi^i \nabla_i \cD g^j
        - \nabla_\Psi \left( \nabla_\Phi g \right) 
        + \nabla_\Phi \left( \nabla_\Psi g \right)
        - \nabla_{(\nabla_\Phi \Psi)} g \\
        &= \Psi^i \nabla_i \cD g^j
        + \left[ R(\Phi, \Psi) g \right]^j
        - \nabla_{(\nabla_\Psi \Phi)} g^j \\
        &= \Psi^i \nabla_i \cD g^j
        + \Phi^k \Psi^i R^{~~j~}_{ki~l} g^l
        - \Psi^i (\nabla_i \Phi^k) (\nabla_k g^j).
    \end{align*}
    The announced formula follows since the above holds for all $\Psi$, noting that $R^{~~j~}_{ki~l} = -R^{~~j~}_{ik~l}$ by definition (that is, $R(V, W) U = -R(W, V) U$ for any $U, V, W$).
\end{remark}

\subsection{A Faa di Bruno's formula for derivatives along transport couples} \label{subsec:fdb_wasserstein:fdb}

We can now state the main result of this section.
\begin{samepage}
\begin{theorem} \label{thm:fdb_wasserstein:fdb_wasserstein}
    Consider a transport couple $\partial_t \mu_t = -\nabla \cdot (\mu_t \Phi_t)$ and a functional $\FFF: \PPP_2(\RR^d) \to \RR$.
    Introduce the following shorthands:
    \begin{itemize}
        \item For $n \geq 1$ and vector fields $\Phi_{(1)}, ..., \Phi_{(n)}$, let 
        \begin{equation*}
            D^n_\mu \FFF : (\Phi_{(1)} \otimes ... \otimes \Phi_{(n)})
            \coloneqq D^n_\mu \FFF(\Phi_{(1)}, ..., \Phi_{(n)}).
        \end{equation*}
        That is, we extend the (symmetric) $n$-linear form over vector fields $D^n_\mu \FFF$ into a (permutation-invariant) linear form over $(n, 0)$-tensor fields.
        \item For vector fields $\Psi_{(1)}, \Psi_{(2)}, ...$, denote by $B_{n,k}(\Psi_{(1)}, ..., \Psi_{(n-k+1)})$ the $(k, 0)$-tensor field defined by interpreting the Bell polynomial $B_{n,k}$ as a polynomial over the ring of tensor fields equipped with the tensor product. For example, $B_{4,2}(X_1, X_2, X_3) = 3 X_2^2 + 4 X_1 X_3$ and $B_{4,2}(\Psi_{(1)}, \Psi_{(2)}, \Psi_{(3)}) = 3 \Psi_{(2)}^{\otimes 2} + 4 \Psi_{(1)} \otimes \Psi_{(3)}$. 
        (In fact the ordering of the tensor products will not matter because Wasserstein differentials are permutation-invariant.)
        \item We denote by $(\cD^n \Phi)_t$ the iterated convective derivatives of the velocity field $\Phi_t$ itself, defined by induction by $(\cD^n \Phi)_t = (\cD (\cD^{n-1} \Phi_s)_s)_t$ for $n \geq 1$.
    \end{itemize}
    Then for any $n \geq 1$,
    \begin{equation*}
        \frac{d^n}{dt^n} \FFF(\mu_t)
        = \sum_{k=1}^n D^k_{\mu_t} \FFF : B_{n,k}(\Phi_t, (\cD \Phi)_t, (\cD^2 \Phi)_t, ..., (\cD^{n-k} \Phi)_t).
    \end{equation*}
\end{theorem}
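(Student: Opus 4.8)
The plan is to prove the identity by induction on $n$, in two stages: first for \emph{linear} functionals $\FFF(\mu) = \int f\,d\mu$, by a direct computation with the convective calculus of \autoref{subsec:fdb_wasserstein:convderiv}; then for general $\FFF$, by feeding the linear case into the measure-space Faa di Bruno formula of \autoref{lm:fdb_wasserstein:wassdiffs:meas_space_FdB}.

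\emph{Stage 1 (linear functionals).} Abbreviating $u^{(j)}_t := (\cD^{j-1}\Phi)_t$, so that $\cD u^{(j)}_t = u^{(j+1)}_t$, I would show that for every smooth scalar field $f$,
\begin{equation*}
    \frac{d^n}{dt^n}\int f\,d\mu_t = \sum_{k=1}^n \int (\nabla^k f) : B_{n,k}\bigl(u^{(1)}_t,\dots,u^{(n-k+1)}_t\bigr)\,d\mu_t ,
\end{equation*}
by induction: the base case $n=1$ is \autoref{lm:fdb_wasserstein:convderiv:ddt_lambdat_mut} with $g_t = f$, and for the inductive step I differentiate the order-$n$ identity. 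By \autoref{lm:fdb_wasserstein:convderiv:ddt_lambdat_mut} the time derivative passes inside the integral as a convective derivative, by the Leibniz rule \autoref{lm:fdb_wasserstein:convderiv:product_rule} it splits as $\cD(\nabla^k f) : B_{n,k}(\vec u) + (\nabla^k f) : \cD\bigl(B_{n,k}(\vec u)\bigr)$, and then two facts close it: (i) $\cD(\nabla^k f) = \Phi_t\cdot\nabla^{k+1}f$, which follows from \autoref{lm:fdb_wasserstein:convderiv:convderiv_nabla} precisely because partial derivatives on $\RR^d$ commute, the correction term cancelling --- this is the one place where flatness is used, and in the curved case one would instead pick up a Riemann curvature term (\autoref{rk:fdb_wasserstein:convderiv:convderiv_nabla_curved}), recovering the obstruction of \autoref{subsubsec:fdb_wasserstein:findim:curved}; (ii) since $\cD$ is a derivation sending $u^{(j)}_t \mapsto u^{(j+1)}_t$, one has $\cD\bigl(B_{n,k}(\vec u)\bigr) = (\hat D B_{n,k})(\vec u)$ with $\hat D := \sum_{j\geq 1} X_{j+1}\,\partial_{X_j}$ the formal shift-derivation on the polynomial ring. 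Collecting the coefficient of $\nabla^\ell f$ yields $\Phi_t\otimes B_{n,\ell-1}(\vec u) + (\hat D B_{n,\ell})(\vec u)$, which equals $B_{n+1,\ell}(\vec u)$ by the Bell polynomial recursion $B_{n+1,\ell} = X_1\otimes B_{n,\ell-1} + \hat D B_{n,\ell}$ from \autoref{sec:apx_bell}. Since for linear $\FFF$ only $\FFF^{(1)}[\mu]=f$ is nonzero, \autoref{prop:fdb_wasserstein:wassdiffs:expr_firstvars} gives $D^k_{\mu}\FFF : T = \int (\nabla^k f):T\,d\mu$ for every $(k,0)$-tensor field $T$, so the displayed formula is exactly the asserted one in the linear case; equivalently it reads as the measure identity $\partial_t^m\mu_t = \sum_j (-\nabla)^j : [\mu_t\, B_{m,j}(\vec u)]$.

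\emph{Stage 2 (general functionals).} I would apply \autoref{lm:fdb_wasserstein:wassdiffs:meas_space_FdB} to write $\frac{d^n}{dt^n}\FFF(\mu_t) = \sum_{\ell}\FFF^{(\ell)}[\mu_t] :_{\int} B_{n,\ell}(\partial_t\mu_t,\partial_t^2\mu_t,\dots)$, substitute the measure identity from Stage 1 into every argument, expand, and reorganize the resulting ``Bell polynomial of Bell polynomials'' --- using the composition (associativity) property of Faa di Bruno's formula, i.e.\ the corresponding identity for Bell polynomials established in \autoref{sec:apx_bell} --- into the form $\sum_{k}\bigl(\sum_{\ell}\FFF^{(\ell)}[\mu_t]:_{\int}B_{k,\ell}(\dots)\bigr) : B_{n,k}(\vec u)$. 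Recognizing the inner sum as $D^k_{\mu_t}\FFF : B_{n,k}(\vec u)$ via the multilinear extension of \autoref{prop:fdb_wasserstein:wassdiffs:expr_firstvars} then yields the theorem.

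I expect the main obstacle to be the combinatorial reorganization in Stage 2: verifying that the nested Bell polynomials collapse as claimed while correctly tracking the symmetrizations implicit in extending the symmetric $k$-linear form $D^k_\mu\FFF$ to a functional on $(k,0)$-tensor fields and in polarizing the diagonal formula of \autoref{prop:fdb_wasserstein:wassdiffs:expr_firstvars}. Stage 1 is, by contrast, a routine induction whose only conceptual input is the flatness cancellation in \autoref{lm:fdb_wasserstein:convderiv:convderiv_nabla} --- which is exactly what makes the convective derivative a \emph{flat} connection, hence why the univariate Faa di Bruno formula survives here when it cannot on curved manifolds.
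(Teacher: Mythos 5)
Your proposal is correct and follows essentially the same route as the paper's proof in \autoref{sec:apx_pf_fdbwass}: Stage~1 is exactly \autoref{prop:apx_pf_fdbwass:partialtmut} (proved there by the same telescoping induction via the Bell recursion of \autoref{lm:apx_bell:deriv_B}, with flatness entering through the symmetry of $\nabla^k f$), and Stage~2 is exactly the combination of \autoref{lm:fdb_wasserstein:wassdiffs:meas_space_FdB}, \autoref{prop:fdb_wasserstein:wassdiffs:expr_firstvars} and the abstract Bell-polynomial composition rule \autoref{prop:apx_bell:compos_abstract}. You also correctly locate the real work in the combinatorial collapse of the nested Bell polynomials, which is what \autoref{lm:apx_bell:compos:abstract:DeltaEk} carries out.
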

\end{samepage}

\pagebreak

The proof of the proposition consists in manipulations of the Bell polynomials, and is deferred to \autoref{sec:apx_pf_fdbwass}.
As examples, the case $n=1$ is clear, since 
$\frac{d}{dt} \FFF(\mu_t) = \int \nabla \FFF'[\mu_t] \cdot \Phi_t d\mu_t = D^1_{\mu_t} \FFF(\Phi_t)$ by definition of the divergence. The case where the transport couple $(\mu_t, \Phi_t)_t$ is a convective geodesic is also clear, since $(\cD^n \Phi)_t = 0$ for any $n \geq 1$ and 
$B_{n,k}(\Phi_t, 0, ..., 0) = \Phi_t^{\otimes n}$~if $k=n$ and $0$~otherwise.

\subsection{Discussion: \texorpdfstring{$\PPP_2(\RR^d)$}{P2(Rd)} as a flat differential manifold} \label{subsec:fdb_wasserstein:discussion}

\autoref{thm:fdb_wasserstein:fdb_wasserstein} shows that a form of the univariate Faa di Bruno formula \eqref{eq:fdb_wasserstein:findim:flat:FdB} holds over $\PPP_2(\RR^d)$, with Wasserstein differentials playing the role of the differential tensors ($\nabla^k f$), and iterated convective accelerations playing the role of the time-derivatives of the curve ($\frac{d^l x}{dt^l}$).
On the other hand, it is known that, following the Riemannian analogy of Otto calculus, the Wasserstein space $\PPP_2(\RR^d)$ is not flat---and in fact much is known about its curvature \cite[Proposition~6.25]{ambrosio_users_2013}.
These two facts seem inconsistent with the result of \autoref{subsec:fdb_wasserstein:findim}, namely that, on a finite-dimensional Riemannian manifold, for the higher-order time-derivatives to be given by the univariate Faa di Bruno formula, it is necessary that the Riemann curvature tensor is zero.

This apparent mismatch is resolved by the fact that the convective derivative behaves as a connection on $\PPP_2(\RR^d)$ (in the sense of non-Riemannian differential geometry) which is different from the Levi-Civita connection following Otto calculus;
the latter is obtained by projecting the result of the former onto the space of gradient fields \cite[Definition~5.1]{gigli_second_2012}.
The difference fundamentally lies in that velocity fields which are gradient fields played no particular role in our derivations so far.
Furthermore, the convective derivative viewed as a connection on $\PPP_2(\RR^d)$ has zero Riemann curvature tensor, making \autoref{thm:fdb_wasserstein:fdb_wasserstein} intuitively consistent with \autoref{subsec:fdb_wasserstein:findim}.

Our claims that (a) the convective derivative formally endows $\PPP_2(\RR^d)$ with a differential manifold structure, and that (b) it is flat, are justified in \autoref{sec:apx_connection}.
There, we show that more generally, for a smooth manifold $\XXX$ equipped with a connection $\nabla$, the convective derivative formally endows $\PPP(\XXX)$ with a differential manifold structure which inherits some properties of $\XXX$ itself.
Note that $\XXX$ may not be Riemannian, and even if it is, $\nabla$ may not be the Levi-Civita connection.

\section{The higher-order Wasserstein differentials of the entropy} \label{sec:ent}

In this section, we compute the higher-order Wasserstein differentials of $H$, the (negative) differential entropy functional. Because it comes at little added cost and it may be of independent interest, we compute more generally the higher-order Wasserstein differentials of any internal energy functional.

Throughout this section, fix
$\nu = e^{-V(x)} dx$ a non-negative  measure on $\RR^d$---not necessarily a probability measure---and
$h: \RR_+ \to \RR$ $C^\infty$-smooth, and let
\begin{equation*}
	\EEE(\mu) = \EEE_{h,\nu}(\mu) = \int h\left(\frac{d\mu}{d\nu}(x) \right) d\nu(x).
\end{equation*}
We have $H = \EEE$ when $h(\rho) = \rho \log \rho$ and $V = 0$.

\subsection{Warm-up: second-order Wasserstein differentials} \label{subsec:ent:hess}

The following formula appeared in \cite[Formula~15.7]{villani_optimal_2009} and in \cite[Example~3]{li_hessian_2021}.
\begin{proposition} \label{prop:ent:hess:hess}
	The second-order Wasserstein differential of $\EEE$ is given by
	\begin{equation*}
		D^2_\mu \EEE(\Phi, \Phi)
		= \int \Gamma_2(\Phi, \Phi) p_1\left( \frac{d\mu}{d\nu} \right) d\nu
		+ \int (\AAA \Phi)^2 p_2\left( \frac{d\mu}{d\nu} \right) d\nu
	\end{equation*}
	where we defined the operators, 
	mapping vector fields to scalar fields,%
	\footnote{
		The operator $\AAA$ is called the Langevin-Stein operator.
		It is often used for the integration-by-parts formula:
		$\forall \lambda: \RR^d \to \RR,~
		\int (-\AAA \Phi) \lambda \,d\nu = \int \Phi \cdot \nabla \lambda \,d\nu$,
        but we never make use of this fact explicitly in this work.
		The operator $\Gamma_2$ is known as the iterated carr\'e du champ operator when applied to gradient fields \cite{bakry_diffusions_1985}.
	}
	\begin{align*}
		\AAA \Phi &= \nabla \cdot \Phi - \nabla V \cdot \Phi = \frac{1}{\nu} \nabla \cdot (\Phi \nu), 
		&
		\Gamma_2(\Phi, \Phi) 
		&= \trace(\nabla \Phi \cdot \nabla \Phi) + \Phi^\top \nabla^2 V \Phi,
	\end{align*}
	and where $p_1, p_2: \RR_+ \to \RR$ are the pressure and iterated pressure defined by
	\begin{equation*}
		p_1(\rho) = \rho h'(\rho) - h(\rho)
		~~~~\text{and}~~~~
		p_2(\rho) = \rho p_1'(\rho) - p_1(\rho).
	\end{equation*}   
	For example for
	$h(\rho) = \begin{cases}
		\frac{\rho^m-\rho}{m-1} ~~\text{if}~ m \neq 1 \\
		\rho \log \rho ~~\text{if}~m=1
	\end{cases}$
	then $p_1(\rho) = \rho^m$ and $p_2(\rho) = (m-1) \rho^m$.
\end{proposition}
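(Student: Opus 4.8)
The plan is to apply Proposition 3.7 (the general formula for Wasserstein differentials in terms of the functional's variations) specialized to $n=2$, and then simplify the resulting expression using the explicit form of the first and second variations of an internal energy functional. First I would compute the variations: writing $\rho = \frac{d\mu}{d\nu}$, one has $\EEE'[\mu](x) = h'(\rho(x))$ (this is the standard first-variation computation for $\int h(d\mu/d\nu)\,d\nu$), and the second variation is $\EEE''[\mu](x,x') = h''(\rho(x))\,\delta_x(x')$ in the sense that $\int\!\!\int \bm f(x,x')\,\EEE''[\mu](dx,dx') = \int \bm f(x,x)\, h''(\rho(x))\, d\nu(x)$ — i.e. the second variation is supported on the diagonal, which is the crucial structural feature. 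Plugging into the displayed formula from Proposition 3.7,
\begin{equation*}
D^2_\mu \EEE(\Phi,\Phi) = \int \Phi^\top \nabla^2 h'(\rho)\, \Phi\, d\mu + \int d\mu(x)\, \Phi(x)\cdot\nabla_x \int d\mu(x')\, \Phi(x')\cdot\nabla_{x'} \EEE''[\mu](x,x').
\end{equation*}

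Next I would evaluate each of the two terms. For the second (diagonal) term, because $\EEE''[\mu]$ is concentrated on $\{x=x'\}$, the inner integral collapses; after the integrations by parts implicit in interpreting $\nabla_{x'}$ against the measure $\mu = \rho\nu$ and using $\nabla\cdot(\Phi\nu) = \nu\,\AAA\Phi$, this term becomes $\int (\AAA\Phi)\,\big[\text{something}\big]\,d\nu$; carrying the $x$-derivative through similarly, one expects it to reorganize into $\int (\AAA\Phi)^2\, g(\rho)\, d\nu$ plus a contribution that merges with the first term. For the first term $\int \Phi^\top \nabla^2 h'(\rho)\,\Phi\, \rho\, d\nu$, I would expand $\nabla^2 h'(\rho) = h'''(\rho)\nabla\rho\otimes\nabla\rho + h''(\rho)\nabla^2\rho$ via the chain rule and integrate by parts to move derivatives off $\rho$. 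The key algebraic identities to keep track of are $p_1(\rho) = \rho h'(\rho) - h(\rho)$, so $p_1'(\rho) = \rho h''(\rho)$, hence $p_2(\rho) = \rho p_1'(\rho) - p_1(\rho) = \rho^2 h''(\rho) - \rho h'(\rho) + h(\rho)$; and $p_1''(\rho) = h''(\rho) + \rho h'''(\rho)$. These are exactly the combinations that should appear after integration by parts, so the bookkeeping is designed to produce $p_1$ in front of $\Gamma_2(\Phi,\Phi) = \trace(\nabla\Phi\cdot\nabla\Phi) + \Phi^\top\nabla^2 V\,\Phi$ and $p_2$ in front of $(\AAA\Phi)^2$.

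The main obstacle I anticipate is the integration-by-parts bookkeeping: one must carefully move spatial derivatives between $\rho$, the test vector fields $\Phi$, and the reference density $e^{-V}$, keeping track of the $\nabla V$ terms (which feed the $\Phi^\top\nabla^2 V\,\Phi$ piece of $\Gamma_2$ and the $-\nabla V\cdot\Phi$ piece of $\AAA$), and then recognize the resulting scalar combinations of $h$ and its derivatives as precisely $p_1$, $p_2$ and their relatives. A clean way to organize this is to first establish the identity pointwise in a "Bochner-type" form — e.g. verify that $\trace(\nabla\Phi\cdot\nabla\Phi) + (\text{divergence terms}) = (\AAA\Phi)^2 + \Gamma_2(\Phi,\Phi) - (\text{total derivative})$ — so that the reduction of $D^2_\mu\EEE$ becomes a matter of substituting and discarding exact terms (which vanish upon integration against $d\nu$ under the standing decay-at-infinity assumptions). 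I would also double-check the computation against the stated example $h(\rho) = \frac{\rho^m-\rho}{m-1}$, for which $p_1(\rho) = \rho^m$ and $p_2(\rho) = (m-1)\rho^m$, as a consistency test; and note that for $m=1$, $V=0$ this recovers the classical Wasserstein Hessian of the entropy, $\int \big[\trace(\nabla\Phi\cdot\nabla\Phi)\big]\,d\mu + \int (\nabla\cdot\Phi)^2\cdot 0 + \dots$ — in fact for $m=1$ one has $p_1(\rho)=\rho$, $p_2(\rho)=0$, recovering $\int \trace(\nabla\Phi\,\nabla\Phi)\,d\mu$, matching Villani's Formula 15.7.
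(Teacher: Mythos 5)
Your route is viable but it is genuinely different from the one the paper takes, and it is precisely the route the paper goes out of its way to avoid. You propose to instantiate the general formula of \autoref{prop:fdb_wasserstein:wassdiffs:expr_firstvars} at $n=2$, compute $\EEE'[\mu]=h'(\rho)$ and the diagonal-supported kernel $\EEE''[\mu]$, and then integrate by parts until the combinations $p_1$ and $p_2$ emerge. This is essentially Villani's computation for Formula~15.7 extended to non-gradient vector fields, and the paper explicitly acknowledges that ``the proposition would follow from the same computations as in [Formula~15.7]''---so the strategy is sound, and your identities $p_1'(\rho)=\rho h''(\rho)$ and $p_2(\rho)=\rho^2 h''(\rho)-\rho h'(\rho)+h(\rho)$ are the right bookkeeping targets. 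The paper instead writes $\EEE(\mu_t)=\int g(\log\rho_t)\,d\mu_t$ with $g(s)=e^{-s}h(e^s)$, applies the product and chain rules for the convective derivative $\cD$ along a convective geodesic, and reduces everything to the two pointwise identities $(\cD\log\rho)_t=-\AAA\Phi_t$ and $(\cD^2\log\rho)_t=\Gamma_2(\Phi_t,\Phi_t)$ (\autoref{lm:ent:hess:cDlogrho}); the only integration by parts is hidden in \autoref{lm:fdb_wasserstein:convderiv:ddt_lambdat_mut}, and the argument generalizes immediately to all orders $n$ in \autoref{subsec:ent:higherord}, which your variational route would not.

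The one caveat is that your proposal defers exactly the part that carries the difficulty: the phrases ``one expects it to reorganize into'' and ``the bookkeeping is designed to produce'' stand in for a computation that is not short. \autoref{ex:ent:hess:hessian_KL} carries out your plan in the special case $h(\rho)=\rho\log\rho$ (where $p_2=0$ and no $h'''$ terms arise) and already requires a page of cancellations between the two terms of the general formula; for general $h$ you additionally have to track the $h'''(\rho)\,\nabla\rho\otimes\nabla\rho$ contribution and the cross terms between $\nabla\rho$ and $\nabla V$, and verify that the residue after all integrations by parts is exactly $p_1(\rho)\,\Gamma_2(\Phi,\Phi)+p_2(\rho)(\AAA\Phi)^2$ with nothing left over. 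Until that ledger is actually balanced, the proof is a plan rather than a proof; your consistency checks (the power-law example and the $m=1$, $V=0$ case) are good sanity tests but do not substitute for the computation.
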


We note that the two terms in the above expression of $D^2_\mu \EEE(\Phi, \Phi)$ do not correspond to the two terms in the expression of $D^2_\mu \FFF(\Phi, \Phi)$ in \autoref{prop:fdb_wasserstein:wassdiffs:expr_firstvars}, see \autoref{ex:ent:hess:hessian_KL} below.
Although the proposition would follow from the same computations as in \cite[Formula~15.7]{villani_optimal_2009} except that we manipulate vector fields $\Phi$ instead of gradient fields,
let us give here a significantly different and simpler proof.

\begin{proof}
	Consider a convective geodesic $\partial_t \mu_t = -\nabla \cdot (\mu_t \Phi_t)$ and denote $\rho_t = \frac{d\mu_t}{d\nu}$ for all $t$. The quantity we want to compute is
	$D^2_{\mu_t} \EEE(\Phi_t, \Phi_t) = \frac{d^2}{dt^2} \EEE(\mu_t)$.
	
	Let $g(s) = e^{-s} h(e^s)$, so that $h(\rho) = \rho g(\log \rho)$ for all $\rho > 0$.
	Then
	\begin{equation*}
		\EEE(\mu_t) = \int h(\rho_t) d\nu
		= \int g(\log \rho_t) d\mu_t.
	\end{equation*}
	By the properties of the convective derivative,
	\begin{align*}
		\frac{d}{dt} \EEE(\mu_t) 
		&= \int \left( \cD g(\log \rho) \right)_t d\mu_t 
		= \int g'(\log \rho_t) (\cD \log \rho)_t d\mu_t \\
		\text{and}~~
		\frac{d^2}{dt^2} \EEE(\mu_t) 
		&= \int g'(\log \rho_t) (\cD^2 \log \rho)_t d\mu_t 
		+ \int g''(\log \rho_t) (\cD \log \rho)_t^2 d\mu_t,
	\end{align*}
	where in the first line we used the chain rule, and in the second line we used the product rule and again the chain rule for the second term.
	Now
	\begin{align}
		g'(s) = \frac{d}{ds} [e^{-s} h(e^s)]
		&= h'(e^s) - e^{-s} h(e^s)
		& &\text{so} &
		g'(\log \rho) &= h'(\rho) - \frac{1}{\rho} h(\rho)
		= \frac{1}{\rho} p_1(\rho) \\
		\label{eq:hess:gp_p1}
		& & &\text{and so} &
		g'(s) &= e^{-s} p_1(e^s), \\
		\text{so}~~
		g''(s) = \frac{d}{ds} [e^{-s} p_1(e^s)]
		&= p_1'(e^s) - e^{-s} p_1(e^s)
		& &\text{so} &
		g''(\log \rho) &= p_1'(\rho) - \frac{1}{\rho} p_1(\rho)
		= \frac{1}{\rho} p_2(\rho). \nonumber
	\end{align}
	All that remains is to compute $(\cD \log \rho)_t$ and $(\cD^2 \log \rho)_t$, which is done in \autoref{lm:ent:hess:cDlogrho} just below.
	Substituting back into $\frac{d^2}{dt^2} \EEE(\mu_t)$ yields the announced expression.
\end{proof}

\begin{lemma} \label{lm:ent:hess:cDlogrho}
	Consider a convective geodesic $\partial_t \mu_t = -\nabla \cdot (\mu_t \Phi_t)$ and denote $\rho_t = \frac{d\mu_t}{d\nu}$ for all $t$. Then
	\begin{equation*}
		(\cD \log \rho)_t = -\AAA \Phi_t
		~~~~\text{and}~~~~
		(\cD^2 \log \rho)_t = \Gamma_2(\Phi_t, \Phi_t).
	\end{equation*}
\end{lemma}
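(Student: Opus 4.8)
The plan is to compute $(\cD\log\rho)_t$ directly from the continuity equation and then iterate. First I would observe that, since $\nu = e^{-V}dx$ is fixed in time, $\partial_t \rho_t = \partial_t(d\mu_t/d\nu)$ is just the density (against $\nu$) of the signed measure $\partial_t\mu_t = -\nabla\cdot(\mu_t\Phi_t)$. Writing $\mu_t = \rho_t\,\nu$, we have $\nabla\cdot(\mu_t\Phi_t) = \nabla\cdot(\rho_t e^{-V}\Phi_t)$, and expanding the divergence gives $\partial_t\rho_t = -e^{V}\nabla\cdot(\rho_t e^{-V}\Phi_t) = -\Phi_t\cdot\nabla\rho_t - \rho_t(\nabla\cdot\Phi_t - \nabla V\cdot\Phi_t) = -\Phi_t\cdot\nabla\rho_t - \rho_t\,\AAA\Phi_t$, using the definition $\AAA\Phi = \frac{1}{\nu}\nabla\cdot(\Phi\nu)$. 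Dividing by $\rho_t$ and recognizing $\partial_t\log\rho_t + \Phi_t\cdot\nabla\log\rho_t = (\cD\log\rho)_t$, this immediately yields $(\cD\log\rho)_t = -\AAA\Phi_t$.

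For the second identity I would apply the convective derivative again to $-\AAA\Phi_t$. Since $(\mu_t,\Phi_t)_t$ is a convective geodesic, $(\cD\Phi)_t = 0$, so any term in $\cD(-\AAA\Phi_t)$ that factors through $\cD\Phi$ vanishes; what survives comes from the fact that $\cD$ does not commute with $\nabla$. Concretely, $\AAA\Phi = \nabla_i\Phi^i - (\nabla_i V)\Phi^i$, so $(\cD\AAA\Phi)_t = (\cD\nabla_i\Phi^i)_t - (\nabla_i V)(\cD\Phi^i)_t - \Phi_t^j(\nabla_j\nabla_i V)\Phi_t^i$ (using the chain/product rules of the preceding lemmas for the $V$-terms, with $\cD\Phi = 0$ killing the middle term). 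By Lemma \ref{lm:fdb_wasserstein:convderiv:convderiv_nabla} applied to the scalar-like trace, $(\cD\nabla_i\Phi^i)_t = \nabla_i(\cD\Phi^i)_t - (\nabla_i\Phi_t^j)(\nabla_j\Phi_t^i) = -\trace(\nabla\Phi_t\cdot\nabla\Phi_t)$, again because $\cD\Phi = 0$. Combining, $(\cD\AAA\Phi)_t = -\trace(\nabla\Phi_t\cdot\nabla\Phi_t) - \Phi_t^\top\nabla^2 V\,\Phi_t = -\Gamma_2(\Phi_t,\Phi_t)$, hence $(\cD^2\log\rho)_t = -(\cD\AAA\Phi)_t = \Gamma_2(\Phi_t,\Phi_t)$.

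The only subtlety — and the main thing to get right — is the correct use of Lemma \ref{lm:fdb_wasserstein:convderiv:convderiv_nabla}: one must be careful that its stated form $(\cD\nabla_i g)_t = \nabla_i(\cD g)_t - (\nabla_i\Phi_t)\cdot(\nabla g_t)$ is the $\RR^d$ version (no curvature correction, per Remark \ref{rk:fdb_wasserstein:convderiv:convderiv_nabla_curved}), and that index contraction with $\Phi_t^i$ on both sides of this identity, together with the product rule, reproduces the $\trace(\nabla\Phi\cdot\nabla\Phi)$ term with the right sign. An equivalent and perhaps cleaner route for the first identity is to invoke Lemma \ref{lm:fdb_wasserstein:convderiv:ddt_lambdat_mut} with the constant-in-time tensor field $g\equiv 1$ against $\nu$, or simply to note $\cD(\rho_t) = \partial_t\rho_t + \Phi_t\cdot\nabla\rho_t = -\rho_t\AAA\Phi_t$ from the computation above and then apply the scalar chain rule $\cD(\log\rho)_t = \frac{1}{\rho_t}(\cD\rho)_t$; I would present whichever is shortest. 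No real obstacle is expected beyond bookkeeping.
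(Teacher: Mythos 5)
Your proposal is correct and follows essentially the same route as the paper: the first identity comes from expanding the divergence in the continuity equation (the paper splits $\log\rho=\log\mu-\log\nu$ and takes $\cD$ of each piece, while you work directly with $\rho$ against $\nu$, a cosmetic difference), and the second identity is obtained exactly as in the paper by applying $\cD$ to $-\AAA\Phi_t$ via the product rule, \autoref{lm:fdb_wasserstein:convderiv:convderiv_nabla}, and the geodesic condition $(\cD\Phi)_t=0$. No gaps.
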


\begin{proof}
	We have
	\begin{align*}
		(\cD \mu)_t
		&= \partial_t \mu_t + \Phi_t \cdot \nabla \mu_t 
        = -\nabla \cdot (\mu_t \Phi_t) + \Phi_t \cdot \nabla \mu_t
		= -\mu_t \nabla \cdot \Phi_t \\
		(\cD \log \mu)_t &= -\nabla \cdot \Phi_t \\
		(\cD \log \nu)_t &= -(\cD V)_t 
		= -\frac{d}{dt} V - \Phi_t \cdot \nabla V 
		= -\nabla V \cdot \Phi_t \\
		(\cD \log \rho)_t &= (\cD \log \mu)_t - (\cD \log \nu)_t
		= -\nabla \cdot \Phi_t +\nabla V \cdot \Phi_t
		= -\AAA \Phi_t.
	\end{align*}
	Finally let us compute $(\cD^2 \log \rho)_t = -(\cD \AAA \Phi)_t = -(\cD (\nabla \cdot \Phi))_t + (\cD (\nabla V \cdot \Phi))_t$.
	For the first term,
	\begin{equation*}
		(\cD (\nabla \cdot \Phi))_t
		= (\cD (\trace \nabla \Phi))_t
		= \trace (\cD (\nabla \Phi))_t
		= \trace \left[ 
		\nabla (\cD \Phi)_t
		- \nabla \Phi_t \cdot \nabla \Phi_t
		\right]
		= -\trace(\nabla \Phi_t \cdot \nabla \Phi_t)
	\end{equation*}
	where in the third equality we used \autoref{lm:fdb_wasserstein:convderiv:convderiv_nabla}
	and in the fourth one $(\cD \Phi)_t=0$.
	For the second term,
	\begin{equation*}
		(\cD (\nabla V \cdot \Phi))_t
		= \nabla V \cdot (\cD \Phi)_t + (\cD \nabla V) \cdot \Phi_t
		= 0 + (\Phi_t \cdot \nabla^2 V) \cdot \Phi_t
		= \Phi_t^\top \nabla^2 V \Phi_t.
	\end{equation*}
	Hence $-(\cD \AAA \Phi)_t = \trace(\nabla \Phi_t \cdot \nabla \Phi_t) + \Phi_t^\top \nabla^2 V \Phi_t = \Gamma_2(\Phi_t, \Phi_t)$.
\end{proof}

\begin{remark} \label{rk:ent:hess:Dlogrho_Poisson}
	That $(\cD \log \rho)_t$ does not depend explicitly on $\mu_t$ but only on $\Phi_t$, is a remarkable fact.
	It is a manifestation of the phenomenon that a transport couple $\partial_t \mu_t = -\nabla \cdot (\mu_t \Phi_t)$ is equivalent ``locally and at first order'' to a multiplicative dynamics in measure space.
	That is, formally, if $\Phi_t \approx \Phi$ for all $t \in [0,\eps]$, and if we denote $\nu = \mu_0$ and $\mu = \mu_\eps$, then $\mu \approx (\id + \eps \Phi)_\sharp \nu$ and so
	\begin{align*}
		\frac{d\mu}{dx}(x + \eps \Phi(x))
		\cdot \det \nabla (\id + \eps \Phi)(x)
		&\approx \frac{d\nu}{dx}(x) 
		= \frac{d\nu}{dx}(x + \eps \Phi(x)) \cdot e^{V(x+\eps \Phi(x)) - V(x)} \\
		\underbrace{
			\log \frac{d\mu}{d\nu}(x + \eps \Phi(x))
		}_{\approx \log \frac{d\mu}{d\nu}(x)}
		~+~ \underbrace{
			\log \det(I_d + \eps \nabla \Phi(x))
		}_{\approx \log(1+\eps \trace \nabla \Phi) \approx \eps \nabla \cdot \Phi}
		&\approx \underbrace{
			V(x + \eps \Phi(x))-V(x)
		}_{\approx \eps \Phi(x) \cdot \nabla V(x)} \\ 
		\restr{\frac{\partial_t \mu_t}{\mu_t}}{t=0}
		\approx
		\frac{1}{\eps} \log \frac{d\mu}{d\nu} 
		&\approx -\nabla \cdot \Phi + \nabla V \cdot \Phi.
	\end{align*}
	In optimal transport terms, this phenomenon is precisely the fact that the Monge-Amp{\`e}re equation ``linearizes'' into the Poisson equation, and the Talagrand inequality into the Poincar\'e inequality \cite[Section~7]{otto_generalization_2000}.
\end{remark}

\begin{example} \label{ex:ent:hess:hessian_KL}
	The relative entropy functional $\KLdiv{\mu}{\nu} = \int_{\RR^d} d\mu \log \frac{d\mu}{d\nu}$ is equal to $\EEE$ when $h(\rho) = \rho \log \rho$, in which case $p_1(\rho) = \rho$ and $p_2(\rho) = 0$.
	So
	\begin{equation*}
		D^2_\mu\left\{ \KLdiv{\cdot}{\nu} \right\}(\Phi, \Phi)
		= \int \left( \trace\left( \nabla \Phi \cdot \nabla \Phi \right) + \Phi^\top \nabla^2 V \Phi \right) d\mu.
	\end{equation*}
	For the sake of illustration, let us rederive this expression using the general formula from \autoref{prop:fdb_wasserstein:wassdiffs:expr_firstvars},
	\begin{equation*}
		D^2_\mu \FFF(\Phi,\Phi)
		= \int \Phi^\top \nabla^2 \FFF'[\mu] \Phi \,d\mu
		+ \int_x d\mu(x) \int_{x'} d\mu(x') \Phi(x)^\top \nabla_x \nabla_{x'} \FFF''[\mu](x,x') \Phi(x').
	\end{equation*}
	For $\FFF(\mu) = \KLdiv{\mu}{\nu}$, the first term in the general formula is
	\begin{align*}
		\int \Phi^\top \nabla^2 \log \frac{d\mu}{d\nu} \Phi \,d\mu 
		&= \int \Phi^\top \nabla^2 V \Phi \,d\mu 
		+ \int \Phi^\top \nabla \left( \frac{\nabla \mu}{\mu} \right) \Phi \,d\mu \\
		&= \int \Phi^\top \nabla^2 V \Phi \,d\mu 
		+ \int \Phi^\top \nabla^2 \mu\, \Phi
		- \int \Phi^\top \frac{(\nabla \mu) (\nabla \mu)^\top}{\mu^2} \Phi \,d\mu \\
		&= \int \Phi^\top \nabla^2 V \Phi \,d\mu 
		+ \int \Phi^\top \nabla^2 \mu\, \Phi
		- \int \frac{1}{\mu} \left( \Phi^\top \nabla \mu \right)^2
	\end{align*}
	and the term in the middle rewrites
	\begin{align*}
		\int \Phi^\top \nabla^2 \mu\, \Phi
		&= \int (\nabla^2_{ij} \mu) \Phi^i \Phi^j  = -\int (\nabla_j \mu) \nabla_i \left( \Phi^i \Phi^j \right) 
		= \int d\mu\, \nabla^2_{ji} \left( \Phi^i \Phi^j \right) \\
		&= \int d\mu\, \nabla_j \left[ 
		(\nabla_i \Phi^i) \Phi^j 
		+ \Phi^i (\nabla_i \Phi^j) 
		\right] \\
		&= \int d\mu\, \left[ 
		(\nabla^2_{ji} \Phi^i) \Phi^j
		+ (\nabla_i \Phi^i) (\nabla_j \Phi^j)
		+ (\nabla_j \Phi^i) (\nabla_i \Phi^j)
		+ \Phi^i (\nabla^2_{ji} \Phi^j) 
		\right] \\
		&= \int d\mu\, (\nabla_i \Phi^i)^2 
		+ \int d\mu\, (\nabla_j \Phi^i) (\nabla_i \Phi^j)
		+ 2 \int d\mu\, (\nabla^2_{ij} \Phi^i) \Phi^j \\
		&= \int d\mu\, (\nabla \cdot \Phi)^2 
		+ \int d\mu\, \trace\left( \nabla \Phi \nabla \Phi \right)
		+ 2 \int d\mu\, \Phi \cdot \nabla (\nabla \cdot \Phi),
	\end{align*}
	where in the fourth line we used that $\nabla^2_{ij} \Phi^k = \nabla^2_{ji} \Phi^k$ because $\RR^d$ is flat, so 
	\begin{equation*}
		\int \Phi^\top \nabla^2 \FFF'[\mu] \Phi \, d\mu
		= \int \! \left( 
		\Phi^\top \nabla^2 V \Phi
		+ (\nabla \cdot \Phi)^2 
		+ \trace\left( \nabla \Phi \nabla \Phi \right)
		+ 2 \Phi \cdot \nabla (\nabla \cdot \Phi)
		- \frac{1}{\mu^2} \left( \Phi^\top \nabla \mu \right)^2
		\right) d\mu.
	\end{equation*}
	The second term in the general formula is, denoting by $\delta_0$ the Dirac delta measure,
	\begin{align*}
		& \int_x d\mu(x) \int_{x'} d\mu(x')
		\Phi(x)^\top
		\nabla_x \nabla_{x'} \left( \frac{\delta_0(x-x')}{\mu(x)} \right)
		\Phi(x') \\
        &= \int_x \int_{x'} \left( \frac{\delta_0(x-x')}{\mu(x)} \right)\,
		\left[ \nabla \cdot (\mu \Phi) \right](x)\,
		\left[ \nabla \cdot (\mu \Phi) \right](x') \\
		&= 
		\int \frac{1}{\mu} \left( \nabla \cdot (\mu \Phi) \right)^2
		= \int \frac{1}{\mu} \left( \mu \nabla \cdot \Phi + \nabla \mu \cdot \Phi \right)^2 \\
        &= \int \mu \, (\nabla \cdot \Phi)^2
		+ 2 \int (\nabla \cdot \Phi) (\nabla \mu \cdot \Phi)
		+ \int \frac{1}{\mu} \left( \nabla \mu \cdot \Phi \right)^2
	\end{align*}
	and the term in the middle rewrites
	\begin{equation*}
		2 \int \nabla \mu \cdot \left[ (\nabla \cdot \Phi) \Phi \right]
		= -2\int d\mu\, \nabla \cdot \left[ (\nabla \cdot \Phi) \Phi \right]
		= -2\int d\mu\, \left[ (\nabla \cdot \Phi)^2 + \Phi \cdot \nabla \left( \nabla \cdot \Phi \right) \right],
	\end{equation*}
	so
	\begin{multline*}
		\int_x d\mu(x) \int_{x'} d\mu(x') \Phi(x)^\top \nabla_x \nabla_{x'} \FFF''[\mu](x,x') \Phi(x') \\
		= \int \left( (\nabla \cdot \Phi)^2
		- 2 \left[ (\nabla \cdot \Phi)^2 + \Phi \cdot \nabla \left( \nabla \cdot \Phi \right) \right]
		+ \frac{1}{\mu^2} \left( \nabla \mu \cdot \Phi \right)^2
		\right) d\mu.
	\end{multline*}
	Gathering all the terms, several cancellations occur, and we indeed end up with the expression given above:
	$D^2_\mu \KLdiv{\cdot}{\mu}(\Phi,\Phi)
	= \int \Phi^\top \nabla^2 V \Phi \,d\mu
	+ \int \trace\left( \nabla \Phi \nabla \Phi \right) \,d\mu$.
\end{example}

\begin{remark} \label{rk:ent:hess:riem_hard}
	All of the manipulations of this subsection can be generalized to the case of probability measures over a Riemannian manifold $\MMM$---and even over a smooth manifold equipped with a torsion-free connection $\nabla$---instead of $\RR^d$.
	Almost all of the computations are unchanged, except that the curvature of the manifold contributes an additional term in \autoref{lm:fdb_wasserstein:convderiv:convderiv_nabla}, as noted in \autoref{rk:fdb_wasserstein:convderiv:convderiv_nabla_curved}.
	So the formula for $D^2_\mu \EEE$ is almost unchanged, except for an additional term in the expression of $\Gamma_2$, namely
	$\Ric(\Phi, \Phi)$ where $\Ric$ denotes Ricci curvature, related to the Riemann curvature tensor by $\Ric_{kl} = R^{~~i~}_{ik~l}$~\cite[Chapter~14]{villani_optimal_2009}.
	
	It seems likely that the higher-order Wasserstein differentials of the entropy on a curved manifold will also involve the Riemann curvature tensor and its derivatives.
	This makes the generalization of the results of the next subsection to Riemannian manifolds, instead of $\RR^d$, seem difficult.
\end{remark}

\subsection{Higher-order Wasserstein differentials} \label{subsec:ent:higherord}

\begin{definition} \label{def:ent:higherord:pk}
	Given $h: \RR_+ \to \RR$, let $(p_k)_{k \geq 0}$ be the sequence of functions $p_k: \RR_+ \to \RR$ defined by
	$p_0 = h$ and $p_{k+1}(\rho) = \rho p_k'(\rho) - p_k(\rho)$, called the \emph{iterated pressures}.
	
	Also let $g(s) = e^{-s} h(e^s)$, so that $h(\rho) = \rho g(\log \rho)$ for all $\rho>0$.
	Then by similar computations as in \eqref{eq:hess:gp_p1}, one can check by induction that $\rho g^{(k)}(\log \rho) = p_k(\rho)$ for all $k \geq 0$.
%
\end{definition}

Consider a convective geodesic $\partial_t \mu_t = -\nabla \cdot (\mu_t \Phi_t)$ and denote $\rho_t(x) = \frac{d\mu_t}{d\nu}(x)$.
We have
\begin{align*}
	\EEE(\mu_t) &= \int h(\rho_t(x)) d\nu(x)
	= \int g(\log \rho_t(x)) d\mu_t(x) \\
	\text{and so}~~~~
	\frac{d^n}{dt^n} \EEE(\mu_t) &= \int \left[ \cD^n g(\log \rho) \right]_t d\mu_t \\
	&= \int \sum_{k=1}^n g^{(k)}(\log \rho_t) \cdot B_{n,k}\left( (\cD \log \rho)_t, ..., (\cD^{n-k+1} \log \rho)_t \right) d\mu_t \\
	&= \int \sum_{k=1}^n B_{n,k}\left( (\cD \log \rho)_t, ..., (\cD^{n-k+1} \log \rho)_t \right) p_k(\rho_t) d\nu
\end{align*}
by the scalar Faa di Bruno formula, since the operator $\cD$ behaves algebraically as a derivation as shown in \autoref{subsec:fdb_wasserstein:convderiv}.
So it only remains to compute the quantities $(\cD^n \log \rho)_t$ for all $n \geq 1$.

\begin{definition} \label{def:ent:higherord:Lambda}
	For any $n \geq 1$, let $\Lambda_n$ be the symmetric $n$-linear operator mapping vector fields to scalar fields defined by
	\begin{align*}
		\Lambda_n(\Phi, ..., \Phi)
		&= (-1)^n (n-1)!~ (\nabla_{i_1} \Phi^{i_2}) ... (\nabla_{i_{n-1}} \Phi^{i_n}) (\nabla_{i_n} \Phi^{i_1}) 
		+ \Phi^{i_1} ... \Phi^{i_n} \left( \nabla_{i_1} ... \nabla_{i_n} V \right) \\
		&= (-1)^n (n-1)!~ \trace\left( (\nabla \Phi)^n \right)
		+ \Phi^{\otimes n} : \nabla^n V.
	\end{align*}
	In particular, $\Lambda_1 = -\AAA$ and $\Lambda_2 = \Gamma_2$ as defined in \autoref{prop:ent:hess:hess}.
    We will also write $\Lambda_n(\Phi)$ for $\Lambda_n(\Phi, ..., \Phi)$.
	More explicitly, for any vector fields $g_{(1)}, ..., g_{(n)}$, denoting by $\frakS_n$ the set of permutations of $\{1, ..., n\}$,
	\begin{equation*}
		\Lambda_n(g_{(1)}, ..., g_{(n)})
		= (-1)^n \frac{1}{n} \sum_{\sigma \in \frakS_n} \trace\left( (\nabla g_{(\sigma(1))}) ... (\nabla g_{(\sigma(n))}) \right)
		+ \left( g_{(1)} \otimes ... \otimes g_{(n)} \right): \nabla^n V.
	\end{equation*}
	We also introduce the shorthand
	$\Lambda_n : \left( g_{(1)} \otimes ... \otimes g_{(n)} \right)
	\coloneqq \Lambda_n(g_{(1)}, ..., g_{(n)})$,
	i.e., we extend the $n$-linear map $\Lambda_n$ over vector fields into a linear map over $(n,0)$-tensors.
\end{definition}

\begin{proposition} \label{prop:ent:higherord:cDn=Lambdan}
	Consider $\partial_t \mu_t = -\nabla \cdot (\mu_t \Phi_t)$ a convective geodesic
	and let $\rho_t = \frac{d\mu_t}{d\nu}$.
	For any $n \geq 1$, we have
	$(\cD^n \log \rho_t) = \Lambda_n(\Phi_t)$.
\end{proposition}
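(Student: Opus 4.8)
The claim is $(\cD^n \log \rho_t) = \Lambda_n(\Phi_t)$ along a convective geodesic, and the natural route is induction on $n$, with the base cases $n=1,2$ already established in \autoref{lm:ent:hess:cDlogrho}. For the inductive step I would apply $\cD$ to the identity $(\cD^{n-1} \log \rho)_t = \Lambda_{n-1}(\Phi_t)$ and track both summands of $\Lambda_{n-1}$ separately. The key algebraic inputs are: (a) the Leibniz product rule for $\cD$ (\autoref{lm:fdb_wasserstein:convderiv:product_rule}), applied to each product of factors; (b) the commutation identity $(\cD \nabla_i g)_t = \nabla_i (\cD g)_t - (\nabla_i \Phi_t) \cdot (\nabla g_t)$ from \autoref{lm:fdb_wasserstein:convderiv:convderiv_nabla}; and crucially (c) the fact that along a convective geodesic $(\cD \Phi)_t = 0$, which kills the $\nabla_i(\cD g)_t$-type terms whenever $g$ is $\Phi$ itself, and kills $\cD\Phi^{i_j}$ factors outright. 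Also $(\cD V)_t = \Phi_t\cdot\nabla V$ and more generally $(\cD \nabla^k V)_t$ contributes only the $\nabla$-commutator correction plus the $\Phi\cdot\nabla(\nabla^k V)$ transport term, since $V$ is time-independent.

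Concretely, for the trace term $(-1)^{n-1}(n-2)!\,\trace((\nabla\Phi)^{n-1})$: applying $\cD$ and the product rule distributes over the $n-1$ factors $\nabla\Phi$; on each factor $\cD(\nabla_{i}\Phi^{j})_t = \nabla_i(\cD\Phi^j)_t - (\nabla_i\Phi^k_t)(\nabla_k\Phi^j_t) = -(\nabla_i\Phi_t\cdot\nabla\Phi_t)$ using $(\cD\Phi)_t=0$. By the cyclic symmetry of the trace, all $n-1$ contributions are equal, giving $(-1)^{n-1}(n-2)!\cdot(n-1)\cdot(-1)\cdot\trace((\nabla\Phi)^{n}) = (-1)^n(n-1)!\,\trace((\nabla\Phi)^n)$, which is exactly the first term of $\Lambda_n$. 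For the potential term $\Phi^{\otimes(n-1)}:\nabla^{n-1}V$: the product rule splits it into (i) $n-1$ terms where $\cD$ hits a factor $\Phi^{i_j}$, each vanishing since $(\cD\Phi)_t=0$, and (ii) one term $\Phi^{\otimes(n-1)}:(\cD\nabla^{n-1}V)_t$. Iterating \autoref{lm:fdb_wasserstein:convderiv:convderiv_nabla} $n-1$ times on the scalar field $V$ (for which that lemma applies verbatim), $(\cD\nabla^{n-1}V)_t = \Phi_t\cdot\nabla(\nabla^{n-1}V) - \sum_{j}(\nabla_{i_j}\Phi_t)\cdot(\text{contraction into }\nabla^{n-1}V)$; the first piece, contracted with $\Phi^{\otimes(n-1)}$, rebuilds $\Phi^{\otimes n}:\nabla^n V$, while the correction terms, after relabeling indices and contracting against $\Phi^{\otimes(n-1)}$, recombine with parts of the trace-term expansion above. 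This bookkeeping — making sure that the "mixed" terms produced by the $\nabla$-commutator acting on $\nabla^{n-1}V$ land precisely where needed and that no extra terms survive — is the one delicate point.

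The main obstacle is therefore purely combinatorial index-tracking: after applying $\cD$ one gets a superposition of terms with various placements of the single extra $\nabla\Phi$ factor among the $V$-derivatives, and one must verify these reorganize exactly into the two clean terms of $\Lambda_n$, with the correct factorial prefactor $(n-1)!$ emerging from the cyclic-symmetry count $n-1$ times $(n-2)!$ plus whatever is contributed from the $V$-side. A clean way to organize this is to introduce, for general (not necessarily geodesic) velocity fields, the auxiliary quantity and compute $\cD$ of both terms in full generality, then specialize $(\cD\Phi)_t=0$ at the very end; the general computation makes the cancellations transparent because each term carries an explicit $\cD\Phi$ or $(\cD\Phi$-free$)$ label. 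I would present the induction with this two-term split, citing \autoref{lm:fdb_wasserstein:convderiv:product_rule} and \autoref{lm:fdb_wasserstein:convderiv:convderiv_nabla} at each use, and remark that the symmetrized (multilinear) form of the statement then follows by polarization since both sides are $n$-linear and the claimed equality of the diagonal restrictions determines the symmetric forms.
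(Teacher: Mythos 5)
Your overall strategy is the paper's: induction on $n$, Leibniz rule for $\cD$, the commutation identity of \autoref{lm:fdb_wasserstein:convderiv:convderiv_nabla}, and the geodesic condition $(\cD\Phi)_t=0$, with the two summands of $\Lambda_{n-1}$ tracked separately. Your treatment of the trace term is correct and complete, including the prefactor count $(n-1)\cdot(n-2)!=(n-1)!$ (the paper reaches the same cancellation by expanding $\cD=\partial_t+\Phi\cdot\nabla$ and substituting $\partial_t\Phi=-\Phi\cdot\nabla\Phi$ by hand; your packaging via the commutation lemma is equivalent and arguably cleaner).

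The one place your plan goes wrong is the $V$-term, which you yourself flag as the delicate point. There are no residual correction terms from $(\cD\nabla^{n-1}V)_t$, and nothing from the $V$-side ``recombines with parts of the trace-term expansion'': the two summands of $\Lambda_{n-1}$ close independently under $\cD$, each producing exactly the corresponding summand of $\Lambda_n$. The reason is that $\nabla^{n-1}V$ is a time-independent tensor field, so directly from \autoref{def:fdb_wasserstein:convderiv:convderiv} one has $(\cD\nabla^{n-1}V)_t=\Phi_t\cdot\nabla(\nabla^{n-1}V)$ with no further terms; if you insist on iterating \autoref{lm:fdb_wasserstein:convderiv:convderiv_nabla} instead, the $-(\nabla\Phi)\cdot(\nabla\,\cdot)$ corrections cancel internally against the Leibniz expansion of $\nabla_{i}(\Phi\cdot\nabla\,\cdot)$, returning the same answer. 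Contracting $\Phi^{\otimes(n-1)}$ against $\Phi_t\cdot\nabla(\nabla^{n-1}V)$ then gives $\Phi^{\otimes n}:\nabla^n V$ on the nose, and the $n-1$ terms where $\cD$ hits a $\Phi$ factor vanish by $(\cD\Phi)_t=0$, as you say. (In the paper's version, where $\cD$ is expanded as $\partial_t+\Phi\cdot\nabla$, the $\partial_t\Phi$ contributions cancel against the transport derivative hitting the $\Phi$ factors, which is the same cancellation in different clothing.) The closing remark about polarization is unnecessary for this proposition, which only asserts the diagonal identity $(\cD^n\log\rho)_t=\Lambda_n(\Phi_t)$.
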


\begin{proof}
	We proceed by induction. The case $n=1$ is shown by \autoref{lm:ent:hess:cDlogrho}.
	For $n \geq 1$, suppose $(\cD^n \log \rho)_t = \Lambda_n(\Phi_t)$,
	and let us compute separately the terms in $\nabla_i \Phi_t^j$ and in $V$ in $(\cD^{n+1} \log \rho)_t = \cD \Lambda_n(\Phi_t)$.
	For the terms in $\nabla_i \Phi^j$, dropping the subscript $t$ for brevity, by symmetry we have
	\begin{align*}
		& \cD \left[ (\nabla_{i_1} \Phi^{i_2}) ... (\nabla_{i_{n-1}} \Phi^{i_n}) (\nabla_{i_n} \Phi^{i_1}) \right]
		= \left( \frac{d}{dt} + \Phi \cdot \nabla \right) \left[ (\nabla_{i_1} \Phi^{i_2}) ... (\nabla_{i_{n-1}} \Phi^{i_n}) (\nabla_{i_n} \Phi^{i_1}) \right] \\
		&= n \, (\nabla_{i_1} \Phi^{i_2}) ... (\nabla_{i_{n-1}} \Phi^{i_n}) \left( \frac{d}{dt} \nabla_{i_n} \Phi^{i_1} \right)
		+ \Phi^{i_{n+1}} \nabla_{i_{n+1}} \left[ (\nabla_{i_1} \Phi^{i_2}) ... (\nabla_{i_{n-1}} \Phi^{i_n}) (\nabla_{i_n} \Phi^{i_1}) \right] \\
		&= n \, (\nabla_{i_1} \Phi^{i_2}) ... (\nabla_{i_{n-1}} \Phi^{i_n}) 
		\!
        \underbrace{
			\left( \nabla_{i_n} \left[ -\Phi \cdot \nabla \Phi \right]^{i_1} \right) 
		}_{\scriptsize
			\begin{aligned}
				&= -\nabla_{i_n} \left[ \Phi^{i_{n+1}} (\nabla_{i_{n+1}} \Phi^{i_1}) \right] \\
				&= -(\nabla_{i_n} \Phi^{i_{n+1}}) (\nabla_{i_{n+1}} \Phi^{i_1}) \\
				&~~~~ - \Phi^{i_{n+1}} \left( \nabla_{i_n} \nabla_{i_{n+1}} \Phi^{i_1} \right)
			\end{aligned}
		}
		\!\!
        +\, n \, \Phi^{i_{n+1}} (\nabla_{i_1} \Phi^{i_2}) ... (\nabla_{i_{n-1}} \Phi^{i_n}) \left( \nabla_{i_{n+1}} \nabla_{i_n} \Phi^{i_1} \right) \\
		&= -n \, (\nabla_{i_1} \Phi^{i_2})... (\nabla_{i_{n}} \Phi^{i_{n+1}}) (\nabla_{i_{n+1}} \Phi^{i_1})
	\end{align*}
    where for the underbraced term we used the fact that $(\mu_t, \Phi_t)_t$ is a convective geodesic so that $\frac{d}{dt} \Phi^{i_1} = (-\Phi \cdot \nabla \Phi)^{i_1}$.
	For the terms in $V$, likewise,
	\begin{align*}
		\MoveEqLeft
		\cD \left[ \Phi^{i_1} ... \Phi^{i_n} \left( \nabla_{i_1} ... \nabla_{i_n} V \right) \right]
		= \left( \frac{d}{dt} + \Phi \cdot \nabla \right) \left[ \Phi^{i_1} ... \Phi^{i_n} \left( \nabla_{i_1} ... \nabla_{i_n} V \right) \right] \\
		&= n\, \Phi^{i_1} ... \Phi^{i_{n-1}} \left( \frac{d}{dt} \Phi^{i_n} \right) \left( \nabla_{i_1} ... \nabla_{i_n} V \right)
		+ \Phi^{i_{n+1}} \nabla_{i_{n+1}} \left[ \Phi^{i_1} ... \Phi^{i_n} \left( \nabla_{i_1} ... \nabla_{i_n} V \right) \right] \\
		&= n\, \Phi^{i_1} ... \Phi^{i_{n-1}} 
		\underbrace{
			\left[ -\Phi \cdot \nabla \Phi \right]^{i_n}
		}_{= -(\nabla_{i_{n+1}} \Phi^{i_n}) \Phi^{i_{n+1}}}
		\left( \nabla_{i_1} ... \nabla_{i_n} V \right) \\
		&~~~~ + n\, \Phi^{i_{n+1}} \Phi^{i_1} ... \Phi^{i_{n-1}} \left( \nabla_{i_{n+1}} \Phi^{i_n} \right) \left( \nabla_{i_1} ... \nabla_{i_n} V \right)
		+ \Phi^{i_{n+1}} \Phi^{i_1} ... \Phi^{i_n} \left( \nabla_{i_1} ... \nabla_{i_n} \nabla_{i_{n+1}} V \right) \\
		&= \Phi^{i_{n+1}} \Phi^{i_1} ... \Phi^{i_n} \left( \nabla_{i_1} ... \nabla_{i_n} \nabla_{i_{n+1}} V \right).
	\end{align*}
	Putting the terms together, we indeed obtain
	\begin{align*}
		(\cD^{n+1} \log \rho)_t
		&= (-1)^{n+1} n! ~ (\nabla_{i_1} \Phi^{i_2})... (\nabla_{i_{n}} \Phi^{i_{n+1}}) (\nabla_{i_{n+1}} \Phi^{i_1}) 
		+ \Phi^{i_1} ... \Phi^{i_n} \Phi^{i_{n+1}} \left( \nabla_{i_1} ... \nabla_{i_n} \nabla_{i_{n+1}} V \right) \\
		&= \Lambda_{n+1}(\Phi),
	\end{align*}
	which concludes the induction.
\end{proof}

In summary, in this subsection we have shown the following.
\begin{theorem} \label{thm:ent:difftls_EEE}
    Let $\nu = e^{-V(x)} dx$ a non-negative  measure on $\RR^d$---not necessarily a probability measure---and
    $h: \RR_+ \to \RR$ $C^\infty$-smooth, and
    $
    	\EEE(\mu) = \EEE_{h,\nu}(\mu) = \int h\left(\frac{d\mu}{d\nu}(x) \right) d\nu(x)
    $.
    
	The $n$-th order Wasserstein differential of $\EEE$ is given by
	\begin{equation*}
		D^n_\mu \EEE(\Phi) = \int \sum_{k=1}^n B_{n,k}\left( \Lambda_1(\Phi), \Lambda_2(\Phi), ..., \Lambda_{n-k+1}(\Phi) \right) \, p_k\left( \frac{d\mu}{d\nu} \right) d\nu
	\end{equation*}
	where the $p_k$ are defined in \autoref{def:ent:higherord:pk}
	and the $\Lambda_k$ are defined in \autoref{def:ent:higherord:Lambda}.
\end{theorem}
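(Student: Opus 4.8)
The plan is to carry out the computation sketched just above the statement, which rests on two independent ingredients: a ``chain rule in measure space'' reducing $\frac{d^n}{dt^n}\EEE(\mu_t)$ along a convective geodesic to the iterated convective derivatives of $\log\rho_t$, and an explicit inductive formula for those derivatives. Concretely, I would fix a convective geodesic $(\mu_t,\Phi_t)_t$ with $\mu_0=\mu$, $\Phi_0=\Phi$ — such a curve exists by \autoref{prop:fdb_wasserstein:convgeod:charact_convgeod}, e.g.\ $\mu_t=(\id+t\Phi)_\sharp\mu$ for $t$ small — and write $\rho_t=\frac{d\mu_t}{d\nu}$. With $g(s)=e^{-s}h(e^s)$ one has $\EEE(\mu_t)=\int g(\log\rho_t)\,d\mu_t$. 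Since the convective derivative $\cD$ commutes with integration against $\mu_t$ in time (\autoref{lm:fdb_wasserstein:convderiv:ddt_lambdat_mut}) and obeys the Leibniz and scalar chain rules (\autoref{subsec:fdb_wasserstein:convderiv}), it behaves algebraically like an ordinary derivative, so the univariate (scalar) Faà di Bruno formula applies to $\cD^n[g(\log\rho)]$ and, after moving $d\mu_t=\rho_t\,d\nu$ inside and using the identity $\rho\,g^{(k)}(\log\rho)=p_k(\rho)$ from \autoref{def:ent:higherord:pk} (a one-line induction generalizing \eqref{eq:hess:gp_p1}),
\begin{equation*}
\frac{d^n}{dt^n}\EEE(\mu_t)
= \int_{\RR^d}\sum_{k=1}^n B_{n,k}\big((\cD\log\rho)_t,\dots,(\cD^{n-k+1}\log\rho)_t\big)\,p_k(\rho_t)\,d\nu.
\end{equation*}

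The second ingredient is the claim $(\cD^n\log\rho)_t=\Lambda_n(\Phi_t)$ for all $n\ge1$ (\autoref{prop:ent:higherord:cDn=Lambdan}), proved by induction on $n$. The base case $n=1$ is \autoref{lm:ent:hess:cDlogrho}: writing $\log\rho=\log\mu-\log\nu$ and using $\cD\mu=-\mu\,\nabla\cdot\Phi$ and $\cD V=\Phi\cdot\nabla V$ gives $\cD\log\rho=-\nabla\cdot\Phi+\nabla V\cdot\Phi=-\AAA\Phi=\Lambda_1(\Phi)$. For the inductive step I would apply $\cD=\partial_t+\Phi\cdot\nabla$ to $\Lambda_n(\Phi)=(-1)^n(n-1)!\,\trace((\nabla\Phi)^n)+\Phi^{\otimes n}:\nabla^n V$, treating the two summands separately. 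In the $V$-summand, $\partial_t$ hitting one of the $n$ factors $\Phi$ produces $-\Phi\cdot\nabla\Phi$, which cancels exactly one of the $n$ terms produced by the transport term $\Phi\cdot\nabla$, leaving the surviving piece $\Phi^{\otimes(n+1)}:\nabla^{n+1}V$. In the $\trace((\nabla\Phi)^n)$-summand, one uses the convective-geodesic equation $\partial_t\Phi=-\Phi\cdot\nabla\Phi$ together with \autoref{lm:fdb_wasserstein:convderiv:convderiv_nabla} (equivalently, the flat-space symmetry $\nabla^2_{ij}\Phi^k=\nabla^2_{ji}\Phi^k$) to see that the $\nabla^2\Phi$ contributions cancel pairwise while the cyclic trace acquires one extra factor $\nabla\Phi$ with an extra sign and combinatorial factor $n$; since $(-1)^n(n-1)!\cdot n=(-1)^{n+1}n!$, this summand becomes $(-1)^{n+1}n!\,\trace((\nabla\Phi)^{n+1})$. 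Adding the two contributions gives $\Lambda_{n+1}(\Phi)$, closing the induction.

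Finally, substituting $(\cD^j\log\rho)_t=\Lambda_j(\Phi_t)$ into the display above and evaluating at $t=0$ yields, by \autoref{def:fdb_wasserstein:wassdiffs:wassdiffs}, the asserted formula for $D^n_\mu\EEE(\Phi)$. I expect the main obstacle to be the bookkeeping in the inductive step, specifically the verification that \emph{every} term containing a second spatial derivative of $\Phi$ cancels: this is exactly where flatness of $\RR^d$ enters (through $\nabla^2_{ij}\Phi^k=\nabla^2_{ji}\Phi^k$ and the absence of a curvature correction in \autoref{lm:fdb_wasserstein:convderiv:convderiv_nabla}), and it is the reason the argument does not extend verbatim to curved manifolds, cf.\ \autoref{rk:ent:hess:riem_hard}.
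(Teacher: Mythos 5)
Your proposal is correct and follows essentially the same route as the paper: the reduction via the scalar Faà di Bruno formula applied to the derivation $\cD$ together with $\rho\,g^{(k)}(\log\rho)=p_k(\rho)$, followed by the induction $(\cD^n\log\rho)_t=\Lambda_n(\Phi_t)$ with exactly the cancellation mechanism (the convective-geodesic equation killing the $\nabla^2\Phi$ terms via flatness) that the paper uses in \autoref{prop:ent:higherord:cDn=Lambdan}. Your closing remark correctly identifies where flatness enters, consistent with \autoref{rk:ent:hess:riem_hard}.
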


As a particular case when $V=0$ and $h(\rho) = \rho \log \rho$, so that $p_1(\rho) = \rho$ and $p_k(\rho) = 0$ for $k \geq 2$, we obtain the following, since $B_{n,1}(X_1, ..., X_n) = X_n$.
\begin{corollary}
	The $n$-th order Wasserstein differential of $H$ is given by
	\begin{equation*}
		D^n_\mu H(\Phi) 
        = \int \Lambda_n(\Phi) \, d\mu
        = (-1)^n (n-1)! \int \trace\left( (\nabla \Phi)^n \right) \, d\mu.
	\end{equation*}
\end{corollary}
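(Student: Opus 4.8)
The plan is to obtain the corollary as the stated special case of \autoref{thm:ent:difftls_EEE}, with $\nu = dx$ (i.e.\ $V = 0$) and $h(\rho) = \rho\log\rho$, so that $\EEE = H$. Since the theorem is assumed, the only work is to evaluate the three ingredients appearing in its formula — the iterated pressures $p_k$, the Bell polynomial $B_{n,1}$, and the operator $\Lambda_n$ — under this choice of data, and to check that all terms with $k \geq 2$ drop out.

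First I would compute the iterated pressures for $h(\rho) = \rho\log\rho$. From $p_0 = h$ and $p_{k+1}(\rho) = \rho p_k'(\rho) - p_k(\rho)$ (\autoref{def:ent:higherord:pk}), one gets $p_1(\rho) = \rho(\log\rho + 1) - \rho\log\rho = \rho$, then $p_2(\rho) = \rho\cdot 1 - \rho = 0$, and by the recurrence $p_k \equiv 0$ for every $k \geq 2$. Consequently, in the sum $\sum_{k=1}^n B_{n,k}(\cdots)\, p_k(d\mu/d\nu)$ only the $k=1$ term survives.

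Next I would invoke the elementary identity $B_{n,1}(X_1, \dots, X_n) = X_n$ for the partial exponential Bell polynomials (this is the "single block of size $n$" case; it follows immediately by induction from the standard recurrence for the $B_{n,k}$, or from the abstract Faa di Bruno statement of \autoref{sec:apx_bell}). Hence the surviving term is $\int_{\RR^d} B_{n,1}\big(\Lambda_1(\Phi), \dots, \Lambda_n(\Phi)\big)\, p_1\!\big(\tfrac{d\mu}{d\nu}\big)\, d\nu = \int_{\RR^d} \Lambda_n(\Phi)\, \tfrac{d\mu}{d\nu}\, d\nu = \int_{\RR^d} \Lambda_n(\Phi)\, d\mu$, using $p_1(\rho) = \rho$ and $d\mu = \tfrac{d\mu}{d\nu}\,d\nu$. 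Finally, with $V = 0$ all derivatives $\nabla^n V$ vanish, so \autoref{def:ent:higherord:Lambda} gives $\Lambda_n(\Phi) = (-1)^n (n-1)!\,\trace\big((\nabla\Phi)^n\big)$, yielding the claimed formula.

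There is no real obstacle here: the argument is a direct specialization, and the only non-tautological inputs are the closed forms $p_1(\rho) = \rho$, $p_{k\geq 2} \equiv 0$ and the identity $B_{n,1} = X_n$, both of which are routine. If anything, the one point worth stating carefully is \emph{why} the higher iterated pressures vanish — that is, the fact that $\rho \mapsto \rho\log\rho$ is, up to the linear term killed by $p_1 \mapsto p_2$, the unique internal-energy density whose pressure is linear — since this is exactly what collapses the general internal-energy formula of \autoref{thm:ent:difftls_EEE} to the single-term expression for the entropy.
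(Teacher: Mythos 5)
Your proposal is correct and is exactly the paper's argument: the corollary is obtained by specializing \autoref{thm:ent:difftls_EEE} to $V=0$ and $h(\rho)=\rho\log\rho$, using $p_1(\rho)=\rho$, $p_k\equiv 0$ for $k\geq 2$, $B_{n,1}(X_1,\dots,X_n)=X_n$, and the vanishing of the $\nabla^n V$ term in $\Lambda_n$. Nothing is missing.
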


\begin{remark}
	For any $n \geq 1$, symmetric matrix $S \succeq 0$ and function $u: \RR^d \to \RR$, for $\Phi = S \nabla u$,
	\begin{equation*}
		\trace\left( (\nabla \Phi)^n \right)
		= \trace\left( (S \nabla^2 u)^n \right)
		= \trace\left( (S^\half \nabla^2 u S^\half)^n \right),
	\end{equation*}
	which is non-negative when $n$ is even or $u$ is convex.
\end{remark}

\subsection{``Pointwise in space'' expansions and relation to Ledoux's \texorpdfstring{operators~$\tGamma_n$}{operators tildeGamman}} \label{subsec:ent:ptwise}

In \cite[Section~3]{ledoux_algebre_1995}, Ledoux introduced $n$-linear operators~$\tGamma_n$ which map scalar fields to scalar fields and such that,
for the Langevin dynamics $\partial_t \mu_t = \nabla \cdot \left( \mu_t \nabla \log \frac{d\mu_t}{d\nu} \right)$ and $h(\rho) = \rho \log \rho$, it holds $\frac{d^n}{dt^n} \EEE_{h,\nu}(\mu_t) = (-1)^n 2^{n-1} \int \tGamma_n\left( \log \rho_t \right) d\mu_t$,
where $\rho_t = \frac{d\mu_t}{d\nu}$.
This leads to computations at a ``pointwise'' level, i.e., manipulating scalar fields instead of scalars.
In this subsection we show that the iterated convective derivatives of $\log \rho_t$ furnish a different pointwise expansion.
We did not manage to make explicit the connection between the $\tGamma_n$ and our alternative pointwise expansion.

Consider any transport couple $\partial_t \mu_t = -\nabla \cdot (\mu_t \Phi_t)$, with potentially non-zero convective accelerations, and a reference measure $\nu$. Combining \autoref{thm:fdb_wasserstein:fdb_wasserstein} and \autoref{thm:ent:difftls_EEE} would yield an expression for $\frac{d^n}{dt^n} \EEE_{h,\nu}(\mu_t) = \frac{d^n}{dt^n} \int \log \rho_t d\mu_t = \int (\cD^n \log \rho)_t d\mu_t$.
The following result gives a ``pointwise in space'' refinement of it, by computing the scalar fields $(\cD^n \log \rho)_t$ themselves.

\begin{proposition}
	Consider any transport couple $\partial_t \mu_t = -\nabla \cdot (\mu_t \Phi_t)$,
	and let $\rho_t = \frac{d\mu_t}{d\nu}$.
	For any~$n \geq 1$,
	\begin{equation*}
		(\cD^n \log \rho)_t
		= \sum_{k=1}^n \Lambda_k : B_{n,k}(\Phi_t, (\cD \Phi)_t, ..., (\cD^{n-k} \Phi)_t).
	\end{equation*}
\end{proposition}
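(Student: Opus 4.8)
The plan is to prove the identity by induction on $n$, mirroring the proof of the univariate Faa di Bruno formula (and of \autoref{thm:fdb_wasserstein:fdb_wasserstein}): the operators $\Lambda_k$ will play the role of the $k$-th derivatives, and the iterated convective accelerations $(\cD^m\Phi)_t$ will play the role of the successive derivatives of the curve. The two ingredients are the product rule for $\cD$ acting on tensor contractions (\autoref{lm:fdb_wasserstein:convderiv:product_rule}) and a Leibniz-type rule for the $\Lambda_k$.

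First I would establish the following Leibniz rule, valid for any time-dependent vector fields $\Psi_{(1)},\dots,\Psi_{(k)}$, with $\cD=\partial_t+\Phi_t\cdot\nabla$ the convective derivative along the given transport couple:
\begin{equation*}
    \cD\bigl[\Lambda_k(\Psi_{(1)},\dots,\Psi_{(k)})\bigr]
    = \Lambda_{k+1}(\Psi_{(1)},\dots,\Psi_{(k)},\Phi_t)
      + \sum_{j=1}^{k}\Lambda_k\bigl(\Psi_{(1)},\dots,(\cD\Psi)_{(j)},\dots,\Psi_{(k)}\bigr).
\end{equation*}
Both sides are symmetric and multilinear in $(\Psi_{(1)},\dots,\Psi_{(k)})$, because $\cD$ is linear and commutes with the $t$-independent linear operations ($\nabla$, contraction with $\nabla^k V$) out of which $\Lambda_k$ is built; hence it suffices to check the diagonal case $\Psi_{(1)}=\dots=\Psi_{(k)}=\Psi$. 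That case is essentially the computation carried out in the proof of \autoref{prop:ent:higherord:cDn=Lambdan}, except that one no longer imposes that the transport couple is a convective geodesic: using $\partial_t\Psi=(\cD\Psi)-\Phi_t\cdot\nabla\Psi$ and the flatness identity $\nabla_i\nabla_j=\nabla_j\nabla_i$, the second-order terms cancel as there, the transport field $\Phi_t$ (rather than $\Psi$) occupies the new slot of $\Lambda_{k+1}$, and the terms $\cD\Psi$ — which vanished in the convective-geodesic case — are retained. The combinatorial prefactors $(-1)^k(k-1)!$ and $\frac1k\sum_{\sigma\in\frakS_k}$ match just as they were checked there.

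With the Leibniz rule in hand, the induction proceeds as for Faa di Bruno. The base case $n=1$ is \autoref{lm:ent:hess:cDlogrho}, which already holds for an arbitrary transport couple: $(\cD\log\rho)_t=\Lambda_1(\Phi_t)=\Lambda_1:B_{1,1}(\Phi_t)$. For the step, apply $\cD$ to $(\cD^n\log\rho)_t=\sum_{k=1}^n\Lambda_k:B_{n,k}(\Phi_t,(\cD\Phi)_t,\dots,(\cD^{n-k}\Phi)_t)$. Writing $B_{n,k}$ as a polynomial in indeterminates $X_m$ evaluated at $X_m=(\cD^{m-1}\Phi)_t$, the product rule together with the Leibniz rule gives
\begin{equation*}
    \cD\bigl[\Lambda_k:B_{n,k}(X_\bullet)\bigr]
    = \Lambda_{k+1}:\bigl(\Phi_t\otimes B_{n,k}(X_\bullet)\bigr)
      + \Lambda_k:\bigl((\partial^\sharp B_{n,k})(X_\bullet)\bigr),
\end{equation*}
where $\partial^\sharp$ denotes the derivation of the polynomial ring determined by $\partial^\sharp X_m=X_{m+1}$ (since $\cD$ turns a factor $(\cD^{m-1}\Phi)_t$ into $(\cD^m\Phi)_t$). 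Now the Bell polynomial recursion $\partial^\sharp B_{n,k}=B_{n+1,k}-X_1\,B_{n,k-1}$ (with $B_{n,0}=0$; see \autoref{sec:apx_bell}) together with $X_1\,B_{n,n}=X_1^{\otimes(n+1)}=B_{n+1,n+1}$, upon summing over $k$ and reindexing the $\Lambda_{k+1}$-terms, make the expression telescope — exactly as in the proof of \autoref{thm:fdb_wasserstein:fdb_wasserstein} — to $\sum_{k=1}^{n+1}\Lambda_k:B_{n+1,k}(\Phi_t,(\cD\Phi)_t,\dots,(\cD^{n+1-k}\Phi)_t)$, closing the induction.

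The one step that requires genuine computation is the Leibniz rule for the $\Lambda_k$; but, as noted, its diagonal case differs from the already-proved \autoref{prop:ent:higherord:cDn=Lambdan} only by the bookkeeping of the previously-vanishing convective accelerations, so no new combinatorial argument is needed, and the remainder is formal Bell-polynomial manipulation identical to that behind \autoref{thm:fdb_wasserstein:fdb_wasserstein}.
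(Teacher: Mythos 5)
Your proof is correct and follows essentially the same route as the paper's: establish the Leibniz rule $\cD\Lambda_k(\Psi_{(1)},\dots,\Psi_{(k)})=\Lambda_{k+1}(\Psi_{(1)},\dots,\Psi_{(k)},\Phi_t)+\sum_j\Lambda_k(\dots,\cD\Psi_{(j)},\dots)$ (the paper's \autoref{lm:ent:ptwise:DLambdan}) and then run the Bell-polynomial telescoping induction, which the paper packages as the abstract \autoref{prop:apx_bell:abstract_vectorspaces}. The only cosmetic difference is that you obtain the Leibniz lemma by polarization from the diagonal case, whereas the paper verifies it directly for distinct arguments via a permutation-counting identity.
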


\begin{proof}
	Given the lemma just below, this is a direct application of \autoref{prop:apx_bell:abstract_vectorspaces}, an abstract result on derivations and Bell polynomials.
\end{proof}

\begin{lemma} \label{lm:ent:ptwise:DLambdan}
	For any velocity field $\Phi_t$ and any time-dependent vector fields $g_{(1)} = (g_{(1)t})_t, g_{(2)}, \allowbreak ..., \allowbreak g_{(n)}$,
	it holds
	\begin{equation*}
		(\cD \Lambda_n(g_{(1)}, ..., g_{(n)}))_t
		= \Lambda_{n+1}(g_{(1)}, ..., g_{(n)}, \Phi_t) 
		+ \sum_{h_1+...+h_n=1} \Lambda_n\left( \cD^{h_1} g_{(1)}, ..., \cD^{h_n} g_{(n)} \right).
	\end{equation*}
\end{lemma}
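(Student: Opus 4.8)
The plan is to reduce to the \emph{diagonal} case $g_{(1)} = \dots = g_{(n)} = g$ by polarization, and then carry out a direct computation which is a mild generalization of the induction step in the proof of \autoref{prop:ent:higherord:cDn=Lambdan}.

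First I would record that both sides of the claimed identity are symmetric $n$-linear in $(g_{(1)}, \dots, g_{(n)})$. The map $g \mapsto \cD g = \frac{d}{dt} g + \Phi_t \cdot \nabla g$ is $\RR$-linear; the operators $\Lambda_n, \Lambda_{n+1}$ are symmetric multilinear by \autoref{def:ent:higherord:Lambda}; and since $\cD = \partial_t + \Phi_t \cdot \nabla$ is a first-order operator, applying it to the (multilinear, symmetric) expression $\Lambda_n(g_{(1)}, \dots, g_{(n)})$ — using the Leibniz rule \autoref{lm:fdb_wasserstein:convderiv:product_rule} and the commutation rule \autoref{lm:fdb_wasserstein:convderiv:convderiv_nabla} on each monomial — again yields something multilinear and symmetric in the $g_{(i)}$. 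Likewise $\sum_{h_1+\dots+h_n=1}\Lambda_n(\cD^{h_1}g_{(1)}, \dots, \cD^{h_n}g_{(n)}) = \sum_{j=1}^n \Lambda_n(g_{(1)}, \dots, \cD g_{(j)}, \dots, g_{(n)})$, which collapses to $n\,\Lambda_n(\cD g, g, \dots, g)$ on the diagonal. Hence, by polarization of symmetric multilinear forms, it suffices to prove
\[
    (\cD \Lambda_n(g))_t = \Lambda_{n+1}(\underbrace{g, \dots, g}_{n}, \Phi_t) + n\,\Lambda_n(\cD g, g, \dots, g)
\]
for a single time-dependent vector field $g = (g_t)_t$.

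Next I would split $\Lambda_n(g) = (-1)^n (n-1)!\,\trace\big((\nabla g)^n\big) + g^{\otimes n} : \nabla^n V$ and apply $\cD$ to each summand. For the trace term, cyclic invariance of the trace gives $\cD\,\trace\big((\nabla g)^n\big) = n\,\trace\big((\cD \nabla g)(\nabla g)^{n-1}\big)$, and by \autoref{lm:fdb_wasserstein:convderiv:convderiv_nabla} one has the matrix identity $\cD(\nabla g) = \nabla(\cD g) - (\nabla \Phi)(\nabla g)$ on $\RR^d$, so this equals $n\,\trace\big(\nabla(\cD g)(\nabla g)^{n-1}\big) - n\,\trace\big((\nabla \Phi)(\nabla g)^n\big)$. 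One then checks, using cyclicity once more to evaluate the symmetrized sums in \autoref{def:ent:higherord:Lambda}, that these two pieces are exactly the trace parts of $n\,\Lambda_n(\cD g, g, \dots, g)$ and of $\Lambda_{n+1}(g, \dots, g, \Phi)$ respectively — the combinatorial prefactors $(n-1)!$ and $n!$ matching up. For the $V$ term, expanding $\cD\big(g^{i_1}\cdots g^{i_n}\,\nabla_{i_1}\cdots\nabla_{i_n}V\big)$ by Leibniz produces $n\,(\cD g)^{i_1} g^{i_2}\cdots g^{i_n}\,\nabla_{i_1}\cdots\nabla_{i_n}V$ from the terms where $\partial_t + \Phi\cdot\nabla$ hits a $g$-factor, plus $g^{i_1}\cdots g^{i_n}\,\Phi^k\,\nabla_{i_1}\cdots\nabla_{i_n}\nabla_k V$ from the term where $\Phi\cdot\nabla$ hits $\nabla^n V$; since $\nabla^{n+1}V$ is symmetric on $\RR^d$, these are precisely the $V$-parts of $n\,\Lambda_n(\cD g, g, \dots, g)$ and of $\Lambda_{n+1}(g, \dots, g, \Phi)$. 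Summing the trace and $V$ contributions yields the diagonal identity, completing the proof.

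The only delicate point I anticipate is the bookkeeping in the trace term: $\trace\big(\nabla(\cD g)(\nabla g)^{n-1}\big)$ and $\trace\big((\nabla\Phi)(\nabla g)^n\big)$ arise from differentiating a cyclic product of $n$ identical matrices and therefore carry no symmetrization, so one must count how many of the $n!$ (resp.\ $(n{+}1)!$) permutations in the definition of $\Lambda_n$ (resp.\ $\Lambda_{n+1}$) fall into each cyclic class, and verify the prefactors cancel correctly. Everything else is exactly the derivation in the induction step of \autoref{prop:ent:higherord:cDn=Lambdan}, merely retaining the $\cD g$ contributions that vanished there because $g = \Phi$ was assumed to be a convective geodesic.
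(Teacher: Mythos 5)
Your proposal is correct, and it differs from the paper's proof only in one organizational choice: the paper works directly with general, distinct arguments $g_{(1)},\dots,g_{(n)}$ and handles the full $\frakS_n$-symmetrization in \autoref{def:ent:higherord:Lambda} by an explicit double-sum rearrangement (exchanging the sum over permutations with the sum over insertion positions of $\nabla\Phi_t$, then recognizing $\frac{1}{n+1}\sum_{\sigma\in\frakS_{n+1}}$), whereas you first polarize to the diagonal $g_{(1)}=\dots=g_{(n)}=g$. The polarization step is legitimate — both sides are symmetric $\RR$-multilinear in the $g_{(i)}$, since $\cD$ is $\RR$-linear and $\Lambda_n,\Lambda_{n+1}$ are symmetric multilinear — and it buys you a real simplification: on the diagonal the permutation bookkeeping collapses to counting cyclic classes, and the prefactors you flag as the delicate point do check out (both $n\,\Lambda_n(\cD g,g,\dots,g)$ and $\Lambda_{n+1}(g,\dots,g,\Phi)$ contribute trace parts with coefficient $(-1)^n n!$ and $(-1)^{n+1}n!$ respectively, matching $(-1)^n(n-1)!\cdot n$ times the two pieces of $\cD\!\trace((\nabla g)^n)$). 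Beyond that, the substance is identical to the paper's: the same splitting into the trace and $V$ parts, the same use of the Leibniz rule (\autoref{lm:fdb_wasserstein:convderiv:product_rule}) and of the commutation identity $\cD(\nabla g)=\nabla(\cD g)-(\nabla\Phi)(\nabla g)$ (\autoref{lm:fdb_wasserstein:convderiv:convderiv_nabla}), and the same observation that $\cD\nabla^n V=\Phi\cdot\nabla(\nabla^n V)$ supplies the $\Lambda_{n+1}$ term.
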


\begin{proof}
	By symmetry and Leibniz product rule, 
	for any permutation $\sigma \in \frakS_n$,
	\begin{align*}
		\left( \cD \trace\left( (\nabla g_{\sigma(1)}) ... (\nabla g_{\sigma(n)}) \right) \right)_t
		&= \sum_{h_1+...+h_n=1} \trace \left( (\cD^{h_1} \nabla g_{\sigma(1)}) ~ ... ~ (\cD^{h_n} \nabla g_{\sigma(n)}) \right) \\
		&= 
		\sum_{h_1+...+h_n=1} \trace \left( (\nabla \cD^{h_1} g_{\sigma(1)}) ~ ... ~ (\nabla \cD^{h_1} g_{\sigma(n)}) \right) \\
		&~~~~ ~~~~ - \sum_{i=1}^n \trace \left( (\nabla g_{\sigma(1)}) ~ ... ~ (\nabla g_{\sigma(i)}) (\nabla \Phi_t) (\nabla g_{\sigma(i+1)}) ~ ... ~ (\nabla g_{\sigma(n)}) \right),
	\end{align*}
	since
	$(\cD \nabla g)_t = \nabla (\cD g)_t - (\nabla \Phi_t) \cdot (\nabla g_t)$ 
	for any $(g_t)_t$ by \autoref{lm:fdb_wasserstein:convderiv:convderiv_nabla}.
	Now
	\begin{align*}
		& \frac{1}{n} \sum_{\sigma \in \frakS_n} \sum_{i=1}^n
		\trace \left( (\nabla g_{\sigma(1)}) ~ ... ~ (\nabla g_{\sigma(i)}) (\nabla \Phi_t) (\nabla g_{\sigma(i+1)}) ~ ... ~ (\nabla g_{\sigma(n)}) \right) \\
		&= \frac{1}{n} \sum_{\sigma \in \frakS_n} \sum_{i=1}^n
		\trace \left( (\nabla g_{\sigma(i+1)}) ~...~ (\nabla g_{\sigma(n)}) (\nabla g_{\sigma(1)}) ~...~ (\nabla g_{\sigma(i)}) (\nabla \Phi_t) \right) \\
		&= \sum_{\sigma \in \frakS_n}
		\trace \left( (\nabla g_{\sigma(1)}) ~...~ (\nabla g_{\sigma(n)}) (\nabla \Phi_t) \right) \\
		&= \frac{1}{n+1} \sum_{\sigma \in \frakS_{n+1}}
		\trace \left( (\nabla g_{\sigma(1)}) ~...~ (\nabla g_{\sigma(n+1)}) \right),
	\end{align*}
	where in the third line we exchanged the two summation symbols and noticed that the inner sum was independent of $i$,
	and in the last line we set $g_{n+1} = \nabla \Phi_t$ and counted the number of possible positions of ``$\nabla g_{n+1}$'' within an expression of the form $(\nabla g_{\sigma(1)}) ~...~ (\nabla g_{\sigma(n+1)})$.
	Likewise for the term in $V$,
	by symmetry and Leibniz product rule,
	\begin{align*}
		\MoveEqLeft \left( \cD \left[ (g_{(1)} \otimes ... \otimes g_{(n)}) : \nabla^n V \right] \right)_t \\
		&= \sum_{h_1+...+h_n=1} \left( (\cD^{h_1} g_{(1)} \otimes ... \otimes \cD^{h_n} g_{(n)}) \right) : \nabla^n V 
		+ (g_{(1)} \otimes ... \otimes g_{(n)}) : \left( \cD \nabla^n V \right)_t \\
		&= \sum_{h_1+...+h_n=1} \left( (\cD^{h_1} g_{(1)} \otimes ... \otimes \cD^{h_n} g_{(n)}) \right) : \nabla^n V 
		+ (g_{(1)} \otimes ... \otimes g_{(n)} \otimes \Phi_t) : \nabla^{n+1} V
	\end{align*}
	since
	$
		\left( \cD \nabla^n V \right)_t
		= 0 + \Phi_t \cdot \nabla \left[ \nabla^n V \right]
	$.
	Gathering the terms, we indeed have
	\begin{equation*}
		(\cD \Lambda_n(g_{(1)}, ..., g_{(n)}))_t
		= \Lambda_{n+1}(g_{(1)}, ..., g_{(n)}, \Phi_t) 
		+ \sum_{h_1+...+h_n=1} \Lambda_n\left( \cD^{h_1} g_{(1)}, ..., \cD^{h_n} g_{(n)} \right).
		\rqedhere
	\end{equation*}
\end{proof}

For the purpose of the GCM conjecture, it turns out that these pointwise expansions are not particularly helpful.
Indeed, for the heat flow $\partial_t \mu_t = \Delta \mu_t$, we have
$\frac{d^n}{dt^n} H(\mu_t) = \frac{d^n}{dt^n} \int (\log \mu_t) d\mu_t = \int (\cD^n \log \mu)_t d\mu_t$,
and one could hope that $(\cD^n \log \mu)_t(x)$ has a sign uniformly for all $x \in \RR^d$.
But it is not the case, as one can show that
\begin{itemize}
    \item For $n=1$, $(\cD \log \mu)_t = \trace \nabla^2 \log \mu_t$, which is non-negative only when $\mu_t$ is log-concave, whereas $\frac{d}{dt} H(\mu_t) = -\int \norm{\nabla \log \mu_t}^2 d\mu_t \leq 0$ unconditionally.
    \item For $n=2$, $(\cD^2 \log \mu)_t = 2 \trace((\nabla^2 \log \mu_t)^2) + \Delta(\Delta \log \mu_t) + (\nabla \log \mu_t)^\top (\nabla (\Delta \log \mu_t))$,
    which can take any sign a priori, whereas $\frac{d^2}{dt^2} H(\mu_t) = 2 \int \trace((\nabla^2 \log \mu_t)^2) d\mu_t \geq 0$.
\end{itemize}
Thus, as the case $n=2$ illustrates, the GCM conjecture relies on some cancellations occuring upon integration of $(\cD^n \log \mu)_t$ against $\mu_t$, using integrations by parts.

\pagebreak

For comparison, \cite[Theorem~1, Proposition~2]{ledoux_heat_2016} shows that for the heat flow in dimension $d=1$, then $\frac{d^n}{dt^n} H(\mu_t) = (-1)^n 2^{n-1} \int \tGamma_n\left( \log \mu_t \right) d\mu_t$, and
\begin{itemize}[itemsep=2pt]
    \item $\tGamma_1(u) = \norm{\nabla u}^2 \geq 0$
    and $\tGamma_2(u) = \trace((\nabla^2 u)^2) \geq 0$ pointwise (this is true for any $d$ \cite{ledoux_algebre_1995}).
    \item $\tGamma_3(u) = u^{\prime\prime\prime 2} - 2 u^{\prime\prime 3}, 
    \tGamma_4(u) = \left( u^{(4)} \right)^2 - 12 u'' u^{\prime\prime\prime 2} + 6 u^{\prime\prime 4} \geq 0$
    pointwise when $u$ is concave.
    \item $\tGamma_5(u)$ can take positive and negative values when $u$ is not concave.%
    \footnote{An algorithm to efficiently compute the $\tGamma_n$ in closed form for any $n \geq 1$, for the heat flow in dimension~$1$, is given in \cite{mansanarez_derivatives_2024}.}
\end{itemize}
Hence, the relation between the computations in this paper and the operators $\tGamma_n$ of \cite{ledoux_algebre_1995,ledoux_heat_2016} remains unclear.
Let us only emphasize that our $\Lambda_n$ arise naturally in the context of convective geodesics (straight-line mass transport), whereas the $\tGamma_n$ arise naturally in the context of the Langevin dynamics or the heat flow.
One may observe that $\Lambda_1(\nabla u) = -L u$, where $L$ is the Markov generator of the Langevin dynamics or the Laplacian in the case of the heat flow,
and that $\Lambda_2(\nabla u) = \tGamma_2(u)$.
We believe that the former is a deep-reaching fact (\autoref{rk:ent:hess:Dlogrho_Poisson}), and that the latter is only a coincidence.

\section{The formula for time-derivatives along gradient flows} \label{sec:gf}

\subsection{The formula for gradient flows on \texorpdfstring{$\RR^d$}{Rd}} \label{subsec:gf:findim}

For clarity of exposition, first consider the scalar case: let $f, x: \RR \to \RR$ and denote $x_{nt} = \frac{d^n}{dt^n} x_t$ and $f_{nt} = f^{(n)}(x_t)$ for any $n \geq 0$.
By Faa di Bruno's formula,
\begin{equation*}
    \forall n \geq 0,~
    \frac{d^n}{dt^n} f(x_t)
    = \sum_{k=0}^n B_{n,k}(x_{1t}, x_{2t}, ..., x_{(n-k+1)t}) \, f_{kt}
\end{equation*}
where $B_{0,0}=1$. Suppose that, on the other hand, the curve $(x_t)_t$ is a gradient flow, i.e.,
$x'_t = -g'(x_t)$ for some $g: \RR \to \RR$.
Then, denoting $g_{nt} = g^{(n)}(x_t)$ for any $n \geq 0$,
$x_{1t} = -g_{1t}$ and
\begin{equation*}
    \forall n \geq 0,~
    x_{(n+1)t} = -(g' \circ x)^{(n)}(t)
    = -\sum_{k=0}^n B_{n,k}(x_{1t}, x_{2t}, ..., x_{(n-k+1)t}) \, g_{(k+1)t}.
\end{equation*}
This is not of the correct format to apply the composition formula of~\autoref{subsec:apx_bell:compos}:
we expressed $x_{(n+1)t}$, and not $x_{nt}$, as a sum of Bell polynomials of order $n$, so the index is off by $1$.

In this subsection, we show how to compute the $\frac{d^n}{dt^n} f(x_t)$ when $g=f: \RR^d \to \RR$. That is, we compute the higher-order time-derivatives of a (multivariate) function along its own gradient flow.
Fix henceforth $f: \RR^d \to \RR$, $(x_t)_t$ a curve in $\RR^d$ such that $\frac{dx_t}{dt} = -\nabla f(x_t)$,
and denote $y^{(0)} = f(x_t)$ and $y^{(k)}_{i_1 i_2 ... i_k} = \left[ \nabla^k_{i_1 ... i_k} f \right](x_t)$.

\paragraph{The orders $n=1, 2, 3, 4$.}
We have, using that the tensors $y^{(2)}_{ij}$ resp.\ $y^{(3)}_{ijk}$ are symmetric,
\begin{align*}
	\frac{d}{dt} f(x_t) &= \nabla f(x_t) \cdot \frac{dx_t}{dt} = -\norm{\nabla f(x_t)}^2
	= -y^{(1)}_i y^{(1)}_i \\
	\frac{d^2}{dt^2} f(x_t) &= +2 y^{(1)}_i \cdot y^{(2)}_{ij} y^{(1)}_j = 2 y^{(2)}_{ij} y^{(1)}_i y^{(1)}_j \\
    \frac{d^3}{dt^3} f(x_t) &= -2 y^{(3)}_{ijk} y^{(1)}_k \cdot y^{(1)}_i y^{(1)}_j - 4 y^{(2)}_{ij} y^{(1)}_i \cdot y^{(2)}_{jk} y^{(1)}_k
    = -2 y^{(3)}_{ijk} y^{(1)}_i y^{(1)}_j y^{(1)}_k - 4 y^{(1)}_i y^{(2)}_{ij} y^{(2)}_{jk} y^{(1)}_k \\
    \frac{d^4}{dt^4} f(x_t) &= +2 y^{(4)}_{ijkl} y^{(1)}_l \cdot y^{(1)}_i y^{(1)}_j y^{(1)}_k 
    + 6 y^{(3)}_{ijk} y^{(1)}_i y^{(1)}_j \cdot y^{(2)}_{kl} y^{(1)}_l \\
    &~~~~ + 8 y^{(1)}_i y^{(2)}_{ij} \cdot y^{(3)}_{jkl} y^{(1)}_l \cdot y^{(1)}_k
    + 8 y^{(1)}_i y^{(2)}_{ij} y^{(2)}_{jk} y^{(2)}_{kl} y^{(1)}_l \\
    &= 2 y^{(4)}_{ijkl} y^{(1)}_i y^{(1)}_j y^{(1)}_k y^{(1)}_l 
    + 14 y^{(3)}_{ijk} y^{(2)}_{kl} y^{(1)}_i y^{(1)}_j y^{(1)}_l
    + 8 y^{(1)}_i y^{(2)}_{ij} y^{(2)}_{jk} y^{(2)}_{kl} y^{(1)}_l. \\[-0.9em]
\end{align*}

\paragraph{The tensor diagrams arising from gradient flows.}
Taking a step back, let us think about what is the seed that generated the above expressions.
What we used is really that $\frac{d}{dt}$ satisfies the Leibniz product rule and that
\begin{equation} \label{eq:gf:findim:rec_rule_yk}
    \frac{d}{dt} y^{(k)}_{i_1 ... i_k} = -y^{(k+1)}_{i_1 ... i_k i_{k+1}} ~ y^{(1)}_{i_{k+1}}.
\end{equation}

We can use this recurrence rule to compute explicitly the $\frac{d^n}{dt^n} f(x_t)$ for any $n \geq 1$.
Namely, $(-1)^n \frac{d^n}{dt^n} f(x_t)$ is a sum of monomials of the $y^{(1)}, y^{(2)}, ...$ with positive integer coefficients, and each monomial can be represented in tensor diagram notation by a tree with $n+1$ nodes.
In these tensor diagrams, every node with $k$ edges should be interpreted as the symmetric tensor $y^{(k)}$.
For ease of exposition, we extend the tensor diagram notation to represent $(-1)^n \frac{d^n}{dt^n} f(x_t)$ itself as a forest of trees, to be interpreted as the sum of the scalars represented by each tree.%
\footnote{See, e.g., the webpage \fnsurl{https://tensornetwork.org/diagrams/} for an introduction to tensor diagram notation. Here, the monomials appearing in $(-1)^n \frac{d^n}{dt^n} f(x_t)$ are scalars, so there is no ``dangling edge'' in their tensor diagram representations.
The fact that they are trees with $n$ edges and $n+1$ nodes reflects the fact that, if $f(x) = \alpha g(\beta x)$, then $\frac{d^n}{dt^n} f(x_t)$ is $n+1$-homogeneous in $\alpha$ and $2n$-homogeneous in $\beta$.
Furthermore, as can be deduced from the recurrence rule, it turns out that each tree with $n$ edges appears at least once in the representation of $(-1)^n \frac{d^n}{dt^n} f(x_t)$.}
Then the aforementioned recurrence rule \eqref{eq:gf:findim:rec_rule_yk} translates to the following rule for computing the tensor diagram representation, denoted $F^n$, of $(-1)^n \frac{d^n}{dt^n} f(x_t)$.

\begin{definition} \label{def:gf:findim:Fn}
	The undirected unlabeled forests $F^n$ ($n \in \NN$) are defined as follows.
\begin{samepage}
	\begin{itemize}[noitemsep]
	    \item $F^0$ is the graph with a single node.
	    \item For any $n \geq 0$:
	    for each node $X \in F^n$, denoting by $T$ the connected component containing~$X$,
	    \begin{itemize}[noitemsep]
	        \item make a copy $T_X$ of $T$,
	        \item add a leaf to $X$ in $T_X$;
	    \end{itemize}
	    and this defines the next forest $F^{n+1}$.
	\end{itemize}
\end{samepage}
    In diagrams we will write ``$n \times$'' to denote $n$ copies of the same graph and ``$+$'' to denote juxtaposition of graphs.
	The forests $F^1, F^2, F^3$ are shown in \autoref{fig:gf:findim:F123}
	and $F^4, F^5$ are shown in \autoref{fig:main_pf:F4},~\autoref{fig:main_pf:F5}.
\end{definition}

Let us summarize our derivation into a proposition for ease of future reference.
\begin{proposition} \label{prop:gf:findim:expr}
	Consider $f: \RR^d \to \RR$ and $(x_t)_t$ a curve in $\RR^d$ such that $\frac{dx_t}{dt} = -\nabla f(x_t)$,
	and denote the tensor $y^{(k)}_{i_1 ... i_k} = \left[ \nabla^k_{i_1 ... i_k} f \right](x_t)$.
	
	To any undirected tree $T$ with $n+1$ nodes, associate the scalar obtained by viewing it as a tensor network where every node of degree $k$ represents the symmetric tensor $y^{(k)}$. I.e., more explicitly: label the edges of $T$ from $1$ to $n$ and interpret each node with edges $i_1, ..., i_k$ as the symbol $y^{(k)}_{i_1...i_k}$. Then the scalar associated to $T$ is the product of all of these $n+1$ symbols. 
    Moreover, to any forest, associate the sum of the scalars associated to its tree components.
	
	Then, for any $n \geq 1$, $(-1)^n \frac{d^n}{dt^n} f(x_t)$ is equal to the scalar associated to $F^n$.
\end{proposition}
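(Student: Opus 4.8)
The plan is to argue by induction on $n$, using only the two structural facts already isolated above: that $\tfrac{d}{dt}$ obeys the Leibniz product rule, and the recurrence rule \eqref{eq:gf:findim:rec_rule_yk}. First I would record why the latter holds: since on $\RR^d$ the iterated covariant derivative $\nabla^k f$ is simply the symmetric array of $k$-th order partial derivatives, the chain rule together with $\tfrac{dx_t}{dt} = -\nabla f(x_t)$ gives
\[
\frac{d}{dt}\, y^{(k)}_{i_1 \dots i_k}
= \bigl[\nabla_{i_{k+1}} \nabla^k_{i_1 \dots i_k} f\bigr](x_t)\,\frac{dx^{i_{k+1}}_t}{dt}
= -\, y^{(k+1)}_{i_1 \dots i_k i_{k+1}}\, y^{(1)}_{i_{k+1}},
\]
which is exactly \eqref{eq:gf:findim:rec_rule_yk}. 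The base case is $n=0$: $(-1)^0 \tfrac{d^0}{dt^0} f(x_t) = f(x_t) = y^{(0)}$, which is the scalar associated to the single-node forest $F^0$.

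For the inductive step, suppose $(-1)^n \tfrac{d^n}{dt^n} f(x_t)$ equals the scalar $\Sigma_n$ associated to $F^n$. By construction, $\Sigma_n = \sum_T \prod_{X \in T} y^{(\deg X)}$, the sum running over the tree components $T$ of $F^n$ and the factor at node $X$ carrying the edge-labels incident to $X$, all indices contracted along the edges of $T$. Apply $\tfrac{d}{dt}$ and expand by Leibniz: one obtains a sum over all pairs $(T, X)$, $X$ a node of $T$, of the expression got from $\mathrm{scalar}(T)$ by replacing the factor $y^{(k)}_{i_1 \dots i_k}$ at $X$ (with $k = \deg X$) by $-\, y^{(k+1)}_{i_1 \dots i_k i_{k+1}}\, y^{(1)}_{i_{k+1}}$, as dictated by the recurrence rule. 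Diagrammatically this replacement is precisely: promote $X$ to degree $k+1$ by attaching one new edge $i_{k+1}$, hang a new degree-$1$ node at the other end of that edge, and leave every other node and contraction of $T$ untouched; that is the tree $T_X$ obtained by adding a leaf at $X$, times the overall sign $-1$. Summing over $(T,X)$ and comparing with \autoref{def:gf:findim:Fn} yields $\tfrac{d}{dt}\Sigma_n = -\,\Sigma_{n+1}$, whence $(-1)^{n+1} \tfrac{d^{n+1}}{dt^{n+1}} f(x_t) = -\tfrac{d}{dt}\bigl[(-1)^n \tfrac{d^n}{dt^n} f(x_t)\bigr] = -\tfrac{d}{dt}\Sigma_n = \Sigma_{n+1}$, closing the induction and in particular giving the claim for all $n \geq 1$.

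The only point requiring care — the sole content beyond bookkeeping — is the verification that the analytic operation ``differentiate the symbol at node $X$ via \eqref{eq:gf:findim:rec_rule_yk}'' corresponds exactly to the combinatorial operation ``add a leaf at $X$'' in the diagram: one must check that the freshly created $y^{(1)}$-node is genuinely a leaf (degree $1$) of $T_X$, that the new contracted index $i_{k+1}$ is a single new edge incident to $X$ and to nothing else, and that no other contraction is affected. All of this is immediate from the shape of the right-hand side of \eqref{eq:gf:findim:rec_rule_yk}, but it is worth fixing once and for all the edge-labelling convention so that ``product of node-symbols with indices contracted along edges'' is unambiguous; after that the induction is purely formal. (Throughout, $f$ is taken $C^\infty$ so that the time-derivatives and chain rule used are justified.)
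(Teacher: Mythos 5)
Your proposal is correct and matches the paper's own argument: the paper derives the result informally from exactly the same two ingredients (the Leibniz rule and the recurrence $\frac{d}{dt} y^{(k)}_{i_1\dots i_k} = -y^{(k+1)}_{i_1\dots i_k i_{k+1}} y^{(1)}_{i_{k+1}}$), observing that differentiating a node's symbol corresponds to adding a leaf at that node, which is precisely the recursive construction of $F^{n+1}$ from $F^n$ in \autoref{def:gf:findim:Fn}. Your explicit induction, including the sign bookkeeping $\frac{d}{dt}\Sigma_n = -\Sigma_{n+1}$, is simply a formalization of that derivation.
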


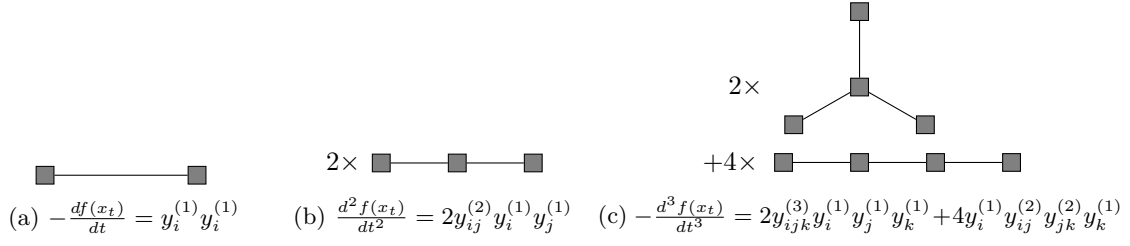
\begin{figure}[t]
	\begin{subfigure}{0.25\textwidth}
		\centering
		\adjustbox{scale=1}{%
			\begin{tikzpicture}
				\node[rectangle,fill=gray,draw] (A) at (0,0) {};
				\node[rectangle,fill=gray,draw] (B) at (2,0) {};
				\draw (A) -- (B);
			\end{tikzpicture}
		}
		\caption{$-\frac{d f(x_t)}{dt} = y^{(1)}_i y^{(1)}_i$}
		\label{subfig:F1}
	\end{subfigure}
	\hfill
	\begin{subfigure}{0.25\textwidth}
		\centering
		\adjustbox{scale=1}{%
			\begin{tikzpicture}
				\node[rectangle,fill=gray,draw] (A1) at (-1, 0) {};
				\node[rectangle,fill=gray,draw] (B1) at ( 0, 0) {};
				\node[rectangle,fill=gray,draw] (C1) at ( 1, 0) {};
				\draw (A1) -- (B1);
				\draw (B1) -- (C1);
				\node at (-1.5, 0) {$2 \times$};
			\end{tikzpicture}
		}
		\caption{$\frac{d^2 f(x_t)}{dt^2} = 2 y^{(2)}_{ij} y^{(1)}_i y^{(1)}_j$}
		\label{subfig:gf:findim:F2}
	\end{subfigure}
	\hfill
	\begin{subfigure}{0.458\textwidth}
		\centering
		\adjustbox{scale=1}{%
			\begin{tikzpicture}
				\node[rectangle,fill=gray,draw] (A1) at (1,0) {};
				\node[rectangle,fill=gray,draw] (B1) at (1,1) {};
				\node[rectangle,fill=gray,draw] (C1) at (1.866,-0.5) {};
				\node[rectangle,fill=gray,draw] (D1) at (1-0.866,-0.5) {};
				\draw (A1) -- (B1);
				\draw (A1) -- (C1);
				\draw (A1) -- (D1);
				\node at (-0.5, 0) {$2 \times$};
				\node[rectangle,fill=gray,draw] (A2) at (0,-1) {};
				\node[rectangle,fill=gray,draw] (B2) at (1,-1) {};
				\node[rectangle,fill=gray,draw] (C2) at (2,-1) {};
				\node[rectangle,fill=gray,draw] (D2) at (3,-1) {};
				\draw (A2) -- (B2);
				\draw (B2) -- (C2);
				\draw (C2) -- (D2);
				\node at (-0.7, -1) {$+4 \times$};
			\end{tikzpicture}
		}
		\caption{$-\frac{d^3 f(x_t)}{dt^3} = 2 y^{(3)}_{ijk} y^{(1)}_i y^{(1)}_j y^{(1)}_k + 4 y^{(1)}_i y^{(2)}_{ij} y^{(2)}_{jk} y^{(1)}_k$}
		\label{subfig:gf:findim:F3}
	\end{subfigure}
	\caption{The tensor diagram representations $F^n$ of $(-1)^n \frac{d^n}{dt^n} f(x_t)$ for $n \in \{1, 2, 3\}$}
	\label{fig:gf:findim:F123}
\end{figure}

\subsection{The formula for Wasserstein gradient flows on \texorpdfstring{$\RR^d$}{Rd}}
\label{subsec:gf:wass}

The discussion of \autoref{subsec:fdb_wasserstein:discussion} suggests that computations for Wasserstein gradient flows on $\RR^d$ follow the same algebraic rules as gradient flows on $\RR^d$, and so, that the time-derivatives of a functional along its own Wasserstein gradient flow are also given by the tensor diagrams $F^n$.
In this subsection we show that it is indeed the case, for the entropy.

The fact that $L^2_\mu$ equipped with the inner product $\innerprod{\Phi}{\Psi}_{L^2_\mu} = \int \Phi^\top \Psi \,d\mu$ is a Hilbert space,
suggests the following formal claim.
For any smooth enough functional $\FFF: \PPP_2(\RR^d) \to \RR$, $\mu \in \PPP_2(\RR^d)$, $n \geq 1$, and any vector fields $\Phi_{(1)}, ..., \Phi_{(n-1)} \in L^2_\mu$, there exists a vector field $\Psi \in L^2_\mu$ such that
\begin{equation*}
    \forall \xi \in L^2_\mu,~
    D_\mu^{n} \FFF(\Phi_{(1)}, ..., \Phi_{(n-1)}, \xi) = \int \Psi^\top \xi \,\,d\mu.
\end{equation*}
For the functional of interest in this paper, $\FFF = H$ the (negative) differential entropy, we indeed have the following result, although we note that the proof relies on an integration by parts rather than Hilbertness of~$L^2_\mu$.

\begin{proposition} \label{prop:gf:wass:innerprod}
    For any $\mu \in \PPP_2(\RR^d)$ and vector fields $\Phi_{(1)}, ..., \Phi_{(n-1)} \in L^2_\mu \cap \mathfrak{X}(\RR^d)$, there exists $\Psi \in L^2_\mu \cap \mathfrak{X}(\RR^d)$ such that
	\begin{equation*}
		\forall \xi \in L^2_\mu \cap \mathfrak{X}(\RR^d),~
		\int \Lambda_n(\Phi_{(1)}, ..., \Phi_{(n-1)}, \xi) \,d\mu = \int \Psi^\top \xi \,\,d\mu.
	\end{equation*}
\end{proposition}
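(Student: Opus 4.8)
The plan is to split $\Lambda_n(\Phi_{(1)},\ldots,\Phi_{(n-1)},\xi)$ into its trace term and its $V$-term, to observe that the $V$-term is already a pointwise inner product against $\xi$, and to reduce the trace term to a single integration by parts. Writing $g_{(j)} = \Phi_{(j)}$ for $j \le n-1$ and $g_{(n)} = \xi$, the explicit formula of \autoref{def:ent:higherord:Lambda} gives
\begin{equation*}
	\Lambda_n(\Phi_{(1)},\ldots,\Phi_{(n-1)},\xi)
	= \frac{(-1)^n}{n}\sum_{\sigma\in\frakS_n}\trace\big((\nabla g_{(\sigma(1))})\cdots(\nabla g_{(\sigma(n))})\big)
	+ \big(\Phi_{(1)}\otimes\cdots\otimes\Phi_{(n-1)}\otimes\xi\big):\nabla^n V.
\end{equation*}
The structural point that makes the argument terminate is that $\xi$ enters $\Lambda_n$ only through $\nabla\xi$ (linearly, inside a single trace) and through $\xi$ itself (in the $V$-term), never through second or higher derivatives.

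First I would dispose of the $V$-term: pointwise it equals $w^\top\xi$, where $w$ is the smooth vector field with $w^j = \Phi_{(1)}^{i_1}\cdots\Phi_{(n-1)}^{i_{n-1}}\nabla_{i_1}\cdots\nabla_{i_{n-1}}\nabla_j V$, so it already contributes $w$ to $\Psi$. Next I would normalize the trace term using cyclicity of the trace: for each $\sigma$, if $\nabla\xi$ sits in position $p=\sigma^{-1}(n)$, then $\trace\big((\nabla g_{(\sigma(1))})\cdots(\nabla g_{(\sigma(n))})\big) = \trace(B_\sigma\,\nabla\xi)$, where $B_\sigma$ is the product of $\nabla\Phi_{(1)},\ldots,\nabla\Phi_{(n-1)}$ in the cyclically rotated order and is therefore $\xi$-free; summing over $\sigma$, the trace term equals $\trace(M\,\nabla\xi)$ for a smooth matrix field $M$ built only from $\Phi_{(1)},\ldots,\Phi_{(n-1)}$. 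Finally, using the paper's index convention $(\nabla\xi)_{ji}=\nabla_j\xi^i$ so that $\trace(M\,\nabla\xi)=M_{ij}\nabla_j\xi^i$, I would integrate by parts each scalar $\xi^i$ against the vector-valued measure with $j$-th component $M_{ij}\,\mu$, obtaining
\begin{equation*}
	\int_{\RR^d}\trace(M\,\nabla\xi)\,d\mu
	= -\sum_{i=1}^d\int_{\RR^d}\xi^i\,\nabla\cdot\big((M_{ij})_j\,\mu\big)
	= \int_{\RR^d}\tilde\Psi^\top\xi\,d\mu,
	\qquad \tilde\Psi^i := -\tfrac1\mu\,\nabla\cdot\big((M_{ij})_j\,\mu\big),
\end{equation*}
where $\mu$ denotes also its density (smooth and positive in the situations of interest, e.g.\ along the heat flow for $t>0$), and $\tilde\Psi$ is a smooth vector field not involving $\xi$. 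Then $\Psi := w+\tilde\Psi\in\mathfrak{X}(\RR^d)$ satisfies the claimed identity for every $\xi$.

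The algebra here is routine. The only genuine obstacle is the one the paper explicitly sets aside: showing that the vector field $\Psi$ constructed this way actually belongs to $L^2_\mu$ and that the integration by parts is legitimate — a regularity matter rather than a structural one. A minor additional point of care is the trace-cyclicity bookkeeping together with the sub-/superscript conventions used to identify $\trace(M\,\nabla\xi)$ with $M_{ij}\nabla_j\xi^i$, but these pose no real difficulty.
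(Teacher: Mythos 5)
Your proposal is correct and follows essentially the same route as the paper: collect the single $\nabla\xi$ factor via cyclicity of the trace into $\trace(M\,\nabla\xi)$ with $M$ built only from the $\Phi_{(j)}$, then perform one integration by parts to obtain $\Psi^i = -\frac{1}{\mu}\nabla_j(M_i^{~j}\mu)$. Your separate treatment of the $V$-term is a harmless extra generality (the paper's proof tacitly works with $V=0$, the relevant case for the entropy in \autoref{sec:gf}), and the regularity caveats you flag are exactly the ones the paper declares out of scope.
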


\begin{proof}
    Denoting $M_i^{~j} = (-1)^n \sum_{\sigma \in \frakS_{n-1}} \nabla_i \Phi_{(\sigma(1))}^{i_1} ... \nabla_{i_{n-1}} \Phi_{(\sigma(n-1))}^j$,
    we have by definition of $\Lambda_n$ and by integration by parts
    \begin{align*}
        \int \Lambda_n(\Phi_{(1)}, ..., \Phi_{(n-1)}, \xi) \,d\mu 
        &= \int \trace(M(x) \nabla \xi(x)) \, d\mu(x)
        = \int M_i^{~j}~ \nabla_j \xi^i  ~d\mu \\
        &= -\int \xi^i \,\nabla_j \left( M_i^{~j} \mu \right)
        = -\int \xi^i \left( \nabla_j M_i^{~j} + M_i^{~j} \nabla_j \log\mu \right) d\mu.
        \rqedhere
    \end{align*}
\end{proof}

The above proposition justifies the following shorthands.
\begin{definition} \label{def:gf:wass:Lambdan_i_shorthand}
	For any $\mu \in \PPP_2(\RR^d)$ and vector fields $\Phi_{(1)}, ..., \Phi_{(n)} \in L^2_\mu \cap \mathfrak{X}(\RR^d)$, let
	\begin{equation*}
		\Lambda^{(n,\mu)}_{i_1...i_n}~ \Phi_{(1)}^{i_1} ... \Phi_{(n)}^{i_n} 
		\coloneqq \int \Lambda_n(\Phi_{(1)}, ..., \Phi_{(n)}) \,d\mu,
	\end{equation*}
	and let $\Lambda^{(n,\mu)}_{i_1...i_n}~ \Phi_{(1)}^{i_1} ... \Phi_{(n-1)}^{i_{n-1}}$ be the vector field in $L^2_\mu \cap \mathfrak{X}(\RR^d)$, indexed by $i_n$, such that
	\begin{equation*}
		\forall \xi \in L^2_\mu \cap \mathfrak{X}(\RR^d),~
		\int \xi^\top \left( \Lambda^{(n,\mu)}_{i_1...i_n}~ \Phi_{(1)}^{i_1} ... \Phi_{(n-1)}^{i_{n-1}} \right) d\mu
        = \int \Lambda_n(\Phi_{(1)}, ..., \Phi_{(n-1)}, \xi) \,d\mu.
	\end{equation*}
	For example, 
    $\Lambda^{(1,\mu)}_i \Phi^i = \int \Lambda_1(\Phi) \,d\mu = \int (-\nabla \cdot \Phi) \,d\mu = \int \Phi^i (\nabla_i \log \mu) \,d\mu$ by integration by parts,
    and $\Lambda^{(1,\mu)}_i = \nabla_i \log \mu$.
\end{definition}

\begin{proposition} \label{prop:gf:wass:expr}
	For any fixed $\mu$, to any undirected tree $T$ with $n+1$ nodes, associate a scalar as follows.
	Label the edges from $1$ to $n$ and interpret each node with edges $i_1, ..., i_k$ as the symbol $\Lambda^{(k,\mu)}_{i_1,...,i_k}$.
	The scalar associated to $T$ is the product of all of these $n+1$ symbols.
    Then, this procedure is well-defined, i.e., any compatible parenthesization of the product of symbols defines the same quantity via \autoref{def:gf:wass:Lambdan_i_shorthand}.

    Moreover, to any forest $F$, associate the sum of the scalars associated to its tree components, and denote it henceforth as $\val_\mu(F)$ (or simply $\val(F)$ when $\mu$ is clear from context).
	Consider the transport couple $(\mu_t, \Phi_t)_t$ given by $\Phi_t = -\nabla H'[\mu_t] = -\nabla \log \mu_t$ and $\partial_t \mu_t = -\nabla \cdot (\mu_t \Phi_t) = \Delta \mu_t$.
	Then for any $n \geq 1$, $(-1)^n \frac{d^n}{dt^n} H(\mu_t) = \val_{\mu_t}(F^n)$.
\end{proposition}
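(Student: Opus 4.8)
The plan is to transport the proof of \autoref{prop:gf:findim:expr} to the Wasserstein setting via the dictionary underlying this paper: the differential tensor $\nabla^k f(x_t)$ along a curve corresponds to the symmetric form $D^k_{\mu_t} H = \int_{\RR^d}\Lambda_k(\,\cdot\,)\,d\mu_t$ (the corollary following \autoref{thm:ent:difftls_EEE}); the Euclidean contraction $\delta^{ij}$ corresponds to the $L^2_{\mu_t}$ pairing $\int_{\RR^d}(\,\cdot\,)\,d\mu_t$, which is why \autoref{def:gf:wass:Lambdan_i_shorthand} and \autoref{prop:gf:wass:innerprod} are needed --- they turn the partially contracted symbols $\Lambda^{(k,\mu)}$ into genuine vector fields; and, crucially, the time-derivative $\frac{d}{dt}$ of a tensor field along the curve corresponds to the convective derivative $\cD$ along the transport couple $(\mu_t,\Phi_t)_t$ with $\Phi_t = -\nabla\log\mu_t$. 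First I would dispatch the well-definedness claim: independence of $\val_\mu(T)$ on the parenthesization follows from the permutation-invariance of $\Lambda_k$ together with the defining property of the $L^2_\mu$-representative in \autoref{def:gf:wass:Lambdan_i_shorthand} (contracting it against a test field returns the original integral), so the edges of a tree may be resolved in any order.

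The identity $(-1)^n\frac{d^n}{dt^n}H(\mu_t) = \val_{\mu_t}(F^n)$ is then proved by induction on $n$, exactly as in \autoref{prop:gf:findim:expr}. The base case $n=1$ is De Bruijn's identity: $\val_{\mu_t}(F^1) = \int_{\RR^d}\norm{\nabla\log\mu_t}^2\,d\mu_t = -\frac{d}{dt}H(\mu_t)$. For the inductive step, since $F^{n+1}$ is obtained from $F^n$ by grafting, for each node $X$, a leaf onto a copy of the component containing $X$ (\autoref{def:gf:findim:Fn}), it suffices by linearity to establish the per-tree recurrence $\frac{d}{dt}\val_{\mu_t}(T) = -\sum_{X\in T}\val_{\mu_t}(T_X)$, where $T_X$ is $T$ with a leaf added at $X$. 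This is the Wasserstein analog of the recurrence rule \eqref{eq:gf:findim:rec_rule_yk}. To prove it I would first resolve the inner contractions of $T$ into a single pointwise scalar field $G^T_t$ in the variables $\{\nabla^j\log\mu_t\}_j$, by iterating the integration by parts in the proof of \autoref{prop:gf:wass:innerprod}, so that $\val_{\mu_t}(T) = \int_{\RR^d}G^T_t\,d\mu_t$; then $\frac{d}{dt}\val_{\mu_t}(T) = \int_{\RR^d}(\cD G^T)_t\,d\mu_t$ by \autoref{lm:fdb_wasserstein:convderiv:ddt_lambdat_mut}. Since $\cD$ is a derivation (\autoref{lm:fdb_wasserstein:convderiv:product_rule}) with $\cD\log\mu_t = -\nabla\cdot\Phi_t = \Lambda_1(\Phi_t)$ and $(\cD\nabla_i g)_t = \nabla_i(\cD g)_t - (\nabla_i\Phi_t)\cdot(\nabla g_t)$ (\autoref{lm:fdb_wasserstein:convderiv:convderiv_nabla}, packaged in \autoref{lm:ent:ptwise:DLambdan}), applying $\cD$ distributes over the $n+1$ nodes of $T$, and its net effect on the symbol at a node $X$ of degree $k$ is to replace $\Lambda^{(k,\mu_t)}$ by $\Lambda^{(k+1,\mu_t)}$ with a new leg contracted against $\Phi_t = -\Lambda^{(1,\mu_t)} = -\nabla\log\mu_t$ --- that is, to graft a leaf onto $X$ and produce a sign. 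Summing over $X$ and integrating back by parts yields $\frac{d}{dt}\val_{\mu_t}(T) = -\sum_{X\in T}\val_{\mu_t}(T_X)$, hence $\frac{d}{dt}\val_{\mu_t}(F^n) = -\val_{\mu_t}(F^{n+1})$, closing the induction.

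The main obstacle is this per-tree recurrence, and specifically the fact that --- unlike in finite dimensions, where the metric contraction $\delta^{ij}$ is constant in time --- the pairing $\int_{\RR^d}(\,\cdot\,)\,d\mu_t$ used to build $\val_{\mu_t}(T)$ moves with $t$, so $\frac{d}{dt}$ does not a priori distribute as a plain Leibniz rule over the symbols. What rescues the argument is precisely that the time-dependence of the measure is absorbed into the convective derivative via \autoref{lm:fdb_wasserstein:convderiv:ddt_lambdat_mut}, and that $\cD$ together with $\nabla$ obeys the clean product and commutation rules of \autoref{subsec:fdb_wasserstein:convderiv} on $\RR^d$ --- this is the point at which flatness of $\RR^d$ enters (cf. the discussion in \autoref{subsec:fdb_wasserstein:discussion}). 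One must also check that the partially contracted objects of \autoref{def:gf:wass:Lambdan_i_shorthand} genuinely admit pointwise representatives built from $\{\nabla^j\log\mu_t\}_j$, so that $\cD$ can be applied to $G^T_t$ termwise; this is an iterated application of the integration by parts in \autoref{prop:gf:wass:innerprod}. Finally, as in \cite{costa_new_1985,cheng_higher_2015}, the interchanges of $\frac{d}{dt}$ with the spatial integral and the integrations by parts must be justified for $t>0$, which is covered by the regularity remarks in \autoref{subsec:intro:approach}.
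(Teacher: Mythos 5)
Your proposal is correct and follows essentially the same route as the paper: the paper proves this by repeating the derivation of \autoref{prop:gf:findim:expr} verbatim, with \autoref{lm:ent:ptwise:DLambdan} (combined with \autoref{lm:fdb_wasserstein:convderiv:ddt_lambdat_mut}) playing the role of the recurrence rule \eqref{eq:gf:findim:rec_rule_yk}, which is exactly the per-tree grafting recurrence $\frac{d}{dt}\val_{\mu_t}(T) = -\sum_{X}\val_{\mu_t}(T_X)$ you establish. Your additional remarks on resolving the nested contractions into a single pointwise integrand and on the time-dependence of the pairing being absorbed by the convective derivative are faithful elaborations of what the paper leaves implicit.
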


For example, for the forests $F^1, F^2, F^3$ given in \autoref{fig:gf:findim:F123},
\begin{itemize}
	\item The scalar associated to $F^1$ is
	$\val(F^1) = \Lambda^{(1,\mu)}_i \Lambda^{(1,\mu)}_i
	= \int \norm{\nabla \log \mu}^2 d\mu$,
	which is indeed equal to $-\frac{d}{dt} H(\mu_t)$ (for $\mu = \mu_t$) by De Bruijn's identity.
	\item The scalar associated to $F^2$ is
	\begin{equation*}
		\val(F^2) = 2 \Lambda^{(2,\mu)}_{i,j} \Lambda^{(1,\mu)}_{i} \Lambda^{(1,\mu)}_{j}
		= 2 \int \Lambda_2(\nabla \log \mu, \nabla \log \mu) d\mu
		= 2 \int \trace((\nabla^2 \log \mu)^2) d\mu,
	\end{equation*}
	which is indeed equal to $\frac{d^2}{dt^2} H(\mu_t)$ by \cite{bakry_diffusions_1985,villani_short_2000}.
	\item The scalar associated to $F^3$ is
	$\val(F^3) = 2 \Lambda^{(3,\mu)}_{ijk} \Lambda^{(1,\mu)}_i \Lambda^{(1,\mu)}_j \Lambda^{(1,\mu)}_k
	+ 4 \Lambda^{(1,\mu)}_i \Lambda^{(2,\mu)}_{ij} \Lambda^{(2,\mu)}_{jk} \Lambda^{(1,\mu)}_k$.
	Observe that the first term is equal to
	\begin{equation*}
		2 \Lambda^{(3,\mu)}_{ijk} \Lambda^{(1,\mu)}_i \Lambda^{(1,\mu)}_j \Lambda^{(1,\mu)}_k
		= 2 \int \Lambda_3(\nabla \log \mu) d\mu
		= 2 \cdot (-2) \int \trace((\nabla^2 \log \mu)^3) d\mu,
	\end{equation*}
	and that the second term is non-negative since
	\begin{equation*}
		4 \Lambda^{(1,\mu)}_i \Lambda^{(2,\mu)}_{ij} \Lambda^{(2,\mu)}_{jk} \Lambda^{(1,\mu)}_k
		= 4 \int \norm{\Psi}^2 d\mu
		~~\text{where}~~
		\Psi_j = \Lambda^{(2,\mu)}_{ji} \Lambda^{(1,\mu)}_i.
	\end{equation*}
	Note that if $\mu_0$ is log-concave then $\mu_t$ is too for all $t \geq 0$, and then ${\trace((\nabla^2 \log \mu_t)^3) \leq 0}$, and so $-\frac{d^3}{dt^3} H(\mu_t) = \val_{\mu_t}(F^3) \geq 0$.
	This provides a very short proof of~\eqref{eq:GCM_d_m} for any $d$ and $m=3$ provided that $\mu_0$ is log-concave, previously shown using information-theoretic arguments in \cite{toscani_concavity_2015} (where actually a much finer bound is obtained).
\end{itemize}

The proof of \autoref{prop:gf:wass:expr} follows from the exact same derivation as for \autoref{prop:gf:findim:expr}, with the help of \autoref{lm:ent:ptwise:DLambdan} in place of \eqref{eq:gf:findim:rec_rule_yk}.

\section{Proof of the main theorem} \label{sec:main_pf}

This section is dedicated to proving \autoref{thm:intro:GCMC_45}.
Namely, let $\mu_0 \in \PPP_2(\RR^d)$ and let $(\mu_t)_{t \geq 0}$ be given by $\partial_t \mu_t = \Delta \mu_t$;
assume that $\mu_0$ is log-concave, so that $\mu_t$ is log-concave too for all $t \geq 0$;
then we claim that $(-1)^n \frac{d^n}{dt^n} H(\mu_t) \geq 0$ for $n \in \{4, 5\}$.
By \autoref{prop:gf:wass:expr}, expressions for those two quantities are given by the tensor diagrams $F^4$ and $F^5$ from  \autoref{def:gf:findim:Fn}.
By applying the definition, one finds that $F^4$ is as shown in \autoref{fig:main_pf:F4}, and that $F^5$ is as shown in \autoref{fig:main_pf:F5}.
(As a sanity check, observe that $F^4$ consists of $4! = 24$ trees and $F^5$ of $5! = 120$ trees, consistent with \autoref{def:gf:findim:Fn}.)
We also introduce notations for the distinct tree components $S_1, S_2, S_3$ of $F^4$, respectively $T_1, ..., T_6$ of $F^5$, as shown in the figures.
\autoref{thm:intro:GCMC_45} is then a direct consequence of the following two propositions, where we recall that the notation $\val$ is defined in~\autoref{prop:gf:wass:expr}.

\begin{proposition} \label{prop:main_pf:n=4}
    Consider the trees $S_i$ ($i \in \{1, 2, 3\}$) shown in \autoref{fig:main_pf:F4}
    and $\mu \in \PPP_2(\RR^d)$ log-concave.
    Then for each $i$, $\val_\mu(S_i) \geq 0$.
\end{proposition}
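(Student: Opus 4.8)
By \autoref{prop:gf:wass:expr}, $\val_\mu(S_i)$ is the contraction of the tensor network $S_i$ in which a node of degree $k$ carries the symbol $\Lambda^{(k,\mu)}$, where $\Lambda^{(1,\mu)}_i = \nabla_i \log\mu$ and, for $k\ge 2$, $\Lambda^{(k,\mu)}$ is the object defined by duality through $\int_{\RR^d}\Lambda_k(\cdots)\,d\mu$ (\autoref{def:gf:wass:Lambdan_i_shorthand}), with $\Lambda_k(\Phi) = (-1)^k (k-1)!\,\trace\big((\nabla\Phi)^k\big)$ since $V=0$ for the entropy (\autoref{def:ent:higherord:Lambda}). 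There are exactly three trees on five vertices, and these are the $S_i$: the star $K_{1,4}$, the path $P_5$, and the ``spider'' with degree sequence $(3,2,1,1,1)$; I will call these $S_1,S_3,S_2$ respectively, though the labelling plays no role. The argument hinges on the vector field $\Psi := \Lambda^{(2,\mu)}_{ij}\,\nabla_i\log\mu$ (indexed by $j$): one integration by parts, using the flatness identities $\nabla_a (\nabla^2\log\mu)_{ab} = \nabla_b\Delta\log\mu$ and $(\nabla^2\log\mu)_{ab}\nabla_a\log\mu = \half\nabla_b\norm{\nabla\log\mu}^2$, shows that $\Psi = -\nabla\varphi$ with $\varphi := \Delta\log\mu + \half\norm{\nabla\log\mu}^2$; in particular $\Psi$ is a gradient field, so $\nabla\Psi = -\nabla^2\varphi$ is symmetric.

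\textbf{The star and the path.} Contracting the four leaves of $S_1$ into its centre gives $\val_\mu(S_1) = \int_{\RR^d}\Lambda_4(\nabla\log\mu)\,d\mu = 6\int_{\RR^d}\trace\big((\nabla^2\log\mu)^4\big)\,d\mu \ge 0$, since $\trace(A^4)=\trace\big((A^2)^\top A^2\big)\ge 0$ for symmetric $A$. For the path, contract the two end-edges: each length-one arm equals $\Psi$, so by well-definedness of the contraction (\autoref{prop:gf:wass:expr}), $\val_\mu(S_3) = \Lambda^{(2,\mu)}_{jk}\Psi^j\Psi^k = \int_{\RR^d}\Lambda_2(\Psi,\Psi)\,d\mu = \int_{\RR^d}\trace\big((\nabla\Psi)^2\big)\,d\mu = \int_{\RR^d}\trace\big((\nabla^2\varphi)^\top\nabla^2\varphi\big)\,d\mu \ge 0$. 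Neither estimate uses log-concavity.

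\textbf{The spider.} Contracting the arm $\Lambda^{(2,\mu)}$--leaf of $S_2$ into $\Psi$, and expanding $\Lambda_3$ using the symmetry of $\nabla^2\log\mu$ and of $\nabla\Psi$, one gets $\val_\mu(S_2) = \int_{\RR^d}\Lambda_3(\nabla\log\mu,\nabla\log\mu,\Psi)\,d\mu = 2\int_{\RR^d}\trace\big((\nabla^2\log\mu)^2\,\nabla^2\varphi\big)\,d\mu$. Integrating by parts once (moving one derivative off $\varphi$ onto $(\nabla^2\log\mu)^2\,\mu$, via the identity $\nabla_a\big((\nabla^2\log\mu)^2\big)_{ab} + \big((\nabla^2\log\mu)^2\big)_{ab}\nabla_a\log\mu = (\nabla^2\log\mu\,\nabla\varphi)_b + \half\nabla_b\trace\big((\nabla^2\log\mu)^2\big)$) yields
\begin{equation*}
\val_\mu(S_2) = -2\int_{\RR^d}(\nabla\varphi)^\top(\nabla^2\log\mu)(\nabla\varphi)\,d\mu \;-\; \int_{\RR^d}\nabla\big(\trace((\nabla^2\log\mu)^2)\big)\cdot\nabla\varphi\,d\mu .
\end{equation*}
The first integral is $\ge 0$ because log-concavity forces $\nabla^2\log\mu \preceq 0$. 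For the second, I would expand $\varphi = \Delta\log\mu + \half\norm{\nabla\log\mu}^2$ and integrate by parts repeatedly -- freely commuting covariant derivatives (flatness), using $\nabla\mu = \mu\nabla\log\mu$ and Leibniz -- so as to rewrite it as a sum of (i) integrals against $d\mu$ of squared norms of tensors assembled from $\nabla^{\le 3}\log\mu$, and (ii) integrals of the form $-\int\trace\big((\nabla^2\log\mu)^{2\ell+1}\,T\big)\,d\mu$ with $T\succeq 0$, each $\ge 0$ when $\nabla^2\log\mu\preceq 0$; summing gives $\val_\mu(S_2)\ge 0$.

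\textbf{Main obstacle.} The one genuinely delicate point is this last reduction for $S_2$: a naive single further integration by parts turns the second integral into $\int_{\RR^d}\trace((\nabla^2\log\mu)^2)\,\big(\Delta\varphi + \nabla\log\mu\cdot\nabla\varphi\big)\,d\mu$, whose integrand is not pointwise signed, so one must choose the order and grouping of the integrations by parts carefully -- in particular collecting the third-derivative-of-$\log\mu$ terms into squares -- so that \emph{every} residual term is manifestly nonnegative, in line with \autoref{rk:main_pf:allnonneg}. The treatment of $S_1$ and $S_3$, and of the first integral in the display for $S_2$, is then immediate.
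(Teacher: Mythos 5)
Your treatment of $S_1$ and $S_3$ is correct and coincides with the paper's: $\val(S_1)=6\int\trace((\nabla^2 U)^4)d\mu\ge 0$ and, since the dangling path evaluates to the gradient field $-\nabla\varphi$ with $\varphi=\Delta U+\tfrac12\norm{\nabla U}^2$ (this is \autoref{lm:main_pf:repr:val_rect_rect}), $\val(S_3)=\int\trace((\nabla^2\varphi)^2)d\mu\ge 0$; neither uses log-concavity. For $S_2$ your starting identity $\val(S_2)=2\int\trace\big((\nabla^2 U)^2\,\nabla^2\varphi\big)d\mu$ and the integration-by-parts identity you state both check out, and the resulting term $-2\int(\nabla\varphi)^\top(\nabla^2U)(\nabla\varphi)\,d\mu$ is indeed nonnegative by log-concavity. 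But the proof stops there: the second term $-\int\nabla\big(\trace((\nabla^2U)^2)\big)\cdot\nabla\varphi\,d\mu$ is never shown to be nonnegative. You only describe a strategy ("integrate by parts repeatedly ... so that every residual term is manifestly nonnegative") and then, in your final paragraph, explicitly concede that this is the genuinely delicate point and that a naive further integration by parts produces an unsigned integrand. Since establishing the sign of precisely this kind of term is the entire content of the proposition for $S_2$, this is a real gap, not a routine omission; moreover, nothing guarantees a priori that your particular leftover term is nonnegative on its own (only the sum of your two terms is known to be, a posteriori).

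The paper avoids this impasse by \emph{not} moving the derivative off $\varphi$. Instead it expands $\nabla_k\val(\,\text{dangling path}\,)^i=-\nabla^4_{ikll}U-(\nabla^3_{ikl}U)(\nabla_lU)-(\nabla^2_{il}U)(\nabla^2_{kl}U)$ directly (\autoref{lm:main_pf:repr:val_rect_rect}), so that $\val(S_2)=2\int(D_1+D_2+D_3)\,d\mu$ with $D_3=\trace((\nabla^2U)^4)\ge 0$, and then applies a single integration by parts (\autoref{prop:main_pf:repr:IPP}) to the pair $D_1+D_2$, which collapses to
\begin{equation*}
\int_{\RR^d}(D_1+D_2)\,d\mu=-2\int_{\RR^d}(\nabla^2_{ij}U)(\nabla^3_{ikl}U)(\nabla^3_{jkl}U)\,d\mu
=-2\sum_{k,l}\int_{\RR^d}\Psi^{(k,l)\top}(\nabla^2U)\,\Psi^{(k,l)}\,d\mu\ge 0
\end{equation*}
with $\Psi^{(k,l)}_i=\nabla^3_{ikl}U$, the sign following pointwise from $\nabla^2U\preceq 0$. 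I recommend you either adopt this grouping or carry out in full the repeated integrations by parts you allude to; as written, the $S_2$ case is not proved.
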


\begin{proposition} \label{prop:main_pf:n=5}
    Consider the trees $T_i$ ($i \in \{1, ..., 6\}$) shown in \autoref{fig:main_pf:F5}
    and $\mu \in \PPP_2(\RR^d)$ log-concave.
    Then for each $i$, $\val_\mu(T_i) \geq 0$.
\end{proposition}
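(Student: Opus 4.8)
The two propositions have the same structure, so I describe the plan for a single tree; call it $T$, with $n+1$ nodes, $n\in\{4,5\}$. By \autoref{prop:gf:wass:expr} it suffices to prove $\val_\mu(T)\ge 0$ for an arbitrary log-concave $\mu\in\PPP_2(\RR^d)$. Fix such a $\mu$ and write $u=\log\mu$, $b=\nabla u$, $H=\nabla^2 u$; log-concavity means $H\preceq 0$ pointwise. The calculations will repeatedly use that each $\nabla^k u$ is a fully symmetric tensor, the identities $\nabla_j H_{ij}=\nabla_i\Delta u$ (i.e.\ $\nabla\!\cdot\!H=\nabla\trace H$) and $Hb=\tfrac12\nabla\norm{b}^2$, and that $\Lambda^{(1,\mu)}_i=b_i$ by \autoref{def:gf:wass:Lambdan_i_shorthand}.

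\emph{Step 1: rewrite $\val_\mu(T)$ as a single integral.} I would evaluate the tensor network defining $\val_\mu(T)$ by choosing a parenthesization (legitimate by the well-definedness part of \autoref{prop:gf:wass:expr}) and repeatedly ``half-contracting'': a node $\Lambda^{(k,\mu)}$ whose edges, except one, have been contracted with vector fields $g_{(1)},\dots,g_{(k-1)}$ becomes the vector field $\Lambda^{(k,\mu)}(g_{(1)},\dots,g_{(k-1)},\cdot)^{\sharp}\in L^2_\mu$, which one computes in closed form by integration by parts from the definition of $\Lambda_k$ (\autoref{def:ent:higherord:Lambda}) and \autoref{lm:ent:ptwise:DLambdan}. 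For instance $\Lambda^{(2,\mu)}(\cdot,b)^{\sharp}=-\nabla\psi$ with $\psi:=\Delta u+\tfrac12\norm{\nabla u}^2$ --- notably a gradient field --- while contracting $\Lambda^{(3,\mu)}$ with two copies of $b$ yields the vector field $2H\nabla\psi+\nabla\trace(H^2)$. Working from the leaves inward, $\val_\mu(T)$ becomes $\int_{\RR^d}(\cdots)\,d\mu$ for an explicit polynomial expression in $b$, $H$ and the $\nabla^k u$.

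\emph{Step 2: read off a sign.} There are three mechanisms. (i) A star $K_{1,k}$ gives $\val_\mu(K_{1,k})=(-1)^k(k-1)!\int\trace(H^k)\,d\mu$, which is $\ge0$ for $k$ even unconditionally and for $k$ odd because $H\preceq 0$ forces $\trace(H^k)\le 0$. (ii) A tree which, after half-contraction, splits at an edge into two copies of the same vector field $\Psi$ contributes $\int\norm{\Psi}^2\,d\mu\ge0$ (this is how the ``path'' $P_4$ in $F^3$, and analogous sub-structures, are handled). (iii) The remaining ``mixed'' trees --- those with a node of degree $\ge3$ adjacent to degree-$2$ nodes, such as $S_2$ in \autoref{fig:main_pf:F4} and several $T_i$ in \autoref{fig:main_pf:F5} --- are not manifestly non-negative yet, and require a few further integrations by parts to reorganize the integrand into summands of the admissible sign-definite types: squared $L^2(\mu)$-norms; terms $-\int(\nabla\varphi)^{\top}H(\nabla\varphi)\,d\mu\ge0$ (sign from $H\preceq0$); and $\int\trace(H^{2j}C)\,d\mu$, $-\int\trace(H^{2j+1}C)\,d\mu$ with $C\succeq0$. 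For example $\val_\mu(S_2)=2\int\trace(H^2\nabla^2\psi)\,d\mu$; an integration by parts, using $\nabla\!\cdot\!H=\nabla\trace H$ and $Hb=\tfrac12\nabla\norm{b}^2$, turns this into $-2\int(\nabla\psi)^{\top}H(\nabla\psi)\,d\mu$ plus a remainder that one massages further with the symmetry of $\nabla^3 u$ and the identity $\nabla^2\psi=\nabla^2\Delta u+H^2+(b\!\cdot\!\nabla^3u)$.

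\emph{Main obstacle.} The difficulty is entirely in (iii): for the mixed trees a single integration by parts is not enough, and one must iterate the rearrangement carefully, at each step using $H\preceq0$ to fix the sign of every odd power of $H$ that surfaces, until every surviving term is of one of the three admissible forms. For $n=5$ there are six trees and the manipulations are correspondingly long --- comparable to the sum-of-squares computations of \cite{cheng_higher_2015}; part of what has to be checked (see \autoref{rk:main_pf:allnonneg}) is that this can be done tree by tree, with no compensation required \emph{across} trees.
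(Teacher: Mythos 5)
Your overall strategy matches the paper's: evaluate each tree by half-contraction into an explicit integral against $\mu$, then integrate by parts until the integrand has a sign, using log-concavity to control odd powers of $\nabla^2 U$. The easy cases are handled the same way: $T_1$ is the star, and $T_2$, $T_3$ split at a central edge into $\int_{\RR^d}\norm{\Psi}^2\,d\mu$. However, for the hard trees your proposal stops at the level of a plan, and the plan as stated would not go through, because your taxonomy of ``admissible sign-definite forms'' (squared $L^2(\mu)$-norms, $-\int(\nabla\varphi)^\top H\nabla\varphi\,d\mu$, and $\pm\int\trace(H^{j}C)\,d\mu$ with $C\succeq0$) misses two positivity mechanisms that are actually needed. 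First, for $T_4$ the integrations by parts leave a term of the form $\int (\nabla^2_{ij}U)(\nabla^2_{i'j'}U)(\nabla^3_{ii'k}U)(\nabla^3_{jj'k}U)\,d\mu$ (a $4$-cycle with a chord in the diagram notation); its nonnegativity is not of any of your three types and requires diagonalizing $\nabla^2 U$ and observing that the products $\sigma_l\sigma_{l'}$ of pairs of eigenvalues are all $\geq 0$ because $\nabla^2U\preceq0$, so the expression becomes a nonnegatively weighted sum of squares pointwise.

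Second, and more seriously, for $T_6$ there is no decomposition of the integrand into individually sign-definite summands of your admissible types: the paper must split $\val(T_6)=2\int(A+B+C)\,d\mu$, exhibit a \emph{second} integration-by-parts identity $\int A\,d\mu=\int A'\,d\mu$, and then argue by cases on the sign of the integral $\int (\nabla^2_{ij}U)(\nabla^3_{ika}U)(\nabla^3_{jka}U)\,d\mu$ --- if it is $\leq0$ one concludes from the $A$-form, and if it is $>0$ one concludes from the $A'$-form together with $C$. So within $T_6$ a compensation between terms (between $A'$ and $C$) is in fact used, and the termination of your ``iterate the rearrangement until every term has a sign'' loop is exactly the point that needs proof rather than assertion. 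Your proposal correctly identifies where the difficulty lies, but it does not supply the two ideas (the eigenvalue-pair argument and the case distinction with the alternative representation $A'$) without which the proof of $\val_\mu(T_4)\geq0$ and $\val_\mu(T_6)\geq0$ does not close.
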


The remainder of this section, and of this paper, is dedicated to the proof of these two propositions.

\begin{remark} \label{rk:main_pf:allnonneg}
    The above two results, as well as the last bullet point of \autoref{subsec:gf:wass}, show that for any log-concave $\mu$ and for any tree $T$ with at most $6$ nodes, $\val_\mu(T) \geq 0$.
    It is natural to conjecture that the same holds for all trees.
    This would of course imply that \eqref{eq:GCM_d_m} holds for all $d, m$ provided that $\mu_0$ is log-concave.
\end{remark}

\begin{figure}[t]
	\centering
	\adjustbox{scale=1}{%
		\begin{tikzpicture}
			\node[rectangle,fill=gray,draw] (A1) at (0, 0) {};
			\node[rectangle,fill=gray,draw] (B1) at (1, 0) {};
			\node[rectangle,fill=gray,draw] (C1) at (-1, 0) {};
			\node[rectangle,fill=gray,draw] (D1) at (0, 1) {};
			\node[rectangle,fill=gray,draw] (E1) at (0, -1) {};
			\draw (A1) -- (B1);
			\draw (A1) -- (C1);
			\draw (A1) -- (D1);
			\draw (A1) -- (E1);
            \node at (-1.75, 0) {$2 \times$};
            \draw [decorate,decoration={brace,amplitude=5pt,mirror,raise=4ex}] (-1.1,-0.6) -- (1.1,-0.6) node[midway,yshift=-3em]{$S_1$};
			\node[rectangle,fill=gray,draw] (A2) at (3.2, 0.7) {};
			\node[rectangle,fill=gray,draw] (B2) at (3.2, -0.7) {};
			\node[rectangle,fill=gray,draw] (C2) at (4, 0) {};
			\node[rectangle,fill=gray,draw] (D2) at (5, 0) {};
			\node[rectangle,fill=gray,draw] (E2) at (6, 0) {};
			\draw (A2) -- (C2);
			\draw (B2) -- (C2);
			\draw (C2) -- (D2);
			\draw (D2) -- (E2);
			\node at (2.1, 0) {$+~ 14 \times$};
            \draw [decorate,decoration={brace,amplitude=5pt,mirror,raise=4ex}] (3.1,-0.5) -- (6.1,-0.5) node[midway,yshift=-3em]{$S_2$};
			\node[rectangle,fill=gray,draw] (A3) at (8, 0) {};
			\node[rectangle,fill=gray,draw] (B3) at (9, 0) {};
			\node[rectangle,fill=gray,draw] (C3) at (10, 0) {};
			\node[rectangle,fill=gray,draw] (D3) at (11, 0) {};
			\node[rectangle,fill=gray,draw] (E3) at (12, 0) {};
			\draw (A3) -- (B3);
			\draw (B3) -- (C3);
			\draw (C3) -- (D3);
			\draw (D3) -- (E3);
			\node at (7, 0) {$+~ 8 \times$};
            \draw [decorate,decoration={brace,amplitude=5pt,mirror,raise=4ex}] (7.9,0) -- (12.1,0) node[midway,yshift=-3em]{$S_3$};
		\end{tikzpicture}
	}
	\caption{The forest $F^4$}
	\label{fig:main_pf:F4}
\end{figure}
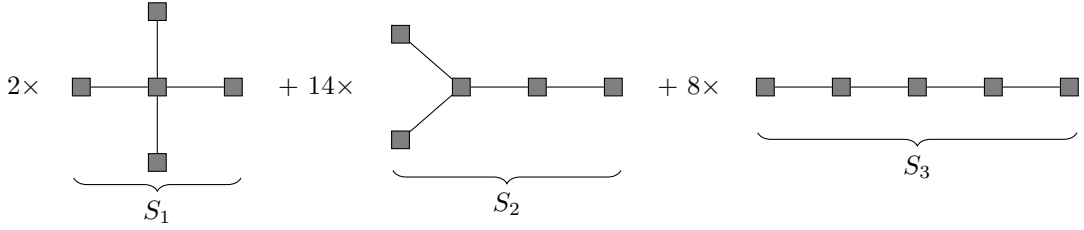

\begin{figure}[t]
	\centering
	\adjustbox{scale=0.9}{%
		\begin{tikzpicture}
			\node[rectangle,fill=gray,draw] (A1) at (0, 0) {};
			\node[rectangle,fill=gray,draw] (B1) at (90:1) {};
			\node[rectangle,fill=gray,draw] (C1) at (162:1) {};
			\node[rectangle,fill=gray,draw] (D1) at (234:1) {};
			\node[rectangle,fill=gray,draw] (E1) at (306:1) {};
			\node[rectangle,fill=gray,draw] (F1) at (18:1) {};
			\draw (A1) -- (B1);
			\draw (A1) -- (C1);
			\draw (A1) -- (D1);
			\draw (A1) -- (E1);
			\draw (A1) -- (F1);
			\node at (-1.75, 0) {$2 \times$};
            \draw [decorate,decoration={brace,amplitude=5pt,mirror,raise=4ex}] (-1.1,-0.5) -- (1.1,-0.5) node[midway,yshift=-3em]{$T_1$};
            \node[rectangle,fill=gray,draw] (A2) at (3, 0.5) {};
            \node[rectangle,fill=gray,draw] (B2) at (4, 0.5) {};
            \node[rectangle,fill=gray,draw] (C2) at (5, 0.5) {};
            \node[rectangle,fill=gray,draw] (D2) at (6, 0.5) {};
            \node[rectangle,fill=gray,draw] (E2) at (4, -0.5) {};
            \node[rectangle,fill=gray,draw] (F2) at (5, -0.5) {};
            \draw (A2) -- (B2);
            \draw (B2) -- (C2);
            \draw (C2) -- (D2);
            \draw (B2) -- (E2);
            \draw (C2) -- (F2);
			\node at (2, 0) {$+~ 14 \times$};
            \draw [decorate,decoration={brace,amplitude=5pt,mirror,raise=4ex}] (3.1,-0.3) -- (6.1,-0.3) node[midway,yshift=-3em]{$T_2$};
			\node[rectangle,fill=gray,draw] (A3) at (8, 0) {};
			\node[rectangle,fill=gray,draw] (B3) at (9, 0) {};
			\node[rectangle,fill=gray,draw] (C3) at (10, 0) {};
			\node[rectangle,fill=gray,draw] (D3) at (11, 0) {};
			\node[rectangle,fill=gray,draw] (E3) at (12, 0) {};
			\node[rectangle,fill=gray,draw] (F3) at (13, 0) {};
			\draw (A3) -- (B3);
			\draw (B3) -- (C3);
			\draw (C3) -- (D3);
			\draw (D3) -- (E3);
			\draw (E3) -- (F3);
			\node at (7, 0) {$+~ 16 \times$};
            \draw [decorate,decoration={brace,amplitude=5pt,mirror,raise=4ex}] (7.9,0) -- (13.1,0) node[midway,yshift=-3em]{$T_3$};
            \node[rectangle,fill=gray,draw] (A4) at (-1, -3) {};
            \node[rectangle,fill=gray,draw] (B4) at (0, -3) {};
            \node[rectangle,fill=gray,draw] (C4) at (1, -3) {};
            \node[rectangle,fill=gray,draw] (D4) at (2, -3) {};
            \node[rectangle,fill=gray,draw] (E4) at (0, -3.8) {};
            \node[rectangle,fill=gray,draw] (F4) at (0, -2.2) {};
            \draw (A4) -- (B4);
            \draw (B4) -- (C4);
            \draw (C4) -- (D4);
            \draw (B4) -- (E4);
            \draw (B4) -- (F4);
			\node at (-2, -3) {$+~ 22 \times$};
            \draw [decorate,decoration={brace,amplitude=5pt,mirror,raise=4ex}] (-1.1,-3.5) -- (2.1,-3.5) node[midway,yshift=-3em]{$T_4$};
            \node[rectangle,fill=gray,draw] (A5) at (4, -2.5) {};
            \node[rectangle,fill=gray,draw] (B5) at (5, -2.5) {};
            \node[rectangle,fill=gray,draw] (C5) at (6, -2.5) {};
            \node[rectangle,fill=gray,draw] (D5) at (7, -2.5) {};
            \node[rectangle,fill=gray,draw] (E5) at (8, -2.5) {};
            \node[rectangle,fill=gray,draw] (F5) at (6, -3.5) {};
            \draw (A5) -- (B5);
            \draw (B5) -- (C5);
            \draw (C5) -- (D5);
            \draw (D5) -- (E5);
            \draw (C5) -- (F5);
			\node at (3, -3) {$+~ 36 \times$};
            \draw [decorate,decoration={brace,amplitude=5pt,mirror,raise=4ex}] (3.9,-3.3) -- (8.1,-3.3) node[midway,yshift=-3em]{$T_5$};
            \node[rectangle,fill=gray,draw] (A6) at (10, -3-0.7) {};
            \node[rectangle,fill=gray,draw] (B6) at (10, -3+0.7) {};
            \node[rectangle,fill=gray,draw] (C6) at (10.8, -3) {};
            \node[rectangle,fill=gray,draw] (D6) at (11.8, -3) {};
            \node[rectangle,fill=gray,draw] (E6) at (12.8, -3) {};
            \node[rectangle,fill=gray,draw] (F6) at (13.8, -3) {};
            \draw (A6) -- (C6);
            \draw (B6) -- (C6);
            \draw (C6) -- (D6);
            \draw (D6) -- (E6);
            \draw (E6) -- (F6);
			\node at (9, -3) {$+~ 30 \times$};
            \draw [decorate,decoration={brace,amplitude=5pt,mirror,raise=4ex}] (9.7,-3.5) -- (13.9,-3.5) node[midway,yshift=-3em]{$T_6$};
		\end{tikzpicture}
	}
	\caption{The forest $F^5$}
	\label{fig:main_pf:F5}
\end{figure}
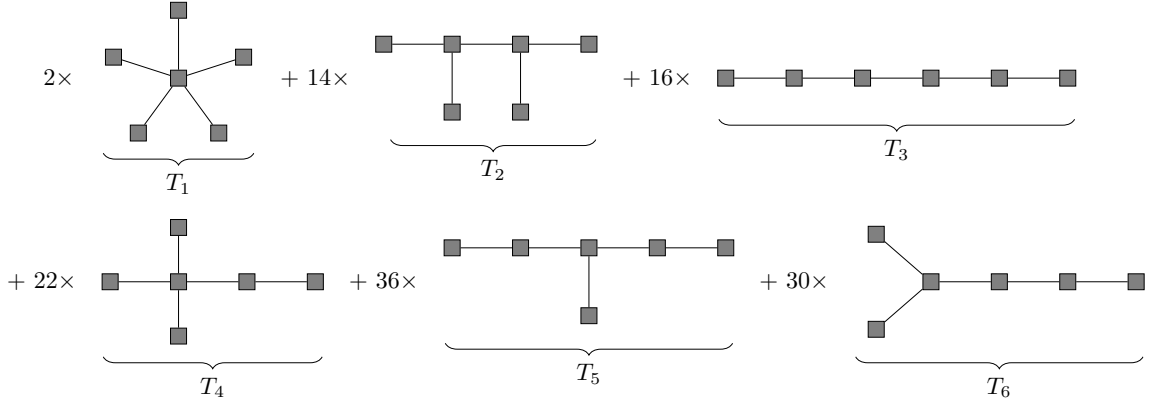

\subsection{Two diagrammatic representations for vector and tensor fields}

In order to present our computations concisely, it will be beneficial to introduce two shorthands.
The first one is a natural extension of \autoref{prop:gf:wass:expr} to vector fields.
\begin{definition} \label{def:main_pf:repr:expr_vect}
    Let us call ``dangling tree'' a pair $(T, x)$ where $T$ is a tree and $x$ is a distinguished node of $T$.
    For two dangling trees $\ulT, \ulS$, denote by $\ulT\text{---}\ulS$ the tree obtained by connecting them along their distinguished nodes.
    
    For any fixed $\mu$, to any dangling tree $\ulT = (T,x)$ with $n+1$ nodes, associate a vector field indexed by $j$ as follows.
    Label the edges of $T$ from $1$ to $n$ and interpret each node with edges $i_1, ..., i_k$ as the symbol $\Lambda^{(k,\mu)}_{i_1,...,i_k}$, except for the node $x$ which is interpreted as the symbol $\Lambda^{(k+1,\mu)}_{i_1,...,i_k,j}$.
    The vector field associated to $\ulT$ is the product of all of these $n+1$ symbols.
    Then, by the same arguments as for \autoref{prop:gf:wass:expr}, this procedure is well-defined, i.e., any compatible parenthesization of the product of symbols defines the same vector field via \autoref{def:gf:wass:Lambdan_i_shorthand}.
    
    We denote henceforth this vector field by $\val_\mu(\ulT)^j$ (or simply $\val(\ulT)^j$ when $\mu$ is clear from context).
    Note that if $\val(\ulT)^j = \xi^j$ and $\val(\ulS)^j = \Psi^j$, then $\val(\ulT\text{---}\ulS) = \int \xi^\top \Psi \,\,d\mu$.
\end{definition}

It will be desirable to manipulate quantities expressed directly in terms of the log-density, so we also introduce the following shorthand. 
\begin{definition} \label{def:main_pf:repr:multigraphs}
    For any fixed $\mu$, a connected multi-graph whose nodes are drawn as white circles should be interpreted as a tensor network where every node of degree $k$ represents the symmetric tensor $\nabla^k \log \mu$.
    The same convention applies for connected multi-graphs with dangling edges, where each dangling edge then represents a free index of the resulting tensor.
    In expressions involving such tensor networks, for readability, we will write subtraction as $\ominus$ instead of $-$.
\end{definition}

For example,
\begin{align*}
    \val\left(
        ~\begin{tikzpicture}[baseline=-0.5ex]
            \node[rectangle,fill=gray,draw] (A) at (0,0) {};
            \node (B) at (0.7,0) {};
            \draw (A) -- (B);
        \end{tikzpicture}
    \right)^j
    = 
    ~\begin{tikzpicture}[baseline=-0.5ex]
        \node[circle,draw] (A) at (0,0) {};
        \node (B) at (0.7,0) {$j$};
        \draw (A) -- (B);
    \end{tikzpicture}
    ~~
    &= \nabla_j \log \mu \\
    \text{and}~~~~
    \val\left(
        ~\begin{tikzpicture}[baseline=-0.5ex]
            \node[rectangle,fill=gray,draw] (A) at (0,0) {};
            \node[rectangle,fill=gray,draw] (B) at (1,0) {};
            \draw (A) -- (B);
        \end{tikzpicture}~
    \right)
    = 
    \int
    ~\begin{tikzpicture}[baseline=-0.5ex]
        \node[circle,draw] (A) at (0,0) {};
        \node[circle,draw] (B) at (1,0) {};
        \draw (A) -- (B);
    \end{tikzpicture}
    ~d\mu
    &= \int \norm{\nabla \log \mu}^2 \,d\mu.
\end{align*}

The following proposition allows to manipulate and rearrange the tensor diagram representations.
\begin{proposition}[Integration by parts] \label{prop:main_pf:repr:IPP}
    Consider a fixed $\mu$ and $G$ a connected multi-graph.
    Consider any node~$g$ and any edge~$i$ adjacent to it.

    Suppose~$i$ is not a self-loop and denote by~$h$ the other node adjacent to it.
    \begin{itemize}[nosep]
        \item Let~$G_g$ be the graph obtained from~$G$ by removing~$i$ and adding a self-loop to~$g$ (i.e., detaching~$i$ from~$h$ and reattaching the loose end to $g$).
        \item Let~$G_0$ be the graph obtained from~$G$ by adding a node, removing~$i$ and adding an edge between~$g$ and the new node (i.e., detaching~$i$ from~$h$ and reattaching the loose end to a new node).
        \item For every other node $f \not\in \{g, h\}$ of~$G$, let~$G_f$ be the graph obtained from~$G$ by removing~$i$ and adding an edge between~$g$ and~$f$ (i.e., detaching~$i$ from~$h$ and reattaching the loose end to $f$).
    \end{itemize}
    Then $\int G \,d\mu = -\int \left( G_g + G_0 + \sum_{f \not\in \{g,h\}} G_f \right) d\mu$.
    
    Now suppose $i$ is a self-loop.
    \begin{itemize}[nosep]
        \item Let~$G_0$ be the graph obtained from~$G$ by adding a node, removing~$i$ and adding an edge between~$g$ and the new node (i.e., detaching one end of~$i$ from~$g$ and reattaching it to a new node).
        \item For every other node $f \neq g$ of~$G$, let~$G_f$ be the graph obtained from~$G$ by removing~$i$ and adding an edge between~$g$ and~$f$ (i.e., detaching one end of~$i$ from~$g$ and reattaching it to $f$).
    \end{itemize}
    Then $\int G \,d\mu = -\int \left( G_0 + \sum_{f \neq g} G_f \right) d\mu$.
\end{proposition}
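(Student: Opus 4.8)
The plan is to reduce both identities to a single integration by parts in one coordinate, after translating the graphical operations into index notation. Write $\ell = \log\mu$, so that $G$ is a finite sum of monomials, each a fully contracted product over the nodes $c$ of $G$ of the symmetric tensors $\nabla^{\deg c}\ell$, with one summation index per edge; symmetry of $\nabla^k \ell$ makes these contractions unambiguous. The only analytic input will be the standard formula
\[
    \int_{\RR^d} (\nabla_i u)\, v_i\, d\mu
    = -\int_{\RR^d} u\, (\nabla_i v_i)\, d\mu - \int_{\RR^d} u\, v_i\, (\nabla_i \log\mu)\, d\mu,
\]
valid (with the repeated index $i$ summed) for $u,v$ decaying enough that the boundary term vanishes; in the application to the heat flow in \autoref{sec:main_pf} this is exactly the integration by parts whose rigorous justification is the subject of the regularity remarks in \autoref{subsec:intro:approach}.

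For the \emph{non-self-loop case}, fix the chosen node $g$, the chosen edge $i$, and the other endpoint $h \ne g$. Let $\Phi$ (resp.\ $\Psi$) be the tensor attached to $g$ (resp.\ $h$) in $G$ with its $i$-leg removed, and let $R$ be the product of the tensors of all remaining nodes, so that $G = (\nabla_i\Phi)\,(\nabla_i\Psi)\,R$ with every leg other than $i$ contracted as in $G$. I would integrate by parts in $x_i$ so as to move the derivative sitting in the factor $\nabla_i\Psi$ (the tensor of node $h$) onto the product $(\nabla_i\Phi)\,R\,\mu$ of everything else, using the formula above with $u = \Psi$ and $v_i = (\nabla_i\Phi)\,R$. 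In the resulting right-hand side, the term where the stray $\nabla_i$ hits $\mu$ is $-\int_{\RR^d}\Psi\,(\nabla_i\Phi)\,R\,(\nabla_i\ell)\,d\mu$: graphically, $h$ has lost its $i$-leg while $g$ still carries it, now joined to a fresh degree-$1$ node, i.e.\ this is $-\int_{\RR^d} G_0\,d\mu$. Distributing the stray $\nabla_i$ over $(\nabla_i\Phi)\,R$ by the Leibniz rule: when it falls back on $\nabla_i\Phi$ it produces a $\nabla_i\nabla_i$ on the tensor of $g$, i.e.\ a self-loop at $g$ with $h$ still missing its leg, giving $-\int_{\RR^d} G_g\,d\mu$; when it falls on the tensor of a node $f\notin\{g,h\}$, that node gains an $i$-leg contracted against the surviving $i$-leg of $g$, giving $-\int_{\RR^d} G_f\,d\mu$. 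Summing over these cases yields the claimed identity.

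For the \emph{self-loop case}, write the tensor of $g$ as $\nabla_i\nabla_i\Phi$ (with $i$ summed), where $\Phi$ is that tensor with both ends of $i$ removed, and let $R$ be the product of the tensors of all other nodes, so $G = (\nabla_i\nabla_i\Phi)\,R$. I would integrate by parts in $x_i$ to peel one $\nabla_i$ off the factor $\nabla_i\nabla_i\Phi$ onto $R\,\mu$ (the formula above with $u=R$ and $v_i = \nabla_i\Phi$): the term hitting $\mu$ detaches one end of the loop and joins it to a fresh degree-$1$ node, giving $-\int_{\RR^d} G_0\,d\mu$, while the Leibniz terms over the factors of $R$ reattach that end to some node $f\ne g$, giving $-\int_{\RR^d} G_f\,d\mu$. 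There is no ``$G_g$'' term here, because after one $\nabla_i$ is peeled off, the remaining factor $\nabla_i\Phi$ is not one of the factors of $R$, so the stray derivative cannot return to it. I do not expect any genuine difficulty in either case: the entire argument is bookkeeping—tracking which node absorbs the detached leg-end in each term produced by the product rule—and the only non-algebraic point, vanishing of the boundary term, is precisely the matter deferred in \autoref{subsec:intro:approach}.
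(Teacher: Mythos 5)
Your proof is correct and follows essentially the same route as the paper: a single integration by parts moving the derivative associated with edge~$i$ off of node~$h$ (resp.\ off one end of the self-loop), followed by the Leibniz rule distributing the stray $\nabla_i$ over the remaining node factors and over $\mu$ (producing the $\nabla_i\log\mu$ factor, i.e.\ the new degree-one node in $G_0$). The bookkeeping of which graph each term corresponds to, including the absence of a $G_g$ term in the self-loop case, matches the paper's argument.
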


\begin{proof}
    For any tensor fields $f, g, h$, by integration by parts,
    \begin{multline*}
        \int (\nabla_i g^{\bm{j}\bm{k}}) (\nabla_i h^{\bm{j}\bm{l}}) f^{\bm{k}\bm{l}} \,d\mu
        = -\int h^{\bm{j}\bm{l}} \nabla_i \left[ (\nabla_i g^{\bm{j}\bm{k}}) f^{\bm{k}\bm{l}} \mu \right] \\
        = -\int h^{\bm{j}\bm{l}} \left[
            (\nabla^2_{ii} g^{\bm{j}\bm{k}}) f^{\bm{k}\bm{l}} d\mu 
            + (\nabla_i g^{\bm{j}\bm{k}}) (\nabla_i f^{\bm{k}\bm{l}}) d\mu
            + (\nabla_i g^{\bm{j}\bm{k}}) f^{\bm{k}\bm{l}} (\nabla_i \log \mu) d\mu
        \right].
    \end{multline*}
    The first part of the proposition statement is a translation of this computation when 
    $g, h$ are of the form $\nabla^k \log \mu$ and $f$ is a product thereof,
    using the tensor diagram shorthand introduced in \autoref{def:main_pf:repr:multigraphs}.

    Likewise, the second part of the proposition follows from the following computation: for any tensor fields $f, g$, by integration by parts,
    \begin{equation*}
        \int (\nabla^2_{ii} g^{\bm{j}}) f^{\bm{j}} \,d\mu
        = -\int (\nabla_i g^{\bm{j}}) \nabla_i \left[ f^{\bm{j}} \mu \right]
        = -\int (\nabla_i g^{\bm{j}}) \left[ 
            (\nabla_i f^{\bm{j}}) d\mu
            + f^{\bm{j}} (\nabla_i \log \mu) d\mu
        \right].
        \rqedhere
    \end{equation*}
\end{proof}

We will use several times the following fact.
\begin{lemma} \label{lm:main_pf:repr:val_rect_rect}
    We have, for $\mu \in \PPP_2(\RR^d)$ and denoting $U = \log \mu$,
    \begin{align*}
        \val\left(
            ~\begin{tikzpicture}[baseline=-0.5ex]
                \node[rectangle,fill=gray,draw] (A) at (0,0) {};
                \node[rectangle,fill=gray,draw] (B) at (1,0) {};
                \node (C) at (1.7,0) {};
                \draw (A) -- (B);
                \draw (B) -- (C);
            \end{tikzpicture}
        \right)^i
        &= -\nabla^3_{ikk} U - (\nabla^2_{ik} U) (\nabla_k U)
        = -\nabla_i \left[ \Delta U + \frac12 \norm{\nabla U}^2 \right] \\
        &= 
        \ominus \begin{tikzpicture}[baseline=-0.5ex]
            \node[circle,draw] (A) at (0,0) {};
            \path[every loop/.style={looseness=10, in=210, out=150}] (A) edge [loop] node {} (A);
            \node (B) at (0.7,0) {$i$};
            \draw (A) -- (B);
        \end{tikzpicture}
        \ominus \begin{tikzpicture}[baseline=-0.5ex]
            \node[circle,draw] (A) at (0,0) {};
            \node[circle,draw] (B) at (1,0) {};
            \node (C) at (1.7,0) {$i$};
            \draw (A) -- (B);
            \draw (B) -- (C);
        \end{tikzpicture} \\
        \text{and}~~~~
        \nabla_j~ 
        \val\left(
            ~\begin{tikzpicture}[baseline=-0.5ex]
                \node[rectangle,fill=gray,draw] (A) at (0,0) {};
                \node[rectangle,fill=gray,draw] (B) at (1,0) {};
                \node (C) at (1.7,0) {};
                \draw (A) -- (B);
                \draw (B) -- (C);
            \end{tikzpicture}
        \right)^i
        &=  
        \ominus \begin{tikzpicture}[baseline=-0.5ex]
            \node (Z) at (-0.7,0) {$i$};
            \node[circle,draw] (A) at (0,0) {};
            \path[every loop/.style={looseness=10, in=120, out=60}] (A) edge [loop] node {} (A);
            \node (B) at (0.7,0) {$j$};
            \draw (Z) -- (A);
            \draw (A) -- (B);
        \end{tikzpicture}
        \ominus \begin{tikzpicture}[baseline=-0.5ex]
            \node (Z) at (-0.7,0) {$i$};
            \node[circle,draw] (A) at (0,0) {};
            \node[circle,draw] (C) at (0,0.8) {};
            \node (B) at (0.7,0) {$j$};
            \draw (Z) -- (A);
            \draw (A) -- (B);
            \draw (A) -- (C);
        \end{tikzpicture}
        \ominus \begin{tikzpicture}[baseline=-0.5ex]
            \node (Z) at (-0.7,0) {$i$};
            \node[circle,draw] (A) at (0,0) {};
            \node[circle,draw] (B) at (1,0) {};
            \node (C) at (1.7,0) {$j$};
            \draw (Z) -- (A);
            \draw (A) -- (B);
            \draw (B) -- (C);
        \end{tikzpicture}.
    \end{align*}
\end{lemma}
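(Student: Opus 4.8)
Both identities follow by unwinding the diagrammatic shorthand of \autoref{def:main_pf:repr:expr_vect} and performing a single integration by parts. First I would read off from \autoref{def:main_pf:repr:expr_vect} that the dangling tree appearing in the lemma --- a path on two square nodes, with the dangling half-edge attached to the second node --- represents the vector field $\Lambda^{(1,\mu)}_{k}\,\Lambda^{(2,\mu)}_{ki}$, indexed by $i$; equivalently, by \autoref{def:gf:wass:Lambdan_i_shorthand}, it is the vector field $\Psi$ with $\Psi_i = \Lambda^{(2,\mu)}_{ki}\Phi^k$ where $\Phi^k \coloneqq \Lambda^{(1,\mu)}_k$. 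Recalling that for $\FFF = H$ we have $\nu = dx$, i.e.\ $V = 0$, the definitions give $\Lambda_1 = -\AAA = -\nabla\cdot$ and $\Phi^k = \Lambda^{(1,\mu)}_k = \nabla_k\log\mu = \nabla_k U$.

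Next I would evaluate $\Lambda^{(2,\mu)}_{ki}\Phi^k$ explicitly. By \autoref{def:gf:wass:Lambdan_i_shorthand} this is the vector field $\Psi$ characterized by $\int_{\RR^d}\xi^\top\Psi\,d\mu = \int_{\RR^d}\Lambda_2(\Phi,\xi)\,d\mu$ for all test fields $\xi$. Since $V = 0$, $\Lambda_2(\Phi,\xi) = \trace(\nabla\Phi\cdot\nabla\xi) = (\nabla_i\Phi^j)(\nabla_j\xi^i)$, and one integration by parts --- exactly the computation in the proof of \autoref{prop:gf:wass:innerprod} specialized to $n=2$, where $M_i^{~j} = \nabla_i\Phi^j$ --- gives $\Psi_i = -\nabla_j M_i^{~j} - M_i^{~j}\,\nabla_j\log\mu = -\nabla^2_{ij}\Phi^j - (\nabla_i\Phi^j)(\nabla_j U)$, using that partial derivatives commute on $\RR^d$. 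Substituting $\Phi^j = \nabla_j U$ yields
\[
\Psi_i = -\nabla^3_{ikk}U - (\nabla^2_{ik}U)(\nabla_k U) = -\nabla_i\Bigl[\Delta U + \tfrac12\norm{\nabla U}^2\Bigr].
\]
Translating the two monomials into the notation of \autoref{def:main_pf:repr:multigraphs} --- $\nabla^3 U$ with its last two indices contracted is a white node carrying a self-loop and one free leg, while $(\nabla^2 U)(\nabla U)$ contracted on the shared index is an edge joining a white node with a free leg to a pendant leaf --- gives the two diagrams in the first displayed equation.

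For the second identity I would simply differentiate the formula just obtained: $\nabla_j\,\Psi_i = -\nabla^2_{ij}\bigl[\Delta U + \tfrac12\norm{\nabla U}^2\bigr]$. Expanding by the Leibniz rule, $\nabla^2_{ij}\Delta U = \nabla^4_{ijkk}U$ and $\tfrac12\nabla^2_{ij}\norm{\nabla U}^2 = (\nabla^3_{ijk}U)(\nabla_k U) + (\nabla^2_{ik}U)(\nabla^2_{jk}U)$, so
\[
\nabla_j\,\Psi_i = -\nabla^4_{ijkk}U - (\nabla^3_{ijk}U)(\nabla_k U) - (\nabla^2_{ik}U)(\nabla^2_{jk}U),
\]
whose three monomials are exactly the three diagrams in the second displayed equation: a white node with a self-loop, a white node with a pendant leaf, and two white nodes joined by an edge, each carrying the two free legs $i$ and $j$.

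The whole argument is essentially index bookkeeping; the step I expect to require the most care is the integration by parts of the second paragraph, where one must keep track of which free index lands on which node in order to recover precisely $-\nabla^2_{ik}\Phi^k - (\nabla_i\Phi^k)(\nabla_k U)$ rather than a differently-contracted expression, together with the subsequent matching of each tensor monomial to the specific diagram drawn in the statement (self-loops $\leftrightarrow$ repeated contracted indices, pendant leaves $\leftrightarrow$ $\nabla U$ factors).
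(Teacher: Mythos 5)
Your proposal is correct and follows essentially the same route as the paper's proof: identify the dangling tree with $\Lambda^{(1,\mu)}_k\Lambda^{(2,\mu)}_{ki}$, use the defining duality of \autoref{def:gf:wass:Lambdan_i_shorthand} together with a single integration by parts to extract $-\nabla^3_{ikk}U-(\nabla^2_{ik}U)(\nabla_k U)$, and then differentiate in $x_j$ for the second identity. The only cosmetic difference is that you route the integration by parts through the $n=2$ case of \autoref{prop:gf:wass:innerprod}, whereas the paper performs it inline; the computation is the same.
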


\begin{proof}
    By definition (\autoref{def:main_pf:repr:expr_vect}),
    $
        \val\left(
            ~\begin{tikzpicture}[baseline=-0.5ex]
                \node[rectangle,fill=gray,draw] (A) at (0,0) {};
                \node[rectangle,fill=gray,draw] (B) at (1,0) {};
                \node (C) at (1.7,0) {};
                \draw (A) -- (B);
                \draw (B) -- (C);
            \end{tikzpicture}
        \right)^i
        = \Lambda^{(1,\mu)}_k \Lambda^{(2,\mu)}_{ki}
    $.
    Now for any vector field $\xi \in L^2_\mu \cap \mathfrak{X}(\RR^d)$, by definition (\autoref{def:gf:wass:Lambdan_i_shorthand}) and by integration by parts,
    \begin{align*}
        \int \xi^i \left( \Lambda^{(1,\mu)}_k \Lambda^{(2,\mu)}_{ki} \right) d\mu 
        = \Lambda^{(2,\mu)}_{ik}~ \xi^i (\nabla \log \mu)^k 
        &= \int \trace\left( (\nabla_k \xi^i) (\nabla_i (\nabla \log \mu)^k) \right) d\mu \\
        &= \int (\nabla_k \xi^i) (\nabla^2_{ik} \log \mu) \,d\mu
        = -\int \xi^i \,\nabla_k \left[ (\nabla^2_{ik} \log \mu) \mu \right].
    \end{align*}
    So
    \begin{equation*}
        \Lambda^{(1,\mu)}_k \Lambda^{(2,\mu)}_{ki}
        = -\frac{1}{\mu} \nabla_k \left[ (\nabla^2_{ik} \log \mu) \mu \right]
        = -\nabla^3_{ikk} \log \mu
        - (\nabla^2_{ik} \log \mu) (\frac{1}{\mu} \nabla_k \mu)
        = -\nabla^3_{ikk} U
        - (\nabla^2_{ik} U) (\nabla_k U)
    \end{equation*}
    where $U = \log \mu$.
    Moreover,
    \begin{align*}
        \nabla_j~ 
        \val\left(
            ~\begin{tikzpicture}[baseline=-0.5ex]
                \node[rectangle,fill=gray,draw] (A) at (0,0) {};
                \node[rectangle,fill=gray,draw] (B) at (1,0) {};
                \node (C) at (1.7,0) {};
                \draw (A) -- (B);
                \draw (B) -- (C);
            \end{tikzpicture}
        \right)^i
        &= \nabla_j \left[ -\nabla^3_{ikk} U - (\nabla^2_{ik} U) (\nabla_k U) \right] \\
        &= -\nabla^4_{ijkk} U - (\nabla^3_{ijk} U) (\nabla_k U) - (\nabla^2_{ik} U) (\nabla^2_{jk} U)
    \end{align*}
    as announced.
\end{proof}

\subsection{Proof of \autoref{prop:main_pf:n=4}}
For the remainder of this section, fix $\mu \in \PPP_2(\RR^d)$ log-concave and denote $U = \log \mu$.
Recall that $S_1, S_2$ and $S_3$ are as shown in \autoref{fig:main_pf:F4}.

\paragraph{Proof that \texorpdfstring{$\val(S_1) \geq 0$}{val(S1)>=0}.}
By definition,
\begin{equation*}
    \val(S_1)
    = \int \Lambda_4(\nabla U) \,d\mu
    = \int (-1)^4 (3!) \trace((\nabla^2 U)^4) \,\,d\mu
    = 6 \int \trace((\nabla^2 U)^4) \,d\mu
    \geq 0.
\end{equation*}

\paragraph{Proof that \texorpdfstring{$\val(S_2) \geq 0$}{val(S2)>=0}.}
By definition and by \autoref{lm:main_pf:repr:val_rect_rect},
\begin{align*}
    \val(S_2) 
    &= \int \Lambda_3\left(
        \nabla U,~
        \nabla U,~
        \val\left(
            ~\begin{tikzpicture}[baseline=-0.5ex]
                \node[rectangle,fill=gray,draw] (A) at (0,0) {};
                \node[rectangle,fill=gray,draw] (B) at (1,0) {};
                \node (C) at (1.7,0) {};
                \draw (A) -- (B);
                \draw (B) -- (C);
            \end{tikzpicture}
        \right)
    \right) d\mu \\
    &= (-1)^3 (2!) \int 
    \trace\left(
        (\nabla^2_{ij} U)
        (\nabla^2_{jk} U)
        \nabla_k \val\left(
            ~\begin{tikzpicture}[baseline=-0.5ex]
                \node[rectangle,fill=gray,draw] (A) at (0,0) {};
                \node[rectangle,fill=gray,draw] (B) at (1,0) {};
                \node (C) at (1.7,0) {};
                \draw (A) -- (B);
                \draw (B) -- (C);
            \end{tikzpicture}
        \right)^i
    \right) d\mu \\
    &= 2 \int \left(~
        \begin{tikzpicture}[baseline=-0.5ex]
            \node[circle,draw] (A) at (0, -0.5) {};
            \node[circle,draw] (B) at (0, 0.5) {};
            \node[circle,draw] (C) at (0.866, 0) {};
            \path[every loop/.style={looseness=10, in=30, out=-30}] (C) edge [loop] node {} (C);
            \draw (A) -- (B);
            \draw (A) -- (C);
            \draw (B) -- (C);
        \end{tikzpicture}
        ~+~
        \begin{tikzpicture}[baseline=-0.5ex,nodes={circle,draw}]
            \node (A) at (0, -0.5) {};
            \node (B) at (0, 0.5) {};
            \node (C) at (0.866, 0) {};
            \node (D) at (1.866, 0) {};
            \draw (A) -- (B);
            \draw (A) -- (C);
            \draw (B) -- (C);
            \draw (C) -- (D);
        \end{tikzpicture}
        ~+~
        \begin{tikzpicture}[baseline=-0.5ex,nodes={circle,draw}]
            \node (A) at (0, -0.5) {};
            \node (B) at (1, -0.5) {};
            \node (C) at (0, 0.5) {};
            \node (D) at (1, 0.5) {};
            \draw (A) -- (B);
            \draw (A) -- (C);
            \draw (B) -- (D);
            \draw (C) -- (D);
        \end{tikzpicture}
    ~\right) d\mu.
\end{align*}
The third term in the integrand is equal to $\trace((\nabla^2 U)^4) \geq 0$.
For the first two terms, an application of \autoref{prop:main_pf:repr:IPP} shows that
\begin{equation*}
    \int \left(~
        \begin{tikzpicture}[baseline=-0.5ex]
            \node[circle,draw] (A) at (0, -0.5) {};
            \node[circle,draw] (B) at (0, 0.5) {};
            \node[circle,draw] (C) at (0.866, 0) {};
            \path[every loop/.style={looseness=10, in=30, out=-30}] (C) edge [loop] node {} (C);
            \draw (A) -- (B);
            \draw (A) -- (C);
            \draw (B) -- (C);
        \end{tikzpicture}
        ~+~
        \begin{tikzpicture}[baseline=-0.5ex,nodes={circle,draw}]
            \node (A) at (0, -0.5) {};
            \node (B) at (0, 0.5) {};
            \node (C) at (0.866, 0) {};
            \node (D) at (1.866, 0) {};
            \draw (A) -- (B);
            \draw (A) -- (C);
            \draw (B) -- (C);
            \draw (C) -- (D);
        \end{tikzpicture}
    ~\right) d\mu
    = -\int ~2~ 
    \begin{tikzpicture}[baseline=-0.5ex,nodes={circle,draw}]
        \node (A) at (0, -0.5) {};
        \node (B) at (0, 0.5) {};
        \node (C) at (0.866, 0) {};
        \draw (A) -- (B);
        \draw (A) -- (C);
        \draw[double distance=2pt] (B) -- (C);
    \end{tikzpicture}
    ~d\mu.
\end{equation*}
Now
\begin{equation*}
    \begin{tikzpicture}[baseline=-0.5ex]
        \node[circle,draw] (A) at (0, -0.5) {};
        \node[circle,draw] (B) at (0, 0.5) {};
        \node[circle,draw] (C) at (0.866, 0) {};
        \draw (A) -- (B) node [midway,left] {~~~{\scriptsize $i$}};
        \draw (A) -- (C) node [midway,below] {~~~{\scriptsize $j$}};
        \draw[double distance=2pt] (B) -- (C) node [midway,above] {~~~{\scriptsize $k,l$}};
    \end{tikzpicture}
    = (\nabla^2_{ij} U) (\nabla^3_{ikl} U) (\nabla^3_{jkl} U)
    = \sum_{k,l} (\nabla_{ij}^2 U) \Psi^{(k,l)}_i \Psi^{(k,l)}_j
    = \sum_{k,l} \Psi^{(k,l) \top} (\nabla^2 U) \Psi^{(k,l)}
    \leq 0
\end{equation*}
since $U$ is concave, where $\Psi^{(k,l)}_i = \nabla^3_{ikl} U$.

\paragraph{Proof that \texorpdfstring{$\val(S_3) \geq 0$}{val(S3)>=0}.}
By definition and by \autoref{lm:main_pf:repr:val_rect_rect},
\begin{equation*}
    \val(S_3) 
    = \int \Lambda_2\big(
        \val\left(
            ~\begin{tikzpicture}[baseline=-0.5ex]
                \node[rectangle,fill=gray,draw] (A) at (0,0) {};
                \node[rectangle,fill=gray,draw] (B) at (1,0) {};
                \node (C) at (1.7,0) {};
                \draw (A) -- (B);
                \draw (B) -- (C);
            \end{tikzpicture}
        \right)
    \big) \,d\mu
    = \int \Lambda_2(\nabla \psi) \,d\mu
    = \int \trace\left( \left( \nabla^2 \psi \right)^2 \right) d\mu
    \geq 0
\end{equation*}
where $\psi = -\Delta U - \frac{1}{2} \norm{\nabla U}^2$.

\subsection{Proof of \autoref{prop:main_pf:n=5}}

Recall that we denote $U = \log \mu$ and that $T_1, ..., T_6$ are as shown in \autoref{fig:main_pf:F5}.

\paragraph{Proof that \texorpdfstring{$\val(T_1) \geq 0$}{val(T1)>=0}.}
By definition,
$
    \val(T_1)
    = \int 
    (-1)^5 (4!) \trace\left(\left( \nabla^2 U \right)^5 \right)
    d\mu
    \geq 0
$
since $U$ is concave.

\paragraph{Proof that \texorpdfstring{$\val(T_2) \geq 0$}{val(T2)>=0}.}
By definition,
$
    \val(T_2)
    = \int \Psi^\top \Psi \,d\mu
    \geq 0
$
where
$
    \Psi = \val\left(
        ~\begin{tikzpicture}[baseline=-0.5ex]
			\node[rectangle,fill=gray,draw] (A) at (-0.866, 0.5) {};
			\node[rectangle,fill=gray,draw] (B) at (-0.866, -0.5) {};
			\node[rectangle,fill=gray,draw] (C) at (0, 0) {};
			\node (D) at (0.7, 0) {};
			\draw (A) -- (C);
			\draw (B) -- (C);
			\draw (C) -- (D);
        \end{tikzpicture}
    \right)
$.

\paragraph{Proof that \texorpdfstring{$\val(T_3) \geq 0$}{val(T3)>=0}.}
By definition,
$
    \val(T_3)
    = \int \Psi^\top \Psi \,d\mu
    \geq 0
$
where
$
    \Psi = \val\left(
        ~\begin{tikzpicture}[baseline=-0.5ex]
			\node[rectangle,fill=gray,draw] (A) at (0, 0) {};
			\node[rectangle,fill=gray,draw] (B) at (1, 0) {};
			\node[rectangle,fill=gray,draw] (C) at (2, 0) {};
			\node (D) at (2.7, 0) {};
			\draw (A) -- (B);
			\draw (B) -- (C);
			\draw (C) -- (D);
        \end{tikzpicture}
    \right)
$.

\subsubsection{Proof that \texorpdfstring{$\val(T_4) \geq 0$}{val(T4)>=0}} \label{subsubsec:main_pf:n=5:T4}
By definition,
\begin{align*}
    \val(T_4) 
    &= \int \Lambda_4\left(
        \nabla U,~
        \nabla U,~
        \nabla U,~
        \val\left(
            ~\begin{tikzpicture}[baseline=-0.5ex]
                \node[rectangle,fill=gray,draw] (A) at (0,0) {};
                \node[rectangle,fill=gray,draw] (B) at (1,0) {};
                \node (C) at (1.7,0) {};
                \draw (A) -- (B);
                \draw (B) -- (C);
            \end{tikzpicture}
        \right)
    \right) d\mu \\
    &= (-1)^4 (3!) \int 
    \trace\left(
        (\nabla^2_{ij} U)
        (\nabla^2_{jk} U)
        (\nabla^2_{kl} U)
        \nabla_l \val\left(
            ~\begin{tikzpicture}[baseline=-0.5ex]
                \node[rectangle,fill=gray,draw] (A) at (0,0) {};
                \node[rectangle,fill=gray,draw] (B) at (1,0) {};
                \node (C) at (1.7,0) {};
                \draw (A) -- (B);
                \draw (B) -- (C);
            \end{tikzpicture}
        \right)^i
    \right) d\mu \\
    &= 6 \int \left(
        \ominus~ \begin{tikzpicture}[baseline=-0.5ex]
            \node[circle,draw] (A) at (0:0.7) {};
            \node[circle,draw] (B) at (90:0.7) {};
            \node[circle,draw] (C) at (180:0.7) {};
            \node[circle,draw] (D) at (-90:0.7) {};
            \path[every loop/.style={looseness=10, in=30, out=-30}] (A) edge [loop] node {} (A);
            \draw (A) -- (B);
            \draw (B) -- (C);
            \draw (C) -- (D);
            \draw (D) -- (A);
        \end{tikzpicture}
        ~\ominus~
        \begin{tikzpicture}[baseline=-0.5ex]]
            \node[circle,draw] (A) at (0:0.7) {};
            \node[circle,draw] (B) at (90:0.7) {};
            \node[circle,draw] (C) at (180:0.7) {};
            \node[circle,draw] (D) at (-90:0.7) {};
            \node[circle,draw] (E) at (1.4,0) {};
            \draw (A) -- (B);
            \draw (B) -- (C);
            \draw (C) -- (D);
            \draw (D) -- (A);
            \draw (A) -- (E);
        \end{tikzpicture}
        ~\ominus~
        \begin{tikzpicture}[baseline=-0.5ex]]
            \node[circle,draw] (A) at (0:0.7) {};
            \node[circle,draw] (B) at (72:0.7) {};
            \node[circle,draw] (C) at (144:0.7) {};
            \node[circle,draw] (D) at (216:0.7) {};
            \node[circle,draw] (E) at (-72:0.7) {};
            \draw (A) -- (B);
            \draw (B) -- (C);
            \draw (C) -- (D);
            \draw (D) -- (E);
            \draw (E) -- (A);
        \end{tikzpicture}
    ~\right) d\mu.
\end{align*}
The third term in the integrand is equal to $-\trace((\nabla^2 U)^5) \geq 0$ since $U$ is concave.
For the first two terms, an application of \autoref{prop:main_pf:repr:IPP} shows that
\begin{equation*}
    \int \left(
        \ominus~ \begin{tikzpicture}[baseline=-0.5ex]
            \node[circle,draw] (A) at (0:0.7) {};
            \node[circle,draw] (B) at (90:0.7) {};
            \node[circle,draw] (C) at (180:0.7) {};
            \node[circle,draw] (D) at (-90:0.7) {};
            \path[every loop/.style={looseness=10, in=30, out=-30}] (A) edge [loop] node {} (A);
            \draw (A) -- (B);
            \draw (B) -- (C);
            \draw (C) -- (D);
            \draw (D) -- (A);
        \end{tikzpicture}
        ~\ominus~
        \begin{tikzpicture}[baseline=-0.5ex]]
            \node[circle,draw] (A) at (0:0.7) {};
            \node[circle,draw] (B) at (90:0.7) {};
            \node[circle,draw] (C) at (180:0.7) {};
            \node[circle,draw] (D) at (-90:0.7) {};
            \node[circle,draw] (E) at (1.4,0) {};
            \draw (A) -- (B);
            \draw (B) -- (C);
            \draw (C) -- (D);
            \draw (D) -- (A);
            \draw (A) -- (E);
        \end{tikzpicture}
    ~\right) d\mu
    = \int \left(~
        \begin{tikzpicture}[baseline=-0.5ex]
            \node[circle,draw] (A) at (0:0.7) {};
            \node[circle,draw] (B) at (90:0.7) {};
            \node[circle,draw] (C) at (180:0.7) {};
            \node[circle,draw] (D) at (-90:0.7) {};
            \draw (A) -- (B);
            \draw (B) -- (C);
            \draw (C) -- (D);
            \draw (D) -- (A);
            \draw (A) -- (C);
        \end{tikzpicture}
        ~+~
        2 \cdot \begin{tikzpicture}[baseline=-0.5ex]]
            \node[circle,draw] (A) at (0:0.7) {};
            \node[circle,draw] (B) at (90:0.7) {};
            \node[circle,draw] (C) at (180:0.7) {};
            \node[circle,draw] (D) at (-90:0.7) {};
            \draw[double distance=2pt] (A) -- (B);
            \draw (B) -- (C);
            \draw (C) -- (D);
            \draw (D) -- (A);
        \end{tikzpicture}
    ~\right) d\mu.
\end{equation*}
Now~
$\begin{tikzpicture}[baseline=-0.5ex]]
    \node[circle,draw] (A) at (0:0.7) {};
    \node[circle,draw] (B) at (90:0.7) {};
    \node[circle,draw] (C) at (180:0.7) {};
    \node[circle,draw] (D) at (-90:0.7) {};
    \draw[double distance=2pt] (A) -- (B) node [near start,above] {~~~{\scriptsize $j,k$}};
    \draw (B) -- (C);
    \draw (C) -- (D) node [near start,below] {{\scriptsize $i$}};
    \draw (D) -- (A);
\end{tikzpicture}
= \sum_{ijk} g_{ijk} \cdot g_{ijk} \geq 0$ where $g_{ijk} = (\nabla^2_{is} U) (\nabla^3_{jks} U)$.
For the remaining term, denoting the eigenvalue decomposition of $\nabla^2 U$ by $\nabla^2_{ij} U = \sum_l \sigma_l u_{(l)}^i u_{(l)}^j$ and letting $h = \nabla^3 U$,
\begin{align*}
    \begin{tikzpicture}[baseline=-0.5ex]
        \node[circle,draw] (A) at (0:0.7) {};
        \node[circle,draw] (B) at (90:0.7) {};
        \node[circle,draw] (C) at (180:0.7) {};
        \node[circle,draw] (D) at (-90:0.7) {};
        \draw (A) -- (B) node [near start,above] {{\scriptsize $i$}};
        \draw (B) -- (C) node [near end,above] {{\scriptsize $j$}};
        \draw (C) -- (D) node [near start,below] {{\scriptsize $j'$}};
        \draw (D) -- (A) node [near end,below] {{\scriptsize $i'$}};
        \draw (A) -- (C) node [midway,above] {{\scriptsize $k$}};
    \end{tikzpicture}
    &= (\nabla^2_{ij} U) (\nabla^2_{i' j'} U) (\nabla^3_{ii'k} U) (\nabla^3_{jj'k} U)
    = \sum_{l,l'} 
    \left( \sigma_l u_{(l)}^i u_{(l)}^j \right)
    \left( \sigma_{l'} u_{(l')}^{i'} u_{(l')}^{j'} \right)
    h_{ii'k} h_{jj'k} \\
    &= \sum_{l,l'} \, \sigma_l \sigma_{l'}
    \,\sum_k~ 
    \underbrace{\left( \sum_{i,i'} u_{(l)}^i u_{(l')}^{i'} h_{ii'k} \right)}~
    \underbrace{\left( \sum_{j,j'} u_{(l)}^j u_{(l')}^{j'} h_{jj'k} \right)}
    \geq 0 \\[-0.9em]
\end{align*}
since the two underbraced quantities are equal for each $l,l',k$,
and $\sigma_l \sigma_{l'} \geq 0$ since $U$ is concave.

\subsubsection{Proof that \texorpdfstring{$\val(T_5) \geq 0$}{val(T5)>=0}}
Denote the tensor
$g_{ij} = \nabla_j \val\left(
~\begin{tikzpicture}[baseline=-0.5ex]
	\node[rectangle,fill=gray,draw] (A) at (0,0) {};
	\node[rectangle,fill=gray,draw] (B) at (1,0) {};
	\node (C) at (1.7,0) {};
	\draw (A) -- (B);
	\draw (B) -- (C);
\end{tikzpicture}
\right)^i$,
which is symmetric by \autoref{lm:main_pf:repr:val_rect_rect}.
Then by definition, and by concavity of $U$,
\begin{align*}
	& \val(T_5) = \Lambda^{(3,\mu)}~
	(\nabla U)~
	\val\left(
	~\begin{tikzpicture}[baseline=-0.5ex]
		\node[rectangle,fill=gray,draw] (A) at (0,0) {};
		\node[rectangle,fill=gray,draw] (B) at (1,0) {};
		\node (C) at (1.7,0) {};
		\draw (A) -- (B);
		\draw (B) -- (C);
	\end{tikzpicture}
	\right)~
	\val\left(
	~\begin{tikzpicture}[baseline=-0.5ex]
		\node[rectangle,fill=gray,draw] (A) at (0,0) {};
		\node[rectangle,fill=gray,draw] (B) at (1,0) {};
		\node (C) at (1.7,0) {};
		\draw (A) -- (B);
		\draw (B) -- (C);
	\end{tikzpicture}
	\right) \\
	&~~ = \int d\mu~ (-1)^3 (2!) \trace\left(
	(\nabla^2 U)\, g\, g
	\right)
	= -2 \int d\mu~ (\nabla^2_{ij} U) g_{jk} g_{ki} 
	= -2 \sum_k \int d\mu~ g_{\bullet k}^\top (\nabla^2 U) g_{\bullet k}
	\geq 0.
\end{align*}

\subsubsection{Proof that \texorpdfstring{$\val(T_6) \geq 0$}{val(T6)>=0}}
\begin{claim}
    We have
    \begin{align*}
        \val\left(
            ~\begin{tikzpicture}[baseline=-0.5ex]
    			\node[rectangle,fill=gray,draw] (A) at (0, 0) {};
    			\node[rectangle,fill=gray,draw] (B) at (1, 0) {};
    			\node[rectangle,fill=gray,draw] (C) at (2, 0) {};
    			\node (D) at (2.7, 0) {};
    			\draw (A) -- (B);
    			\draw (B) -- (C);
    			\draw (C) -- (D);
            \end{tikzpicture}
        \right)^i
        &= ~
        \begin{tikzpicture}[baseline=-0.5ex]
            \node[circle,draw] (A) at (0,0) {};
            \node[circle,draw] (B) at (-0.866, 0.5) {};
            \node[circle,draw] (C) at (-0.866, -0.5) {};
            \node (D) at (0.7, 0) {};
            \draw (A) -- (B);
            \draw (A) -- (C);
            \draw (A) -- (D);
        \end{tikzpicture}
        ~+~
        \begin{tikzpicture}[baseline=-0.5ex]
            \node[circle,draw] (A) at (0,0) {};
            \path[every loop/.style={looseness=10, in=210, out=150, double distance=2pt}] (A) edge [loop] node {} (A);
            \node (B) at (0.7, 0) {};
            \draw (A) -- (B);
        \end{tikzpicture}
        ~+~ 2 \cdot
        \begin{tikzpicture}[baseline=-0.5ex]
            \node[circle,draw] (A) at (0,0) {};
            \path[every loop/.style={looseness=10, in=90, out=150}] (A) edge [loop] node {} (A);
            \node[circle,draw] (B) at (-0.866, -0.5) {};
            \node (C) at (0.7, 0) {};
            \draw (A) -- (B);
            \draw (A) -- (C);
        \end{tikzpicture}
        ~+~ 2 \cdot 
        \begin{tikzpicture}[baseline=-0.5ex]
            \node[circle,draw] (A) at (0,0) {};
            \node[circle,draw] (B) at (-1,0) {};
            \node (Z) at (0.7, 0) {};
            \draw[double distance=2pt] (B) -- (A);
            \draw (A) -- (Z);
        \end{tikzpicture} \\
        &~~~~ +~
        \begin{tikzpicture}[baseline=-0.5ex]
            \node[circle,draw] (A) at (0,0) {};
            \node[circle,draw] (B) at (-1,0) {};
            \node (Z) at (0.7, 0) {};
            \path[every loop/.style={looseness=10, in=210, out=150}] (B) edge [loop] node {} (B);
            \draw (B) -- (A);
            \draw (A) -- (Z);
        \end{tikzpicture}
        ~+~
        \begin{tikzpicture}[baseline=-0.5ex]
            \node[circle,draw] (A) at (0,0) {};
            \node[circle,draw] (B) at (-1,0) {};
            \node[circle,draw] (C) at (-2,0) {};
            \node (Z) at (0.7, 0) {};
            \draw (C) -- (B);
            \draw (B) -- (A);
            \draw (A) -- (Z);
        \end{tikzpicture} \\
        \text{and}~~
        \val\left(
            ~\begin{tikzpicture}[baseline=-0.5ex]
    			\node[rectangle,fill=gray,draw] (A) at (-0.866, 0.5) {};
    			\node[rectangle,fill=gray,draw] (B) at (-0.866, -0.5) {};
    			\node[rectangle,fill=gray,draw] (C) at (0, 0) {};
    			\node (D) at (0.7, 0) {};
    			\draw (A) -- (C);
    			\draw (B) -- (C);
    			\draw (C) -- (D);
            \end{tikzpicture}
        \right)^i
        &= 2 \cdot \left(
        \begin{tikzpicture}[baseline=-0.5ex]
            \node[circle,draw] (A) at (0,0) {};
            \node[circle,draw] (B) at (-1,0) {};
            \node (Z) at (0.7, 0) {};
            \draw[double distance=2pt] (B) -- (A);
            \draw (A) -- (Z);
        \end{tikzpicture}
        ~+~
        \begin{tikzpicture}[baseline=-0.5ex]
            \node[circle,draw] (A) at (0,0) {};
            \node[circle,draw] (B) at (-1,0) {};
            \node (Z) at (0.7, 0) {};
            \path[every loop/.style={looseness=10, in=210, out=150}] (B) edge [loop] node {} (B);
            \draw (B) -- (A);
            \draw (A) -- (Z);
        \end{tikzpicture}
        ~+~
        \begin{tikzpicture}[baseline=-0.5ex]
            \node[circle,draw] (A) at (0,0) {};
            \node[circle,draw] (B) at (-1,0) {};
            \node[circle,draw] (C) at (-2,0) {};
            \node (Z) at (0.7, 0) {};
            \draw (C) -- (B);
            \draw (B) -- (A);
            \draw (A) -- (Z);
        \end{tikzpicture}
        \right).
    \end{align*}
\end{claim}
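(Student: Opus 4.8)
The plan is to obtain both identities by unwinding Definitions~\ref{def:main_pf:repr:expr_vect} and~\ref{def:gf:wass:Lambdan_i_shorthand} one symbol at a time, each time reducing to the two-box dangling diagram already treated by \autoref{lm:main_pf:repr:val_rect_rect} and performing a single integration by parts.

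For the first identity, I would set $\xi^i$ to be the value of the two-box dangling diagram, so that \autoref{lm:main_pf:repr:val_rect_rect} gives $\xi^i = -\nabla^3_{ikk}U - (\nabla^2_{ik}U)(\nabla_k U)$ together with $\nabla_j \xi^i = -\nabla^4_{ijkk}U - (\nabla^3_{ijk}U)(\nabla_k U) - (\nabla^2_{ik}U)(\nabla^2_{jk}U)$. By the well-definedness clause of \autoref{def:main_pf:repr:expr_vect}, the value $\Xi^i$ of the three-box path dangling diagram equals $\Lambda^{(2,\mu)}_{ki}\,\xi^k$; unwinding \autoref{def:gf:wass:Lambdan_i_shorthand}, for every test field $\eta \in L^2_\mu\cap\mathfrak{X}(\RR^d)$ one has $\int_{\RR^d}\eta^i\,(\Lambda^{(2,\mu)}_{ki}\xi^k)\,d\mu = \int_{\RR^d}\Lambda_2(\xi,\eta)\,d\mu = \int_{\RR^d}\trace\big((\nabla\xi)(\nabla\eta)\big)\,d\mu$, and one integration by parts (equivalently, an application of \autoref{prop:main_pf:repr:IPP}) yields $\Xi^i = -\nabla_i(\nabla \cdot \xi) - (\nabla_i\xi^k)(\nabla_k U)$. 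Substituting the formulas for $\xi$ and expanding the derivatives by the Leibniz rule produces a sum of monomials in the $\nabla^k U$; regrouping these in the white-circle diagram notation of \autoref{def:main_pf:repr:multigraphs} gives exactly the six claimed terms.

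For the second identity, since $\Lambda^{(1,\mu)}_a = \nabla_a U$ by \autoref{def:gf:wass:Lambdan_i_shorthand}, the value $\Psi^i$ of the cherry dangling diagram is $\Lambda^{(3,\mu)}_{abi}(\nabla_a U)(\nabla_b U)$, so that $\int_{\RR^d}\eta^i\Psi^i\,d\mu = \int_{\RR^d}\Lambda_3(\nabla U,\nabla U,\eta)\,d\mu$ for all test fields $\eta$. Using the symmetrized-trace expression of $\Lambda_3$ from \autoref{def:ent:higherord:Lambda} (its $V$-term vanishes here since $V=0$), together with symmetry of $\nabla^2 U$ and cyclicity of the trace, collapses $\Lambda_3(\nabla U,\nabla U,\eta)$ to $-2\,\trace\big((\nabla^2 U)^2(\nabla\eta)\big)$; a single integration by parts then gives $\Psi^i = 2\big[(\nabla^3_{ikj}U)(\nabla^2_{kj}U) + (\nabla^2_{ik}U)(\nabla^3_{kjj}U) + (\nabla^2_{ik}U)(\nabla^2_{kj}U)(\nabla_j U)\big]$, which is precisely the three claimed terms, each with multiplicity $2$.

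Both computations are routine; the only points needing care are the reduction of the symmetrized $\Lambda_3$ to the single trace $-2\,\trace((\nabla^2 U)^2\nabla\eta)$, and the bookkeeping of the Leibniz expansion in the first identity. In particular, one must notice that the $\nabla^4 U$-monomial coming from $-\nabla_i(\nabla\cdot\xi)$ and the one coming from $-(\nabla_i\xi^k)(\nabla_k U)$ become equal after relabeling dummy indices (by full symmetry of $\nabla^4 U$, i.e.\ flatness of $\RR^d$), which is what produces the coefficient $2$ on the third claimed diagram, and a similar coincidence produces the coefficient $2$ on the fourth.
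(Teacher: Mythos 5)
Your proposal is correct and follows exactly the route the paper intends (the paper omits these computations, referring to the method of \autoref{lm:main_pf:repr:val_rect_rect}): unwind \autoref{def:gf:wass:Lambdan_i_shorthand} against a test field, integrate by parts once, and expand; I verified that your intermediate formulas for $\xi$, $\Xi^i=-\nabla_i(\nabla\cdot\xi)-(\nabla_i\xi^k)(\nabla_kU)$, the reduction $\Lambda_3(\nabla U,\nabla U,\eta)=-2\,\trace((\nabla^2U)^2\nabla\eta)$, and the resulting $\Psi^i$ all reproduce the claimed diagrams with the stated coefficients. One cosmetic remark: the factor $2$ on the fourth diagram of the first identity comes from the Leibniz rule applied to the square $(\nabla^2_{kl}U)(\nabla^2_{kl}U)$ inside $-\nabla_i(\nabla\cdot\xi)$ alone, not from a merger of monomials across the two summands as for the third diagram, but this does not affect the validity of your derivation.
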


\begin{proof}
    The results follow from the definition (\autoref{def:main_pf:repr:expr_vect}) by proceeding similarly as for the proof of \autoref{lm:main_pf:repr:val_rect_rect}.
    The computations are omitted due to space.
\end{proof}

\begin{claim}
    We have $\val(T_6) = 2 \int (A + B + C) \,d\mu = 2 \int (A' + B + C) \,d\mu$ where
    \begin{align*}
        A &= ~\ominus~ 2 \cdot \left(
            \begin{tikzpicture}[baseline=1.5ex,nodes={circle,draw,fill=white}]
                \node (A) at (0, 0.866) {};
                \node (B) at (-0.5, 0) {};
                \node (C) at (0.5, 0) {};
                \draw (A) -- (B);
                \draw (C) -- (A);
                \draw (B) -- ++(0,-0.1) -- ++(1,0) -- ++(0,0.1);
                \draw (B) -- ++(1,0);
                \draw (B) -- ++(0,0.1) -- ++(1,0) -- ++(0,-0.1);
                \node at (B) {};
                \node at (C) {};
            \end{tikzpicture}
            ~+~
            \begin{tikzpicture}[baseline=1.5ex,nodes={circle,draw}]
                \node (A) at (0, 0.866) {};
                \node (B) at (-0.5, 0) {};
                \node (C) at (0.5, 0) {};
                \draw[double distance=2pt] (A) -- (B);
                \draw (B) -- (C);
                \draw[double distance=2pt] (C) -- (A);
            \end{tikzpicture}
        \right)
        \\
        \text{and}~~~~
        A' &= 
        -2 (\nabla^2_{ij} U) \, g_{ik} \, g_{jk}
        \\
        &~~ +~ 4 \cdot
        \begin{tikzpicture}[baseline=1.5ex,nodes={circle,draw}]
            \node (A) at (0, 0.866) {};
            \node (B) at (-0.5, 0) {};
            \node (C) at (0.5, 0) {};
            \draw[double distance=2pt] (A) -- (B);
            \draw (B) -- (C);
            \draw[double distance=2pt] (C) -- (A);
        \end{tikzpicture}
        ~\ominus~ 
        2 \cdot \begin{tikzpicture}[baseline=-0.5ex]
            \node[circle,draw] (A) at (0:0.7) {};
            \node[circle,draw] (B) at (90:0.7) {};
            \node[circle,draw] (C) at (180:0.7) {};
            \node[circle,draw] (D) at (-90:0.7) {};
            \draw (A) -- (B);
            \draw (B) -- (C);
            \draw (C) -- (D);
            \draw (D) -- (A);
            \draw (A) -- (C);
        \end{tikzpicture} 
        ~~~~\text{where}~~
        g_{ik} = (\nabla_l U) (\nabla_{ikl}^3 U) + \nabla_{ikll}^4 U \\
        \text{and}~~~~
        B &= 4 \cdot 
        \begin{tikzpicture}[baseline=-0.5ex]]
            \node[circle,draw] (A) at (0:0.7) {};
            \node[circle,draw] (B) at (90:0.7) {};
            \node[circle,draw] (C) at (180:0.7) {};
            \node[circle,draw] (D) at (-90:0.7) {};
            \draw[double distance=2pt] (A) -- (B);
            \draw (B) -- (C);
            \draw (C) -- (D);
            \draw (D) -- (A);
        \end{tikzpicture}
        ~+~ 3 \cdot
        \begin{tikzpicture}[baseline=-0.5ex]
            \node[circle,draw] (A) at (0:0.7) {};
            \node[circle,draw] (B) at (90:0.7) {};
            \node[circle,draw] (C) at (180:0.7) {};
            \node[circle,draw] (D) at (-90:0.7) {};
            \draw (A) -- (B);
            \draw (B) -- (C);
            \draw (C) -- (D);
            \draw (D) -- (A);
            \draw (A) -- (C);
        \end{tikzpicture}
        ~\ominus~ 
        \begin{tikzpicture}[baseline=-0.5ex]]
            \node[circle,draw] (A) at (0:0.7) {};
            \node[circle,draw] (B) at (72:0.7) {};
            \node[circle,draw] (C) at (144:0.7) {};
            \node[circle,draw] (D) at (216:0.7) {};
            \node[circle,draw] (E) at (-72:0.7) {};
            \draw (A) -- (B);
            \draw (B) -- (C);
            \draw (C) -- (D);
            \draw (D) -- (E);
            \draw (E) -- (A);
        \end{tikzpicture} \\
        \text{and}~~~~
        C &= 2 \cdot
        \begin{tikzpicture}[baseline=-0.5ex]
            \node[circle,draw] (A) at (0:0.7) {};
            \node[circle,draw] (B) at (90:0.7) {};
            \node[circle,draw] (C) at (180:0.7) {};
            \node[circle,draw] (D) at (-90:0.7) {};
            \draw (A) -- (B);
            \draw (B) -- (C);
            \draw (C) -- (D);
            \draw (D) -- (A);
            \draw (A) -- (C);
        \end{tikzpicture}~.
    \end{align*}
\end{claim}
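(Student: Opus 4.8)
I would prove this Claim by reducing $\val(T_6)$ to a product of two already-computed vector fields, expanding the resulting tensor diagrams into closed multigraphs, and then reorganizing them by integration by parts.

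\textbf{Splitting $T_6$.} The tree $T_6$ of \autoref{fig:main_pf:F5} has a single node of degree $3$, carrying two leaves and lying on a path of three further edges. Cutting the edge joining this node to that path exhibits $T_6$ as $\ulT\text{---}\ulS$, where $\ulT$ is the ``broom'' dangling tree whose distinguished node carries the two leaves, and $\ulS$ is the ``path of three'' dangling tree. By \autoref{def:main_pf:repr:expr_vect}, $\val_\mu(T_6) = \int_{\RR^d} \val_\mu(\ulT)^i\, \val_\mu(\ulS)^i\, d\mu$. Crucially, $\val_\mu(\ulT)^i$ and $\val_\mu(\ulS)^i$ are precisely the two vector fields whose white-circle multigraph expressions in terms of $U = \log\mu$ are furnished by the preceding Claim (note the overall factor $2$ carried by $\val_\mu(\ulT)^i$, which is the source of the $2$ in the statement).

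\textbf{Expanding and reorganizing.} Substituting those two expressions and multiplying out gives a finite integer combination of closed connected multigraphs, hence, upon integration against $\mu$, an explicit combination of the scalars $\int_{\RR^d} G\,d\mu$. The task is then to bring this to the form $2\int_{\RR^d}(A+B+C)\,d\mu$, using two ingredients: (i) the symmetries of the tensors $\nabla^k U$, which identify several multigraph terms with one another and, for the $A'$ form, package the relevant contractions into the symmetric tensor $g_{ik} = (\nabla_l U)(\nabla^3_{ikl} U) + \nabla^4_{ikll} U$; and (ii) \autoref{prop:main_pf:repr:IPP}, which trades any multigraph term for a signed sum of terms obtained by re-attaching one edge, allowing one to eliminate the graphs containing a self-loop (a $\Delta U$ factor) in favour of the handful of graphs appearing in $A$, $B$, $C$. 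One further integration by parts applied to the two terms of $A$ converts them to $A'$, yielding the second identity $2\int(A+B+C) = 2\int(A'+B+C)$.

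\textbf{Expected obstacle.} The only real difficulty is bookkeeping: the expansion produces on the order of twenty multigraph terms with multiplicities, and one must track exactly which integration-by-parts move sends which term where, so that everything except $A$ (resp.\ $A'$), $B$, $C$ cancels with the stated coefficients. Conceptually this is no harder than the proofs of $\val(S_2)\ge0$, $\val(S_3)\ge0$ and $\val(T_4)\ge0$ above — only longer and more delicate in the collation of terms. (Deducing $\val(T_6)\ge0$ afterwards will additionally require that $\int_{\RR^d} A'\,d\mu$, $\int_{\RR^d} B\,d\mu$ and $\int_{\RR^d} C\,d\mu$ are each nonnegative by concavity of $U$, but that lies beyond the present statement.)
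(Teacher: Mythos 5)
Your proposal takes essentially the same route as the paper: the paper's proof of this claim likewise consists of splitting $T_6$ along the degree-$3$ node, substituting the two dangling-tree values from the previous claim, and then reorganizing via repeated applications of \autoref{prop:main_pf:repr:IPP}, with the explicit bookkeeping omitted ``due to space''. Your outline correctly identifies the decomposition, the source of the overall factor $2$, and the role of the integrations by parts, so it matches the paper's argument at the same level of detail.
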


\begin{proof}
    The result follows from the previous claim and from repeated applications of \autoref{prop:main_pf:repr:IPP} (integrations by parts).
    The computations are omitted due to space.
\end{proof}

We showed previously in \autoref{subsubsec:main_pf:n=5:T4} that the terms $B$ and $C$ are non-negative. Moreover observe that, denoting $h = \nabla^4 U$, by concavity of $U$,
\begin{equation*}
    \begin{tikzpicture}[baseline=1.5ex,nodes={circle,draw,fill=white}]
        \node (A) at (0, 0.866) {};
        \node (B) at (-0.5, 0) {};
        \node (C) at (0.5, 0) {};
        \draw (A) -- (B);
        \draw (C) -- (A);
        \draw (B) -- ++(0,-0.1) -- ++(1,0) -- ++(0,0.1);
        \draw (B) -- ++(1,0);
        \draw (B) -- ++(0,0.1) -- ++(1,0) -- ++(0,-0.1);
        \node at (B) {};
        \node at (C) {};
    \end{tikzpicture}
    = (\nabla^2_{ij} U) h_{iabc} h_{jabc}
	= \sum_{a,b,c} h_{\bullet a b c}^\top (\nabla^2 U) h_{\bullet a b c}
    \leq 0.
\end{equation*}
Now distinguish two cases.
\begin{itemize}
    \item First suppose 
    $\int \begin{tikzpicture}[baseline=1.5ex,nodes={circle,draw}]
        \node (A) at (0, 0.866) {};
        \node (B) at (-0.5, 0) {};
        \node (C) at (0.5, 0) {};
        \draw[double distance=2pt] (A) -- (B);
        \draw (B) -- (C);
        \draw[double distance=2pt] (C) -- (A);
    \end{tikzpicture}
    \,d\mu
    \leq 0$;
    then we can conclude that $\int A \,d\mu \geq 0$ and so $\val(T_6) \geq 0$.
    \item Now suppose 
    $\int \begin{tikzpicture}[baseline=1.5ex,nodes={circle,draw}]
        \node (A) at (0, 0.866) {};
        \node (B) at (-0.5, 0) {};
        \node (C) at (0.5, 0) {};
        \draw[double distance=2pt] (A) -- (B);
        \draw (B) -- (C);
        \draw[double distance=2pt] (C) -- (A);
    \end{tikzpicture}
    \,d\mu
    > 0$.
    The first term in the definition of $A'$ is non-negative since $U$ is concave, 
    the second term is non-negative by the assumption,
    and the third term is precisely $-C$.
    So $\int (A' + C) \,d\mu \geq 0$, and we also conclude that $\val(T_6) \geq 0$.
\end{itemize}
This concludes the proof of \autoref{prop:main_pf:n=5}, and so of \autoref{thm:intro:GCMC_45}.

\vspace{1em}
\paragraph{Acknowledgments.}
I wish to thank Christopher Criscitiello for insightful discussions, and Tomas Vaškevičius for suggestions that improved the presentation of this work.

\printbibliography
\addcontentsline{toc}{section}{\refname}

\ifextended%
\newpage
\appendix
\phantomsection
\addcontentsline{toc}{section}{APPENDIX}

\section{Background on Otto calculus} \label{sec:apx_bg_otto}

The purpose of this appendix is purely to provide the reader unfamiliar with Otto calculus with some background. It can safely be skipped entirely. None of the notations of this appendix are reused elsewhere.

Wasserstein gradient flows (WGFs) are defined as follows \cite{ambrosio_gradient_2008}.
\begin{itemize}
    \item On the one hand, given any metric space $(\XXX, \dist)$ and any $f: \XXX \to \RR$ (with some topological and continuity assumptions), one can define a notion of gradient flow via,
    informally, $x(t) = \lim_{h \to 0} x_{\floor{t/h}}$ where $x_{k+1} = \argmin_\XXX f(\cdot) + \frac{1}{2h} \dist^2(\cdot, x_k)$. Note that there is no notion of actual gradient a priori here, and the denomination ``gradient flow'' is only motivated by analogy with the Euclidean case.
    (To be precise, there are several natural generalizations of gradient flow to metric spaces, and the book \cite{ambrosio_gradient_2008} carefully details the connections and equivalences between them.)
    \item On the other hand, one can show that $W_2$ (the optimal transport cost functional with transport cost $c(x,y) = \norm{x-y}_2^2$) satisfies the axioms of a metric function on $\PPP_2(\RR^d) = \left\{ \mu \in \PPP(\RR^d); \int \norm{x}^2 d\mu(x) < \infty \right\}$.
    So we define WGFs as the gradient flows (in the sense above) for the metric space $(\PPP_2(\RR^d), W_2)$.
    Let us emphasize that, in the end, the definition of $W_2$ as an optimal transport cost will not really matter in this appendix; its characterization by the Benamou-Brenier formula, recalled shortly, will be more practical.
\end{itemize}

A powerful guiding principle for the analysis of WGFs is given by Otto calculus, a non-rigorous analogy according to which the space $(\PPP_2(\RR^d), W_2)$ behaves very similarly to a Riemannian manifold.
In the remainder of this section, we describe the Otto calculus analogy in detail, assuming some familiarity with first-order Riemannian geometry (at the level of, e.g., \cite[Chapter~3]{boumal_introduction_2023}).
The reader is advised that all regularity assumptions are omitted, however, so some of the statements are mathematically false as stated; the focus is on the formal analogies.

\subsection{The Riemannian world} \label{subsec:apx_bg_otto:riemannian_world}

\begin{enumerate}[label=($\roman*$).]
    \item Let $(\MMM, g)$ a Riemannian manifold and $f: \MMM \to \RR$ (assumed sufficiently regular). 
    \item Denote by $T_x \MMM$, $T_x^* \MMM$ the tangent resp.\ cotangent spaces of $\MMM$ at $x$. That is, $T_x \MMM$ is defined as the linear space $\left\{ \gamma'(0); \gamma: (-1, 1) \to \MMM, \gamma(0) = x \right\}$, and $T_x^* \MMM$ is its (algebraic) dual space (i.e., the set of all linear forms on $T_x \MMM$).

    The differential of $f$ at $x$ is $D_x f \in T_x^*\MMM$ defined by
    \begin{equation*}
        \forall \gamma: (-1,1) \to \MMM ~~\text{s.t.}~~ \gamma(0) = x ~\text{and}~ \gamma'(0) = v,~~~~
        (D_x f)(v) = \lim_{\eps \to 0} \frac{1}{\eps} \left[ f(\gamma(\eps)) - f(x) \right].
    \end{equation*}
    \item For any $x$, the metric $g_x$ induces a bijection from $T_x \MMM$ to $T_x^* \MMM$ by
    \begin{equation*}
        \forall v, w \in T_x \MMM,~
        (g_x v)(w) = \innerprod{v}{w}_x.
    \end{equation*}
    Naturally, $(T_x \MMM, g_x)$ and $(T_x^*\MMM, g_x^{-1})$ are isometric inner-product spaces.
    It is common to use the same symbol $\innerprod{\cdot}{\cdot}_x$ for both inner products, i.e., $\innerprod{\cdot}{\cdot}_x$ can stand for $g_x(\cdot, \cdot)$ or for $g_x^{-1}(\cdot, \cdot)$ depending on the context.

    The Riemannian distance can be characterized as
    \begin{equation*}
        \dist_\MMM^2(x_0, x_1) = \inf_{\gamma: [0, 1] \to \MMM} \int_0^1 \norm{\gamma'(t)}_{\gamma(t)}^2 dt
        \subjto
        \gamma(0) = x_0 ~~\text{and}~~ \gamma(1) = x_1.
    \end{equation*}
    \item The Riemannian gradient of $f$ at $x$ is $\grad_x f = g_x^{-1} D_x f \in T_x\MMM$.
    
    Let $x(0) \in \MMM$. The Riemannian gradient flow is the unique solution of the ODE
    \begin{equation*}
        \frac{d}{dt} x(t) = -\grad_{x(t)} f
        ~~~~\text{with initial condition}~~~~ x(0).
    \end{equation*}
\end{enumerate}

\subsection{The Wasserstein world: first formalism} 
\label{subsec:apx_bg_otto:Wasserstein_villanifigalli}

Otto calculus comes in two common equivalent formalisms.
We start by the one used in the book \cite{figalli_invitation_2021}, as it is arguably the simplest one conceptually.
\begin{enumerate}[label=($\roman*'$).]
    \item Consider the metric space $(\PPP_2(\RR^d), W_2)$ and $F: \PPP_2(\RR^d) \to \RR$ (assumed sufficiently regular).
    For the sake of concision, we will also write $\PPP_2 \coloneqq \PPP_2(\RR^d)$.
    \item For any $\mu \in \PPP_2$, denote by $\support(\mu) \subset \RR^d$ the support of $\mu$, and define
    \begin{align*}
        T_\mu \PPP_2 &= \left\{
            -\nabla \cdot (\mu \nabla \phi),~
            \phi: \support(\mu) \to \RR
        \right\} \\
        T_\mu^* \PPP_2 &= \left\{
            \phi,~ \phi: \support(\mu) \to \RR
        \right\} / \RR.
    \end{align*}
    A curve of measures $(\mu_t)_{t \in \RR}$ in $\PPP_2$ is called \emph{absolutely continuous in the $W_2$ sense} if it is differentiable in time in the distributional sense and if, for all $t \in \RR$, its \emph{velocity} $\partial_t \mu_t$ is an element of $T_{\mu_t} \PPP_2$ (and if it satisfies certain regularity conditions which we omit here).
    In particular, all WGFs are absolutely continuous curves in the $W_2$ sense.
    
    The analog of the ``differential'' of $F$ at $\mu$ is the first variation, or rather its restriction to $\support(\mu)$, $\restr{F'[\mu]}{\support(\mu)} \in T_\mu^* \PPP_2$.
    Note that the quotient in the definition of $T_\mu^* \PPP_2$ reflects the fact that $F'[\mu]: \RR^d \to \RR$ is defined up to an arbitrary additive constant.
    
    A natural algebraic duality ``pairing'' of $T_\mu \PPP_2$ and $T_\mu^* \PPP_2$ is given simply by
    \begin{equation*}
        T_\mu \PPP_2 \times T_\mu^* \PPP_2
        \ni
        \left( -\nabla \cdot (\mu \nabla \phi), \psi \right)
        ~\mapsto~
        \int_{\RR^d} \left[ -\nabla \cdot (\mu \nabla \phi) \right] \, \psi
        = \int_{\RR^d} \nabla \phi(x)^\top \nabla \psi(x) \,d\mu(x)
        \in \RR.
    \end{equation*}
    \item For any $\mu \in \PPP_2$, define
    \begin{equation*}
        \olg_\mu^{-1}: T_\mu^* \PPP_2 \to T_\mu \PPP_2,~
        \phi \mapsto -\nabla \cdot (\mu \nabla \phi).
    \end{equation*}
    The operator $\olg_\mu^{-1}$ is invertible as a consequence of classical results on elliptic PDEs (see footnote~25 of \cite{figalli_invitation_2021}), so we can also define $\olg_\mu = (\olg_\mu^{-1})^{-1}$, which is a bijection from $T_\mu \PPP_2$ to $T_\mu^* \PPP_2$.
    Define a formal inner product on $T_\mu^* \PPP_2$ by
    \begin{equation*}
        \innerprod{\phi}{\psi}_\mu 
        \coloneqq \int_{\RR^d} \phi \, \left( \olg_\mu^{-1} \psi \right)
        = \int_{\RR^d} \nabla \phi(x)^\top \nabla \psi(x) \,d\mu(x),
    \end{equation*}
    and a formal inner product on $T_\mu \PPP_2$ by $\innerprod{-\nabla \cdot (\mu \nabla \phi)}{-\nabla \cdot (\mu \nabla \psi)}_\mu = \innerprod{\phi}{\psi}_\mu$---equivalently, it is defined precisely such that $(T_\mu \PPP_2, \innerprod{\cdot}{\cdot}_\mu)$ and $(T_\mu^* \PPP_2, \innerprod{\cdot}{\cdot}_\mu)$ are isometric via the bijection~$\olg_\mu$.

    With these notations, the Benamou-Brenier theorem asserts that the $2$-Wasserstein distance is characterized by
    \begin{equation} \label{eq:apx_bg_otto:wasserstein_villanifigalli:benamou_brenier}
        W_2^2(\ol\mu_0, \ol\mu_1) = \inf_{(\mu_t, \phi_t)_t} \int_0^1 \norm{-\nabla \cdot (\mu_t \nabla \phi_t)}_{\mu_t}^2 dt
        \subjto
        \begin{cases}
            \forall t, \phi_t \in T_{\mu_t}^* \PPP_2, \\
            \partial_t \mu_t = -\nabla \cdot (\mu_t \nabla \phi_t), \\
            \mu_0 = \ol\mu_0 ~~\text{and}~~ \mu_1 = \ol\mu_1.
        \end{cases}
    \end{equation}
    \item Thanks to the Benamou-Brenier theorem,
    one can show that the abstractly-defined gradient flows for the metric space $(\PPP_2, W_2)$ actually have the following form.
    Define the Wasserstein gradient of a functional $F: \PPP_2(\RR^d) \to \RR$ (assumed sufficiently regular) at $\mu$ as
    $-\nabla \cdot (\mu \nabla F'[\mu]) \in T_\mu \PPP_2$.
    Then the WGF $(\mu_t)_{t \geq 0}$ of $F$ is the unique distributional solution of the PDE
    \begin{equation} \label{eq:intro:WGF_PDE}
        \partial_t \mu_t = +\nabla \cdot \left( \mu_t \nabla F'[\mu_t] \right)
        ~~~~\text{with initial condition}~~~~ \mu_0.
    \end{equation}
    In other words, the WGF $(\mu_t)_t$ is the unique absolutely continuous curve in the $W_2$ sense whose velocity at each $t$ is minus the Wasserstein gradient of $F$ at $\mu_t$.
\end{enumerate}

\vspace{1em}
Note that for $F(\mu) = H(\mu) = \int_{\RR^d} d\mu \log \frac{d\mu}{dx}$, we have $H'[\mu](x) = \log \frac{d\mu}{dx} + 1$, so its WGF reads
\begin{equation*}
	\partial_t \mu_t
	= \nabla \cdot (\mu_t \nabla \log \mu_t)
	= \nabla \cdot (\nabla \mu_t)
	= \Delta \mu_t.
\end{equation*}
Thus the heuristics presented here indeed recover the fact that the WGF of $H$ is precisely the heat flow \cite{jordan_variational_1998}.

\subsection{The Wasserstein world: second formalism} 
\label{subsec:apx_bg_otto:wasserstein_ambrosiogigli}

\begin{enumerate}[label=($\roman*''$).]
    \item Consider the metric space $(\PPP_2, W_2)$ and $F: \PPP_2 \to \RR$ (assumed sufficiently regular),
    where $\PPP_2 = \PPP_2(\RR^d)$.
    \item For any $\mu \in \PPP_2$, define
    \begin{equation*}
        T_\mu \PPP_2 = \left\lbrace
            \nabla \phi,~ \phi: \support(\mu) \to \RR
        \right\rbrace.
    \end{equation*}
    A curve of measures $(\mu_t)_{t \in \RR}$ in $\PPP_2$ is called \emph{absolutely continuous in the $W_2$ sense} if, for each $t$, there exists a function $\phi_t$ such that $\partial_t \mu_t = -\nabla \cdot (\mu_t \nabla \phi_t)$ (and if it satisfies certain regularity conditions which we omit here).
    In this case, $\phi_t$ is uniquely defined, and $\dot{\mu}_t \coloneqq \nabla \phi_t \in T_{\mu_t} \PPP_2$ is called the \emph{Wasserstein velocity} of the curve at time $t$.

    We do not introduce any notion of cotangent space, and there is no natural analog of the ``differential'' of $F$ at $\mu$.%
    \footnote{
        One could easily introduce a notion of cotangent space, even though it is typically not done because there is no real reason to do so.
        Namely, one could define the set
        $
            T_\mu^* \PPP_2 = \left\{ (\nabla \phi) \mu, ~ \phi: \support(\mu) \to \RR \right\}
        $
        consisting of vector-valued measures over $\RR^d$,
        with the natural algebraic duality ``pairing''
        \begin{equation*}
            T_\mu \PPP_2 \times T_\mu^* \PPP_2
            \ni
            \left( \nabla \phi, (\nabla \psi) \mu \right) 
            ~\mapsto~
            \int_{\RR^d} \nabla \phi \cdot (\nabla \psi) \,\mu
            = \int_{\RR^d} \nabla \phi(x)^\top \nabla \psi(x) \,d\mu(x)
            \in \RR,
        \end{equation*}
        and define a bijection $\olg_\mu: T_\mu \PPP_2 \to T_\mu^* \PPP_2$ simply by $\olg_\mu \nabla \phi = (\nabla \phi) \mu$.
    }
    \item For any $\mu \in \PPP_2$, define a formal inner product on $T_\mu \PPP_2$ by
    \begin{equation*}
        \innerprod{\nabla \phi}{\nabla \psi}_\mu \coloneqq \int_{\RR^d} \nabla \phi(x)^\top \nabla \psi(x) \,d\mu(x).
    \end{equation*}
    With these definitions, the Benamou-Brenier theorem asserts that the $2$-Wasserstein distance is characterized by
    \begin{equation} \label{eq:apx_bg_otto:wasserstein_ambrosiogigli:benamou_brenier}
        W_2^2(\ol\mu_0, \ol\mu_1) = \inf_{(\mu_t)_{t \in [0,1]}} \int_0^1 \norm{\dot{\mu}_t}_{\mu_t}^2 dt
        \subjto
        \mu_0 = \ol\mu_0 ~\text{and}~ \mu_1 = \ol\mu_1,
    \end{equation}
    where the infimum ranges over absolutely continuous curves in the $W_2$ sense, and where $\dot{\mu}_t \in T_{\mu_t} \PPP_2$ denotes the Wasserstein velocity.
    \item Following this formalism, the Wasserstein gradient of $F$ at $\mu$ is defined as
    $\nabla F'[\mu] \in T_\mu \PPP_2$,
    which is consistent with common usage in the machine learning literature.
    
    Let $\mu_0 \in \PPP_2$. The Wasserstein gradient flow of $F$ is the unique curve such that
    \begin{equation*}
        \forall t,~ \dot{\mu_t} = -\nabla F'[\mu_t]
        ~~~~\text{with initial condition}~~~~ \mu_0.
    \end{equation*}
    In words, the WGF $(\mu_t)_t$ is the unique absolutely continuous curve in the $W_2$ sense whose Wasserstein velocity at each $t$ is minus the Wasserstein gradient of $F$ at $\mu_t$.
\end{enumerate}

In a sense, by defining velocity as $\dot{\mu}_t = \nabla \phi_t$ instead of $\partial_t \mu_t$, this second formalism completely does away with the ``additive'' geometry of $\PPP_2(\RR^d)$ inherited from its embedding in the space of signed measures.

\subsection{The Wasserstein geodesic equation} \label{subsec:apx_bg_otto:wass_geodesic}

The dual optimality condition of a certain convex reformulation of the Benamou-Brenier formula
\eqref{eq:apx_bg_otto:wasserstein_villanifigalli:benamou_brenier}
shows that the infimum is attained at curves 
$(\mu_t, \phi_t)_t$ such that \cite[Section~6.1]{santambrogio_optimal_2015}
\begin{equation} \label{eq:intro:wass_geo}
    \begin{cases}
        \partial_t \mu_t = -\nabla \cdot (\mu_t \nabla \phi_t) \\
        \partial_t \phi_t = -\frac{1}{2} \norm{\nabla \phi_t}^2.
    \end{cases}
\end{equation}
This system is called the \emph{(constant-speed) Wasserstein geodesic equation}.

\section{General facts about the Bell polynomials} \label{sec:apx_bell}

We start by setting notations for the coefficients of the Bell polynomials, which will be used throughout this section and only in this section.
\begin{definition} \label{def:apx_bell:coeffs}
    For any $n, k \geq 1$ and any sequence of integers $\bh = (h_n)_{n \geq 1} \in \NN^{\NN^*}$, let
    \begin{equation*}
        \alpha^{n,k}_{\bh}
        = \frac{n!}{h_1!~ h_2!~ h_3!~ ...}~ \frac{1}{(1!)^{h_1}\, (2!)^{h_2}\, (3!)^{h_3}\, ...}
    \end{equation*}
    if $h_1 + h_2 + h_3 + ... = k$ and $h_1 + 2 h_2 + 3 h_3 + ... = n$, and $0$ otherwise.
    The (partial exponential) Bell polynomial of order $(n,k)$, denoted by $B_{n,k}$, is defined as
    \begin{equation*}
        B_{n,k}(X_1, ..., X_{n-k+1}) = \sum_{\bh \in \NN^{\NN^*}} \alpha^{n,k}_{\bh}~ X_1^{h_1} \,...\, X_{n-k+1}^{h_{n-k+1}}.
    \end{equation*}
    Even though $B_{n,k}$ only has $n-k+1$ indeterminates, we will also write
    $B_{n,k}(X_1, X_2, X_3, ...)$ for $B_{n,k}(X_1, ..., X_{n-k+1})$.
    By convention,
    $B_{0,0} = 1$,
    $B_{n,0} = B_{0,k} = 0$,
    and $B_{n,k}=0$ for $k>n$ or $k<0$ or $n<0$.
\end{definition}

The expressions of $B_{n,k}$ for the first few values of $n, k \geq 1$ are given in \autoref{subsec:fdb_wasserstein:findim}.
We also introduce the following shorthand for equality of tensors up to reordering of the indices.
\begin{definition}
	We denote by $\frakS_n$ the set of permutations of $\{1, ..., n\}$.
	For two $n$-tensors $g^{i_1 ... i_n}$ and $h^{i_1 ... i_n}$, we will write $g \eqpermut h$ if there exists $\sigma \in \frakS_n$ such that $g^{i_{\sigma(1)} ... i_{\sigma(n)}} = h^{i_1 ... i_n}$.
\end{definition}

\subsection{The Faa di Bruno's formula for derivatives along a curve} \label{subsec:apx_bell:fdb}

Consider $f: \RR^d \to \RR$ and $(x_t)_{t \in \RR}$ any curve in $\RR^d$.
By Faa di Bruno's formula on $(f \circ x)(t)$,
\begin{equation*}
    \frac{d^n}{dt^n} f(x_t)
    = \sum_{k=1}^n (\nabla^k f) : B_{n,k}\left( \frac{dx}{dt}, \frac{d^2 x}{dt^3}, \frac{d^3 x}{dt^3}, ... \right)
\end{equation*}
where $\nabla^k f$ refers to the tensor $\nabla_{i_1} ... \nabla_{i_k} f(x_t)$ and $B_{n,k}$ to the partial Bell polynomial, with the understanding that multiplication refers to tensor product of vectors in $\RR^n$ (and the order does not matter since the tensors $\nabla^k f$ are symmetric).

Let us show the formula for the first few values of $n$ explicitly.
For the sake of concision, only in this paragraph, we will write $d$ or $D$ for $\frac{d}{dt}$.
One can easily check that
\begin{align*}
    D(f(x_t)) &= (\nabla f) : dx \\
    D^2(f(x_t)) &= (\nabla f) : d^2 x + (\nabla^2 f) : (dx)^{\otimes 2} \\
    D^3(f(x_t)) &= (\nabla f) : d^3 x + 3 (\nabla^2 f) : \left[ dx \otimes (d^2 x) \right] + (\nabla^3 f) : (dx)^{\otimes 3} \\
    D^4(f(x_t)) &=
    (\nabla f) : d^4 x
    + (\nabla^2 f) : \left[ 3 (d^2 x)^{\otimes 2} + 4 (dx) \otimes (d^3 x) \right] 
    + 6 (\nabla^3 f) : \left[ (dx)^{\otimes 2} \otimes (d^2 x) \right] \\
    & \qquad
    + (\nabla^4 f) : (dx)^{\otimes 4}.
\end{align*}
Taking a step back, if we think about what is the seed that generated the above structure, we can frame it as follows.
Define a sequence of symmetric multi-linear operators over tensors by, for each $n \geq 1$,
$\Lambda_n(x_1 \otimes ... \otimes x_n) = (\nabla^n f) : (x_1 \otimes ... \otimes x_n)$.
What we used is really that for any $x_1, ..., x_n \in \left\{ d^k x, k \geq 1 \right\}$,
\begin{equation*}
    D \Lambda_n(x_1 \otimes ... \otimes x_n) = \Lambda_{n+1} ((dx) \otimes x_1 \otimes ... \otimes x_n) + \Lambda_n( d(x_1 \otimes ... \otimes x_n) ),
\end{equation*}
and that $d$ satisfies the Leibniz product rule.
We record this fact in a general abstract proposition for future use:
\begin{proposition} \label{prop:apx_bell:abstract_vectorspaces}
    We call derivation over an algebra $A$ any linear operator $d: A \to A$ satisfying the Leibniz product rule
    $d(ab) = a d(b) + d(a) b$.
    We denote the tensor algebra of a vector space $\XXX$ by
    $T(\XXX) = \bigoplus_{k=0}^\infty \XXX^{\otimes k}$.
    (For the purpose of this proposition, it is not necessary to specify what is $\XXX^{\otimes 0}$, in other words here the direct sum can be taken starting from $k=1$.)

    Let $\XXX$ and $\YYY$ be vector spaces and consider a sequence of symmetric multi-linear operators ${ \Lambda_n: \XXX^n \to \YYY }$ ($n \geq 1$).
    Extend $\Lambda_n$ to $\XXX^{\otimes n}$ linearly and write $\Lambda_n : g_n$ for $\Lambda_n(g_n)$.
    Suppose there exist a linear operator $D: \YYY \to \YYY$, a derivation $d$ over $T(\XXX)$ and an element $\dot{x} \in \XXX$ such that
    \begin{align*}
        \forall n \geq 1,~~
        \forall x \in \XXX,&~
        &
        D\left( \Lambda_n : x^{\otimes n} \right) &= \Lambda_{n+1}: \dot{x} \otimes x^{\otimes n} + \Lambda_n : d\left( x^{\otimes n} \right) 
        && \\
        \text{i.e.,}~~~~ 
        \forall g_n \in \XXX^{\otimes n},&~
        &
        D\left( \Lambda_n : g_n \right) &= \Lambda_{n+1} : \dot{x} \otimes g_n + \Lambda_n : d(g_n)
        &&
    \end{align*}
    by symmetry and polarization.
    Then for any $n \geq 1$,
    \begin{equation*}
        D^{n-1} \Lambda_1(\dot{x})
        = \sum_{k=1}^n \Lambda_k : B_{n,k}\left( \dot{x}, d \dot{x}, d^2 \dot{x}, ... \right).
    \end{equation*}
\end{proposition}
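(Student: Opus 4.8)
The plan is to prove the identity $D^{n-1}\Lambda_1(\dot x) = \sum_{k=1}^n \Lambda_k : B_{n,k}(\dot x, d\dot x, d^2\dot x, \ldots)$ by induction on $n$, reducing everything to a purely combinatorial recurrence satisfied by the Bell polynomials. First I would restate the two hypotheses in the most convenient form: $D$ is linear, $d$ is a derivation on $T(\XXX)$, and the ``transfer rule'' $D(\Lambda_n : g_n) = \Lambda_{n+1} : \dot x \otimes g_n + \Lambda_n : d(g_n)$ holds for every $g_n \in \XXX^{\otimes n}$. The base case $n=1$ is immediate: $D^0\Lambda_1(\dot x) = \Lambda_1(\dot x) = \Lambda_1 : B_{1,1}(\dot x)$ since $B_{1,1}=X_1$ and $B_{1,k}=0$ for $k\neq 1$.

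For the inductive step, assume the formula holds at order $n$ and apply $D$ to both sides. On the left I get $D^n \Lambda_1(\dot x)$, which is what I want to identify. On the right, I apply $D$ termwise to $\Lambda_k : B_{n,k}(\dot x, d\dot x, \ldots)$; using the transfer rule this splits into $\Lambda_{k+1} : \dot x \otimes B_{n,k}(\dot x, d\dot x, \ldots)$ plus $\Lambda_k : d\bigl(B_{n,k}(\dot x, d\dot x, \ldots)\bigr)$. The second piece requires knowing how the derivation $d$ acts on a Bell polynomial evaluated at the iterated derivatives $(\dot x, d\dot x, d^2\dot x, \ldots)$: since $d$ is a derivation and $B_{n,k}$ is a polynomial, $d$ acts by the chain rule, replacing $X_j = d^{j-1}\dot x$ by $d^j\dot x = X_{j+1}$ one factor at a time, i.e. $d(B_{n,k}(X_1,X_2,\ldots)) = \sum_j \frac{\partial B_{n,k}}{\partial X_j}(X_1, X_2, \ldots)\, X_{j+1}$ under the substitution $X_j \mapsto d^{j-1}\dot x$. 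Collecting the coefficient of $\Lambda_k$ in the resulting expression, I need to check that
\begin{equation*}
  \dot x \otimes B_{n,k-1}(X_1, X_2, \ldots) + \sum_{j\geq 1} \frac{\partial B_{n,k}}{\partial X_j}(X_1,X_2,\ldots)\, X_{j+1} = B_{n+1,k}(X_1, X_2, \ldots),
\end{equation*}
which is precisely the classical recurrence $B_{n+1,k} = \sum_{j} \binom{n}{j-1} X_j\, B_{n-j+1,k-1}$ in disguise, or equivalently the standard identity $B_{n+1,k}(X_1,\ldots) = X_1 B_{n,k-1}(X_1,\ldots) + \sum_j \partial_{X_j} B_{n,k}\cdot X_{j+1}$. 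I would either cite this from the Bell-polynomial facts in \autoref{sec:apx_bell} or verify it directly from the definition $B_{n,k} = \sum_{\bh} \alpha^{n,k}_{\bh} X_1^{h_1}\cdots$ by a short manipulation of the multinomial coefficients $\alpha^{n,k}_{\bh}$.

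The main obstacle I anticipate is getting the combinatorial bookkeeping exactly right — in particular, being careful that ``$\dot x \otimes {}$'' on the leftmost factor matches the $X_1$-shift in the Bell recurrence, and that the derivation acting on $X_j = d^{j-1}\dot x$ genuinely produces the $\sum_j \partial_{X_j}B_{n,k}\cdot X_{j+1}$ term with the right multiplicities (no overcounting when several factors coincide, which is automatic because $d$ is a derivation, not a naive symbol-replacement). Everything else — linearity of $D$, the derivation property of $d$, the symmetry/polarization reduction from $x^{\otimes n}$ to general $g_n$ — is routine and can be invoked without detailed computation. Once the Bell recurrence above is in hand, summing over $k$ from $1$ to $n+1$ gives $D^n\Lambda_1(\dot x) = \sum_{k=1}^{n+1}\Lambda_k : B_{n+1,k}(\dot x, d\dot x, \ldots)$, closing the induction.
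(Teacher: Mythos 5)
Your proposal is correct and follows essentially the same route as the paper: induction on $n$, the transfer rule to split $D(\Lambda_k : B_{n,k})$ into $\Lambda_{k+1}:\dot x\otimes B_{n,k}$ plus $\Lambda_k : d(B_{n,k})$, and the Bell-polynomial recurrence $B_{n+1,k} = X_1 B_{n,k-1} + \sum_j \partial_{X_j}B_{n,k}\cdot X_{j+1}$ (the paper isolates this as \autoref{lm:apx_bell:deriv_B} and organizes the regrouping as a telescoping sum, but it is the same identity). The only point worth making explicit is that the chain-rule action of $d$ and the recurrence hold only up to permutation of tensor factors, which is harmless precisely because each $\Lambda_k$ is symmetric.
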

For the Faa di Bruno's formula for time-derivatives along a curve in $\RR^d$ over a time interval $I$,
$\YYY$ corresponds to $\RR^I$,
$\XXX$ to $(\RR^d)^I$,
$D$ and $d$ to $\frac{d}{dt}$,
$\Lambda_n : (x_{1t})_t \otimes ... \otimes (x_{nt})_t$ to $\left( \nabla^n f(x_t) : x_{1t} \otimes ... \otimes x_{nt} \right)_t$,
and $\dot{x}$ to $\frac{dx_t}{dt}$.

\begin{proof}
    Denoting by $\Lambda: T(\XXX) \to \YYY$ the linear operator such that $\restr{\Lambda}{\XXX^{\otimes n}} = \Lambda_n$ for all $n$, the proposition can be restated as: suppose
    $
        \forall g \in T(\XXX),~ 
        D \Lambda(g) = \Lambda(\dot{x} \otimes g) + \Lambda(d(g))
    $,
    then
    $
        \forall n \geq 1$, $
        D^{n-1} \Lambda(\dot{x}) = \Lambda\left( \sum_{k=1}^n B_{n,k} \left( \dot{x}, d\dot{x}, d^2\dot{x}, ... \right) \right)
    $.
    
    Let us prove this by induction.
    The case $n=1$ is clear. For $n \geq 1$, for brevity, we write $B_{n,k}$ for $B_{n,k}(\dot{x}, d\dot{x}, d^2\dot{x}, ...)$, and let us suppose 
    $
        D^{n-1} \Lambda(\dot{x}) = \Lambda\left( \sum_{k=1}^n B_{n,k} \right)
    $. Then
    \begin{equation*}
        D D^{n-1} \Lambda(\dot{x})
        = D \Lambda\left( \sum_{k=1}^n B_{n,k} \right)
        = \Lambda\left( \sum_{k=1}^n \dot{x} \otimes B_{n,k} + d(B_{n,k}) \right).
    \end{equation*}
    Now by the lemma below, we have the telescoping sum
    \begin{equation*}
        \sum_{k=1}^n d(B_{n,k}) + \dot{x} \otimes B_{n,k}
        \eqpermut \sum_{k=1}^n B_{n+1,k} - \dot{x} \otimes B_{n,k-1} + \dot{x} \otimes B_{n,k}
        \eqpermut \sum_{k=1}^n B_{n+1,k}
        + \underbrace{ \dot{x} \otimes \left( B_{n,n} - B_{n,0} \right) }_{= \dot{x}^{\otimes (n+1)} = B_{n+1,n+1}}
    \end{equation*}
    since $B_{n,n}(X_1) = X_1^n$ and $B_{n,0}=0$.
    Hence $
        D D^{n-1} \Lambda(\dot{x})
        = \Lambda\left( \sum_{k=1}^{n+1} B_{n+1,k} \left( \dot{x}, d\dot{x}, d^2\dot{x}, ... \right) \right)
    $,
    which concludes the proof by induction.
\end{proof}

\begin{lemma} \label{lm:apx_bell:deriv_B}
    For any derivation $d$ over $T(\XXX)$ and any $\dot{x} \in \XXX$,
    for any $n, k \geq 0$,
    \begin{equation*}
        d \left( B_{n,k}(\dot{x}, d\dot{x}, d^2\dot{x}, ...) \right)
        \eqpermut B_{n+1,k}(\dot{x}, d\dot{x}, d^2\dot{x}, ...)
        - \dot{x} \otimes B_{n,k-1}(\dot{x}, d\dot{x}, d^2\dot{x}, ...).
    \end{equation*}
\end{lemma}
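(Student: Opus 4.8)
The plan is to deduce the lemma from a purely combinatorial identity among the Bell polynomials. Throughout, abbreviate $X_m := d^{m-1}\dot{x}$, so that $dX_m = X_{m+1}$ and $B_{n,k}(\dot x, d\dot x, d^2\dot x, \dots) = B_{n,k}(X_1, X_2, \dots)$. First I would record how the derivation $d$ acts on a monomial: by iterating the Leibniz rule, $d$ applied to a tensor product of $h_1 + h_2 + \cdots$ factors is the sum over which factor is differentiated, and differentiating one of the $h_j$ copies of $X_j = d^{j-1}\dot x$ turns it into $X_{j+1}$. Collecting the $h_j$ resulting terms — which all agree up to a permutation of the tensor slots — gives
\[
    d\!\left(X_1^{\otimes h_1}\otimes X_2^{\otimes h_2}\otimes\cdots\right) \eqpermut \sum_{j\ge 1} h_j\, X_1^{\otimes h_1}\otimes\cdots\otimes X_j^{\otimes(h_j-1)}\otimes X_{j+1}^{\otimes(h_{j+1}+1)}\otimes\cdots .
\]
Summing over the monomials of $B_{n,k}$ with their coefficients $\alpha^{n,k}_{\bh}$, this says exactly that $d\big(B_{n,k}(X_1,X_2,\dots)\big) \eqpermut (\partial B_{n,k})(X_1,X_2,\dots)$, where $\partial := \sum_{j\ge 1} X_{j+1}\,\partial/\partial X_j$ is the formal differential operator acting on the polynomial $B_{n,k}$ in the indeterminates $X_1, X_2, \dots$. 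Hence it suffices to prove the polynomial identity
\[
    \partial B_{n,k} = B_{n+1,k} - X_1\, B_{n,k-1} . \tag{$\star$}
\]

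For $(\star)$ I would use the exponential generating function $\sum_{n,k\ge 0} B_{n,k}(X_1,X_2,\dots)\,\tfrac{t^n}{n!}\,u^k = \exp\!\big(u\,\Phi(t)\big)$ with $\Phi(t) = \sum_{m\ge 1} X_m\,\tfrac{t^m}{m!}$ (the standard EGF for the partial exponential Bell polynomials, consistent with $B_{0,0}=1$ and $B_{n,0}=B_{0,k}=0$). Since $\partial$ is a derivation of the polynomial ring, it extends to formal power series and obeys the chain rule, so $\partial\,\exp(u\Phi) = u\,(\partial\Phi)\,\exp(u\Phi)$. A one-line computation gives $\partial\Phi(t) = \sum_{j\ge 1}X_{j+1}\tfrac{t^j}{j!} = \Phi'(t) - X_1$, where $\Phi'$ denotes the $t$-derivative; therefore
\[
    \partial\,\exp(u\Phi) = u\,\Phi'(t)\,\exp(u\Phi) - u\,X_1\,\exp(u\Phi) = \frac{\partial}{\partial t}\exp(u\Phi) - u\,X_1\,\exp(u\Phi) .
\]
Extracting the coefficient of $\tfrac{t^n}{n!}u^k$ on both sides (using that the $t$-derivative shifts $n\mapsto n+1$ and multiplication by $u$ shifts $k\mapsto k-1$) yields $\partial B_{n,k} = B_{n+1,k} - X_1 B_{n,k-1}$, i.e.\ $(\star)$. (Alternatively, $(\star)$ can be verified by directly comparing the coefficients $\alpha^{n+1,k}$, $\alpha^{n,k-1}$ and the coefficients produced by $\partial$, but the generating-function route is cleaner.) Substituting $X_m = d^{m-1}\dot x$ into $(\star)$ and combining with the first step gives the claimed $\eqpermut$-identity; the degenerate cases $k=0$ (where $B_{n,0}=0$ for $n\ge 1$ and $d(1)=0$) and $n=0$ are immediate.

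The main obstacle, modest as it is, lies in the bookkeeping of the first step: since $\otimes$ is noncommutative, the equality $d\big(B_{n,k}(\dots)\big) \eqpermut (\partial B_{n,k})(\dots)$ holds only up to permuting tensor factors, so one must check that (i) the $h_j$ terms produced by differentiating the various copies of $X_j$ genuinely coincide up to such a permutation, and (ii) the relation $\eqpermut$ behaves additively enough to be applied monomial-by-monomial across the sums defining $B_{n,k}$ and $\partial B_{n,k}$. This is harmless in context because in every intended application (with $d = \tfrac{d}{dt}$, or $d = \cD$) the operators $\Lambda_n$ against which these tensors are ultimately contracted are symmetric, so $\eqpermut$-equal tensors are interchangeable; I would state this compatibility explicitly rather than leave it implicit.
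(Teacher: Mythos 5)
Your proof is correct, and its skeleton coincides with the paper's: both arguments use the chain rule for derivations to reduce the lemma to the single polynomial identity $\sum_{i\ge 1} X_{i+1}\,\partial B_{n,k}/\partial X_i = B_{n+1,k} - X_1 B_{n,k-1}$ (your $(\star)$). Where you diverge is in how that identity is established. The paper simply invokes two standard facts about partial Bell polynomials, namely $\partial B_{n,k}/\partial X_i = \binom{n}{i} B_{n-i,k-1}$ together with the recurrence $B_{n+1,k} = \sum_{i\ge 0}\binom{n}{i} X_{i+1} B_{n-i,k-1}$, and observes that the $i=0$ term of the recurrence is exactly $X_1 B_{n,k-1}$; you instead derive $(\star)$ self-containedly from the exponential generating function $\exp(u\Phi(t))$, via $\partial\Phi = \Phi' - X_1$. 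Both are valid; the paper's route is shorter but leans on external identities, while yours is a clean one-line computation once the EGF is set up (and your coefficient extraction, including the boundary conventions $B_{0,0}=1$, $B_{n,0}=B_{0,k}=0$, checks out). One further point in your favor: you explicitly flag that $\eqpermut$ is only used monomial-by-monomial and that this is harmless because the tensors are ultimately contracted against symmetric operators $\Lambda_n$ — a bookkeeping issue the paper's proof leaves implicit. No gaps.
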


\begin{proof}
    We use the properties of the partial Bell polynomial $B_{n,k}(X_1, X_2, X_3, ...)$ that
    \begin{equation*}
        \frac{\partial B_{n,k}}{\partial X_i}(X_1, ...) = \binom{n}{i} B_{n-i,k-1}(X_1, ...) 
        ~~~~\text{and}~~~~
        B_{n+1,k}(X_1, ...) = \sum_{i=0}^\infty \binom{n}{i} X_{i+1} B_{n-i,k-1}(X_1, ...).
    \end{equation*}
    Since derivations of polynomials follow the chain rule,
    this implies as announced
    \begin{equation*}
        d \left( B_{n,k}(\dot{x}, d\dot{x}, d^2\dot{x}, ...) \right)
        \eqpermut \sum_{i=1}^\infty \frac{\partial B_{n,k}}{\partial X_i} (\dot{x},...)
        \otimes d^i \dot{x}
        \eqpermut B_{n+1,k}(\dot{x},...) 
        - \dot{x} \otimes B_{n,k-1}(\dot{x}, ...).
        \rqedhere
    \end{equation*}
\end{proof}

\begin{samepage}
\begin{remark} \label{rk:apx_bell:abstract_modules}
    The proposition stated above was relatively conservative; the version below also captures the case $n=0$, at the cost of some gymnastics in the definitions.
    \begin{proposition*}
        We denote the tensor algebra of a $R$-module $\XXX$ (for a commutative ring $R$)
        as $T(\XXX) = \bigoplus_{k=0}^\infty \XXX^{\otimes k}$ with $\XXX^{\otimes 0} = R$.
        (Recall that the axioms for [modules over] commutative rings are the same as for [vector spaces over] fields except for the existence of multiplicative inverses.)

        Let $\XXX$ and $\YYY$ be $R$-modules and consider a sequence of symmetric multi-linear operators ${ \Lambda_n: \XXX^n \to \YYY }$ ($n \geq 0$).
        Extend $\Lambda_n$ to $\XXX^{\otimes n}$ linearly and let $\Lambda: T(\XXX) \to \YYY$ be the linear operator such that $\restr{\Lambda}{\XXX^{\otimes n}} = \Lambda_n$ for all $n$.
        Suppose there exist a linear operator $D: \YYY \to \YYY$, a derivation $d$ over $T(\XXX)$ and an element $\dot{x} \in \XXX$ such that 
        $
            \forall g \in T(\XXX),~ D \Lambda(g) = \Lambda(\dot{x} \otimes g) + \Lambda(d(g))
        $.
        Then 
        $
            \forall n \geq 0,~ D^n \Lambda(1_R) = \Lambda\left( \sum_{k=0}^n B_{n,k} \left( \dot{x}, d\dot{x}, d^2\dot{x}, ... \right) \right)
        $.
    \end{proposition*}
    For the Faa di Bruno's formula for time-derivatives along a curve in $\RR^d$ over a time interval~$I$,
    both $R$ and $\YYY$ correspond to $\RR^I$,
    $\XXX$ to $(\RR^d)^I \simeq R^d$,
    $D$ and $d$ to $\frac{d}{dt}$,
    $\Lambda_0(1_R)$ to $(f(x_t))_t$,
    $\Lambda_n : (x_{1t})_t \otimes ... \otimes (x_{nt})_t$ to $\left( D^n f(x_t) : x_{1t} \otimes ... \otimes x_{nt} \right)_t$,
    and $\dot{x}$ to $\frac{dx_t}{dt}$.
    The proof is identical to the one of \autoref{prop:apx_bell:abstract_vectorspaces}, as \autoref{lm:apx_bell:deriv_B} also holds for $\XXX$ being a $R$-module and $T(\XXX)$ including $\XXX^{\otimes 0}=R$.
\end{remark}
\end{samepage}

\subsection{Composition of Bell polynomials} \label{subsec:apx_bell:compos}

In this subsection we derive a composition rule for Bell polynomials which will be useful for the proof of \autoref{thm:fdb_wasserstein:fdb_wasserstein}.

\subsubsection{The scalar-scalar-scalar case}
For ease of exposition we start by a simple setting where all of the quantities considered are scalars.
Consider
\begin{equation*}
    t \in \RR \xrightarrow{~x~} x_t \in \RR \xrightarrow{~f~} y_t \in \RR \xrightarrow{~g~} z_t \in \RR.
\end{equation*}
That is, $x, f, g$ are all $\RR \to \RR$, and we write
$z_t = g(y_t) = g(f(x_t))$.
Additionally denote $x_{nt} = \frac{d^n}{dt^n} x_t$ and likewise for $y_{nt}, z_{nt}$.
Then by Faa di Bruno's formula,
\begin{equation*}
    z_{nt} = \frac{d^n}{dt^n} g(y_t) =  \sum_{k=0}^n g^{(k)}(y_t) B_{n,k}(y_{1t}, y_{2t}, ...)
    ~~\text{and}~~
    y_{nt} = \frac{d^n}{dt^n} f(x_t) = \sum_{k=0}^n f^{(k)}(x_t) B_{n,k}(x_{1t}, x_{2t}, ...).
\end{equation*}
On the other hand, also by Faa di Bruno's formula,
\begin{equation*}
    z_{nt} = \frac{d^n}{dt^n} (g \circ f)(x_t) = \sum_{k=0}^n (g \circ f)^{(k)}(x_t)
    B_{n,k}(x_{1t}, x_{2t}, ...),
\end{equation*}
and by Faa di Bruno's formula with variable $x$ instead of $t$,
\begin{align} \label{eq:apx_bell:FdB_scalar_gfx}
    (g \circ f)^{(k)}(x)
    = \frac{d^k}{dx^k} g(f(x))
    &= \sum_{l=0}^k g^{(l)}(f(x))
    B_{k,l}\left( \frac{df}{dx}, \frac{d^2f}{dx^2}, ... \right) \\
    \text{so}~~~~
    (g \circ f)^{(k)}(x_t)
    &= \sum_{l=0}^k g^{(l)}(y_t)
    B_{k,l}\left( f'(x_t), f^{(2)}(x_t), ... \right) \\
    \text{so}~~~~
    z_{nt}
    &= \sum_{k=0}^n \sum_{l=0}^k g^{(l)}(y_t)
    B_{k,l}\left( f'(x_t), f^{(2)}(x_t), ... \right)
    B_{n,k}(x_{1t}, x_{2t}, ...).
\end{align}
Since the functions $x, f, g$ were arbitrary, this reasoning establishes a composition rule for the Bell polynomials, as formalized below.

\begin{samepage}
\begin{proposition} \label{prop:apx_bell:compos_scalar}
    For any scalar sequences $\bx,\bmf,\bg \in \RR^\NN$, 
    let $\by,\bz \in \RR^\NN$ be defined by
    $\forall n \geq 0, z_n = \sum_{k=0}^n g_k B_{n,k}(\by)$
    and
    $\forall k \geq 0, y_k = \sum_{l=0}^k f_l B_{k,l}(\bx)$.
    Then $\forall n \geq 0, z_n = \sum_{k=0}^n \sum_{l=0}^k g_l B_{k,l}(\bmf) B_{n,k}(\bx)$.
\end{proposition}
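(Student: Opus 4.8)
The plan is to reduce the claimed identity to the computation already carried out above (culminating in~\eqref{eq:apx_bell:FdB_scalar_gfx}): that computation proves the identity whenever $\bx,\bmf,\bg$ happen to arise as the successive-derivative sequences of smooth functions, and since the identity is polynomial in finitely many coordinates and those coordinates range over all real values as the functions vary, this special case suffices.

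First I would fix $n \geq 0$ and check that both sides of the asserted equality are polynomials in the finitely many real variables $x_1,\dots,x_n$, $f_0,\dots,f_n$, $g_0,\dots,g_n$. On the left, each $y_k = \sum_{l=0}^k f_l B_{k,l}(\bx)$ (for $0 \leq k \leq n$) is a polynomial in $f_0,\dots,f_k$ and $x_1,\dots,x_k$, because $B_{k,l}(\bx)$ only involves $x_1,\dots,x_{k-l+1}$; and $z_n = \sum_{k=0}^n g_k B_{n,k}(\by)$ only involves $g_0,\dots,g_n$ and $y_1,\dots,y_n$. On the right, $\sum_{k=0}^n\sum_{l=0}^k g_l B_{k,l}(\bmf) B_{n,k}(\bx)$ is manifestly such a polynomial. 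Hence it is enough to verify the identity for all real values of these variables.

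Second, I would realize an arbitrary tuple of such values by polynomial functions. Given reals $(x_j)_{1\leq j\leq n}$, $(f_l)_{0\leq l\leq n}$, $(g_m)_{0\leq m\leq n}$, choose a polynomial $x\colon\RR\to\RR$ with $x(0)=0$ and $x^{(j)}(0)=x_j$ for $1\leq j\leq n$ (a truncated Taylor polynomial), then a polynomial $f$ with $f^{(l)}(0)=f_l$ for $0\leq l\leq n$, then a polynomial $g$ with $g^{(m)}(f_0)=g_m$ for $0\leq m\leq n$. Put $x_t=x(t)$, $y_t=f(x_t)$, $z_t=g(y_t)$, write $x_{kt},y_{kt},z_{kt}$ for the $k$-th time-derivatives, and set $y_k\coloneqq y_{k0}$, $z_n\coloneqq z_{n0}$. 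Applying Faa di Bruno's formula to $y=f\circ x$ gives $y_k=\sum_{l=0}^k f^{(l)}(0)\,B_{k,l}(x_{10},x_{20},\dots)=\sum_{l=0}^k f_l B_{k,l}(\bx)$, and applying it to $z=g\circ y$ gives $z_n=\sum_{k=0}^n g^{(k)}(y(0))\,B_{n,k}(\by)=\sum_{k=0}^n g_k B_{n,k}(\by)$ since $y(0)=f(0)=f_0$; so $\by$ and $z_n$ here coincide with the quantities named in the proposition. On the other hand, applying Faa di Bruno to $z=(g\circ f)\circ x$ in the variable $t$ gives $z_n=\sum_{k=0}^n (g\circ f)^{(k)}(x(0))\,B_{n,k}(\bx)$, and applying it to $g\circ f$ in the variable $x$ at the point $x(0)=0$ gives $(g\circ f)^{(k)}(0)=\sum_{l=0}^k g^{(l)}(f(0))\,B_{k,l}(f^{(1)}(0),f^{(2)}(0),\dots)=\sum_{l=0}^k g_l B_{k,l}(\bmf)$ --- this last step is precisely display~\eqref{eq:apx_bell:FdB_scalar_gfx}. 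Substituting and equating the two expressions for $z_n$ yields $z_n=\sum_{k=0}^n\sum_{l=0}^k g_l B_{k,l}(\bmf) B_{n,k}(\bx)$ for this tuple of values. Since the two polynomials agree at every real tuple, they agree identically, and the identity holds for all sequences $\bx,\bmf,\bg$.

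I expect the only point requiring genuine care to be the dependency bookkeeping of the first step, which is what makes the reduction to a polynomial identity in finitely many variables legitimate; the rest is three standard applications of Faa di Bruno's formula. One could alternatively avoid the realization step and argue formally in $\RR[[t]]$ via exponential generating functions: writing $X(t)=\sum_{i\geq1}x_i t^i/i!$, $F(u)=\sum_{l\geq1}f_l u^l/l!$ and $G(u)=\sum_{k\geq0}g_k u^k/k!$, the defining identity $\sum_{n\geq0}B_{n,k}(Y_1,Y_2,\dots)\,t^n/n!=\frac{1}{k!}\bigl(\sum_{j\geq1}Y_j\, t^j/j!\bigr)^k$ shows that both sides of the claimed formula have exponential generating function $G\bigl(F(X(t))\bigr)$, hence coincide term by term.
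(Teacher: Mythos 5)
Your proof is correct and takes essentially the same route as the paper's: both realize the abstract sequences by truncated Taylor polynomials and then invoke the three applications of Faa di Bruno's formula culminating in \eqref{eq:apx_bell:FdB_scalar_gfx}. The only differences are presentational --- the paper treats $n=0$ separately and leaves the finite-dependency bookkeeping implicit, while you spell it out and additionally sketch an equivalent exponential-generating-function argument.
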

\end{samepage}

\begin{proof}
    The case $n=0$ can be treated separately easily since $B_{0,k} = 1$ if $k=0$ and $0$ otherwise.
    For the case $n \geq 1$, choose any $N \geq n$ and apply the reasoning above to
    $x(t) = \sum_{k=1}^N x_k \frac{t^k}{k!}$,
    $f(x) = \sum_{k=1}^N f_k \frac{x^k}{k!}$,
    $g(y) = \sum_{k=1}^N g_k \frac{y^k}{k!}$,
    and evaluate at $t=0$. This yields precisely the announced formula since $B_{n,k}(X_1, X_2, X_3, ...)$ only depends on $X_1, ..., X_{n-k+1}$ and since $B_{n,k}=0$ for $k>n$.
\end{proof}


\subsubsection{Generalization to tensor algebras} \label{subsubsec:apx_bell:compos:abstract}
The following proposition is a generalization of \autoref{prop:apx_bell:compos_scalar} to tensor algebras in the spirit of \autoref{prop:apx_bell:abstract_vectorspaces}.

\begin{proposition} \label{prop:apx_bell:compos_abstract}
    Let $\XXX, \YYY, \ZZZ$ be vector spaces 
    and consider a sequence of symmetric multi-linear operators $\Lambda_n: \XXX^n \to \YYY$ ($n \geq 1$), extended linearly to $\Lambda_n: \XXX^{\otimes n} \to \YYY$.
    Also consider a sequence of symmetric multi-linear operators $Q_n: \YYY^n \to \ZZZ$ ($n \geq 1$), similarly extended to $Q_n: \YYY^{\otimes n} \to \ZZZ$.
    Suppose there exist a derivation $d$ over $T(\XXX)$, a linear map $D: \YYY \to \YYY$ and an element $\dot{x} \in \XXX$ such that
    \begin{equation*}
        \forall n \geq 1,~ \forall g_n \in \XXX^{\otimes n},~
        D\left( \Lambda_n : g_n \right) = \Lambda_{n+1} : \dot{x} \otimes g_n + \Lambda_n : d(g_n).
    \end{equation*}
    Moreover, extend $D$ to a derivation over 
    $T(\YYY)$ by $D(y_1 \otimes ... \otimes y_n) = (D y_1) \otimes ... \otimes y_n + ... + y_1 \otimes ... \otimes (D y_n)$
    and suppose there exists a linear map $\Delta: \ZZZ \to \ZZZ$ such that, denoting $\dot{y} = \Lambda_1(\dot{x})$,
    \begin{equation*}
        \forall n \geq 1,~ \forall h_n \in \YYY^{\otimes n},~
        \Delta\left( Q_n : h_n \right) = Q_{n+1} : \dot{y} \otimes h_n + Q_n : D(h_n).
    \end{equation*}
    Then
    \begin{equation} \label{eq:apx_bell:compos:abstract:prop_ccl}
        \forall n \geq 1,~
        \Delta^{n-1} Q_1(\dot{y})
        = \sum_{k=1}^n \sum_{l=1}^k Q_l : B_{k,l}(\Lambda_1, \Lambda_2, ...) : B_{n,k}(\dot{x}, d\dot{x}, d^2\dot{x}, ...)
    \end{equation}
    where $B_{k,l}(\Lambda_1, \Lambda_2, ...)$
    and the operators denoted by ``$:$'' are defined in \autoref{def:apx_bell:compos:abstract:notations} below.
\end{proposition}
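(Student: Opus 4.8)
The plan is to obtain \eqref{eq:apx_bell:compos:abstract:prop_ccl} from two applications of the abstract univariate Faà di Bruno formula \autoref{prop:apx_bell:abstract_vectorspaces}, glued together by the scalar composition rule \autoref{prop:apx_bell:compos_scalar}. First I would apply \autoref{prop:apx_bell:abstract_vectorspaces} to the data $\big((\Lambda_n)_n, d, D, \dot{x}\big)$: the first displayed hypothesis of the present proposition is exactly the recursion required there, so with $\dot{y} = \Lambda_1(\dot{x})$ I get
\begin{equation*}
    \forall j \geq 1, \qquad D^{j-1}\dot{y} = \sum_{l=1}^{j} \Lambda_l : B_{j,l}\big(\dot{x}, d\dot{x}, d^2\dot{x}, \ldots\big).
\end{equation*}
Then I would apply \autoref{prop:apx_bell:abstract_vectorspaces} a second time, to the data $\big((Q_n)_n, D, \Delta, \dot{y}\big)$, where $D$ is now viewed as a derivation over $T(\YYY)$ (as the statement allows, with $\YYY$ here playing the role of ``$\XXX$'' and $\ZZZ$ that of ``$\YYY$'' in the cited proposition); the second displayed hypothesis is exactly the recursion needed, and I obtain
\begin{equation*}
    \forall n \geq 1, \qquad \Delta^{n-1}Q_1(\dot{y}) = \sum_{K=1}^{n} Q_K : B_{n,K}\big(\dot{y}, D\dot{y}, D^2\dot{y}, \ldots\big).
\end{equation*}

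It then remains to substitute the first display into the second and reorganize. Since $B_{n,K}$ is a polynomial, it is separately linear in each slot, so expanding $B_{n,K}(\dot{y}, D\dot{y}, \ldots)$ after writing each $D^{j-1}\dot{y} = \sum_{l} \Lambda_l : B_{j,l}(\dot{x}, \ldots)$ and collecting terms is governed purely by a combinatorial identity among Bell polynomials. That identity follows from \autoref{prop:apx_bell:compos_scalar} by choosing $\bg$ to be the $K$-th unit sequence, which gives, for $y_j = \sum_{l \leq j} f_l B_{j,l}(\bx)$,
\begin{equation*}
    B_{n,K}(y_1, y_2, \ldots) = \sum_{k=K}^{n} B_{k,K}(f_1, f_2, \ldots)\, B_{n,k}(x_1, x_2, \ldots).
\end{equation*}
Reading this through the dictionary $f_l \mapsto \Lambda_l$, $x_j \mapsto d^{j-1}\dot{x}$, $y_j \mapsto D^{j-1}\dot{y}$ --- which is precisely what \autoref{def:apx_bell:compos:abstract:notations} formalizes, interpreting $B_{k,K}(\Lambda_1, \Lambda_2, \ldots)$ as the operator $\XXX^{\otimes k} \to \YYY^{\otimes K}$ built monomial-by-monomial from tensor products of the $\Lambda_l$ --- yields $B_{n,K}(\dot{y}, D\dot{y}, \ldots) = \sum_{k=K}^{n} B_{k,K}(\Lambda_1, \Lambda_2, \ldots) : B_{n,k}(\dot{x}, d\dot{x}, \ldots)$. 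Plugging this into the second display, exchanging the summations over $K$ and $k$ across the range $1 \leq K \leq k \leq n$, and renaming $K$ to $l$ gives exactly the right-hand side of \eqref{eq:apx_bell:compos:abstract:prop_ccl}.

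The main obstacle I expect is the substitution step of the previous paragraph: one must justify carefully that expanding $B_{n,K}$ of a sum of Bell-polynomial-valued tensors reduces to the scalar identity above applied symbol-by-symbol, given that the $\Lambda_l$ and $d$ do not commute and that tensor products are non-symmetric. This is where the symmetry of the $\Lambda_n$ and of the monomials of $B_{n,k}$ is used, and where the notational conventions of \autoref{def:apx_bell:compos:abstract:notations} must be set up so that the composite $Q_l : B_{k,l}(\Lambda_1, \Lambda_2, \ldots) : B_{n,k}(\dot{x}, d\dot{x}, \ldots)$ is unambiguously defined regardless of parenthesization. Once that is in place the verification is bookkeeping, since \eqref{eq:apx_bell:compos:abstract:prop_ccl} is a universal identity in its formal ingredients and hence need only be checked in the commutative scalar case, where it is \autoref{prop:apx_bell:compos_scalar}.
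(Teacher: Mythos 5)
Your first two steps are sound: the two applications of \autoref{prop:apx_bell:abstract_vectorspaces} are exactly as you describe and yield $D^{j-1}\dot y=\sum_{l\leq j}\Lambda_l:B_{j,l}(\dot x,d\dot x,\dots)$ and $\Delta^{n-1}Q_1(\dot y)=\sum_{K}Q_K:B_{n,K}(\dot y,D\dot y,\dots)$. The gap is the substitution step, which you correctly identify as the main obstacle but then claim is resolvable because the identity ``need only be checked in the commutative scalar case.'' That reduction provably cannot work: the scalar identity of \autoref{prop:apx_bell:compos_scalar} only controls the total coefficient of each commutative monomial $f_{l_1}\cdots f_{l_K}x_{i_1}\cdots x_{i_k}$, and therefore carries no information about \emph{which arguments are fed to which operator}, which is exactly the data distinguishing the two sides of your claimed tensor identity. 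Concretely, take $(n,K)=(4,2)$. Expanding $B_{4,2}(\dot y,D\dot y,D^2\dot y)=3(D\dot y)^{\otimes 2}+4\,\dot y\otimes D^2\dot y$ via your first display produces, among the terms of commutative type $f_1f_2x_1^2x_2$,
\begin{equation*}
    6\,\Lambda_1(d\dot x)\otimes\big(\Lambda_2:\dot x\otimes\dot x\big)\;+\;12\,\Lambda_1(\dot x)\otimes\big(\Lambda_2:\dot x\otimes d\dot x\big),
\end{equation*}
whereas $B_{3,2}(\Lambda_1,\Lambda_2):B_{4,3}(\dot x,d\dot x,\dots)=(3\,\Lambda_1\otimes\Lambda_2):(6\,\dot x\otimes\dot x\otimes d\dot x)$, read with the consecutive-block convention of \autoref{def:apx_bell:compos:abstract:notations}, produces $18\,\Lambda_1(\dot x)\otimes(\Lambda_2:\dot x\otimes d\dot x)$. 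Both collapse to $18f_1f_2x_1^2x_2$ in the scalar shadow, yet they differ as elements of $\YYY^{\otimes2}$ even after applying the symmetric $Q_2$: in the model $\Lambda_k=f^{(k)}(x_t)$, $Q_k=g^{(k)}(f(x_t))$, the true fourth derivative of $g\circ f$ contains $6\,g''(f'x'',f''(x',x'))+12\,g''(f'x',f''(x',x''))$, not $18\,g''(f'x',f''(x',x''))$. So the splitting $18=6+12$ is precisely the combinatorial content that must be proved (by tracking which factors of each monomial of $B_{n,k}(\dot x,d\dot x,\dots)$ land in which slot of which $\Lambda_l$, equivalently by symmetrizing the operator $B_{k,l}(\Lambda_1,\Lambda_2,\dots)$ before contracting), and the commutative case supplies none of it.

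This is exactly why the paper does not take your route: it defines $E_n=\sum_l Q_l:B_{n,l}(\Lambda_1,\Lambda_2,\dots)$, proves by direct manipulation of the coefficients $\alpha^{n,k}_{\bh}$ (\autoref{lm:apx_bell:compos:abstract:DeltaEk} and its three claims) that $(E_n)_n$ itself satisfies the derivation recursion $\Delta(E_n:x^{\otimes n})=E_{n+1}:\dot x\otimes x^{\otimes n}+E_n:d(x^{\otimes n})$, and then applies \autoref{prop:apx_bell:abstract_vectorspaces} a single time; all the bookkeeping you defer lives in that lemma. Note also that the remark closing \autoref{subsubsec:apx_bell:compos:abstract} explicitly poses, as an unresolved refinement, whether the conclusion can be derived from the two Bell-polynomial identities alone---that is, from the outputs of your two applications of Faa di Bruno---which is the very step you treat as automatic. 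Your plan could in principle be completed by proving the multilinear composition identity for (symmetrized) Bell operators directly, e.g.\ via the set-partition form of Faa di Bruno's formula, but that is a genuine combinatorial argument of comparable weight to the paper's lemma, not a corollary of the scalar case.
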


Let us give the rigorous meaning of the above identity.
This will take some gymnastics, but in some sense it is only an exercise in formality, as there is no ambiguity on how it should be defined.
The following definition will only be used within the present subsection.
\begin{definition} \label{def:apx_bell:compos:abstract:notations}
    For vector spaces $\XXX, \YYY, \ZZZ$,
    denote by $\LLL(\XXX, \YYY)$ the set of linear maps from $\XXX$ to $\YYY$,
    and by $\LLL(\XXX^{\otimes n}, \YYY)$ the set of $n$-linear maps, or isomorphically, of linear maps over $n$-tensors.
    \begin{itemize}
        \item For all $k, l \in \NN^*$, let
        $\LLL_k = \LLL(\XXX^{\otimes k}, \YYY)$
        and 
        $\LLL_{k,l} = \bigoplus_{\substack{k_1,...,k_l>0, \\ k_1+...+k_l=k}} \LLL_{k_1} \otimes ... \otimes \LLL_{k_l}$.
        Note that, for any $\Lambda \in \LLL_{k,l}$ and $\Lambda' \in \LLL_{k',l'}$, we have $\Lambda \otimes \Lambda' \in \LLL_{k+k', l+l'}$.
        Furthermore, for any sequence of operators $\Lambda_k \in \LLL_k$ ($k \geq 1$), let 
        $B_{k,l}(\Lambda_1, \Lambda_2, ...) 
        \coloneqq \sum_{\bh \in \NN^{\NN^*}} \alpha^{k,l}_{\bh} \left( \Lambda_1^{\otimes h_1} \otimes \Lambda_2^{\otimes h_2} \otimes ... \right)
        \in \LLL_{k,l}$.
        \item Define a bilinear operator $\LLL_{k,l} \times \XXX^{\otimes k} \to \YYY^{\otimes l}$, denoted by ``$:$'', as follows.
        For any $\Lambda^{(1)} \otimes ... \otimes \Lambda^{(l)} \allowbreak \in \allowbreak \LLL_{k,l}$, say $\Lambda^{(i)} \in \LLL_{k_i}$ with $k_1+...+k_l=k$, and any $x_1, ..., x_k \in \XXX$,
        \begin{equation*}
            \left( \Lambda^{(1)} \otimes ... \otimes \Lambda^{(l)} \right) : \left( x_1 \otimes ... \otimes x_k \right)
            ~\coloneqq~ 
            \left( \Lambda^{(1)} : x^{(1)} \right)
            \otimes ... \otimes
            \left( \Lambda^{(l)} : x^{(l)} \right)
        \end{equation*}
        where $x^{(i)} = x_{k_1+...+k_{i-1}+1} \otimes ... \otimes x_{k_1+...+k_{i-1}+k_i}$.
        This suffices to define ``$:$'' as an operator over $\LLL_{k,l} \times \XXX^{\otimes k}$ by bilinearity.
        \item Define a bilinear operator $\LLL(\YYY^{\otimes l}, \ZZZ) \times \LLL_{k,l} \to \LLL(\XXX^{\otimes k}, \ZZZ)$, also denoted by ``$:$'', as follows.
        For any $Q_l \in \LLL(\YYY^{\otimes l}, \ZZZ)$ and $\Lambda^{(1)} \otimes ... \otimes \Lambda^{(l)} \in \LLL_{k,l}$, say $\Lambda^{(i)} \in \LLL_{k_i}$ with $k_1+...+k_l=k$,
        \begin{equation} \label{eq:apx_bell:def_tQl}
            \left( Q_l : \Lambda^{(1)} \otimes ... \otimes \Lambda^{(l)} \right) \Big( x_1 \otimes ... \otimes x_k \Big)
            ~\coloneqq~
            Q_l\left(
                \Lambda^{(1)} : x^{(1)},
                ...,
                \Lambda^{(l)} : x^{(l)}
            \right)
        \end{equation}
        for any $x_1,...,x_k \in \XXX$,
        where $x^{(i)} = x_{k_1+...+k_{i-1}+1} \otimes ... \otimes x_{k_1+...+k_{i-1}+k_i}$.
        This suffices to define an element $Q_l : \Lambda^{(1)} \otimes ... \otimes \Lambda^{(l)}$ of $\LLL(\XXX^{\otimes k}, \ZZZ)$ by linearity. In turn, this suffices to define ``$:$'' as an operator over $\LLL(\YYY^{\otimes l}, \ZZZ) \times \LLL_{k,l}$ by bilinearity.
    \end{itemize}
    Note that using the notation $Q_l(y_1, ..., y_l) \eqqcolon Q_l : (y_1 \otimes ... \otimes y_l)$, the last equation rewrites
    \begin{equation*}
        \left( Q_l : \Lambda^{(1)} \otimes ... \otimes \Lambda^{(l)} \right) : \Big( x_1 \otimes ... \otimes x_k \Big)
        ~=~
        Q_l : \left(
            \left( \Lambda^{(1)} : x^{(1)} \right)
            \otimes ... \otimes
            \left( \Lambda^{(l)} : x^{(l)} \right)
        \right).
    \end{equation*}
    Consequently, by bilinearity, the two notations ``$:$'' defined above are associative in the sense that for all
    $Q_l \in \LLL(\YYY^{\otimes l}, \ZZZ)$,
    $\bm\Lambda \in \LLL_{k,l}$,
    $\bx \in \XXX^{\otimes k}$,
    it holds
    $(Q_l : \bm\Lambda) : \bx = Q_l : (\bm\Lambda : \bx)$.
    This justifies the absence of parentheses in \eqref{eq:apx_bell:compos:abstract:prop_ccl}.
\end{definition}

\begin{remark}
    Thanks to the second item of the definition above, $\LLL_{k,l}$ can be viewed as a subspace of $\LLL(\XXX^{\otimes k}, \YYY^{\otimes l})$, by identifying $\bm{\Lambda} \in \LLL_{k,l}$ to $\bx \mapsto \bm{\Lambda} : \bx$---although one should still check that this identification is injective, which we did not do.
    In this interpretation, the operator of the second item $\LLL_{k,l} \times \XXX^{\otimes k} \to \YYY^{\otimes l}$ is simply the evaluation operator $\bm\Lambda : \bx = \bm\Lambda(\bm{x})$,
    and the operator of the third item $\LLL(\YYY^{\otimes l}, \ZZZ) \times \LLL_{k,l} \to \LLL(\XXX^{\otimes k}, \ZZZ)$ is simply the composition operator $Q_l : \bm\Lambda = Q_l \circ \bm\Lambda$.
\end{remark}

\begin{remark} \label{rk:apx_bell:compos:abstract:xBLambda}
    The following elementary computation will be useful several times in the proofs to come:
    for any sequence of operators $\Lambda_n \in \LLL(\XXX^{\otimes n}, \YYY)$ and any $x \in \XXX$,
    \begin{align*}
        \forall n, k,~ 
        B_{n,k}(\Lambda_1, \Lambda_2, ...) : x^{\otimes n}
        &= \sum_{\bh \in \NN^{\NN^*}} \alpha^{n,k}_{\bh}\,
        \left( \Lambda_1^{\otimes h_1} \otimes \Lambda_2^{\otimes h_2} \otimes ... \right) : x^{\otimes n} \\
        &= \sum_{\bh \in \NN^{\NN^*}} \alpha^{n,k}_{\bh}\,
        \left( \Lambda_1 : x \right)^{\otimes h_1} \otimes
        \left( \Lambda_2 : x^{\otimes 2} \right)^{\otimes h_2} \otimes ... \\
        &= B_{n,k}\left( \Lambda_1(x), \Lambda_2(x), ... \right).
    \end{align*}
\end{remark}

The following lemma shows that, under the conditions of the proposition, we have a composition rule at the level of derivations.
\begin{lemma} \label{lm:apx_bell:compos:abstract:DeltaEk}
    In the setting of \autoref{prop:apx_bell:compos_abstract}, the sequence of multi-linear operators
    \begin{equation*}
        E_n = \sum_{k=1}^n Q_k : B_{n,k}(\Lambda_1, \Lambda_2, ...) \in \LLL(\XXX^{\otimes n}, \ZZZ)
    \end{equation*}
    satisfies
    \begin{equation*}
        \forall n,~ \forall x \in \XXX,~
        \Delta (E_n : x^{\otimes n}) = E_{n+1} : \dot{x} \otimes x^{\otimes n} + E_n : d(x^{\otimes n}).
    \end{equation*}
\end{lemma}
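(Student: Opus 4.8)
The goal is to verify that the composite operators $E_n = \sum_{k=1}^n Q_k : B_{n,k}(\Lambda_1, \Lambda_2, \dots)$ obey the same ``derivation-like'' recursion that $\Lambda_n$ and $Q_n$ are each assumed to satisfy. The natural strategy is a direct computation: apply $\Delta$ to $E_n : x^{\otimes n}$, expand using the two hypotheses (first the one for $Q$, applied to the tensor $h_n = B_{n,k}(\Lambda_1,\dots) : x^{\otimes n}$, which by \autoref{rk:apx_bell:compos:abstract:xBLambda} equals $B_{n,k}(\Lambda_1(x), \Lambda_2(x^{\otimes 2}), \dots)$), and then reorganize the resulting double sum into the claimed form $E_{n+1} : \dot x \otimes x^{\otimes n} + E_n : d(x^{\otimes n})$. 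First I would write $E_n : x^{\otimes n} = \sum_{k=1}^n Q_k\bigl(B_{n,k}(\Lambda_1(x), \Lambda_2(x^{\otimes 2}), \dots)\bigr)$ and apply the hypothesis $\Delta(Q_k : h_k) = Q_{k+1} : \dot y \otimes h_k + Q_k : D(h_k)$ with $\dot y = \Lambda_1(\dot x)$.

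This produces two families of terms. The first family, $\sum_k Q_{k+1} : \dot y \otimes B_{n,k}(\Lambda_1(x), \dots)$, after reindexing $k \mapsto k-1$, becomes $\sum_k Q_k : \bigl(\Lambda_1(\dot x) \otimes B_{n,k-1}(\Lambda_1(x), \dots)\bigr)$. The second family requires computing $D\bigl(B_{n,k}(\Lambda_1(x), \Lambda_2(x^{\otimes 2}), \dots)\bigr)$, where $D$ is the derivation over $T(\YYY)$; since $D$ follows the Leibniz rule and acts on each factor $\Lambda_j(x^{\otimes j})$ via $D(\Lambda_j : x^{\otimes j}) = \Lambda_{j+1} : \dot x \otimes x^{\otimes j} + \Lambda_j : d(x^{\otimes j})$ by the first hypothesis, one can expand this out. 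The key algebraic identity to invoke here is the chain-rule-for-Bell-polynomials behavior already established in \autoref{lm:apx_bell:deriv_B}: applied at the level of the $\Lambda_j(x^{\otimes j})$'s (treating these as the indeterminates and $D$ as the relevant derivation, using the structure $D\Lambda_j(x^{\otimes j}) = \Lambda_{j+1}(\dot x \otimes x^{\otimes j}) + (\text{terms from } d)$), this should telescope. Concretely, the $\dot x$-producing part of $D$ acting on $B_{n,k}$ gives a contribution matching $B_{n+1,k} - \Lambda_1(\dot x) \otimes B_{n,k-1}$ up to index reordering, which cancels the first family and supplies the $E_{n+1}$ term; the $d$-producing part reassembles into $E_n : d(x^{\otimes n})$.

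I would organize the bookkeeping by first proving a sub-identity: for the sequence $\Lambda_1(x), \Lambda_2(x^{\otimes 2}), \dots$ in $\YYY$, and for the derivation $D$ on $T(\YYY)$, one has
$D\bigl(B_{n,k}(\Lambda_1(x), \Lambda_2(x^{\otimes 2}), \dots)\bigr) \eqpermut B_{n+1,k}(\Lambda_1(x), \dots) - \Lambda_1(\dot x) \otimes B_{n,k-1}(\Lambda_1(x), \dots) + \bigl[\text{the } d\text{-part}\bigr]$,
then sum over $k$, observe the telescoping across the $k$-sum (exactly as in \autoref{prop:apx_bell:abstract_vectorspaces}'s proof with $B_{n,n} - B_{n,0} = \dot x^{\otimes(n+1)}$), and identify the leftover $d$-part with $E_n : d(x^{\otimes n})$ using \autoref{rk:apx_bell:compos:abstract:xBLambda} in reverse. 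Finally, since the whole statement is a multilinear identity in the single vector $x$ (via polarization, as the operators are symmetric), proving it for $x^{\otimes n}$ suffices to get the general $g_n \in \XXX^{\otimes n}$ version needed downstream.

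The main obstacle I anticipate is purely bookkeeping: tracking the ``$:$'' contractions across $\LLL_{k,l}$, $\XXX^{\otimes k}$, and $\YYY^{\otimes l}$ and making sure the partition-refinement structure of $B_{n,k}$ (which factor of the $\Lambda$'s gets which block of $x$'s) is respected when $D$ differentiates one factor and the index shifts. The conceptual content is entirely captured by \autoref{lm:apx_bell:deriv_B} and the telescoping argument already used in \autoref{prop:apx_bell:abstract_vectorspaces}; the work is in applying that machinery one level up, at the level of $\YYY$-valued sequences rather than $\XXX$-valued ones, and confirming that the $\eqpermut$ (reordering of tensor indices) identifications are harmless because all the $Q_k$ and $\Lambda_k$ are symmetric.
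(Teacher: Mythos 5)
Your overall strategy---expand $\Delta(E_n : x^{\otimes n})$ via the $Q$-hypothesis, then reorganize the $\dot y\otimes B_{n,k}$ terms together with the $\dot x$-part of $D[B_{n,k}(\Lambda_1(x),\Lambda_2(x),\dots)]$ into $E_{n+1}:\dot x\otimes x^{\otimes n}$ and the $d$-part into $E_n : d(x^{\otimes n})$---is the same term-matching computation the paper carries out (it just computes the right-hand side and meets you in the middle). However, there is a genuine gap in the mechanism you propose for the key step. You want the sub-identity
\begin{equation*}
    D\bigl(B_{n,k}(\Lambda_1(x),\Lambda_2(x^{\otimes 2}),\dots)\bigr) \eqpermut B_{n+1,k} - \Lambda_1(\dot x)\otimes B_{n,k-1} + [\text{$d$-part}]
\end{equation*}
to follow from \autoref{lm:apx_bell:deriv_B} ``one level up.'' It does not: that lemma relies on the arguments forming a derivative tower, i.e.\ $d$ applied to the $i$-th indeterminate produces the $(i{+}1)$-th. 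Here the indeterminates are $Y_i=\Lambda_i(x^{\otimes i})$ and $DY_i=\Lambda_{i+1}:\dot x\otimes x^{\otimes i}+\Lambda_i:d(x^{\otimes i})$, whose first summand is \emph{not} $Y_{i+1}=\Lambda_{i+1}(x^{\otimes(i+1)})$, so the recursion $B_{n+1,k}=\sum_i\binom{n}{i}X_{i+1}B_{n-i,k-1}$ underlying \autoref{lm:apx_bell:deriv_B} cannot be invoked. Indeed, the remark closing \autoref{subsec:apx_bell:compos} states exactly this obstruction (``the $x$ variables are not scalar''), which is why the paper resorts to explicit computations.

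What is actually needed to close the gap is the evaluation of the operator-valued Bell polynomial at the \emph{mixed} tensor $\dot x\otimes x^{\otimes n}$ (note that \autoref{rk:apx_bell:compos:abstract:xBLambda} only covers the diagonal tensor $x^{\otimes n}$). The paper obtains this via a polarization argument (\autoref{claim:apx_bell:compos:abstract:calc1}), producing coefficients $\tfrac{i\,h_i}{n+1}\alpha^{n+1,k}_{\bh}$, and then matches them to the $\alpha^{n,k}_{\bh}$ arising from your left-hand-side expansion through the explicit coefficient identities of \autoref{claim:apx_bell:compos:abstract:calc3} (after reindexings $\bh\mapsto\bh-\bmone_1$ and $\bh\mapsto\bh-\bmone_{i+1}+\bmone_i$). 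Your $d$-part and telescoping steps are fine, and the $A_k^{(1)}$-type terms do cancel against the $Q_{k+1}:\dot y\otimes B_{n,k}$ family as you expect; but without the polarization formula and the $\alpha$-identities, the central identification with $E_{n+1}:\dot x\otimes x^{\otimes n}$ is unproven, and it is precisely the part that does not reduce to the machinery already established.
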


Our proof of \autoref{lm:apx_bell:compos:abstract:DeltaEk} goes by explicit computations and is rather tedious, so we chose to delay it to \autoref{subsubsec:apx_bell:compos:proof_abstract_lm}.
We now turn to the proof of the proposition.

\begin{proof}[Proof of \autoref{prop:apx_bell:compos_abstract}]
    Let the $n$-linear operator $E_n = \sum_{k=1}^n Q_k : B_{n,k}(\Lambda_1, \Lambda_2, ...) \in \LLL(\XXX^{\otimes n}, \ZZZ)$ for all $n \geq 1$.
    $E_n$ is symmetric since the $\Lambda_k, Q_k$ are, and
    \autoref{lm:apx_bell:compos:abstract:DeltaEk} shows that
    for all $n \geq 1$ and $x \in \XXX$,
    $
        \Delta (E_n : x^{\otimes n}) = E_{n+1} : \dot{x} \otimes x^{\otimes n} + E_n : d(x^{\otimes n}).
    $
    So by the abstract Faa di Bruno's formula of \autoref{prop:apx_bell:abstract_vectorspaces}, we have 
    \begin{align*}
        \forall n \geq 1,~
        \Delta^{n-1} E_1(\dot{x}) &= \sum_{k=1}^n E_k : B_{n,k}(\dot{x}, d\dot{x}, d^2\dot{x}, ...) \\
        = \Delta^{n-1} Q_1 \Lambda_1 \dot{x} &= \sum_{k=1}^n \left( \sum_{l=1}^k Q_l : B_{k,l}(\Lambda_1, \Lambda_2, ...) \right) : B_{n,k}(\dot{x}, d\dot{x}, d^2\dot{x}, ...),
    \end{align*}
    which is the announced identity.
\end{proof}

\begin{remark}[Composition rule at the level of Bell polynomials]
    Despite the title of this section, \autoref{prop:apx_bell:compos_abstract} arguably falls short of establishing a composition rule for Bell polynomials. Indeed, it assumes that $d, D, \dot{x}$ and $(\Lambda_n)_n$, resp.\ $D, \Delta, \dot{y}$ and $(Q_n)_n$, are related by a derivation rule which is a priori stronger than an equation with Bell polynomials (by \autoref{prop:apx_bell:abstract_vectorspaces}). That is, one may ask whether the following refinement holds:
    {\it
        With the notations of \autoref{prop:apx_bell:compos_abstract}, suppose that
        \begin{equation*}
            \forall n \geq 1,
            D^{n-1} \Lambda_1(\dot{x}) = \sum_{k=1}^n \Lambda_k : B_{n,k}(\dot{x}, d\dot{x}, d^2\dot{x}, ...) 
            ~\text{and}~
            \Delta^{n-1} Q_1(\dot{y}) = \sum_{k=1}^n Q_k : B_{n,k}(\dot{y}, D\dot{y}, D^2\dot{y}, ...).
        \end{equation*}
        Then can we conclude to \eqref{eq:apx_bell:compos:abstract:prop_ccl}?
    }
\end{remark}

\subsubsection{Proof of \autoref{lm:apx_bell:compos:abstract:DeltaEk}} \label {subsubsec:apx_bell:compos:proof_abstract_lm}

\begin{proof}[Proof of \autoref{lm:apx_bell:compos:abstract:DeltaEk}]
    We proceed by explicit computations. Fix $n \geq 1$ and $x \in \XXX$.
    The left-hand side of the desired equality is given by
    \begin{align*}
        \Delta(E_n : x^{\otimes n})
        &= \sum_{k=1}^n \Delta \big( Q_k : B_{n,k}(\Lambda_1, \Lambda_2, ...) : x^{\otimes n} \big) \\
        &= \sum_{k=1}^n \Delta \big( Q_k : B_{n,k}(\Lambda_1(x), \Lambda_2(x), ...) \big) \\
        &= \sum_{k=1}^n Q_{k+1} : \dot{y} \otimes B_{n,k}(\Lambda_1(x), \Lambda_2(x), ...) 
        + Q_k : D \left[B_{n,k}(\Lambda_1(x), \Lambda_2(x), ...)\right]
    \end{align*}
    where the first equality uses the associativity of the ``$:$'', the second uses \autoref{rk:apx_bell:compos:abstract:xBLambda}, and the third uses the definition of $\Delta$.
    The right-hand side of the desired equality is given by
    \begin{align*}
        & E_{n+1} : \dot{x} \otimes x^{\otimes n} + E_n : d(x^{\otimes n})
        = E_{n+1} : \dot{x} \otimes x^{\otimes n} + n E_n : dx \otimes x^{\otimes (n-1)} \\
        &= \frac{1}{n+1} \sum_{k=1}^{n+1} Q_k :
        \sum_{\bh \in \NN^{\NN^*}} \alpha^{n+1,k}_{\bh}
        \sum_{i=1}^\infty i \, h_i \left( \Lambda_i : \dot{x} \otimes x^{\otimes (i-1)} \right)
        \otimes \Lambda_i(x)^{\otimes (h_i-1)}
        \otimes \bigotimes_{j \neq i} \Lambda_j(x)^{\otimes h_j} \\
        &~~ + \sum_{k=1}^{n} Q_k :
        \sum_{\bh \in \NN^{\NN^*}} \alpha^{n,k}_{\bh}
        \sum_{i=1}^\infty i \, h_i \left( \Lambda_i : dx \otimes x^{\otimes (i-1)} \right)
        \otimes \Lambda_i(x)^{\otimes (h_i-1)}
        \otimes \bigotimes_{j \neq i} \Lambda_j(x)^{\otimes h_j} \\
        &= \sum_{k=1}^{n+1}
        \underbrace{
            \frac{1}{n+1} Q_k :
            \sum_{\bh \in \NN^{\NN^*}} \alpha^{n+1,k}_{\bh}
            \sum_{i=1}^\infty i \, h_i \left( \Lambda_i : \dot{x} \otimes x^{\otimes (i-1)} \right)
            \otimes \Lambda_i(x)^{\otimes (h_i-1)}
            \otimes \bigotimes_{j \neq i} \Lambda_j(x)^{\otimes h_j} 
        }_{\eqqcolon A_k} \\
        &~~ + \sum_{k=1}^n Q_k : D \left[B_{n,k}(\Lambda_1(x), \Lambda_2(x), ...)\right] \\
        &~~ - \sum_{k=1}^{n} Q_k :
        \sum_{\bh \in \NN^{\NN^*}} \alpha^{n,k}_{\bh}
        \sum_{i=1}^\infty h_i \left( \Lambda_{i+1} : \dot{x} \otimes x^{\otimes i} \right)
        \otimes \Lambda_i(x)^{\otimes (h_i-1)}
        \otimes \bigotimes_{j \neq i} \Lambda_j(x)^{\otimes h_j}.
    \end{align*}
    where the second equality uses \autoref{claim:apx_bell:compos:abstract:calc1} below twice, and the third uses \autoref{claim:apx_bell:compos:abstract:calc2} below.

    Let us compute the expression on the first line in the equation above.
    For $A_1$, we have $B_{n+1,1}(X_1, X_2, ...) = X_{n+1}$, i.e., $\alpha^{n+1,1}_{\bh} = 1$ for $\bh = (\ind_{[i = n+1]})_{i \in \NN^*}$ and $0$ for all other $\bh$, so
    \begin{equation*}
        A_1 = \frac{1}{n+1} Q_1 :
        (n+1) \left( \Lambda_{n+1} : \dot{x} \otimes x^{\otimes n} \right)
        = Q_1 : \Lambda_{n+1} : \dot{x} \otimes x^{\otimes n}.
    \end{equation*}
    For $A_{n+1}$, we have $B_{n+1,n+1}(X_1, X_2, ...) = X_1^{n+1}$, i.e., $\alpha^{n+1,n+1}_{\bh} = 1$ for $\bh = ((n+1) \ind_{[i=1]})_{i \in \NN^*}$ and $0$ for all other $\bh$, so
    \begin{align*}
        A_{n+1} 
        = \frac{1}{n+1} Q_{n+1} : (n+1) \Lambda_1(\dot{x}) \otimes \Lambda_1(x)^{\otimes n}
        &= Q_{n+1} : \dot{y} \otimes \Lambda_1(x)^{\otimes n} \\
        &= Q_{n+1} : \dot{y} \otimes B_{n,n}(\Lambda_1(x), \Lambda_2(x), ...).
    \end{align*}
    For the $A_k$ with $2 \leq k \leq n$, let
    \begin{equation*}
        A_k = Q_k : \sum_{i=1}^\infty
        \underbrace{
            \sum_{\bh} \frac{h_i \, i}{n+1} \,\alpha^{n+1,k}_{\bh} 
            \left( \Lambda_i : \dot{x} \otimes x^{\otimes (i-1)} \right)
            \otimes \Lambda_i(x)^{\otimes (h_i-1)}
            \otimes \bigotimes_{j \neq i} \Lambda_j(x)^{\otimes h_j}.
        }_{\eqqcolon A_k^{(i)}}
    \end{equation*}
    For each $j \in \NN^*$, denote by $\bmone_j$ the sequence $(\ind_{[m=j]})_{m \in \NN^*}$.
    Then for $A_k^{(1)}$, by a change of indices $\bh' = \bh - \bmone_1$,
    \begin{align*}
        A_k^{(1)} &= \sum_{\bh ; h_1>0}~ \frac{h_1}{n+1} \,\alpha^{n+1,k}_{\bh}~
        (\Lambda_1 \dot{x}) \otimes (\Lambda_1 x)^{\otimes (h_1-1)} \otimes \bigotimes_{j \neq 1} \Lambda_j(x)^{\otimes h_j} \\
        &= (\Lambda_1 \dot{x})
        \otimes \sum_{\bh'} 
        ~
        \underbrace{ 
            \frac{h'_1+1}{n+1} \,\alpha^{n+1,k}_{\bh'+\bmone_1} 
        }_{= \alpha^{n,k-1}_{\bh'}}
        ~
        \underbrace{ 
            (\Lambda_1 x)^{\otimes h'_1} \otimes \bigotimes_{j \neq 1} \Lambda_j(x)^{\otimes h'_j}
        }_{= \bigotimes_{j=1}^\infty \Lambda_j(x)^{\otimes h'_j}}
        ~
        = \dot{y} \otimes B_{n,k-1}(\Lambda_1(x), \Lambda_2(x), ...),
    \end{align*}
    where for the underbraced coefficient we used \autoref{claim:apx_bell:compos:abstract:calc3} below.
    For the $A_k^{(i)}$ with $i \geq 2$, by a change of indices $\bh' = \bh - \bmone_i + \bmone_{i-1}$,
    \begin{align*}
        & A_k^{(i)}
        = \sum_{\bh} \frac{h_i \, i}{n+1} \,\alpha^{n+1,k}_{\bh} 
        \left( \Lambda_i : \dot{x} \otimes x^{\otimes (i-1)} \right)
        \otimes \Lambda_i(x)^{\otimes (h_i-1)}
        \otimes \left( \Lambda_{i-1}(x)^{\otimes h_{i-1}} \otimes \bigotimes_{j \not\in \{i, i-1\}} \Lambda_j(x)^{\otimes h_j} \right) \\
        &\eqpermut \sum_{\bh'} \frac{(h'_i+1) \, i}{n+1} \,\alpha^{n+1,k}_{\bh'+\bmone_i-\bmone_{i-1}}
        \left( \Lambda_i : \dot{x} \otimes x^{\otimes (i-1)} \right)
        \otimes \Lambda_i(x)^{\otimes h'_i}
        \otimes \Lambda_{i-1}(x)^{\otimes (h'_{i-1}-1)} \otimes \! \bigotimes_{j \not\in \{i, i-1\}} \! \Lambda_j(x)^{\otimes h'_j} \\
        &\eqpermut \sum_{\bh'} \frac{(h'_i+1) \, i}{n+1} \,\alpha^{n+1,k}_{\bh'+\bmone_i-\bmone_{i-1}}
        \left( \Lambda_i : \dot{x} \otimes x^{\otimes (i-1)} \right)
        \otimes \Lambda_{i-1}(x)^{\otimes (h'_{i-1}-1)} \otimes \bigotimes_{j \neq i-1} \Lambda_j(x)^{\otimes h'_j}
    \end{align*}
    and so
    \begin{align*}
        &\sum_{i=2}^\infty A_k^{(i)} = \sum_{i=1}^\infty A_k^{(i+1)} \\
        &\eqpermut \sum_{\bh'} \sum_{i=1}^\infty 
        ~\underbrace{
            \frac{(h'_{i+1}+1) \, (i+1)}{n+1} \,\alpha^{n+1,k}_{\bh'+\bmone_{i+1}-\bmone_i}
        }_{= h'_i \,\alpha^{n,k}_{\bh'}}~
        \left( \Lambda_{i+1} : \dot{x} \otimes x^{\otimes i} \right)
        \otimes \Lambda_i(x)^{\otimes (h'_i-1)} \otimes \bigotimes_{j \neq i} \Lambda_j(x)^{\otimes h'_j}
    \end{align*}
    where for the underbraced coefficient we again used \autoref{claim:apx_bell:compos:abstract:calc3}.
    
    Putting everything together, we get that the right-hand side of the desired equality is equal to
    \begin{align*}
        & A_{n+1} + \sum_{k=2}^n Q_k : A_k^{(1)} + \sum_{k=1}^n Q_k : D \left[B_{n,k}(\Lambda_1(x), \Lambda_2(x), ...)\right] + 0 \\
        &= \sum_{k=2}^{n+1} Q_k : \dot{y} \otimes B_{n,k-1}(\Lambda_1(x), \Lambda_2(x), ...)
        + \sum_{k=1}^n Q_k : D \left[B_{n,k}(\Lambda_1(x), \Lambda_2(x), ...)\right],
    \end{align*}
    which is indeed equal to the desired left-hand side.
\end{proof}

\begin{claim} \label{claim:apx_bell:compos:abstract:calc1}
    For any $n \geq 1$ and $x, X \in \XXX$,
    \begin{equation*}
        E_{n+1} : X \otimes x^{\otimes n}
        = \frac{1}{n+1} \sum_{k=1}^{n+1} Q_k :
        \!\sum_{\bh \in \NN^{\NN^*}}\! \alpha^{n+1,k}_{\bh}
        \sum_{i=1}^\infty i \, h_i \left( \Lambda_i : X \otimes x^{\otimes (i-1)} \right)
        \otimes \Lambda_i(x)^{\otimes (h_i-1)}
        \otimes \bigotimes_{j \neq i} \Lambda_j(x)^{\otimes h_j}.
    \end{equation*}
\end{claim}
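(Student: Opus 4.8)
The plan is to compute $E_{n+1}:X\otimes x^{\otimes n}$ by reducing it to a single differentiation. First I would record, using the associativity of ``$:$'' and \autoref{rk:apx_bell:compos:abstract:xBLambda}, that on a pure power one has $E_{n+1}:y^{\otimes(n+1)}=\sum_{k=1}^{n+1}Q_k:B_{n+1,k}\big(\Lambda_1(y),\Lambda_2(y),\dots\big)$ for every $y\in\XXX$, where $\Lambda_i(y)\coloneqq\Lambda_i:y^{\otimes i}$. The left-hand side of the claim is the value of the symmetric (equivalently, once-polarized) form attached to $E_{n+1}$ on the multiset $\{X,x,\dots,x\}$, so it equals $\tfrac{1}{n+1}\,\frac{d}{ds}\Big|_{s=0}E_{n+1}:(x+sX)^{\otimes(n+1)}$; the entire computation then amounts to differentiating the displayed identity at $y=x+sX$, $s=0$.

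Next I would differentiate the building blocks and apply the Leibniz rule. By multilinearity and symmetry of $\Lambda_i$, $\Lambda_i(x+sX)=\Lambda_i(x)+s\,i\,\big(\Lambda_i:X\otimes x^{\otimes(i-1)}\big)+O(s^2)$. Differentiating a monomial $\alpha^{n+1,k}_{\bh}\,\Lambda_1(x+sX)^{\otimes h_1}\otimes\Lambda_2(x+sX)^{\otimes h_2}\otimes\cdots$ appearing in $B_{n+1,k}\big(\Lambda_1(x+sX),\dots\big)$ at $s=0$ hits, for each index $i$, one of the $h_i$ copies of $\Lambda_i(x+sX)$; this yields $h_i$ tensor-factor-permuted copies of $\big(\Lambda_i:X\otimes x^{\otimes(i-1)}\big)\otimes\Lambda_i(x)^{\otimes(h_i-1)}\otimes\bigotimes_{j\neq i}\Lambda_j(x)^{\otimes h_j}$, each carrying the extra factor $i$ from the chain rule inside the block. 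Since the symmetric operator $Q_k$ sits in front, these $h_i$ permuted copies collapse to a single term, and the monomial contributes $\alpha^{n+1,k}_{\bh}\sum_i i\,h_i\,Q_k\big[(\Lambda_i:X\otimes x^{\otimes(i-1)})\otimes\Lambda_i(x)^{\otimes(h_i-1)}\otimes\bigotimes_{j\neq i}\Lambda_j(x)^{\otimes h_j}\big]$. Summing over $\bh$, then over $k$, and dividing by $n+1$ reproduces the right-hand side of the claim verbatim.

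The main point to be careful with is the meaning of ``$E_{n+1}:X\otimes x^{\otimes n}$'': because $B_{n+1,k}(\Lambda_1,\dots)$ is a specific, not-a-priori-symmetrized element of $\LLL_{n+1,k}$, this symbol must be read as the symmetrized evaluation (over the $n+1$ positions of $X$), and it is exactly the prefactor $\tfrac{1}{n+1}$ together with the weights $i\,h_i$ on the right-hand side that encode this symmetrization --- the $i\,h_i$ counting the number of argument slots of ``block type $i$'' among the $n+1$ slots of the monomial indexed by $\bh$. Relatedly, I must flag that the reorderings of tensor factors used along the way are genuine equalities only after $Q_k$ has been applied (merely $\eqpermut$ before that), which is harmless since $Q_k$ always appears on the left. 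Once these conventions are pinned down, the proof uses nothing beyond multilinearity, symmetry of the $\Lambda_i$ and of $Q_k$, the Leibniz rule, and \autoref{rk:apx_bell:compos:abstract:xBLambda}.
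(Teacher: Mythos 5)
Your proposal is correct and follows essentially the same route as the paper: both reduce $E_{n+1}:X\otimes x^{\otimes n}$ to a polarization/differentiation of $E_{n+1}$ on a pure power via \autoref{rk:apx_bell:compos:abstract:xBLambda}, then apply the Leibniz rule to the monomials of $B_{n+1,k}$, using multilinearity of $\Lambda_i$ (source of the factor $i$) and symmetry of $Q_k$ (collapsing the $h_i$ permuted copies). The only cosmetic difference is that you take a single derivative $\frac{d}{ds}\big|_{s=0}$ of $(x+sX)^{\otimes(n+1)}$ divided by $n+1$, whereas the paper uses the full $(n+1)$-variable polarization identity and recovers the $\frac{1}{n+1}$ from a homogeneity-in-$\lambda$ argument; your normalization and your remarks on symmetrization and on $\eqpermut$ versus equality under $Q_k$ are consistent with the paper's conventions.
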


\begin{proof}
    Denoting $E_{n+1}(x) = E_{n+1} : x^{\otimes (n+1)}$ for all $x \in \XXX$, we have by the polarization identity
    and by \autoref{rk:apx_bell:compos:abstract:xBLambda} that
    \begin{align*}
        & E_{n+1} : X \otimes x^{\otimes n}
        = \frac{1}{(n+1)!} \restr{\frac{\partial}{\partial \lambda_0} ... \frac{\partial}{\partial \lambda_n}}{\bm{\lambda}=0} 
        E_{n+1}\left( \lambda_0 X + (\lambda_1+...+\lambda_n) x \right) \\
        &= \frac{1}{(n+1)!} \restr{\frac{\partial}{\partial \lambda_0}}{\lambda_0=0} \restr{\frac{\partial^n}{\partial \lambda^n}}{\lambda=0} 
        E_{n+1}\left( \lambda_0 X + \lambda x \right) \\
        &= \frac{1}{(n+1)!} \restr{\frac{\partial}{\partial \lambda_0}}{\lambda_0=0} \restr{\frac{\partial^n}{\partial \lambda^n}}{\lambda=0}
        \sum_{k=1}^{n+1} Q_k : B_{n+1,k}\Big( \Lambda_1(\lambda_0 X + \lambda x), \Lambda_2(\lambda_0 X + \lambda x), ... \Big).
    \end{align*}
    Now, using that $Q_k$ is symmetric,
    \begin{align*}
        &~~~~ \frac{\partial}{\partial \lambda_0}
        Q_k : B_{n+1,k}\Big( \Lambda_1(\lambda_0 X + \lambda x), \Lambda_2(\lambda_0 X + \lambda x), ... \Big) \\
        &= \sum_{\bh \in \NN^{\NN^*}} \alpha^{n+1,k}_{\bh} 
        \frac{\partial}{\partial \lambda_0} 
        Q_k : \left( \bigotimes_{i=1}^\infty \Lambda_i(\lambda_0 X + \lambda x)^{\otimes h_i} \right) \\
        &= \sum_{\bh \in \NN^{\NN^*}} \alpha^{n+1,k}_{\bh} Q_k : 
        \sum_{i=1}^\infty h_i \left( \frac{\partial}{\partial \lambda_0} \Lambda_i(\lambda_0 X + \lambda x) \right)
        \otimes \Lambda_i(\lambda_0 X + \lambda x)^{\otimes (h_i-1)}
        \otimes \bigotimes_{j \neq i} \Lambda_j(\lambda_0 X + \lambda x)^{\otimes h_j}
    \end{align*}
    and, by multi-linearity of $\Lambda_i$, $\frac{\partial}{\partial \lambda_0} \Lambda_i(\lambda_0 X + \lambda x) 
    = i \Lambda_i : X \otimes (\lambda_0 X + \lambda x)^{\otimes (i-1)}$.
    So at $\lambda_0=0$ we get
    \begin{align*}
        &~~~~~ \restr{\frac{\partial}{\partial \lambda_0}}{\lambda_0=0}
        Q_k : B_{n+1,k}\Big( \Lambda_1(\lambda_0 X + \lambda x), \Lambda_2(\lambda_0 X + \lambda x), ... \Big) \\
        &= \sum_{\bh \in \NN^{\NN^*}} \alpha^{n+1,k}_{\bh} Q_k : 
        \sum_{i=1}^\infty h_i \, i \, \left( \Lambda_i : X \otimes (\lambda x)^{\otimes (i-1)} \right)
        \otimes \Lambda_i(\lambda x)^{\otimes (h_i-1)}
        \otimes \bigotimes_{j \neq i} \Lambda_j(\lambda x)^{\otimes h_j}
    \end{align*}
    which is homogeneous in $\lambda$ of order
    $(i-1) + i (h_i-1) + \sum_{j \neq i} j h_j = (\sum_j j h_j) - 1 = n$ (for any $i$), since the only $\bh$'s for which $\alpha^{n+1,k}_{\bh}>0$ satisfy $\sum_j j h_j = n+1$.
    So, applying $\restr{\frac{\partial^n}{\partial \lambda^n}}{\lambda=0}$,
    \begin{align*}
        & E_{n+1} : X \otimes x^{\otimes n} \\
        &= \frac{1}{n+1} \sum_{k=1}^{n+1} Q_k :
        \sum_{\bh \in \NN^{\NN^*}} \alpha^{n+1,k}_{\bh}
        \sum_{i=1}^\infty h_i \, i \, \left( \Lambda_i : X \otimes x^{\otimes (i-1)} \right)
        \otimes \Lambda_i(x)^{\otimes (h_i-1)}
        \otimes \bigotimes_{j \neq i} \Lambda_j(x)^{\otimes h_j}
    \end{align*}
    as announced.
\end{proof}

\begin{claim} \label{claim:apx_bell:compos:abstract:calc2}
    For any $n \geq 1$ and $x \in \XXX$,
    \begin{align*}
        & D \left[B_{n,k}(\Lambda_1(x), \Lambda_2(x), ...)\right] \\
        &\eqpermut \sum_{\bh \in \NN^{\NN^*}} \alpha^{n,k}_{\bh}
        \sum_{i=1}^\infty h_i \left( \Lambda_{i+1} : \dot{x} \otimes x^{\otimes i} + i\, \Lambda_i : dx \otimes x^{\otimes (i-1)} \right)
        \otimes \Lambda_i(x)^{\otimes (h_i-1)}
        \otimes \bigotimes_{j \neq i} \Lambda_j(x)^{\otimes h_j}.
    \end{align*}
\end{claim}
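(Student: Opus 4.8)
The plan is to apply the Leibniz rule for the derivation $D$ on $T(\YYY)$ termwise to the monomial expansion of the Bell polynomial. First I would expand, by definition of $B_{n,k}$ (cf.\ \autoref{rk:apx_bell:compos:abstract:xBLambda}),
\begin{equation*}
    B_{n,k}(\Lambda_1(x), \Lambda_2(x), ...) = \sum_{\bh \in \NN^{\NN^*}} \alpha^{n,k}_{\bh}~ \bigotimes_{i=1}^\infty \Lambda_i(x)^{\otimes h_i},
\end{equation*}
noting that for every $\bh$ with $\alpha^{n,k}_{\bh} \neq 0$ the tensor $\bigotimes_i \Lambda_i(x)^{\otimes h_i}$ has exactly $\sum_i h_i = k$ factors, each lying in $\YYY$. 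Since $D$ acts as a derivation on $T(\YYY)$, applying it to such a $k$-fold product yields the sum over the $k$ factors of the result of differentiating that factor; grouping the $h_i$ factors equal to $\Lambda_i(x)$ and working modulo reordering of tensor slots (i.e.\ up to $\eqpermut$), this gives
\begin{equation*}
    D\!\left[ \bigotimes_{i=1}^\infty \Lambda_i(x)^{\otimes h_i} \right] \eqpermut \sum_{i=1}^\infty h_i~ D[\Lambda_i(x)] \otimes \Lambda_i(x)^{\otimes (h_i - 1)} \otimes \bigotimes_{j \neq i} \Lambda_j(x)^{\otimes h_j}.
\end{equation*}

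The second ingredient is to compute $D[\Lambda_i(x)] = D[\Lambda_i : x^{\otimes i}]$. By the standing hypothesis of \autoref{prop:apx_bell:compos_abstract} applied with $g_i = x^{\otimes i} \in \XXX^{\otimes i}$, this equals $\Lambda_{i+1} : \dot{x} \otimes x^{\otimes i} + \Lambda_i : d(x^{\otimes i})$. Because $d$ is a derivation on $T(\XXX)$, one has $d(x^{\otimes i}) = \sum_{m=1}^i x^{\otimes (m-1)} \otimes dx \otimes x^{\otimes (i-m)}$, and since $\Lambda_i$ is symmetric each of these $i$ terms contributes $\Lambda_i : dx \otimes x^{\otimes (i-1)}$; hence $\Lambda_i : d(x^{\otimes i}) = i~\Lambda_i : dx \otimes x^{\otimes (i-1)}$ and therefore
\begin{equation*}
    D[\Lambda_i(x)] = \Lambda_{i+1} : \dot{x} \otimes x^{\otimes i} + i~\Lambda_i : dx \otimes x^{\otimes (i-1)}.
\end{equation*}
Substituting this into the previous display, multiplying by $\alpha^{n,k}_{\bh}$ and summing over $\bh$ reproduces verbatim the right-hand side of the claim.

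The only point requiring a modicum of care is the $\eqpermut$ bookkeeping: one must check that the $h_i$ summands arising from the $h_i$ distinct copies of the factor $\Lambda_i(x)$ are genuinely equal once the tensor slots are permuted --- which they are, since $\Lambda_i(x)$ is a single element of $\YYY$, so permuting factors in $T(\YYY)$ is legitimate under $\eqpermut$ --- and that the combinatorial weight $h_i$ is neither lost nor overcounted when passing from ``one term per factor'' to ``one term per value of $i$''. Everything else is a direct, if notation-heavy, computation, in the same spirit as the proof of \autoref{lm:apx_bell:deriv_B}.
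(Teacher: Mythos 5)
Your proof is correct and follows exactly the route the paper intends: expand $B_{n,k}(\Lambda_1(x),\Lambda_2(x),\dots)$ into its monomials, apply the derivation $D$ on $T(\YYY)$ factor by factor, and evaluate $D[\Lambda_i(x)]$ via the standing hypothesis together with $\Lambda_i : d(x^{\otimes i}) = i\,\Lambda_i : dx \otimes x^{\otimes(i-1)}$ by symmetry. The paper's own proof is a one-line appeal to these same two definitions, so your writeup is simply a fully spelled-out version of it.
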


\begin{proof}
    Follows from the definition
    $B_{n,k}(\Lambda_1(x), \Lambda_2(x), ...)
    = \sum_{\bh \in \NN^{\NN^*}} \alpha^{n,k}_{\bh}
    \bigotimes_{i=1}^\infty \Lambda_i(x)^{\otimes h_i}$
    and the definition of $D$.
\end{proof}

\begin{claim} \label{claim:apx_bell:compos:abstract:calc3}
    For any $i$, denote by $\bmone_i \in \NN^{\NN^*}$ the sequence such that $\bmone_{in} = 1$ if $n=i$ and $0$ otherwise.
    For any $n, k \geq 1$ and sequence $\bh \in \NN^{\NN^*}$, we have
    \begin{equation*}
        \frac{h_1+1}{n+1} \,\alpha^{n+1,k}_{\bh+\bmone_1} = \alpha^{n,k-1}_{\bh}
        ~~~~\text{and}~~~~
        \frac{(h_{i+1}+1) \, (i+1)!}{h_i ~ i! ~ (n+1)} \, \alpha^{n+1,k}_{\bh+\bmone_{i+1}-\bmone_i} 
        = \alpha^{n,k}_{\bh}.
    \end{equation*}
\end{claim}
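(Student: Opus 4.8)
The plan is to prove both identities by substituting the index shifts directly into the explicit formula for $\alpha^{n,k}_{\bh}$ and cancelling factorials. The argument is purely computational; the only points requiring care are (a) checking that the two sides of each identity are nonzero under exactly the same combinatorial constraints, so the equality holds verbatim (with both sides zero otherwise), and (b) the edge case of the second identity where a shift would push an entry of $\bh$ below zero.

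For the first identity, I would set $\bh' = \bh + \bmone_1$, so that $h'_1 = h_1+1$ and $h'_j = h_j$ for $j \geq 2$. Then $\sum_j h'_j = (\sum_j h_j) + 1$, which equals $k$ iff $\sum_j h_j = k-1$, and $\sum_j j h'_j = (\sum_j j h_j) + 1$, which equals $n+1$ iff $\sum_j j h_j = n$; these are precisely the conditions making $\alpha^{n,k-1}_{\bh}$ nonzero, so the supports match. On that support, since $(h_1+1)! = (h_1+1)\,h_1!$ and $(1!)^{h_1+1} = (1!)^{h_1}$, the definition gives $\alpha^{n+1,k}_{\bh+\bmone_1} = \frac{n+1}{h_1+1}\,\alpha^{n,k-1}_{\bh}$, i.e.\ the claimed identity after multiplying by $\frac{h_1+1}{n+1}$.

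For the second identity, I would assume $h_i \geq 1$, so that $\bh'' \coloneqq \bh + \bmone_{i+1} - \bmone_i \in \NN^{\NN^*}$; this is the only case relevant to \autoref{lm:apx_bell:compos:abstract:DeltaEk}, where terms with $h_i = 0$ carry a compensating factor $h_i$ and drop out. Here $h''_i = h_i-1$, $h''_{i+1} = h_{i+1}+1$, and $h''_j = h_j$ for $j \notin \{i, i+1\}$, so $\sum_j h''_j = \sum_j h_j$ and $\sum_j j h''_j = (\sum_j j h_j)+1$, again matching supports with $\alpha^{n,k}_{\bh}$. Comparing the two formulas factor by factor: $n!$ becomes $(n+1)!$ (contributing $n+1$), $h_i!$ becomes $(h_i-1)!$ (contributing $h_i$), $h_{i+1}!$ becomes $(h_{i+1}+1)!$ (contributing $\tfrac{1}{h_{i+1}+1}$), $(i!)^{h_i}$ becomes $(i!)^{h_i-1}$ (contributing $i!$), and $((i+1)!)^{h_{i+1}}$ becomes $((i+1)!)^{h_{i+1}+1}$ (contributing $\tfrac{1}{(i+1)!}$). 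Their product is $\frac{(n+1)\,h_i\,i!}{(h_{i+1}+1)(i+1)!}$, so multiplying $\alpha^{n+1,k}_{\bh''}$ by $\frac{(h_{i+1}+1)(i+1)!}{h_i\,i!\,(n+1)}$ yields exactly $\alpha^{n,k}_{\bh}$.

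The main (and essentially only) thing to be careful about is the bookkeeping of which factors change under each shift and the support verification; there is no genuine obstacle, and I would present both identities as short direct computations.
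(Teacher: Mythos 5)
Your proof is correct and is exactly the computation the paper has in mind: the paper's own proof of this claim simply asserts that it "follows straightforwardly from the definition" of the $\alpha^{n,k}_{\bh}$, and you have filled in the factorial bookkeeping, the support-matching check, and the $h_i\geq 1$ edge case correctly. Nothing further is needed.
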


\begin{proof}
    The claim follows straightforwardly from the definition of the $\alpha^{n,k}_{\bh}$ (\autoref{def:apx_bell:coeffs}). For ease of reference:
    \begin{equation*}
        \alpha^{n,k}_{\bh}
        = \frac{n!}{h_1!~ h_2!~ h_3!~ ...}~ \frac{1}{(1!)^{h_1}\, (2!)^{h_2}\, (3!)^{h_3}\, ...}
    \end{equation*}
    if $\sum_{i=1}^\infty i\, h_i = n$ and $\sum_{i=1}^\infty h_i=k$, and $0$ otherwise.
\end{proof}

\begin{remark}
    Instead of direct computations, one could try to prove \autoref{prop:apx_bell:compos_abstract} by introducing a derivation over $T\left( \bigoplus_{k=1}^\infty \LLL_k \right)$, showing that \autoref{prop:apx_bell:abstract_vectorspaces} applies to it because of an identity analogous to \eqref{eq:apx_bell:FdB_scalar_gfx}, and proceeding as in the scalar case.
    However we do not in fact have a simple identity analogous to \eqref{eq:apx_bell:FdB_scalar_gfx} in this setting, because the $x$ variables are not scalar.
\end{remark}

\section{Proof of \autoref{thm:fdb_wasserstein:fdb_wasserstein}} \label{sec:apx_pf_fdbwass}

We start by a preliminary proposition.

\begin{proposition} \label{prop:apx_pf_fdbwass:partialtmut}
    For any transport couple $\partial_t \mu_t = -\nabla \cdot (\mu_t \Phi_t)$,
    for any smooth scalar field $\varphi: \RR^d \to \RR$ (independent of time),
    \begin{gather*}
        \forall n \geq 1,~
        (\cD^n \varphi)_t = \sum_{k=1}^n \nabla^k \varphi : B_{n,k}(\Phi_t, (\cD \Phi)_t, (\cD^2 \Phi)_t, ...) \\
        \text{and}~~~~
        \partial_t^n \mu_t = \sum_{k=1}^n (-\nabla)^k : \left[ \mu_t B_{n,k}(\Phi_t, (\cD \Phi)_t, (\cD^2 \Phi)_t, ...) \right]
    \end{gather*}
    where $B_{n,k}(\Phi_t, (\cD \Phi)_t, (\cD^2 \Phi)_t, ...)$ is the $k$-tensor field defined by interpreting the Bell polynomial $B_{n,k}$ as a polynomial over the ring of tensor fields equipped with the tensor product.
\end{proposition}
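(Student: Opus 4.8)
Both identities are instances of the univariate Faa di Bruno formula: the first for the operator $\cD$ acting on a fixed (time-independent) scalar field, and the second its ``measure-side'' counterpart, obtained from the first by duality. The plan is to deduce the first identity from the abstract formula of \autoref{prop:apx_bell:abstract_vectorspaces}, once its single recurrence hypothesis is checked, and then to obtain the second by testing against smooth functions and integrating by parts. As elsewhere in the paper, the statement is read formally, so the existence of $\partial_t^n \mu_t$ and interchanges of $\tfrac{d}{dt}$ with spatial integration are taken for granted.

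\textbf{Step 1: the scalar identity.} I would apply \autoref{prop:apx_bell:abstract_vectorspaces} with $\XXX$ the space of time-dependent smooth vector fields on $\RR^d$, $\YYY$ the space of time-dependent smooth scalar fields, $D$ the operator $\cD$ restricted to $\YYY$, and $d$ the operator $\cD$ extended to a derivation over $T(\XXX)$ --- legitimate because, by \autoref{lm:fdb_wasserstein:convderiv:product_rule}, $\cD$ obeys the Leibniz rule for the tensor product. Take $\dot x = \Phi_t$ and $\Lambda_n(x_{(1)}, \dots, x_{(n)}) = (\nabla^n \varphi) : (x_{(1)} \otimes \cdots \otimes x_{(n)})$; this is a symmetric $n$-linear map precisely because $\nabla^n \varphi$ is a symmetric tensor on the \emph{flat} space $\RR^d$. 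The hypothesis to verify is
\[
  \cD\bigl( (\nabla^n \varphi) : \Phi_t^{\otimes n} \bigr)
  = (\nabla^{n+1} \varphi) : \Phi_t^{\otimes (n+1)}
  + (\nabla^n \varphi) : \cD\bigl( \Phi_t^{\otimes n} \bigr).
\]
Unfolding $\cD = \partial_t + \Phi_t \cdot \nabla$: since $\varphi$ is time-independent, the $\partial_t$ term hits only the $\Phi$-factors; the $\Phi_t \cdot \nabla$ term either differentiates $\varphi$ one more time --- yielding, after using that partial derivatives commute on $\RR^d$, the contraction $(\nabla^{n+1} \varphi) : \Phi_t^{\otimes (n+1)}$ --- or hits one $\Phi$-factor. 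Collecting the two contributions landing on a given $\Phi$-factor produces $\partial_t \Phi_t + \Phi_t \cdot \nabla \Phi_t = (\cD \Phi)_t$ there, i.e.\ the term $(\nabla^n \varphi) : \cD(\Phi_t^{\otimes n})$. With the hypothesis checked, \autoref{prop:apx_bell:abstract_vectorspaces} gives $\cD^{n-1} \Lambda_1(\Phi_t) = \sum_{k=1}^n \Lambda_k : B_{n,k}(\Phi_t, (\cD\Phi)_t, (\cD^2\Phi)_t, \dots)$, and since $\Lambda_1(\Phi_t) = \Phi_t \cdot \nabla \varphi = (\cD\varphi)_t$, the left-hand side is $(\cD^n \varphi)_t$ --- the first formula.

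\textbf{Step 2: the measure identity.} Fix a smooth scalar field $\varphi$ vanishing at infinity (a test function). Iterating \autoref{lm:fdb_wasserstein:convderiv:ddt_lambdat_mut} gives $\tfrac{d^n}{dt^n} \int \varphi \, d\mu_t = \int (\cD^n \varphi)_t \, d\mu_t$, as each application turns $\tfrac{d}{dt} \int g_t \, d\mu_t$ into $\int (\cD g)_t \, d\mu_t$. Substituting the formula of Step 1 and integrating by parts $k$ times via the definition of $\nabla\cdot$ for tensor-valued measures,
\[
  \int \varphi \, d(\partial_t^n \mu_t)
  = \sum_{k=1}^n \int (\nabla^k \varphi) : \bigl[ \mu_t \, B_{n,k}(\Phi_t, (\cD\Phi)_t, \dots) \bigr]
  = \sum_{k=1}^n \int \varphi \, \Bigl( (-\nabla)^k : \bigl[ \mu_t \, B_{n,k}(\Phi_t, (\cD\Phi)_t, \dots) \bigr] \Bigr).
\]
Since $\varphi$ is arbitrary, the second formula follows in the sense of distributions.

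\textbf{Main obstacle.} The recurrence of Step 1 is a short computation once the product rule and the flatness of $\RR^d$ are in hand; the delicate points are analytic rather than algebraic --- the interchanges of $\tfrac{d}{dt}$ and $\int$, the very existence of $\partial_t^n \mu_t$, and enough regularity in $t$ of $\Phi_t$ for $(\cD^n \varphi)_t$ to be well-defined --- none of which is part of the bare definition of a transport couple, so, consistently with the rest of the paper, the proposition is to be understood as a formal identity.
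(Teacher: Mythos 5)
Your proof is correct and takes essentially the same route as the paper's, which likewise notes that the scalar identity is exactly an instance of \autoref{prop:apx_bell:abstract_vectorspaces} (while also spelling out an equivalent self-contained induction) and obtains the measure identity by iterating \autoref{lm:fdb_wasserstein:convderiv:ddt_lambdat_mut} and integrating by parts against test functions. The only nitpick is that the recurrence hypothesis of \autoref{prop:apx_bell:abstract_vectorspaces} must be verified for an arbitrary element $x$ of $\XXX$ (it is invoked in the induction on tensors built from $\Phi_t, (\cD\Phi)_t, \dots$), not just for $x=\Phi_t$ as you display it; your computation goes through verbatim for a generic time-dependent vector field, so this is purely presentational.
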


\begin{proof}
    The first part is exactly an application of \autoref{prop:apx_bell:abstract_vectorspaces} for $\Lambda_k(g_k) = \nabla^k \varphi : g_k$ and appropriate choices of $d, D, \XXX, \YYY$.
    Nonetheless we provide a self-contained proof by induction.
    For $n=1$, we have by definition $\cD \varphi = \nabla \varphi \cdot \Phi_t = \nabla \varphi \cdot B_{1,1}(\Phi_t)$.
    For general $n \geq 1$, writing $B_{n,k}$ for $B_{n,k}(\Phi, \cD \Phi, \cD^2 \Phi, ...)$, suppose $(\cD^n \varphi)_t = \sum_{k=1}^n \nabla^k \varphi : B_{n,k}$.
    By the lemma below, since $\nabla^k \varphi$ is symmetric,
    \begin{align*}
        (\cD^{n+1} \varphi)_t
        = \cD \left[ \sum_{k=1}^n \nabla^k \varphi : B_{n,k} \right]
        &= \sum_{k=1}^n \nabla^k \varphi : \cD \left[ B_{n,k} \right] + (\cD \nabla^k \varphi) : B_{n,k} \\
        &= \sum_{k=1}^n \nabla^k \varphi : \left( B_{n+1,k} - \Phi \otimes B_{n,k-1} \right)
        + \nabla^{k+1} \varphi : (\Phi \otimes B_{n,k}) \\
        &= \sum_{k=1}^n \nabla^k \varphi : B_{n+1,k} - 0 + \nabla^{n+1} \varphi : B_{n+1,n+1}
        = \sum_{k=1}^{n+1} \nabla^k \varphi : B_{n+1,k}
    \end{align*}
    by a telescoping sum, using that $B_{n+1,0} = 0$ and that $\Phi \otimes B_{n,n} = \Phi^{\otimes (n+1)} = B_{n+1,n+1}$.
    
    For the second part of the proposition, one can check by induction that $\int \varphi ~d(\partial_t^n \mu_t) = \frac{d^n}{dt^n} \int \varphi \,d\mu_t$, and so
    \begin{align*}
        \int \varphi ~d(\partial_t^n \mu_t)
        &= \frac{d^n}{dt^n} \int \varphi \,d\mu_t
        = \int (\cD^n \varphi)_t d\mu_t
        = \int \sum_{k=1}^n \nabla^k \varphi : B_{n,k}(\Phi_t, ...) \,d\mu_t \\
        &= \sum_{k=1}^n \int \varphi \,d\left( (-\nabla)^k : \left[ B_{n,k}(\Phi_t,...) \mu_t \right] \right)
        = \int \varphi \,d\left( \sum_{k=1}^n (-\nabla)^k : \left[ B_{n,k}(\Phi_t,...) \mu_t \right] \right).
    \end{align*}
    This concludes the proof since $\varphi$ was arbitrary.
\end{proof}

\begin{lemma}
    For any velocity field $(\Phi_t)_t$ and scalar field $\varphi: \RR^d \to \RR$,
    for any $n, k \geq 1$,
    \begin{equation*}
        \nabla^k \varphi : \cD \left[ B_{n,k}(\Phi, \cD \Phi, \cD^2 \Phi, ...) \right]
        = \nabla^k \varphi : \Big(
            B_{n+1,k}(\Phi, \cD \Phi, \cD^2 \Phi, ...) 
            - \Phi \otimes B_{n,k-1}(\Phi, \cD \Phi, \cD^2 \Phi, ...)
        \Big).
    \end{equation*}
\end{lemma}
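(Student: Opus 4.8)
The plan is to observe that, for a \emph{fixed} velocity field $(\Phi_t)_t$, the convective derivative $\cD$ is exactly the kind of derivation to which the abstract Bell-polynomial identity of \autoref{lm:apx_bell:deriv_B} applies, with $\Phi$ in the role of $\dot x$; that identity then transfers to the present tensor-field setting once one contracts against the fully symmetric tensor field $\nabla^k\varphi$.

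First I would record the two algebraic properties of $\cD$ that are needed. By \autoref{lm:fdb_wasserstein:convderiv:product_rule}, applied with the contraction taken over the empty set of indices (so that ``$\cdot$'' is the plain tensor product), the convective derivative obeys the Leibniz rule $\big(\cD(g_s\otimes h_s)_s\big)_t = (\cD g)_t \otimes h_t + g_t \otimes (\cD h)_t$ for all time-dependent tensor fields $g, h$; moreover $\cD$ is $\RR$-linear and preserves the tensor type of its argument. And by the very definition of the iterated convective derivatives, $\big(\cD(\cD^{i-1}\Phi_s)_s\big)_t = (\cD^i\Phi)_t$ for every $i\geq 1$. Consequently, if $\XXX$ denotes the linear span of the time-dependent vector fields $\cD^i\Phi$ ($i\geq 0$), then $\cD$ restricts to a genuine derivation on the tensor algebra $T(\XXX)$, and within $T(\XXX)$ the pair $(\cD,\Phi)$ is precisely an instance of the pair $(d,\dot x)$ of \autoref{lm:apx_bell:deriv_B}, with $\cD^i\Phi$ playing the role of ``$d^i\dot x$''.

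Next I would simply invoke \autoref{lm:apx_bell:deriv_B} with $d=\cD$ and $\dot x=\Phi$, which gives, for all $n,k\geq 1$,
\[
    \cD\big[B_{n,k}(\Phi,\cD\Phi,\cD^2\Phi,\dots)\big]
    \eqpermut
    B_{n+1,k}(\Phi,\cD\Phi,\dots)
    \;-\; \Phi \otimes B_{n,k-1}(\Phi,\cD\Phi,\dots);
\]
the two $(k,0)$-tensor fields on the two sides thus agree up to a permutation of their $k$ indices, so in particular they yield the same value when fully contracted against any symmetric $(0,k)$-tensor field. Applying this with the symmetric tensor field $\nabla^k\varphi$ gives exactly the asserted identity.

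The only step requiring genuine care is the verification recorded above that $\cD$ restricts to a derivation on $T(\XXX)$ — that is, that the abstract \autoref{lm:apx_bell:deriv_B}, which is stated for a derivation on the tensor algebra of a vector space, legitimately applies to $\cD$ viewed as acting on time-dependent tensor fields; once $\XXX$ is taken as above this is immediate from the $\RR$-linearity, the type-preservation, and the Leibniz property of $\cD$, so there is in fact no real obstacle. Alternatively, one could give a fully self-contained argument by expanding $B_{n,k}$ into its monomials $\bigotimes_i(\cD^{i-1}\Phi)^{\otimes h_i}$, differentiating termwise via the Leibniz rule, and re-collecting terms using the coefficient recurrences $\partial_{X_i}B_{n,k}=\binom{n}{i}B_{n-i,k-1}$ and $B_{n+1,k}=\sum_i\binom{n}{i}X_{i+1}B_{n-i,k-1}$ — but this merely reproduces the proof of \autoref{lm:apx_bell:deriv_B}.
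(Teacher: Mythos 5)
Your proposal is correct and is essentially the paper's own argument: the paper proves this lemma by remarking that it is "exactly an application of \autoref{lm:apx_bell:deriv_B} for appropriate choices of $d$ and $\XXX$", and you have simply made those choices explicit ($d=\cD$, $\dot x=\Phi$, $\XXX$ the span of the $\cD^i\Phi$) and correctly noted that contracting against the symmetric tensor $\nabla^k\varphi$ is what upgrades the ``$\eqpermut$'' of that lemma to a genuine equality.
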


\begin{proof}
    The result follows by the exact same manipulations as for \autoref{lm:apx_bell:deriv_B} (in fact it is exactly an application of that lemma for appropriate choices of $d$ and $\XXX$).
\end{proof}

We now turn to the proof of \autoref{thm:fdb_wasserstein:fdb_wasserstein}.
Throughout this section, fix a transport couple $\partial_t \mu_t = -\nabla \cdot (\mu_t \Phi_t)$ and a functional $\FFF: \PPP_2(\RR^d) \to \RR$,
and we adopt the shorthands introduced in the statements of \autoref{prop:fdb_wasserstein:wassdiffs:expr_firstvars} and \autoref{thm:fdb_wasserstein:fdb_wasserstein}.

\pagebreak

We saw in \autoref{lm:fdb_wasserstein:wassdiffs:meas_space_FdB} that
\begin{equation*}
    \frac{d^n}{dt^n} \FFF(\mu_t)
    = \sum_{k=1}^n \FFF^{(k)}[\mu_t] :_{\int} B_{n,k}\left( \partial_t \mu_t, \partial_t^2 \mu_t, \partial_t^3 \mu_t, ... \right).
\end{equation*}
On the other hand, we saw in \autoref{prop:apx_pf_fdbwass:partialtmut} that
\begin{align*}
    \partial_t^n \mu_t &= \sum_{k=1}^n (-\nabla)^k : \left[ \mu_t B_{n,k}(\Phi_t, (\cD \Phi)_t, (\cD^2 \Phi)_t, ...) \right] \\
    &= \sum_{k=1}^n L_k^{(\mu_t)} : B_{n,k}(\Phi_t, (\cD \Phi)_t, (\cD^2 \Phi)_t, ...)
\end{align*}
where $L_n^{(\mu)}$ is the symmetric $n$-linear operator $\left( L^2_\mu \cap \mathfrak{X}(\RR^d) \right)^{\otimes n} \to \MMM(\RR^d)$ given by $L_n^{(\mu)}(\Phi) = (-\nabla)^n : (\mu \Phi^{\otimes n})$.
Moreover, we saw in \autoref{prop:fdb_wasserstein:wassdiffs:expr_firstvars} that
\begin{align*}
    D^n_\mu \FFF(\Phi)
    &= \sum_{k=1}^n \FFF^{(k)}[\mu] :_{\int} B_{n,k}\left( -\nabla : [\mu \Phi], (-\nabla)^2 : [\mu \Phi^{\otimes 2}], (-\nabla)^3 : [\mu \Phi^{\otimes 3}], ... \right) \\
    &= \sum_{k=1}^n \FFF^{(k)}[\mu] :_{\int} B_{n,k}\left( L^{(\mu)}_1(\Phi), L^{(\mu)}_2(\Phi), L^{(\mu)}_3(\Phi), ... \right).
\end{align*}
This suggests applying the composition rule from \autoref{prop:apx_bell:compos_abstract} to
\begin{itemize}
    \item $\XXX \coloneqq \left[ L^2_{\mu_\bullet} \cap \mathfrak{X}(\RR^d) \right]^I$ the set of time-dependent smooth vector fields $(\Psi_t)_{t \in I}$ such that $\Psi_t \in L^2_{\mu_t}$ for all $t \in I$,
    and $\YYY \coloneqq \left[ \MMM(\RR^d) \right]^I$
    and $\ZZZ \coloneqq \RR^I$,
    \item $\Lambda_n \coloneqq \big( L_n^{(\mu_t)} \big)_{t \in I}$ 
    and $Q_n \coloneqq \big(\FFF^{(n)}[\mu_t] \big)_{t \in I}$ acting pointwise in time,
    so that $E_n = \sum_{k=1}^n Q_k : B_{n,k}(\Lambda_1, \Lambda_2, ...) = \left( D^n_{\mu_t} \FFF \right)_{t \in I}$ by \autoref{prop:fdb_wasserstein:wassdiffs:expr_firstvars} and \autoref{rk:apx_bell:compos:abstract:xBLambda},
    \item $d \coloneqq \cD$
    and $D \coloneqq \partial_t$
    and $\Delta \coloneqq \frac{d}{dt}$,
    \item $\dot{x} \coloneqq (\Phi_t)_t$, 
    so that $\dot{y} = \Lambda_1(\dot{x}) = \big( L_1^{(\mu_t)} \Phi_t \big)_t = \left( -\nabla \cdot (\mu_t \Phi_t) \right)_t = (\partial_t \mu_t)_t$.
\end{itemize}
Which these choices, the conclusion \eqref{eq:apx_bell:compos:abstract:prop_ccl} of \autoref{prop:apx_bell:compos_abstract} reads
\begin{gather*}
    \forall n \geq 1,~
    \Delta^{n-1} Q_1(\dot{y})
    = \sum_{k=1}^n ~\underbrace{
        \sum_{l=1}^k Q_l : B_{k,l}(\Lambda_1, \Lambda_2, ...) 
    }_{= E_k}
    \,: B_{n,k}(\dot{x}, d\dot{x}, d^2\dot{x}, ...), \\
    \text{i.e.,}~~~~
    \frac{d^{n-1}}{dt^{n-1}} \int \FFF^{(1)}[\mu_t] ~d\big( -\nabla \cdot (\mu_t \Phi_t) \big)
    = \frac{d^n}{dt^n} \FFF(\mu_t)
    = \sum_{k=1}^n D^k_{\mu_t} \FFF : B_{n,k}(\Phi, (\cD \Phi)_t, (\cD^2 \Phi)_t, ...),
\end{gather*}
which is precisely the desired identity to prove \autoref{thm:fdb_wasserstein:fdb_wasserstein}.

Thus it only remains to check that the conditions of \autoref{prop:apx_bell:compos_abstract} hold with the choices of $\XXX, \YYY, \ZZZ, (\Lambda_n)_n, (Q_n)_n, d, D, \Delta, \dot{x}$ above.
Namely, the condition linking $d, D, \dot{x}$ and $(\Lambda_n)_n$ reads
\begin{align*}
    \forall n \geq 1,~ \forall x \in \XXX,~
    D\left( \Lambda_n : x^{\otimes n} \right) &= \Lambda_{n+1} : \dot{x} \otimes x^{\otimes n} + \Lambda_n : d(x^{\otimes n}) ~~~~ \iff   \nonumber \\
    \forall (\Psi_t)_t \in \left[ L^2_{\mu_\bullet} \cap \mathfrak{X}(\RR^d) \right]^I\!,
    \partial_t\big[ (-\nabla)^n \!:\! (\mu_t \Psi_t^{\otimes n}) \big]
    &= (-\nabla)^{n+1} \!:\! \left( \mu_t \Phi_t \otimes \Psi_t^{\otimes n} \right)
    + (-\nabla)^n \!:\! \left( \mu_t \cD(\Psi_t^{\otimes n}) \right),
\end{align*}
which is easily verified using the properties of the convective derivatives listed in \autoref{subsec:fdb_wasserstein:convderiv}: we indeed have that for any smooth $\varphi: \RR^d \to \RR$,
\begin{align*}
    \frac{d}{dt} \int \nabla^n \varphi : \Psi_t^{\otimes n} d\mu_t 
    &= \int \left[ (\cD \nabla^n \varphi) : \Psi_t^{\otimes n} + \nabla^n \varphi : \cD(\Psi_t^{\otimes n}) \right] d\mu_t \\
    &= \int \left[ \nabla^{n+1} \varphi : \Phi_t \otimes \Psi_t^{\otimes n} + \nabla^n \varphi : \cD(\Psi_t^{\otimes n}) \right] d\mu_t \\
    &= \int \varphi ~d\left[ 
        (-\nabla)^{n+1} : \left( \mu_t \Phi_t \otimes \Psi_t^{\otimes n} \right)
        + (-\nabla)^n : \left( \mu_t \cD(\Psi_t^{\otimes n}) \right)
    \right].
\end{align*}
As for the condition linking $D, \Delta, \dot{y}$ and $(Q_n)_n$, it reads
\begin{align*}
    \forall n \geq 1,~ \forall y \in \YYY,~
    \Delta\left( Q_n : y^{\otimes n} \right) &= Q_{n+1} : \dot{y} \otimes y^{\otimes n} + Q_n : D(y^{\otimes n}) ~~~~ \iff \\
    \forall (s_t)_t \in \left[ \MMM(\RR^d) \right]^I,~
    \frac{d}{dt} \left( \FFF^{(n)}[\mu_t] :_{\int} s_t^{\otimes n} \right)
    &= \FFF^{(n+1)}[\mu_t] :_{\int} \left( \partial_t \mu_t \otimes s_t^{\otimes n} \right)
    + \FFF^{(n)}[\mu_t] :_{\int} \partial_t (s_t^{\otimes n}),
\end{align*}
and it is indeed true, simply by definition of the $n$-th variation of a functional over $\PPP(\RR^d)$.
This concludes the proof of \autoref{thm:fdb_wasserstein:fdb_wasserstein}.

\begin{remark}
    As a by-product of the proof, we have by \autoref{lm:apx_bell:compos:abstract:DeltaEk} that for any $(\Psi_t)_t \in \left[ L^2_{\mu_\bullet} \cap \mathfrak{X}(\RR^d) \right]^I$,
    \begin{equation*}
        \forall n \geq 1,~
        \frac{d}{dt} \left( D^n_{\mu_t} \FFF : \Psi_t^{\otimes n} \right)
        = D^{n+1}_{\mu_t} \FFF : \Phi_t \otimes \Psi_t^{\otimes n}
        + D^n_{\mu_t} \FFF : \cD \left[ \Psi_t^{\otimes n} \right].
    \end{equation*}
\end{remark}

\section{Further discussion: the convective derivative as a formal connection on~\texorpdfstring{$\PPP(\XXX)$}{P(X)}} \label{sec:apx_connection}

In this section we further the discussion of \autoref{subsec:fdb_wasserstein:discussion}.
Namely, we show that for a smooth manifold $\XXX$ equipped with a connection, the convective derivative behaves formally like a connection on $\PPP(\XXX)$, and we compute its torsion and Riemann curvature tensors.
Since our goal is merely to show a formal analogy, and since this discussion is not crucial to the main results of the paper, we will omit the justification of some of the claims.
Due to typesetting reasons, in this section only, we will use $\upphi$ and $\uppsi$ to denote vector fields, in place of $\Phi$ and $\Psi$ in the rest of the document.

\paragraph{Connections on smooth manifolds in finite dimension.}
Let us start by reviewing the structure of (non-Riemannian) differential manifolds in the classical finite-dimensional case.
\begin{enumerate}[label=($\roman*$).]
	\item Let $\XXX$ be a smooth manifold. Its tangent bundle is denoted by $T\XXX$. The set of smooth vector fields on $\XXX$ is denoted by $\mathfrak{X}(\XXX)$.
	\item 
	\cite[Definition~5.1]{boumal_introduction_2023} A connection is an operator
	$
		\nabla: T\XXX \times \mathfrak{X}(\XXX) \to T\XXX,~
		(u, V) \mapsto \nabla_u V
	$
	such that $\nabla_u V \in T_x \XXX$ whenever $u \in T_x \XXX$ and, for all $U, V, W \in \mathfrak{X}(\XXX)$, $u, w \in T_x \XXX$, $a, b \in \RR$ and scalar field $f: \XXX \to \RR$,
	\begin{enumerate}[label=\arabic*.]
		\item $(\nabla_U V)(x) \coloneqq \nabla_{U(x)} V$ defines a smooth vector field;
		\item $\nabla_{au+bw} V = a \nabla_u V + b \nabla_w V$;
		\item $\nabla_u (aV + bW) = a \nabla_u V + b \nabla_u W$;
		\item $\nabla_u (fV) = D_x f(u) ~ V(x) + f(x) \nabla_u V$
		where $D_x f$ denotes the differential of $f$ at $x$.
	\end{enumerate}
    With the notation $\nabla_u f \coloneqq D_x f(u)$, item~3.\ reads $\nabla_u (fV) = (\nabla_u f) V + f (\nabla_u V)$, i.e., a Leibniz product rule.
    This is consistent with the notion of total covariant derivative (see $(v)$).
    
	For $\Phi, \Psi \in \mathfrak{X}(\XXX)$, the smooth vector field $\nabla_\Phi \Psi$ may also be denoted by $\Phi \cdot \nabla \Psi$, or in index notation by $[\nabla_\Phi \Psi]^i = \Phi^j (\nabla_j \Psi^i)$.
	\item 
\begin{samepage}
	\cite[Theorem~5.29]{boumal_introduction_2023}
	Fix a smooth curve $c: I \to \XXX$ where $I$ is an interval of $\RR$, and denote $\mathfrak{X}(c) = \left\{ Z: I \to T\XXX ~\text{smooth and s.t.}~ \forall t \in I,~ Z(t) \in T_{c(t)} \XXX \right\}$.
	There exists a unique operator $\frac{D}{dt}: \mathfrak{X}(c) \to \mathfrak{X}(c)$, called the induced covariant derivative, such that for all $Y, Z \in \mathfrak{X}(c)$, $U \in \mathfrak{X}(\XXX)$, $a, b \in \RR$ and $g: I \to \RR$,
	\begin{enumerate}[label=\arabic*.]
		\item $\frac{D}{dt}(aY + bZ) = a \frac{D}{dt} Y + b \frac{D}{dt} Z$;
		\item $\frac{D}{dt}(gZ)(t) = g'(t) Z(t) + g(t) \frac{D}{dt} Z(t)$;
		\item $\frac{D}{dt}(U \circ c)(t) = \nabla_{c'(t)} U$
		where $c'(t) \in T_{c(t)} \XXX$ is the velocity of $c$.
	\end{enumerate}
	The intrinsic acceleration of $c$ is defined as $c''(t) \coloneqq \frac{D}{dt} c'(t)$.
	Geodesic curves are defined as the curves $c$ such that $c''(t) = 0$ for all $t$.
\end{samepage}
	\item \cite{lee_introduction_2018}
	The \emph{Lie bracket of two vector fields $U$ and $V$} is defined as the vector field $[U,V]$ such that, for any scalar field $f$,
	\begin{equation*}
		\nabla_{[U,V]} f = \nabla_U (\nabla_V f) - \nabla_V (\nabla_U f), 
		~~\text{i.e.,}~~
		(\nabla_i f) [U, V]^i = U^j \nabla_j \left( V^i \nabla_i f \right) - V^j \nabla_j \left( U^i \nabla_i f \right).
	\end{equation*}
	
	The torsion tensor is the $(1, 2)$-tensor $T$ such that,
    for any $U, V \in \mathfrak{X}(\XXX)$,
	\begin{equation*}
		T(U, V) = \nabla_U V - \nabla_V U - [U,V],
		~~\text{i.e.,}~~
		T^i_{~jk} U^j V^k = \left( U^j \nabla_j V^i - V^j \nabla_j U^i \right) - [U,V]^i.
	\end{equation*}
	We say $(\XXX, \nabla)$ is torsion-free if $T=0$. This condition is equivalent to the second-order differentials of scalar fields being symmetric.
	
	The Riemann curvature tensor is the $(1,3)$-tensor field $R$ such that, for any $U, V, W \in \mathfrak{X}(\XXX)$,
	\begin{equation*}
		V^j W^k R^{~~i~}_{jk~l} U^l
		= R(V, W) U = \nabla_V (\nabla_W U) - \nabla_W (\nabla_V U) - \nabla_{[V,W]} U.
	\end{equation*}
	\item The statement of \autoref{prop:fdb_wasserstein:findim:curved:def_totalcovariantderiv} holds without change for any smooth manifold $\XXX$ and any connection $\nabla$ (not necessarily a Riemannian manifold and its Levi-Civita connection).
	This defines the total covariant derivatives of tensor fields.
	
	The statement of \autoref{prop:fdb_wasserstein:findim:curved:asym} also holds without change, assuming that $(\XXX, \nabla)$ is torsion-free.
\end{enumerate}

\paragraph{The convective derivative as a connection on $\PPP(\XXX)$.}
Let $\XXX$ be a smooth manifold equipped with a connection $\nabla$.
We now describe the formal (non-Riemannian) differential structure of $\PPP(\XXX)$ equipped with the convective derivative.
\begin{enumerate}[label=($\roman*'$).]
	\item Consider the set $\PPP(\XXX)$, written $\PPP$ in this paragraph for the sake of concision.
	Define
	$T_\mu \PPP = \mathfrak{X}(\XXX) / \RRR$
	with the equivalence relation $\upphi \RRR \uppsi \iff \forall n, \restr{\nabla^n (\upphi - \uppsi)}{\support(\mu)} = 0$,
	let $T\PPP = \left\{ (\mu, \upphi), \upphi \in T_\mu \PPP \right\}$,
	and let us call \emph{vector field on $\PPP$} any mapping $\bm\Phi: \PPP \to T\PPP$ such that $\bm\Phi[\mu] \in T_\mu\PPP$ for all $\mu$.
	Denote by $\mathfrak{X}(\PPP)$ the set of vector fields $\bm\Psi$ on $\PPP$ that are ``smooth'' in a sense which we will not go into.
	\item Define an operator $\bm\nabla: T\PPP \times \mathfrak{X}(\PPP) \to T\PPP,~ (\upphi, \bm\Psi) \mapsto \bm\nabla_\upphi \bm\Psi$ as follows.
	For any $\upphi \in T_\mu\PPP$, consider any transport couple $\partial_t \mu_t = -\nabla \cdot (\mu_t \upphi_t)$ such that $\mu_0 = \mu$ and $\upphi_0 = \upphi$. Then $\bm\nabla_\upphi \bm\Psi \coloneqq \left(\cD (\bm\Psi[\mu_s])_s \right)_{t=0} \in T_\mu \PPP$.
    To see that this definition is independent of the choice of the transport couple under certain regularity assumptions, note that if $\mu \mapsto \bm\Psi[\mu](x)^i \in \RR$ admits a smooth first variation, denoted by $\frac{\delta \bm\Psi[\mu](x)^i}{\delta \mu}$, for any $i$ and $x \in \XXX$, then
    \begin{equation*}
        \left( \cD (\bm\Psi[\mu_s])_s \right)_t(x)^i
        = \int_{\XXX} d\mu_t(x') \left( \upphi_t \cdot \nabla \frac{\delta \bm\Psi[\mu_t](x)^i}{\delta \mu_t} \right)(x')
        + \left( \upphi_t \cdot \nabla \bm\Psi[\mu_t]^i \right)(x).
    \end{equation*}

	Note that when $\XXX = \RR^d$, the (Otto calculus analog of) Levi-Civita connection corresponds to $\bm\nabla$ composed with the orthogonal projection, in $L^2_\mu$, onto the set of gradient fields \cite[Definition~5.1]{gigli_second_2012}.

    Let us call \emph{scalar field on $\PPP$} any functional $\FFF: \PPP \to \RR$ and denote $\bm\nabla_\upphi \FFF \coloneqq D^1_\mu \FFF(\upphi), {\forall \upphi \in T_\mu \PPP}$.
	\item Fix a transport couple $(\mu_t, \upphi_t)_{t \in I}$ and denote $\mathfrak{X}((\mu_t, \upphi_t)_t) = \left\{ (\uppsi_t)_{t \in I} ~\text{s.t.}~ \forall t, \uppsi_t \in T_{\mu_t}\PPP \right\}$.
	One can check that the convective derivative $\cD$, as defined in \autoref{def:fdb_wasserstein:convderiv:convderiv}, maps the set $\mathfrak{X}((\mu_t, \upphi_t)_t)$ to itself, and \autoref{subsec:fdb_wasserstein:convderiv} show that $\cD$ satisfies properties analogous to the induced covariant derivative ($\frac{D}{dt}$) in the finite-dimensional case.
	\item We compute the torsion and the Riemann curvature tensor of $(\PPP, \bm\nabla)$ in the next subsections.
    In particular, if $(\XXX, \nabla)$ is torsion-free, resp.\ has zero curvature, then $(\PPP, \bm\nabla)$ does too.
	\item When $\XXX$ is torsion-free and has zero curvature, the higher-order Wasserstein differentials of a functional $\FFF$---as defined as in \autoref{def:fdb_wasserstein:wassdiffs:wassdiffs}---are precisely the total covariant derivatives of $\FFF$ viewed as a scalar field over $\PPP$.
    When $\XXX$ has non-zero torsion or non-zero curvature, however, the latter objects are asymmetric, and the former objects are equal to the symmetric parts of the latter.
\end{enumerate}

Let us end our discussion by two remarks.
\begin{itemize}
    \item 
    There is another natural candidate for the notion of scalar fields on $\PPP$, namely, measure-dependent scalar fields $\bm\lambda: \PPP \to \left\{(\mu, \lambda), \lambda: \XXX \to \RR ~\text{smooth} \right\}$.
    Note that this second notion contains the first one as a special case by setting $\forall x \in \XXX, \bm\lambda[\mu](x) = \FFF(\mu)$.
    It is natural to define $\bm\nabla_\upphi \bm\lambda$ as $\left(\cD (\bm\lambda[\mu_s])_s \right)_{t=0}$ for any $\upphi \in T_\mu \PPP$, for any transport couple $(\mu_t, \upphi_t)_t$ such that $\mu_0 = \mu$ and $\upphi_0 = \upphi$.
    We did not investigate this possibility.
    \item
    Define $\tT_\mu \PPP = T_\mu \PPP / \sim_\mu$,
    where ``$\sim_\mu$'' is the equivalence relation
    $
        \upphi \sim_\mu \uppsi \iff \nabla \cdot (\mu \upphi) = \nabla \cdot (\mu \uppsi)
    $,
    and let $\tT\PPP = \left\{ (\mu, \tilde{\upphi}), \tilde{\upphi} \in \tT_\mu \PPP \right\}$.
    Then one can show, under certain regularity assumptions, that $T_\mu \PPP$ can be replaced by $\tT_\mu \PPP$ everywhere, i.e., the $\upphi$ only come into play in the definitions above via $\nabla \cdot (\mu \upphi)$.
\end{itemize}

\subsection{Computation of the torsion tensor of \texorpdfstring{$(\PPP(\XXX), \bm{\nabla})$}{(P(X),bmnabla)}}

We continue to write $\PPP$ for $\PPP(\XXX)$ for brevity. Consider a scalar field $\FFF$ on $\PPP$, i.e., a smooth functional $\FFF: \PPP \to \RR$, and let $\bm\Phi, \bm\Psi \in \mathfrak{X}(\PPP)$. Then, at any $\mu \in \PPP$,
\begin{align*}
    \big[ \bm\nabla_{\bm\Phi} (\bm\nabla_{\bm\Psi} \FFF) \big](\mu)
    &= \bm\nabla_{\upphi} \left( D^1 \FFF(\bm\Psi) \right)
    \hspace{20em} \text{where}~~ \upphi = \bm\Phi[\mu] \\
    &= \bm\nabla_{\upphi} \left(
        \int_\XXX \bm\Psi[\bullet] \cdot \nabla \FFF'[\bullet] d(\bullet)
    \right) \\
    &= \left( \cD \left( \int_\XXX \bm\Psi[\mu_s] \cdot \nabla \FFF'[\mu_s] d\mu_s \right)_s \right)_{t = 0} 
    \text{where}~ \partial_t \mu_t = -\nabla \cdot (\mu_t \upphi_t), \mu_0 = \mu, \upphi_0 = \upphi \\
    &= \int_\XXX \left( \cD \left(\bm\Psi[\mu_s]\right)_s \right)_{t=0} \cdot \nabla \FFF'[\mu] d\mu
    + \int_\XXX \bm\Psi[\mu] \cdot \left(\cD \left(\nabla\FFF'[\mu_s]\right)_s \right)_{t=0} d\mu \\
    &= \int_\XXX (\bm\nabla_\upphi \bm\Psi) \cdot \nabla \FFF'[\mu] d\mu
    + \int_\XXX \uppsi^i \left( \bm\nabla_\upphi \nabla_i \FFF' \right) d\mu
    \hspace{7em} \text{where}~~ \uppsi = \bm\Psi[\mu]
\end{align*}
by definition, where in the fourth line we used (extensions to $\XXX$ instead of $\RR^d$ of) \autoref{lm:fdb_wasserstein:convderiv:ddt_lambdat_mut} and \autoref{lm:fdb_wasserstein:convderiv:product_rule}.
Now
for any fixed $\xi \in \mathfrak{X}(\XXX)$,
\begin{align*}
    \xi^i \left( \cD \left(\nabla_i\FFF'[\mu_s]\right)_s \right)_t
    &= \xi^i \nabla_i \partial_t \FFF'[\mu_t]
    + \xi^i \left( \upphi_t \cdot \nabla \nabla_i \FFF'[\mu_t] \right) \\
    &= \nabla_\xi \left( \int_\XXX d\mu_t(x') \upphi_t(x') \cdot \nabla \FFF''[\mu_t](\bullet, x') \right)
    + \xi^i \left( \nabla_{\upphi_t} \nabla_i \FFF'[\mu_t] \right) \\
    &= \int_\XXX d\mu_t(x') \nabla_{1,\xi} \nabla_{2,\upphi_t(x')} \FFF''[\mu_t]
    + \nabla_{\upphi_t} (\nabla_\xi \FFF'[\mu_t])
    - (\nabla_{\upphi_t} \xi) \cdot \nabla \FFF'[\mu_t]
\end{align*}
where $\nabla_{n,\xi}$ denotes derivative w.r.t.\ to the $n$-th variable in the direction $\xi$,
and where for the second term we used that $\xi \otimes \nabla_{\upphi} \nabla \FFF'[\mu_t] = \nabla_{\upphi} \left( \xi \otimes \nabla \FFF'[\mu_t] \right) - (\nabla_{\upphi} \xi) \otimes (\nabla \FFF'[\mu_t])$ by \autoref{prop:fdb_wasserstein:findim:curved:def_totalcovariantderiv}.
So
\begin{align*}
    \int_\XXX \uppsi^i \left( \bm\nabla_\upphi \nabla_i \FFF' \right) d\mu
    &= \int_\XXX d\mu(x) \int_\XXX d\mu(x')
    \nabla_{1,\uppsi(x)} \nabla_{2,\upphi(x')} \FFF''[\mu] \\
    &~~ + \int_\XXX \Big(
        \nabla_\upphi (\nabla_\uppsi \FFF'[\mu])
        - (\nabla_\upphi \uppsi) \cdot \nabla \FFF'[\mu]
    \Big) d\mu,
\end{align*}
and so by the symmetric computations with $\bm\Phi, \bm\Psi$ swapped,
\begin{multline*}
    \big[ \bm\nabla_{\bm\Phi} (\bm\nabla_{\bm\Psi} \FFF)
    - \bm\nabla_{\bm\Psi} (\bm\nabla_{\bm\Phi} \FFF) \big](\mu)
    = \int_\XXX \left( \bm\nabla_\upphi \bm\Psi - \bm\nabla_\uppsi \bm\Phi \right) \cdot \nabla \FFF'[\mu] d\mu 
    ~+ 0 \\
    + \int_\XXX \Big(
        \nabla_\upphi (\nabla_\uppsi \FFF'[\mu]) - \nabla_\uppsi (\nabla_\upphi \FFF'[\mu]) - \left( \nabla_\upphi \uppsi - \nabla_\uppsi \upphi \right) \cdot \nabla \FFF'[\mu]
    \Big) d\mu
\end{multline*}
since $\FFF''[\mu]: \XXX \times \XXX \to \RR$ is symmetric.
For the first term, note that it is equal by definition to $D^1_\mu \FFF(\bm\nabla_\upphi \bm\Psi - \bm\nabla_\uppsi \bm\Phi) = \big[ \bm\nabla_{\left( \bm\nabla_\Phi \bm\Psi - \bm\nabla_\Psi \bm\Phi \right)} \FFF \big](\mu)$.
For the last term, note that by definition,
\begin{equation*}
    \nabla_\upphi (\nabla_\uppsi f) - \nabla_\uppsi (\nabla_\upphi f) - \left( \nabla_\upphi \uppsi - \nabla_\uppsi \upphi \right) \cdot \nabla f 
    = [\upphi, \uppsi] \cdot \nabla f - \left( \nabla_\upphi \uppsi - \nabla_\uppsi \upphi \right) \cdot \nabla f 
    = -T(\upphi, \uppsi) \cdot \nabla f
\end{equation*}
for any $f: \XXX \to \RR$.
Thus, we have
\begin{equation*}
    \big[ \bm\nabla_{\bm\Phi} (\bm\nabla_{\bm\Psi} \FFF)
    - \bm\nabla_{\bm\Psi} (\bm\nabla_{\bm\Phi} \FFF)
    - \bm\nabla_{\left( \bm\nabla_\Phi \bm\Psi - \bm\nabla_\Psi \bm\Phi \right)} \FFF \big](\mu)
    = \int_\XXX -T(\upphi, \uppsi) \cdot \nabla \FFF'[\mu] d\mu
    = -\bm\nabla_{T(\upphi,\uppsi)} \FFF.
\end{equation*}
In other words, since $\FFF$ was arbitrary, the torsion tensor $\bm T$ of $(\PPP, \bm\nabla)$ is given by, for any $\bm\Phi, \bm\Psi \in \mathfrak{X}(\PPP)$, 
\begin{equation*}
    \forall \mu \in \PPP,~
    \bm T(\bm\Phi, \bm\Psi)[\mu] = T(\bm\Phi[\mu], \bm\Psi[\mu]) \in T_\mu \PPP.
\end{equation*}

\subsection{Computation of the Riemann curvature tensor of \texorpdfstring{$(\PPP(\XXX), \bm{\nabla})$}{(P(X),bmnabla)}}

In this subsection we assume for simplicity that $(\XXX, \nabla)$ is torsion-free.
We continue to write $\PPP$ for $\PPP(\XXX)$ for brevity. 
Furthermore, for any $\nu \in \PPP$ and $y \in \XXX$, we will write $\frac{\delta \mathsf{expr}(\nu)}{\delta \nu(y)}$ for the first variation of $\mathsf{expr}(\nu)$ w.r.t.\ $\nu$ at $y$,
and for any $\bm\Xi \in \mathfrak{X}(\PPP)$ and $x_0, y, z \in \XXX$, we will write $\bm\Xi'[\nu]^i(x_0, y) = \frac{\delta \bm\Xi[\nu]^i(x_0)}{\delta \nu(y)} \in T_{x_0} \XXX$
and $\bm\Xi''[\nu]^i(x_0,y,z) = \bm\Xi''[\nu]^i(x_0,z,y) = \frac{\delta^2 \bm\Xi[\nu]^i(x_0)}{\delta \nu(y) \delta \nu(z)} \in T_{x_0} \XXX$.

Let us compute the Riemann curvature tensor $\bm R$ of $(\PPP, \bm\nabla)$, defined by, for any $\bm\Phi, \bm\Psi, \bm\Xi \in \mathfrak{X}(\PPP)$,
\begin{equation*}
    \bm R(\bm\Phi, \bm\Psi) \bm\Xi
    = \bm\nabla_{\bm\Phi} (\bm\nabla_{\bm\Psi} \bm\Xi)
    - \bm\nabla_{\bm\Psi} (\bm\nabla_{\bm\Phi} \bm\Xi)
    - \bm\nabla_{[\bm\Phi, \bm\Psi]} \bm\Xi
\end{equation*}
where $[\bm\Phi, \bm\Psi] = \bm\nabla_{\bm\Phi} \bm\Psi - \bm\nabla_{\bm\Psi} \bm\Phi$ since $(\PPP, \bm\nabla)$ is torsion-free since $(\XXX, \nabla)$ is.
Fix $\bm\Phi, \bm\Psi, \bm\Xi \in \mathfrak{X}(\PPP)$ and $\mu \in \PPP$ and denote $\upphi = \bm\Phi[\mu], \uppsi = \bm\Psi[\mu], \xi = \bm\Xi[\mu]$ and $\xi'(\bullet, y) = \bm\Xi'[\mu](\bullet, y) \in T_\mu \PPP$ for any $y \in \XXX$. Then
\begin{equation} \label{eq:apx_connection:comput_curv:bmnablaPsiXi}
    \forall \nu \in \PPP,~~~~
    (\bm\nabla_{\bm\Psi} \bm\Xi)[\nu]^i
    = \int_\XXX \bm\Psi[\nu](y) \cdot \nabla_y \bm\Xi'[\nu]^i(\bullet, y) ~d\nu(y)
    + \bm\Psi[\nu] \cdot \nabla \bm\Xi[\nu]^i.
\end{equation}
So for any transport couple $\partial_t \mu_t = -\nabla \cdot (\mu_t \upphi_t)$ with $\mu_0 = \mu$ and $\upphi_0 = \upphi$, by (extensions to $\XXX$ instead of $\RR^d$ of) \autoref{lm:fdb_wasserstein:convderiv:ddt_lambdat_mut} and \autoref{lm:fdb_wasserstein:convderiv:product_rule},
\begin{align*}
    & \left( \bm\nabla_{\bm\Phi} (\bm\nabla_{\bm\Psi} \bm\Xi) \right)[\mu]^i
    = \bm\nabla_\upphi (\bm\nabla_{\bm\Psi} \bm\Xi)^i \\
    &= \int_\XXX \left( \cD (\bm\Psi[\mu_s])_s \right)_{t=0}(y) \cdot \nabla_y \xi'(\bullet, y)^i ~d\mu(y)
    + \int_\XXX \uppsi(y) \cdot \left( \cD (\nabla_y\bm\Xi'[\mu_s]^i(\bullet,y))_s \right)_{t=0} ~d\mu(y) \\
    &~~ + \left( \cD (\bm\Psi[\mu_s])_s \right)_{t=0} \cdot \nabla \xi^i
    + \uppsi \cdot \left( \cD (\nabla \bm\Xi[\mu_s]^i)_s \right)_{t=0}.
\end{align*}
Now for the second term, by direct computations,
\begin{align*}
    \left( \cD (\nabla_{y,j} \bm\Xi'[\mu_s]^i(\bullet, y))_s \right)_t
    &= \nabla_{y,j} \frac{d}{dt} \bm\Xi'[\mu_t]^i(\bullet, y)
    + \upphi_t^k \nabla_{\bullet,k} \left( \nabla_{y,j} \bm\Xi'[\mu_t]^i(\bullet, y) \right) \\
    &= \nabla_{y,j} \int_\XXX \upphi_t(z)^k \nabla_{z,k} \bm\Xi''[\mu_t]^i(\bullet, y, z) d\mu_t(z)
    + \nabla_{\bullet, \upphi_t} \nabla_{y,j} \bm\Xi'[\mu_t]^i(\bullet, y) \\
    \text{so}~\,
    \left( \cD (\nabla_{y,j} \bm\Xi'[\mu_s]^i(\bullet, y))_s \right)_{t=0}
    &= \nabla_{y,j} \int_\XXX \upphi(z)^k \nabla_{z,k} \xi''(\bullet,y,z) d\mu(z) + \nabla_{\bullet,\upphi} \nabla_{y,j} \xi'(\bullet,y)^i
\end{align*}
where $\xi''(\bullet, y, z) = \bm\Xi''[\mu](\bullet, y, z) \in T_\mu \PPP$.
For the fourth term, by the formal computation of \autoref{rk:fdb_wasserstein:convderiv:convderiv_nabla_curved},
\begin{align*}
    \left( \cD (\nabla_j \bm\Xi[\mu_s]^i)_s \right)_t
    &= \nabla_j (\cD (\bm\Xi[\mu_s]^i)_s)_t
    - (\nabla_j \upphi_t^k) (\nabla_k \bm\Xi[\mu_t]^i)
    - \upphi_t^k R^{~~i~}_{jk~l} \bm\Xi[\mu_t]^l \\
    \text{so}~~~~
    \left( \cD (\nabla_j \bm\Xi[\mu_s]^i)_s \right)_{t=0}
    &= \nabla_j \left( \bm\nabla_\upphi \bm\Xi \right)^i
    - (\nabla_j \upphi^k) (\nabla_k \xi^i)
    - \upphi^k R^{~~i~}_{jk~l} \xi^l.
\end{align*}
So
\begin{align*}
    \left( \bm\nabla_{\bm\Phi} (\bm\nabla_{\bm\Psi} \bm\Xi) \right)[\mu]^i 
    &= \int_\XXX \left( \bm\nabla_\upphi \bm\Psi \right)(y) \cdot \nabla_y \xi'(\bullet, y)^i ~d\mu(y) 
    + \left( \bm\nabla_\upphi \bm\Psi \right) \cdot \nabla \xi^i \\
    &~~ + \int_\XXX \uppsi^j(y) \left( 
        \nabla_{y,j} \int_\XXX \upphi(z)^k \nabla_{z,k} \xi''(\bullet,y,z) d\mu(z) + \nabla_{\bullet,\upphi} \nabla_{y,j} \xi'(\bullet,y)^i
    \right) d\mu(y) \\
    &~~ + \uppsi^j \left(
        \nabla_j \left( \bm\nabla_\upphi \bm\Xi \right)^i
        - (\nabla_j \upphi^k) (\nabla_k \xi^i)
        - \upphi^k R^{~~i~}_{jk~l} \xi^l
    \right) \\
    &= \int_\XXX \left( \bm\nabla_\upphi \bm\Psi \right)(y) \cdot \nabla_y \xi'(\bullet, y)^i ~d\mu(y)
    + \left( \bm\nabla_\upphi \bm\Psi \right) \cdot \nabla \xi^i \\
    &~~ + \int_\XXX d\mu(y) \int_\XXX d\mu(z) \nabla_{y,\uppsi} \nabla_{z,\upphi} \xi''(\bullet, y, z)
    + \int_\XXX \nabla_{\bullet,\upphi} \nabla_{y,\uppsi} \xi'(\bullet,y)^i d\mu(y) \\
    &~~ + \nabla_{\bullet,\uppsi} \left(
        \int_\XXX \upphi(y) \cdot \nabla_y \xi'(\bullet, y)^i d\mu(y) 
        + \upphi \cdot \nabla \xi^i
    \right)
    - (\nabla_\uppsi \upphi) \cdot \nabla \xi^i
    - \left[ R(\uppsi, \upphi) \xi \right]^i \\
    &= \left( \bm\nabla_{(\bm\nabla_{\bm\Phi} \bm\Psi)} \bm\Xi \right)[\mu]^i \\
    &~~ + \int_\XXX d\mu(y) \int_\XXX d\mu(z) \nabla_{y,\uppsi} \nabla_{z,\upphi} \xi''(\bullet, y, z) \\
    &~~ + \int_\XXX \nabla_{\bullet,\upphi} \nabla_{y,\uppsi} \xi'(\bullet,y)^i d\mu(y) 
    + \int_\XXX \nabla_{\bullet,\uppsi} \nabla_{y,\upphi} \xi'(\bullet,y)^i d\mu(y) \\
    &~~ + \nabla_\uppsi (\nabla_\upphi \xi)^i - (\nabla_\uppsi \upphi) \cdot \nabla \xi^i
    + \left[ R(\upphi, \uppsi) \xi \right]^i.
\end{align*}
Thus, by the symmetric computations with $\bm\Phi, \bm\Psi$ swapped,
\begin{align*}
    \!\Big( 
        \bm\nabla_{\bm\Phi} (\bm\nabla_{\bm\Psi} \bm\Xi) 
        - \bm\nabla_{\bm\Psi} (\bm\nabla_{\bm\Phi} \bm\Xi) 
    \Big)[\mu]^i 
    &= 
    \left( \bm\nabla_{\bm\Phi} (\bm\nabla_{\bm\Psi} \bm\Xi) \right)[\mu]^i 
    - \left( \bm\nabla_{\bm\Psi} (\bm\nabla_{\bm\Phi} \bm\Xi) \right)[\mu]^i \\
    &= \left( \bm\nabla_{(\bm\nabla_{\bm\Phi} \bm\Psi - \bm\nabla_{\bm\Psi} \bm\Phi)} \bm\Xi \right)[\mu]^i
    + 0 + 0 \\
    & ~\,  + \nabla_\uppsi (\nabla_\upphi \xi)^i - \nabla_\upphi (\nabla_\uppsi \xi)^i
    - (\nabla_\uppsi \upphi - \nabla_\upphi \uppsi) \cdot \nabla \xi^i
    + 2 \left[ R(\upphi, \uppsi) \xi \right]^i \\
    &= \left( \bm\nabla_{(\bm\nabla_{\bm\Phi} \bm\Psi - \bm\nabla_{\bm\Psi} \bm\Phi)} \bm\Xi \right)[\mu]^i
    + \left[ R(\upphi, \uppsi) \xi \right]^i.
\end{align*}
In other words,
the curvature tensor $\bm R$ of $(\PPP, \bm\nabla)$ is given by, for any $\bm\Phi, \bm\Psi, \bm\Xi \in \mathfrak{X}(\PPP)$,
\begin{equation*}
    \forall \mu \in \PPP,~
    \Big( \bm R(\bm\Phi, \bm\Psi) \bm\Xi \Big)[\mu]^i
    = \Big[ R\big( \bm\Phi[\mu], \bm\Psi[\mu] \big)\, \bm\Xi[\mu] \Big]^i
\end{equation*}
assuming $\XXX$ is torsion-free.

\fi

\end{document}